\pdfoutput=1
\documentclass[pra,aps,superscriptaddress,twocolumn,nofootinbib,longbibliography,a4paper]{revtex4-2}

\RequirePackage[l2tabu, orthodox]{nag}
\usepackage[utf8]{inputenc}
\usepackage[T1]{fontenc}

\usepackage{graphicx,color,amsmath,amsfonts,enumerate,amsthm,amssymb,mathtools,enumitem,thmtools,mathdots,enumitem,bm,soul,bbm,tikz,pgfplots,float,thm-restate,graphicx,hyperref,xcolor,colortbl,overpic,tikz,pgfplots,multirow,booktabs}
\usepackage[capitalise, noabbrev, nameinlink]{cleveref}
\usepackage[normalem]{ulem}
\usepackage[caption=false]{subfig}

\definecolor{darkblue}{RGB}{0,0,128}
\definecolor{darkgreen}{RGB}{0,150,0}
\hypersetup{breaklinks, colorlinks, linkcolor=blue, citecolor=darkgreen, filecolor=red, urlcolor=darkblue, pdftitle={Quantum dichotomies}}

\pgfplotsset{compat=newest}

\newtheorem{lemma}{Lemma}
\newtheorem{theorem}[lemma]{Theorem}
\newtheorem{cor}[lemma]{Corollary}
\newtheorem{conj}[lemma]{Conjecture}

\newcommand{\bra}[1]      {\left\langle #1\right|}
\newcommand{\ket}[1]      {\left|#1\right\rangle}

\newcommand{\ketbra}[2]   {\left|#1\middle\rangle\!\middle\langle#2\right|}
\newcommand{\braopket}[3] {\left\langle #1\middle|#2\middle|#3\right\rangle}
\newcommand{\proj}[1]     {\left| #1\middle\rangle\!\middle\langle#1\right|}
\newcommand{\abs}[1]      {\left| #1 \right|}
\newcommand{\norm}[1]     {\left\| #1 \right\|}
\newcommand{\Tr}          {\mathrm{Tr}}
\newcommand{\ot}{\otimes}
\newcommand{\rel}[2]      {\!\left( #1 \middle\| #2 \right)}
\newcommand{\reli}[1]     {\!\left( \rho_{#1} \middle\| \sigma_{#1} \right)}
\newcommand{\pinch}[2]    {\mathcal P_{#2}\!\left(#1\right)}

\newenvironment{aligns}{\subequations \align} {\endalign \endsubequations}

\newcommand\E{ {\mathcal E} }
\newcommand{\tr}[1]{\mathrm{Tr}\left( #1 \right)}
\newcommand{\iden}{\mathbbm{1}}
\let\vv\v
\renewcommand{\v}[1]{\ensuremath{\boldsymbol #1}}

\usetikzlibrary{3d,shadings,fadings}
\tikzfading[name=fade out, 
    inner color=transparent!0!,
    outer color=transparent!100!]
\newcommand{\tikzsetnextfilename}[1]{}

\heavyrulewidth=.3em

\newcommand{\lefthat}[1]{
  {\vbox{\offinterlineskip\ialign{
    \hfil##\hfil\cr
    $\scriptscriptstyle\leftarrow$\cr
    \noalign{\kern.25ex}
    $#1$\cr
}}}}
\newcommand{\righthat}[1]{
  {\vbox{\offinterlineskip\ialign{
    \hfil##\hfil\cr
    $\scriptscriptstyle\rightarrow$\cr
    \noalign{\kern.25ex}
    $#1$\cr
}}}}

\DeclareFontFamily{U}{mathx}{}
\DeclareFontShape{U}{mathx}{m}{n}{ <-> mathx10 }{}
\DeclareSymbolFont{mathx}{U}{mathx}{m}{n}
\DeclareFontSubstitution{U}{mathx}{m}{n}
\DeclareMathAccent{\widecheck}{0}{mathx}{"71}

\newcommand\Dp{\overline D}
\newcommand\Dm{\widecheck D}
\newcommand{\DL}{\lefthat{D}}
\newcommand{\DR}{\righthat{D}}
\newcommand{\Dstar}{D^\star}
\newcommand{\betaL}{\lefthat{\beta}}
\newcommand{\betaR}{\righthat{\beta}}
\newcommand{\gammaL}{\lefthat{\gamma}}
\newcommand{\gammaR}{\righthat{\gamma}}
\newcommand{\GammaL}{\lefthat{\Gamma}}
\newcommand{\GammaR}{\righthat{\Gamma}}

\newcommand{\ev}{\stackrel{\text{ev.}}}
\newcommand{\ltev}{{\ev <}}
\newcommand{\gtev}{{\ev >}}

\newcommand{\kk}[1]{{\color{black}#1}}

\newcommand{\ctc}{\color{black}}

\newcommand{\ratetheorems}{thm:rate_zero,thm:rate_largedev_lo,thm:rate_moddev,thm:rate_smalldev,thm:rate_largedev_hi,thm:rate_extreme}

\newcommand{\ratetheoremsthermo}{thm:rate_smalldev,thm:rate_moddev,thm:rate_largedev_hi,thm:rate_extreme}

\newcommand\blfootnote[1]{
  \renewcommand\thefootnote{}\footnote{#1}
  \addtocounter{footnote}{-1}
  \renewcommand{\thefootnote}{\arabic{footnote}}
}

\makeatletter
\def\thmt@innercounters{}
\def\thmt@trivialref#1#2{
  \ifcsname r@#1\endcsname
    \@xa\@xa\@xa\thmt@trivi@lr@f\csname r@#1\endcsname\relax\@nil
  \else #2\fi 
  \ifx#2 (Restated)\fi
}
\makeatother

\begin{document}
\title{Quantum dichotomies and coherent thermodynamics beyond first-order asymptotics}

\blfootnote{*~These authors contributed equally.}

\author{Patryk Lipka-Bartosik$^*$}
\affiliation{Department of Applied Physics, University of Geneva, 1211 Geneva, Switzerland}

\author{Christopher T.\ Chubb$^{*,\dag}$}
\affiliation{Institute for Theoretical Physics, ETH Zurich, 8093 Z\"urich, Switzerland}
\blfootnote{\dag~\href{mailto:paper@christopherchubb.com}{\texttt{paper@christopherchubb.com}}}

\author{Joseph M.\ Renes}
\affiliation{Institute for Theoretical Physics, ETH Zurich, 8093 Z\"urich, Switzerland}

\author{Marco Tomamichel}
\affiliation{Department of Electrical and Computer Engineering, National University of Singapore, Singapore 117583, Singapore}
\affiliation{Centre for Quantum Technologies, National University of Singapore, Singapore 117543, Singapore}

\author{Kamil Korzekwa}
\affiliation{Faculty of Physics, Astronomy and Applied Computer Science, Jagiellonian University, 30-348 Kraków, Poland.}

\begin{abstract}
    We address the problem of exact and approximate transformation of quantum dichotomies in the asymptotic regime, i.e., the existence of a quantum channel $\E$ mapping $\rho_1^{\otimes n}$ into $\rho_2^{\otimes R_nn}$ with an error~$\epsilon_n$ (measured by trace distance) and $\sigma_1^{\otimes n}$ into $\sigma_2^{\otimes R_n n}$ exactly, for a large number~$n$. We derive second-order asymptotic expressions for the optimal transformation rate $R_n$ in the small, moderate, and large deviation error regimes, as well as the zero-error regime, for an arbitrary pair $(\rho_1,\sigma_1)$ of initial states and a commuting pair $(\rho_2,\sigma_2)$ of final states. We also prove that for $\sigma_1$ and $\sigma_2$ given by thermal Gibbs states, the derived optimal transformation rates in the first three regimes can be attained by thermal operations. This allows us, for the first time, to study the second-order asymptotics of thermodynamic state interconversion with fully general initial states that may have coherence between different energy eigenspaces. Thus, we discuss the optimal performance of thermodynamic protocols with coherent inputs and describe three novel resonance phenomena allowing one to significantly reduce transformation errors induced by finite-size effects. What is more, our result on quantum dichotomies can also be used to obtain, up to second-order asymptotic terms, optimal conversion rates between pure bipartite entangled states under local operations and classical communication.        

\end{abstract}

\date{\today}

\maketitle


\section{Introduction}


\subsection{Statistical inference}

Statistical inference is a powerful tool that allows us to explain the inner workings of the physical world by using statistical models based on data that holds crucial information about reality. From scientific discoveries to technological advancements, statistical inference is the backbone of many fields that have shaped our world. This process begins by forming a hypothesis, constructing an appropriate model (often represented by a family of probability distributions), and testing it against observed data. The theoretical foundations of statistical inference provide a solid framework for many essential fields, such as statistical estimation~\cite{altman1990practical,kay1993fundamentals,walter1997identification,cumming2013understanding}, metrology~\cite{weise1993bayesian,taylor1997introduction,rabinovich2006measurement,bernardo2009bayesian}, hypothesis testing~\cite{fisher1955statistical,lehmann1986testing,neyman1933ix,wald1939contributions}, decision theory~\cite{von1947theory,schoemaker1982expected,johnson1985effort,myerson1997game}, and machine learning~\cite{mitchell1997machine,bishop2006pattern,james2013introduction,pedregosa2011scikit}.

One of the central problems of the theory of statistical inference is to determine which statistical models are more informative, i.e., which probability distributions more accurately reflect reality~\cite{blackwell1953equivalent,hardy1952inequalities,cohen1998comparisons,SHANNON1958390,jorswieck2007majorization}. Given two probability distributions, $\mathbf{p}_1$ and $\mathbf{p}_2$, that describe some property of the physical system (e.g., the probability of observing given energy in the spectrum of a hydrogen atom), we say that $\mathbf{p}_1$ is more informative than~$\mathbf{p}_2$ when the latter can be obtained from the former by bistochastic processing. \kk{One can also imagine a more general situation where the physical system depends on some hidden parameter, and hence it can be described by multiple models, depending on the value of the hidden parameter (such a parameter can, for example, specify if the system is in or out of thermal equilibrium). One is then interested in quantifying how well a given collection of models describes the system in question. In the case when the hidden parameter is binary, the system can be described with a pair of probability distributions $(\mathbf{p}, \mathbf{q})$. Now, imagine that we want to decide whether one pair $(\mathbf{p}_1, \mathbf{q}_1)$ provides a better statistical model, i.e., is more informative,  than another pair $(\mathbf{p}_2, \mathbf{q}_2)$.} We say that a pair of probability distributions, or a \emph{dichotomy}, $(\mathbf{p}_1, \mathbf{q}_1)$, is more informative than $(\mathbf{p}_2, \mathbf{q}_2)$ when there exists stochastic processing which maps $\mathbf{p}_1$ into $\mathbf{p}_2$, while also mapping $\mathbf{q}_1$ into $\mathbf{q}_2$. When such processing exists, then the first dichotomy \emph{relatively majorises} the second~\cite{hardy1952inequalities}, a property that can be characterised using the techniques of hypothesis testing~\cite{blackwell1953equivalent}. 

Since the processes that \kk{underlie} our physical observations are fundamentally quantum and given the recent rapid development of quantum technologies, it is natural to ask how the techniques of statistical inference translate into the quantum realm. This is the main focus of quantum statistical inference~\cite{helstrom1969quantum,alberti1980problem,holevo1982testing,matsumoto2010,Buscemi_2012}, a theoretical framework that forms the bedrock of quantum estimation theory~\cite{helstrom1969quantum,aharonov1987phase,paris2009quantum,hyllus2012fisher}, quantum sensing and metrology~\cite{giovannetti2011advances,boss2017quantum,toth2014quantum,pezze2018quantum,pirandola2018advances,demkowicz2012elusive}, quantum statistical mechanics~\cite{reif2009fundamentals,hill1986introduction,hertz2018quantum}, and quantum computing~\cite{feynman2018simulating,cleve1998quantum,aaronson2013quantum}. The main conceptual difference between the classical and quantum statistical inference is the fact that statistical models in quantum theory must be described by density operators rather than probability distributions. Therefore, the objects to be compared are \emph{quantum dichotomies} denoted by $(\rho, \sigma)$ for density operators $\rho$ and $\sigma$. We say that the dichotomy $(\rho_1, \sigma_1)$ is more informative than $(\rho_2, \sigma_2)$ if there exists a quantum channel that jointly transforms~$\rho_1$ into~$\rho_2$ and~$\sigma_1$ into~$\sigma_2$. If such a channel exists, then the first dichotomy precedes the second one in the so-called Blackwell order~\cite{shmaya2005comparison,chefles2009quantum}. Importantly, when the two density operators forming a quantum dichotomy commute, they can be simultaneously diagonalised and can thus be treated classically. This is not the case for non-commuting quantum dichotomies, in which case the inference task becomes genuinely quantum. This regime naturally leads to a much richer behaviour, but is notoriously harder to characterise. 


\subsection{Quantum thermodynamics}

Perhaps one of the most impressive applications of statistical inference is in the field of thermodynamics. Indeed, modern thermodynamics started from the realisation that statistical models can effectively describe macroscopic processes like flows of heat and its fluctuations~\cite{Kubo_1966,callen1951}, phase transitions~\cite{stanley1987introduction,binder1987theory} or the dynamics of chemical reactions~\cite{prigogine1962chemical,reif2009fundamentals}. These processes generally involve unfathomable numbers of degrees of freedom, and therefore finding their complete description by solving the corresponding equations of motion is usually beyond reach. It is nowadays widely accepted that when the numbers of particles are large enough, one can use the techniques of statistical inference to build statistical models describing the physical system with an accuracy (or error) that increases (decreases) with the number of particles~\cite{gibbs1902elementary,ledoux2001concentration,Touchette_2015}. In the limit  when the system of interest is composed of infinitely many particles (the so-called thermodynamic limit), the approximation errors vanish and all relevant macroscopic observables can be fully characterised using only few relevant quantities known as thermodynamic potentials, e.g., the (equilibrium) free energy~\cite{reif2009fundamentals}. 

Thermodynamic limit is a convenient mathematical idealisation, but it cannot be justified in many experimentally and theoretically relevant situations. More specifically, when one is interested in the evolution of finite-size systems, fluctuations of thermodynamic variables cannot be neglected and the system's behaviour depends on more than a single thermodynamic potential. This regime is hardly discussed in thermodynamic textbooks, as it often requires rather advanced mathematical techniques of asymptotic analysis. Interestingly, this regime is surprisingly rich and allows one to investigate, i.a., the fundamental irreversibility of thermodynamic transformations~\cite{chubb2018beyond}, which cannot be observed when working solely in the thermodynamic limit. 

Some of the techniques developed within the framework of quantum statistical inference were recently adapted to study (quantum) thermodynamic processes. This led to the realisation that, in an idealised model of thermodynamics known as the resource theory of thermal operations~\cite{Janzing2000,horodecki2013fundamental,brandao2013resource,Brand_o_2015,Alhambra2016,Lostaglio2019,vinjanampathy2016quantum,halpern2015introducing,Mueller2018,LipkaBartosik2021}, a single quantity -- the (quantum) non-equilibrium free energy -- completely characterises the optimal rates of all thermodynamic transformations \cite{brandao2013resource}. This interpretation, however, is only valid in the thermodynamic limit of infinitely many copies of quantum systems. Despite many significant efforts, characterising thermodynamic transformations for general quantum states beyond the thermodynamic limit has remained a central problem for the resource theory of quantum thermodynamics. This difficulty can be easily understood once we realise that the techniques of statistical inference become accurate only when the numbers of particles are sufficiently large. On the other hand, it is known that quantum effects generally become less relevant with the increase in systems' size, meaning that either the coherence per particle vanishes~\cite{lostaglio2015quantum} or that the local observables begin to commute approximately when the system is comprised of a sufficient number of copies~\cite{Yunger_Halpern_2016}. Therefore, a natural question arises: Can we use the tools of quantum statistical inference to gain new insights into the thermodynamics of genuinely quantum systems beyond the thermodynamic limit? 


\subsection{Summary of results}

In this work we develop a unified mathematical framework that allows one to compare the informativeness of quantum dichotomies up to second-order asymptotics (i.e., when the transformed dichotomies consist of a large number of identical and independent systems) and for various error regimes. Our results are applicable for arbitrary input dichotomies, and commuting target dichotomies. This demonstrates, for the first time, how to compare quantum statistical models outside of the idealised limit of infinite repetitions of the experiments. Second, we apply our results on quantum dichotomies to study the fundamental laws governing thermodynamic transformations for large, but finite numbers of particles. As a consequence, we characterise thermodynamic transformations of general energy-coherent input states outside of the thermodynamic limit. We observe that, in this regime, quantum systems can be fully characterised using only a few relevant quantities, in complete analogy with the classical case. Importantly, this shows that the second-order analysis is an especially interesting regime where statistical inference remains highly accurate, while the quantum nature of the thermodynamic process still plays a prominent role. To demonstrate this, we study, in full generality, the fundamental thermodynamic protocols like work extraction, as well as quantify the minimal free energy dissipation when transforming quantum systems. We furthermore discover three novel resonance phenomena, the most interesting of which indicates that quantum coherence can be exploited to increase the reversibility of state transformations. Finally, we also discuss how our general results on quantum dichotomies can be used to bring novel and unifying insights into other fields, like the theory of entanglement or coherence.  

The paper is organised as follows. In \cref{sec:frame} we summarise the frameworks of quantum dichotomies (\cref{subsec:dich}), as well as the resource theories of thermodynamics (\cref{subsec:thermo}), entanglement (\cref{sec:frame_ent}), and define some relevant information-theoretic notions used throughout the paper (\cref{subsec:inf_notions}). In \cref{sec:results} we discuss our main results. In particular, after presenting an auxiliary lemma on sesquinormal distributions (\cref{subsec:sesq}), we outline our main technical results on quantum dichotomies (\cref{subsec:ncdich}), quantum thermodynamics (\cref{subsec:coh_thermo}) and entanglement (\cref{subsec:res_ent}). In \cref{sec:app} we discuss some of the applications of our results to the thermodynamic and entanglement scenarios. In particular, we show how our results can be used to determine optimal thermodynamic protocols with coherent inputs (\cref{subsec:opt_thermo_prot}), we investigate new types of resonance phenomena (\cref{subsec:res}), and briefly elaborate on the relevance of our results for entanglement theory (\cref{subsec:ent}). In \cref{sec:derive} we give proofs for the asymptotic results we described in previous sections. Specifically, we review and extend the relationship between quantum dichotomies and hypothesis testing (\cref{subsec:ht}), present some of the results on hypothesis testing (\cref{subsec:htass}) and prove the asymptotic transformation rates in different error regimes (\cref{subsec:rates}). Finally, we finish with \cref{sec:outlook}, which gives a short outlook on the potential further applications and extensions on our results. Technical derivations not required to understand the results are given in the \cref{app:pinch,app:sesqui,app:uni,app:consistency,app:res,app:two-side,app:thermal,app:battery,app:entanglement}.


\section{Framework}
\label{sec:frame}

We will denote by $\geq$ the L\"{o}wner partial order, i.e., for two Hermitian matrices $A$ and $B$ the relation $A \geq B$ means that $A - B$ is positive semi-definite. 
To measure distance between two density matrices, $\rho$ and $\sigma$, we will use trace distance  $T(\rho,\sigma):=\frac{1}{2}\|\rho - \sigma \|_{\textrm{tr}}$, where $\norm{X}_{\textrm{tr}} := \Tr |X|$ is the Schatten-1 norm. As a slight abuse of notation, we will also interchangeably refer to the total variation distance on classical distributions, $T(\v p,\v q):=\frac 12\sum_i\abs{p_i-q_i}$, as the trace distance. The fidelity between $\rho$ and $\sigma$ is given by $F(\rho, \sigma) := \norm{\sqrt{\rho}\sqrt{\sigma}}^2_{\textrm{tr}}$. All states we will consider are finite-dimensional, and we will denote the local dimension by $d$ when relevant. We take $\exp(\cdot)$ and $\log(\cdot)$ to be in an arbitrary but compatible base, and use $\ln(\cdot)$ to denote the natural logarithm.


\subsection{Quantum dichotomies}
\label{subsec:dich}

For two quantum dichotomies, $(\rho_1, \sigma_1)$ and $(\rho_2, \sigma_2)$, we will be interested whether there exists a completely positive trace-preserving map $\mathcal{E}$ such that $\rho_2 = \mathcal{E}(\rho_1)$ and $\sigma_2 = \mathcal{E}(\sigma_1)$. If such a channel exists, then we say that the first dichotomy precedes the second one in the Blackwell order~\cite{blackwell1953equivalent}, which we denote by $(\rho_1, \sigma_1) \succeq (\rho_2, \sigma_2)$\footnote{Note that the problem of comparing quantum dichotomies is sometimes formulated as a resource theory of asymmetric distinguishability~\cite{wang2019resource}. Since the two formulations are essentially equivalent, all of our results can be interpreted in this resource-theoretic way.}. We further consider the concept of an approximate Blackwell order by requiring that the two states are only reproduced approximately by the channel. That is, we write $(\rho_1, \sigma_1) \succeq_{(\epsilon_\rho,\epsilon_\sigma)} (\rho_2, \sigma_2)$ if and only if there exists a quantum channel $\mathcal{E}$ such that 
\begin{align}
    T(\mathcal{E}(\rho_1),\rho_2 ) \leq \epsilon_\rho \quad \textnormal{and} \quad
    T(\mathcal{E}(\sigma_1),\sigma_2 ) \leq \epsilon_\sigma.
\end{align}

It is known that for commuting dichotomies, \mbox{$[\rho_1, \sigma_1] = [\rho_2, \sigma_2] = 0$}, the problem of determining a suitable channel reduces to the classical problem of comparing probability distributions. It was observed in Ref.~\cite{renes2016relative} that in this case, by employing Blackwell’s equivalence theorem~\cite{blackwell1953equivalent}, one can show that $(\rho_1, \sigma_1) \succeq_{(\epsilon_\rho,\epsilon_\sigma)} (\rho_2, \sigma_2)$ if and only if
\begin{align}
    \label{eq:transformation_conditions}
    \beta_{x}(\rho_1 \| \sigma_1) \leq \beta_{x-\epsilon_\rho}(\rho_2 \| \sigma_2) + \epsilon_\sigma \quad  \forall x \in (\epsilon_\rho,1).
\end{align}
Here, $\beta_{x}(\rho\|\sigma)$ is the solution of the semi-definite optimisation problem
\begin{aligns}
    \label{eq:beta1}
    \min_Q \quad & \Tr(\sigma Q), \\
    \label{eq:beta2}
    \textnormal{subject to} \quad & 0 \leq Q \leq 1, \\
    \label{eq:beta3}
    & \Tr (\rho Q) \geq 1 - x.
\end{aligns}
The two quantities, $x$ and $\beta_x(\rho\|\gamma)$, can be interpreted as two errors appearing in a binary hypothesis testing problem. More specifically, $\beta_x(\rho\|\sigma)$ is the minimum type-II error given that the type-I error is upper bounded by~$x$ for a binary hypothesis testing with a null hypothesis $\rho$ and an alternative hypothesis $\sigma$~\cite{helstrom1969quantum}. In the fully quantum case, i.e., when $[\rho_1,\sigma_1]\neq 0$ and $[\rho_2, \sigma_2] \neq 0$, the conditions specified by Eq.~\eqref{eq:transformation_conditions} (and referred to as relative majorisation preorder in Ref.~\cite{renes2016relative}) no longer characterise the Blackwell's order~\cite{Matsumoto2014,jenvcova2012comparison,reeb2011hilbert,buscemi2012comparison}, beyond the simplest case of two-dimensional density matrices~\cite{alberti1980problem}. \kk{For attempts to overcome this limitation, see, e.g., Refs.~\cite{buscemi2012comparison,jenvcova2016comparison,gour2018quantum}.}


\subsection{Resource theory of thermodynamics}
\label{subsec:thermo}

In the resource-theoretic approach to thermodynamics, one focuses on a system $S$ with a Hamiltonian \mbox{$H = \sum_{i=1}^{d} E_i \ketbra{i}{i}$} and a heat bath $B$ at some fixed inverse temperature $\beta$ with an arbitrary Hamiltonian~$H_{{B}}$~\cite{Janzing2000,horodecki2013fundamental}. The heat bath is always assumed to be prepared in a thermal Gibbs state,
\begin{equation}
    \gamma_B = \frac{e^{-\beta H_{B}}}{Z_{B}},\quad Z_{{B}} = \tr{e^{-\beta H_{{B}}}}.
\end{equation} 
The interaction of the system with the heat bath is mediated by a unitary $U$ that conserves the total energy, i.e., obeys the additive conservation law \mbox{$[U, H\otimes\iden_B+ \iden\otimes H_{B}] = 0$}. The effective map $\mathcal{E}$ that is obtained by evolving the system and the heat bath using unitary $U$ and discarding part of the joint system is called a \emph{thermal operation} (TO) and can be formally written as
\begin{align}
    \label{eq:TO}
    \mathcal{E}(\rho) = \Tr_{B'} \left[U \left(\rho \ot \gamma_{B}\right) U^{\dagger}\right],
\end{align}
where the partial trace can be performed over any subsystem $B'$ of the joint system. Note that since we allow for $B'\neq B$, the Hamiltonian of the final system may differ from $H$, and so we will use $\gamma_1$ and $\gamma_2$ to denote the Gibbs thermal states of the initial and final systems. We say that $\rho_1 \xrightarrow[\mathrm{TO}]{\epsilon} \rho_2$ when there exists a thermal operation $\E$ such that \mbox{$\E(\rho_1) = \tilde{\rho}_2$}, with $\tilde{\rho}_2$ being a final state that is $\epsilon$-close to the target state $\rho_2$ in trace distance, i.e. $T(\tilde{\rho}_2,\rho_2)=\epsilon$.

Characterising the set of transitions achievable via thermal operations in full generality remains an open problem. In the semi-classical case, i.e., when $\rho_1$ and~$\rho_2$ are block-diagonal in the energy eigenbasis (or equivalently when $[\rho_1,\gamma_1]=[\rho_2,\gamma_2]=0$), the existence of a thermal operation transforming $\rho_1$ into $\rho_2$ while changing the Hamiltonian from $H_1$ to $H_2$ is equivalent to the existence of an \emph{arbitrary} quantum channel mapping a quantum dichotomy $(\rho_1, \gamma_1)$ into $(\rho_2, \gamma_2)$~\cite{horodecki2013fundamental,Lostaglio2019}. As a consequence, the Blackwell's theorem in this case fully characterises the set of states achievable under thermal operations~\cite{horodecki2013fundamental}. More specifically, as observed in Ref.~\cite{renes2016relative}, for energy-incoherent (block-diagonal) states $\rho_1$ and $\rho_2$, we have $\rho_1 \xrightarrow[\mathrm{TO}]{\epsilon} \rho_2$ if and only if
\begin{align}
   \beta_{x}(\rho_1 \| \gamma_1) \leq \beta_{x-\epsilon}(\rho_2 \| \gamma_2)   \hspace{10pt} \text{for all} \hspace{10pt} x \in (\epsilon,1).
\end{align}
The above condition is just a special case of Eq.~\eqref{eq:transformation_conditions}, and thus we see that the problems of transforming quantum dichotomies and the thermodynamic state transformation are very closely related. 


\subsection{Resource theory of entanglement}
\label{sec:frame_ent}

The resource theory of entanglement investigates the scenario where a bipartite system is distributed between two spatially separated  agents~\cite{nielsen_chuang_2010}. The agents can act locally on their respective parts and can exchange classical information. The resulting set of free operations is called \emph{local operations and classical communication} (LOCC). Free states of this theory, i.e., states that can be prepared using only LOCC, are given by all separable states. While a complete characterisation of LOCC transformations for general mixed states remains an open problem, for pure states there exists a relatively simple characterisation known as the Nielsen's theorem~\cite{nielsen1999conditions,lo2001concentrating}. The theorem states that a pure bipartite state $\psi_1$ with Schmidt coefficients $\v{p_1}$ can be converted into state $\psi_2$ with Schmidt coefficients $\v{p_2}$ by means of LOCC if and only if there exists a bistochastic matrix mapping $\v{p_2}$ to~$\v{p_1}$. 

It was then observed in Ref.~\cite{renes2016relative} that Nielsen's theorem can be formulated in the language of quantum dichotomies when Schmidt vectors of input and output states, $\v{p}_1$ and $\v{p}_2$, have equal dimension. More specifically, by denoting with $\rho_i$ diagonal matrices with $\v{p}_i$ on the diagonals, the existence of a transformation that $(1)$ maps $\rho_1$ to $\rho_2$ with a transformation error $\epsilon$, and $(2)$ maps a maximally mixed state into itself, is equivalent to the existence of a bistochastic matrix mapping $\v{p}_2$ into $\v{p}_1$ with an error $\epsilon$. Now, in Ref.~\cite{chubb2018beyond} (see the generalisation of Lemma~12 in Appendix~D therein) it was shown that the latter is equivalent to the existence of a bistochastic matrix mapping a distribution $\epsilon$-close to $\v{p}_2$ into $\v{p}_1$. This means that an LOCC map transforming $\psi_1$ into $\psi_2$ with a transformation error $\epsilon$ exists if and only if a quantum dichotomy $(\rho_2, \iden_d/d)$ can be approximately transformed into $(\rho_1, \iden_d/d)$. 

To deal with the case of systems with different lengths of Schmidt vectors, $d_1$ (for input) and $d_2$ (for output), one can extend the input system with a pure bipartite separable state with local dimensions $d_2$ and the output system with an analogous state with local dimensions $d_1$. Then, there exists an LOCC map transforming general pure bipartite state $\psi_1$ into $\psi_2'$ whose Schmidt vector $\v{p_2'}$ is $\epsilon$ away in total variation distance from the Schmidt vector $\v{p_2}$ of $\psi_2$, if and only if 
\begin{equation}
 \!\!\left(\!\rho_{2}\otimes \ketbra{0}{0}_{d_1}\!,\! \frac{\iden_{d_1d_2}}{d_1d_2}\!\right) 
 \!\succeq_{(\epsilon,0)}\!
  \left(\!\rho_{1}\!\otimes\! \ketbra{0}{0}_{d_2}, \frac{\iden_{d_1d_2}}{d_1d_2}\!\right)\!.  \!
\end{equation}
Since states appearing in these dichotomies commute, the Blackwell's theorem fully characterises states achievable under LOCC. More specifically, an LOCC transformation $\psi_1 \xrightarrow[\mathrm{LOCC}]{\epsilon} \psi_2$ exists if and only for all $x \in (\epsilon,1)$ one has
\begin{align}
    \label{eq:entanglement_condition}
   d_2 \beta_{x}\left(\rho_{2} \middle\| \frac{\iden_{d_2}}{d_2}\right) \leq d_1 \beta_{x-\epsilon}\left(\rho_{1} \middle\| \frac{\iden_{d_1}}{d_1}\right).
\end{align}
Of course, the  above conditions are again a special case of Eq.~\eqref{eq:transformation_conditions}.   


\subsection{Information-theoretic and statistical notions}
\label{subsec:inf_notions}
To formulate our results, we will need the following notions. First, the von Neumann entropy and entropy variance are defined as
\begin{aligns}
    S(\rho):=&-\Tr \left(\rho \log \rho \right),\\
    V(\rho):=&\Tr \left(\rho (\log\rho)^2\right) -S(\rho)^2,
\end{aligns}
and their relative cousins, the relative entropy~\cite{umegaki62conditional} and the relative entropy variance~\cite{tomamichel2013hierarchy,li2014second}, as
\begin{aligns}
    D(\rho\|\sigma)&:=\left(\Tr\rho\left(\log\rho-\log\sigma\right)\right),\\
    V(\rho\|\sigma)&:=\Tr\left(\rho\left(\log\rho-\log\sigma\right)^2\right)-D(\rho\|\sigma)^2.
\end{aligns}
Note that, for $\sigma$ given by the thermal Gibbs state $\gamma$, the above quantities can be interpreted as non-equilibrium free energy~\cite{brandao2013resource} and free energy fluctuations~\cite{chubb2018beyond,biswas2022fluctuation} respectively. We also define two variants of the R\'enyi relative entropy \cite{renyi1961measures}, namely the Petz relative entropy $\Dp_\alpha$ \cite{petz1986quasi} and the minimal relative entropy $\Dm_\alpha$ \cite{muller2013quantum,AudenaertDatta2013,Tomamichel2016,WildeWinterYang2013}, that is
\begin{aligns}
    \!\!\!\!\Dp_\alpha(\rho\|\sigma):=&\frac{\log\Tr \left(\rho^{\alpha}\sigma^{1-\alpha}\right)}{\alpha-1},\\
    \!\!\!\!\Dm_\alpha(\rho\|\sigma):=&
    \begin{dcases}
    \frac{\log\Tr\left(\! \left(\sqrt \rho\sigma^{\frac{1-\alpha}{\alpha}}\sqrt \rho\right)^\alpha\right)}{\alpha-1} & \alpha\geq \frac{1}{2},\\
    \frac{\log\Tr\left(\! \left(\sqrt \sigma\rho^{\frac{\alpha}{1-\alpha}}\sqrt \sigma\right)^{1-\alpha}\right)}{\alpha-1} & \alpha\leq \frac{1}{2}.
    \end{dcases}
\end{aligns}
Note that if the states are commuting, $[\rho,\sigma]=0$, then both relative entropies are identical, and in this case we shall denote this without adornment as $D_\alpha$. Finally, for classical probability distributions, we will also use the Shannon entropy and the related entropy variance,
\begin{aligns}
    H(\v{p})&:=-\sum_i p_i \log p_i,\\
    V(\v{p})&:=\sum_i p_i \left(\log p_i - H(\v{p})\right)^2,
\end{aligns}
as well as the R\'enyi entropies,
\begin{equation}
    H_\alpha(\v{p})=\frac{1}{1-\alpha}\log\left(\sum_ip_i^\alpha\right).
\end{equation}

The probability density function and the cumulative distribution function of a normal distribution with mean $\mu$ and variance $\nu$ will be denoted by $\phi_{\mu,\nu}(x)$ and $\Phi_{\mu,\nu}(x)$, whereas their standardised versions (with $\mu=0$ and \mbox{$\nu=1$}) by $\phi(x)$ and $\Phi(x)$. We also introduce the following function,
\begin{align} 
    \label{eq:sesqui_gen}
    S_\nu^{(\delta)}(\mu):=\inf_{A\geq \Phi}{\ctc \delta(A',\phi_{\mu,\nu})}
    ,
 \end{align}
where $\nu\in\mathbb R^+$ is a parameter, $\mu\in \mathbb R$, $\delta$ is a statistical distance, and the infimum is taken over cumulative distribution functions $A$ {\ctc(with probability density function $A'$)} which are pointwise greater than~$\Phi$. As we shall see in \cref{lem:sesqui}, this function is a cumulative distribution function if $\delta$ is chosen to be the trace distance. The introduction of $S_\nu^{(\delta)}$ is inspired by Ref.~\cite{kumagai2016second}, where the authors investigated its special case, called the Rayleigh-Normal distribution, with $\delta$ given by infidelity distance. The name of the function comes from the fact that, as $\nu$ is varied, it interpolates between the normal and the Rayleigh distribution. In this paper, we will mainly focus on another special case, with $\delta$ given by the trace distance, and will denote the corresponding cumulative distribution function simply as
\begin{align}
    \label{eq:sesqui_def}
    S_\nu(\mu):=\frac 12 \inf_{A\geq \Phi}\int_{\mathbb R }\abs{A'(x)-\phi_{\mu,\nu}(x)}\mathrm dx.
\end{align}
We will refer to the above as the \emph{sesquinormal distribution}, since we will prove that it interpolates between the normal and half-normal distributions for varying $\nu$.


\section{Results}
\label{sec:results}

The main technical result of this paper consists of a unified approach for capturing the problem of optimal transformations of quantum dichotomies in the small, moderate, large and extreme deviation regimes. It not only provides a much simpler and clearer derivation than the previously known results employing infidelity to measure transformation error~\cite{kumagai2016second,chubb2018beyond,chubb2019moderate}, but it also extends the formalism to the case of non-commuting input states. This, in turn, leads to the main conceptual result of the paper: The generalisation of the second-order asymptotic analysis of thermodynamic state interconversion to the case of general (energy-coherent) input states. Before formally stating all these results, however, we first present auxiliary results that concern the properties of the sesquinormal distribution, which may be of independent interest.


\subsection{Sesquinormal distribution}
\label{subsec:sesq}
The sesquinormal distribution was defined implicitly via an optimisation in \cref{eq:sesqui_def}. We start by giving an explicitly closed-form solution of this optimisation problem and specify some relevant properties of the sesquinormal distribution.

\begin{restatable}[Sesquinormal distribution]{lemma}{sesqui}
    \label{lem:sesqui}
    The function $S_\nu$ is a cumulative distribution function (cdf) for any \mbox{$\nu\in[0,\infty)$}. Moreover, for \mbox{$\nu\notin\lbrace 0,1,\infty\rbrace$} the cdf has the closed form 
    \begin{align}
          S_{\nu}(\mu)=&\Phi\left(\frac{\mu-\sqrt\nu\sqrt{\mu^2+(\nu-1)\ln\nu}}{1-\nu}\right)\\
          &\qquad-\Phi\left( \frac{\sqrt{\nu}\mu-\sqrt{\mu^2+(\nu-1)\ln\nu}}{1-\nu} \right),\notag
    \end{align}
    and for $0<\nu<\infty$ the inverse cdf can be expressed as
    \begin{align}
        S_{\nu}^{-1}(\epsilon)=\min_{x\in(\epsilon,1)}\sqrt{\nu}\Phi^{-1}(x)-\Phi^{-1}(x-\epsilon).
    \end{align}
    The extreme cases $\nu=0$ and $\nu\to\infty$ reduce to the normal distribution
    \begin{align}
        S_0(\mu)=\lim_{\nu\to\infty}S_\nu(\sqrt \nu \mu)&=\Phi(\mu),
    \end{align}
    and the $\nu=1$ reduces to the half-normal distribution
    \begin{align}
        S_1(\mu)=\max\lbrace 2\Phi(\mu/2)-1,0\rbrace.
    \end{align}
    Finally, the family of sesquinormal distributions has a duality under reciprocating the parameter,
    \begin{align}
        S_\nu(\mu)=S_{1/\nu}(\mu/\sqrt \nu)
        ~~\text{or}~~S_\nu^{-1}(\epsilon)=\sqrt \nu S_{1/\nu}^{-1}(\epsilon).
    \end{align}
\end{restatable}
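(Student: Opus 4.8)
The plan is to analyze the variational problem in \cref{eq:sesqui_def} directly and exhibit the optimizer explicitly, then read off all the stated consequences. First I would reformulate the constraint: writing $A = \Phi + \Delta$ where $\Delta \geq 0$ is the ``excess'' mass (so $A$ remains a cdf means $\Delta$ is nondecreasing up to the constraint that $A \le 1$), the objective becomes $\tfrac12 \int |A' - \phi_{\mu,\nu}|$. The key observation is that for the trace distance between two probability densities $f,g$ one has $\tfrac12\int |f-g| = \int (f-g)_+ = 1 - \int \min(f,g)$, so minimizing the trace distance is equivalent to maximizing the overlap $\int \min(A', \phi_{\mu,\nu})\,dx$ subject to $A \ge \Phi$ being a valid cdf. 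Since $A'$ is only constrained through the integral condition $A\geq\Phi$ (not pointwise on $A'$ itself except nonnegativity and normalization), the optimal $A'$ should equal $\phi_{\mu,\nu}$ wherever that is permissible and otherwise be pushed to make $A$ hug $\Phi$. Concretely I expect the optimizer to look like: $A'(x) = \phi_{\mu,\nu}(x)$ on some interval, and $A(x) = \Phi(x)$ (i.e. $A' = \phi$) outside it, with matching at two crossover points $x_1 < x_2$ determined by the continuity/tangency conditions $\Phi(x_i) + \text{(accumulated excess)} = \Phi_{\mu,\nu}(x_i)$-type equations. Solving the tangency condition $\phi(x) = \phi_{\mu,\nu}(x)$, i.e. equating the two Gaussian densities, gives a quadratic in $x$ whose roots produce exactly the arguments $\frac{\mu - \sqrt\nu\sqrt{\mu^2+(\nu-1)\ln\nu}}{1-\nu}$ and $\frac{\sqrt\nu\mu - \sqrt{\mu^2+(\nu-1)\ln\nu}}{1-\nu}$ appearing in the closed form; substituting back and integrating yields $S_\nu(\mu) = \Phi(x_1) - \Phi(x_2)$ as claimed.

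Once the closed form is established for $\nu \notin \{0,1,\infty\}$, the remaining claims follow by elementary manipulations. That $S_\nu$ is a genuine cdf: monotonicity and the limits $S_\nu(-\infty)=0$, $S_\nu(+\infty)=1$ can be checked either from the closed form or, more robustly, directly from the variational definition (the infimum of trace distances is between $0$ and $1$, monotone in $\mu$ by a coupling/shift argument). The boundary cases $\nu=0$, $\nu\to\infty$, and $\nu=1$ I would get by taking limits in the closed form — taking $\nu\to 0$ collapses one of the $\Phi$ terms to $\Phi(\mu)$ and the other to $0$; the half-normal case $\nu=1$ needs a separate direct computation since the closed form degenerates ($1-\nu$ in denominators), but there the two Gaussians have equal variance and the overlap computation is straightforward, giving $\max\{2\Phi(\mu/2)-1,0\}$. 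The duality $S_\nu(\mu) = S_{1/\nu}(\mu/\sqrt\nu)$ is most cleanly seen from the inverse-cdf expression: substituting $\nu \mapsto 1/\nu$ and rescaling in $\min_{x}\sqrt\nu\,\Phi^{-1}(x) - \Phi^{-1}(x-\epsilon)$ gives the factor $\sqrt\nu$ directly; alternatively it reflects the symmetry $\Phi \leftrightarrow \Phi_{\mu,\nu}$ under swapping hypotheses and rescaling in the original optimization. The inverse-cdf formula itself, $S_\nu^{-1}(\epsilon) = \min_{x\in(\epsilon,1)}\sqrt\nu\,\Phi^{-1}(x) - \Phi^{-1}(x-\epsilon)$, I would obtain by recognizing \cref{eq:sesqui_def} as the limiting/continuum form of the discrete relative-majorization condition \cref{eq:transformation_conditions} with Gaussian $\beta$-functions — indeed $\beta_x$ for a shifted-Gaussian pair has the form $\Phi(\sqrt\nu\,\Phi^{-1}(\cdot) - \text{shift})$ — so that the "$\epsilon$-smoothed" condition becomes exactly this minimization over the type-I error parameter $x$.

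The main obstacle I anticipate is rigorously justifying that the proposed optimizer is in fact optimal — i.e. the argument that $A'$ should coincide with $\phi_{\mu,\nu}$ on an interval and "track" $\Phi$ (meaning $A = \Phi$, $A' = \phi$) elsewhere. This requires a genuine variational/convexity argument: the functional $A \mapsto \tfrac12\int|A'-\phi_{\mu,\nu}|$ is convex, the feasible set $\{A \ge \Phi : A \text{ a cdf}\}$ is convex, so a first-order optimality (KKT-type) condition characterizes the minimizer, but turning "pointwise either $A'=\phi_{\mu,\nu}$ or the constraint $A\ge\Phi$ is tight" into the clean two-crossover picture needs care about the structure of the contact set and the fact that the two densities $\phi$ and $\phi_{\mu,\nu}$ cross at most twice (which uses $\nu \neq 1$; for $\nu=1$ they cross once unless $\mu=0$). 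I would handle this by first identifying the candidate from the optimality conditions, then verifying optimality by a direct comparison: for any competitor $\tilde A$, bound $\tfrac12\int|\tilde A' - \phi_{\mu,\nu}| - \tfrac12\int|A^\star{}' - \phi_{\mu,\nu}|$ from below using the sign structure of $A^\star{}' - \phi_{\mu,\nu}$ together with $\tilde A \ge \Phi = $ the envelope, integrating by parts to exploit monotonicity of $\tilde A - A^\star$ where the constraint is active. This is the step where the bulk of the real work lies; the rest is bookkeeping with Gaussian integrals. I would relegate the detailed verification to an appendix (the excerpt references \cref{app:sesqui}) and in the main text give only the closed form plus the variational characterization.
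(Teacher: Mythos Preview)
Your variational/KKT route can be made to work in principle, but the paper sidesteps essentially all of the ``bulk of the real work'' you anticipate. Rather than characterising the optimiser via first-order conditions, the paper uses the event form of total variation, $T(P,Q)=\sup_{R}\int_R(P'-Q')$, and simply picks $R=(-\infty,X]$ (for $\nu<1$; the ray is reversed for $\nu>1$), where $X$ is the density crossing that also satisfies $\Phi(X)\geq\Phi_{\mu,\nu}(X)$. This instantly yields $S_\nu(\mu)\geq A(X)-\Phi_{\mu,\nu}(X)\geq \Phi(X)-\Phi_{\mu,\nu}(X)$, and the explicit candidate $A=\max\{\Phi,\Phi_{\mu,\nu}\}$ achieves equality --- no KKT, no contact-set analysis. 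Note also that your picture of the optimiser is off: two Gaussian \emph{cdfs} with $\nu\neq 1$ cross exactly once, so $A=\max\{\Phi,\Phi_{\mu,\nu}\}$ switches on a half-line, not on a bounded interval with two tangency points. Relatedly, your claim that both arguments in the closed form are roots of $\phi=\phi_{\mu,\nu}$ is false: only the first argument $X$ is such a root, while the second is $(X-\mu)/\sqrt\nu$, which appears simply because $\Phi_{\mu,\nu}(X)=\Phi\bigl((X-\mu)/\sqrt\nu\bigr)$.

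For the inverse cdf, the paper again proceeds more directly than you propose: having the closed form for $S_\nu$ in hand, it sets $\mu:=\min_{x}\bigl[\sqrt\nu\,\Phi^{-1}(x)-\Phi^{-1}(x-\epsilon)\bigr]$, locates the minimiser $y$ by solving the stationarity condition (which reduces to the same quadratic in $\Phi^{-1}(y)$), and substitutes back to verify $S_\nu(\mu)=\epsilon$. Your suggestion to derive it from the relative-majorisation condition \eqref{eq:transformation_conditions} is the right intuition for why the formula looks the way it does, but it is not turned into a proof. The remaining claims (limits $\nu\to 0,1,\infty$ and the duality $S_\nu(\mu)=S_{1/\nu}(\mu/\sqrt\nu)$) you handle essentially as the paper does, by direct substitution into the closed form; monotonicity is also argued from the definition via a shift, as you suggest.
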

\begin{proof}
    See \cref{app:sesqui}.    
\end{proof}


\subsection{Non-commuting quantum dichotomies}
\label{subsec:ncdich}

We now turn to our central results on the second-order asymptotic analyses of transformation rates between quantum dichotomies in all error regimes. Specifically, let $R_n^*(\epsilon_n)$ denote the largest rate $R_n$ such that
\begin{align}
    \left(\rho_1^{\otimes n},\sigma_1^{\otimes n}\right)
    \succeq_{(\epsilon_n,0)}
    \left(\rho_2^{\otimes R_nn},\sigma_2^{\otimes R_nn}\right).
\end{align}
\cref{\ratetheorems} will all concern the asymptotic scaling of $R_n^*(\epsilon_n)$, split by the scaling of the error $\epsilon_n$ measured by trace distance. We note that one could also consider a two-sided error variant of this problem with a pair of error sequences $\epsilon_{n}^{(\rho)}$ and $\epsilon_{n}^{(\sigma)}$. We shall neglect this more general problem in the body of this paper, but cover the extension of our results to this regime in \cref{app:two-side}. We do this partially because these two-sided results are not applicable to the resource theoretic problems we are mostly focused on, and partially because this two-sided problem is in fact no more rich, with the optimal transformation simply diverging to infinity in many regimes.

Before we move on to the second-order analysis, we start with the previously studied~\cite{brandao2013resource,Gour_2022,buscemi2019information,wang2019resource}\footnote{It should be noted there is some dispute on the status of proofs of \cref{thm:rate_first}. Ref.~\cite{brandao2013resource} claims to prove it, for non-commuting inputs \emph{and} outputs. However, Ref.~\cite{Gour_2022} raised questions of possible gaps in this proof, especially surrounding the case of non-commuting outputs.} first-order case, which states that the asymptotic transformation rate is controlled by the relative entropy:
\begin{restatable}[First-order rate]{theorem}{ratefirst}
    \label{thm:rate_first}
    For constant \mbox{$\epsilon\in(0,1)$} and $[\rho_2,\sigma_2]=0$, the optimal rate converges:
    \begin{align}
        \lim_{n\to\infty} R_n^*(\epsilon)=\frac{D\reli{1}}{D\reli{2}}.
    \end{align}
    Furthermore, if we consider more general target dichotomies, $[\rho_2,\sigma_2]\neq 0$, then we still have the upper bound:
    \begin{align}
        \limsup_{n\to\infty} R_n^*(\epsilon)\leq \frac{D\reli{1}}{D\reli{2}}.
    \end{align}
\end{restatable}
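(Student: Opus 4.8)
The plan is to prove the upper bound, which holds for arbitrary targets, by data processing of the hypothesis-testing error together with the quantum Stein lemma, and then to prove a matching lower bound in the commuting-target case by an explicit ``classicalise, then relatively majorise'' construction. Together these give the stated limit.

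\emph{Upper bound.} Suppose $\left(\rho_1^{\otimes n},\sigma_1^{\otimes n}\right)\succeq_{(\epsilon,0)}\left(\rho_2^{\otimes m},\sigma_2^{\otimes m}\right)$ with $m=R_n n$ (integer parts understood), witnessed by a channel $\E$ with $\E(\sigma_1^{\otimes n})=\sigma_2^{\otimes m}$ and $T(\E(\rho_1^{\otimes n}),\rho_2^{\otimes m})\le\epsilon$. Fix any $x\in(\epsilon,1)$. First I would note that any test $0\le Q\le\iden$ feasible for $\beta_{x-\epsilon}(\rho_2^{\otimes m}\|\sigma_2^{\otimes m})$ is feasible for $\beta_x(\E(\rho_1^{\otimes n})\|\sigma_2^{\otimes m})$, since $\Tr(\E(\rho_1^{\otimes n})Q)\ge\Tr(\rho_2^{\otimes m}Q)-\epsilon\ge1-x$; hence $\beta_x(\E(\rho_1^{\otimes n})\|\E(\sigma_1^{\otimes n}))\le\beta_{x-\epsilon}(\rho_2^{\otimes m}\|\sigma_2^{\otimes m})$. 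Combining with data processing of the hypothesis-testing error, $\beta_x(\rho_1^{\otimes n}\|\sigma_1^{\otimes n})\le\beta_x(\E(\rho_1^{\otimes n})\|\E(\sigma_1^{\otimes n}))$, we get
\begin{align}
    -\log\beta_{x-\epsilon}(\rho_2^{\otimes m}\|\sigma_2^{\otimes m})\;\le\;-\log\beta_x(\rho_1^{\otimes n}\|\sigma_1^{\otimes n}).
\end{align}
Choosing $x=\tfrac12(1+\epsilon)$, so that both $x$ and $x-\epsilon$ are fixed constants in $(0,1)$, and applying the quantum Stein lemma to each side gives $m\,D\reli{2}+o(m)\le n\,D\reli{1}+o(n)$; this forces $R_n$ to be bounded, so dividing by $n$ and taking $\limsup$ yields $\limsup_n R_n^*(\epsilon)\le D\reli{1}/D\reli{2}$. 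Nothing here uses $[\rho_2,\sigma_2]=0$, so this already establishes the second displayed inequality in full generality.

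\emph{Lower bound when $[\rho_2,\sigma_2]=0$.} Fix $\delta>0$. The idea is to route the transformation through a two-outcome classical dichotomy. By the achievability half of the quantum Stein lemma (equivalently, by pinching $\rho_1^{\otimes n}$ onto the spectral projectors of $\sigma_1^{\otimes n}$ and invoking classical Stein) there is, for all large $n$, a test $T_n$ with $\Tr(\rho_1^{\otimes n}T_n)\ge1-\eta_n$, $\eta_n\to0$, and $b_n:=\Tr(\sigma_1^{\otimes n}T_n)\le\exp(-n(D\reli{1}-\delta))$. The binary measurement $\{T_n,\iden-T_n\}$ is a channel mapping $(\rho_1^{\otimes n},\sigma_1^{\otimes n})$ to the classical dichotomy $(\v a,\v b)$ with $\v a=(1-\eta_n,\eta_n)$ and $\v b=(b_n,1-b_n)$, and a direct computation of the defining linear program shows $\beta_x(\v a\|\v b)\le b_n$ for every $x\ge\eta_n$; hence $\beta_x(\v a\|\v b)\le\exp(-n(D\reli{1}-\delta))$ for all $x\in(\epsilon,1)$ once $\eta_n<\epsilon$. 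On the target side $\rho_2,\sigma_2$ commute, so classical Stein gives $\beta_{x-\epsilon}(\rho_2^{\otimes m}\|\sigma_2^{\otimes m})\ge\beta_{1-\epsilon}(\rho_2^{\otimes m}\|\sigma_2^{\otimes m})\ge\exp(-m(D\reli{2}+\delta))$ for all $x\in(\epsilon,1)$ and $m$ large. Choosing
\begin{align}
    m=m_n:=\left\lfloor n\,\frac{D\reli{1}-\delta}{D\reli{2}+\delta}\right\rfloor
\end{align}
makes $\beta_x(\v a\|\v b)\le\beta_{x-\epsilon}(\rho_2^{\otimes m_n}\|\sigma_2^{\otimes m_n})$ hold for every $x\in(\epsilon,1)$, so by Blackwell's theorem for classical dichotomies (the $\epsilon_\sigma=0$ instance of \cref{eq:transformation_conditions}) there is a classical post-processing of $(\v a,\v b)$ reproducing the spectrum of $\sigma_2^{\otimes m_n}$ exactly and that of $\rho_2^{\otimes m_n}$ up to $\epsilon$. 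Composing the measurement, this post-processing, and the encoding of the output register into the common eigenbasis of $\rho_2$ and $\sigma_2$ gives a valid channel, so $R_n^*(\epsilon)\ge m_n/n\to(D\reli{1}-\delta)/(D\reli{2}+\delta)$; letting $\delta\to0$ gives $\liminf_n R_n^*(\epsilon)\ge D\reli{1}/D\reli{2}$.

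\emph{Main obstacle.} The converse is routine apart from bookkeeping around the $\epsilon$-shift and degenerate cases ($D\reli{1}=0$, infinite relative entropies, mismatched supports), which I would handle separately. The delicate point is achievability for a \emph{non-commuting} input: classical relative majorisation cannot be applied to $(\rho_1,\sigma_1)$ directly, so one is forced through an intermediary, and it must be checked that the intermediary's hypothesis-testing curve dominates the target's at \emph{every} $x\in(\epsilon,1)$ — including near $x=\epsilon$, where the exact-reproduction ($\epsilon_\sigma=0$) constraint bites hardest; the uniform estimate $\beta_x(\v a\|\v b)\le b_n$, a feature of the single-test two-outcome intermediary, is what makes this work. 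Getting the sharp first-order constant $D\reli{1}/D\reli{2}$ rather than an $\epsilon$-dependent one also relies on the sharp Stein exponent $D\reli{2}$ on the output, i.e.\ the strong-converse part of Stein's lemma and not merely its weak-converse bound.
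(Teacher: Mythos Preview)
Your upper bound is essentially the paper's: both pick $x=(1+\epsilon)/2$ and feed the resulting hypothesis-testing inequality into Stein's lemma on each side. Your achievability, however, is a genuinely different route. The paper classicalises by \emph{pinching} $\rho_1^{\otimes n}$ onto the eigenspaces of $\sigma_1^{\otimes n}$, then invokes its \cref{lem:ht} sufficient condition, i.e.\ it shows $\betaL_\epsilon\rel{\rho_1^{\otimes n}}{\sigma_1^{\otimes n}}\ltev\beta_{1-\epsilon}\rel{\rho_2^{\otimes rn}}{\sigma_2^{\otimes rn}}$ via the pinched Stein lemma and then uses monotonicity in $x$ to cover the whole interval. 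You instead collapse the input all the way down to a \emph{two-outcome} classical dichotomy via a single Stein-optimal test and then apply classical Blackwell directly; the uniform bound $\beta_x(\v a\|\v b)\le b_n$ is what replaces the paper's monotonicity step. Your argument is more self-contained (it never needs the pinched-$\betaL$ machinery or \cref{lem:ht}), but it is intrinsically first-order: the two-outcome compression destroys the second-order information that the paper's pinching preserves, which is why the paper's approach, and not yours, extends to the refined regimes in \cref{thm:rate_smalldev,thm:rate_moddev,thm:rate_largedev_lo,thm:rate_largedev_hi}. Your remark that the strong-converse half of Stein is needed on the output side is correct and applies equally to the paper's proof.
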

\noindent \textit{Proof.} See \cref{subsec:rates}.

Second-order asymptotics form refinements of \cref{thm:rate_first} that quantify the rate of convergence to this first-order behaviour. A diagram of the different second-order regimes is presented in \cref{fig:sigmoid_rate}. Below we will state all of our second-order theorems, with their proofs left to \cref{sec:derive}. This analysis is divided up based on the scaling of the error. 

\begin{figure*}[t]
    \centering
    \includegraphics[width=\linewidth]{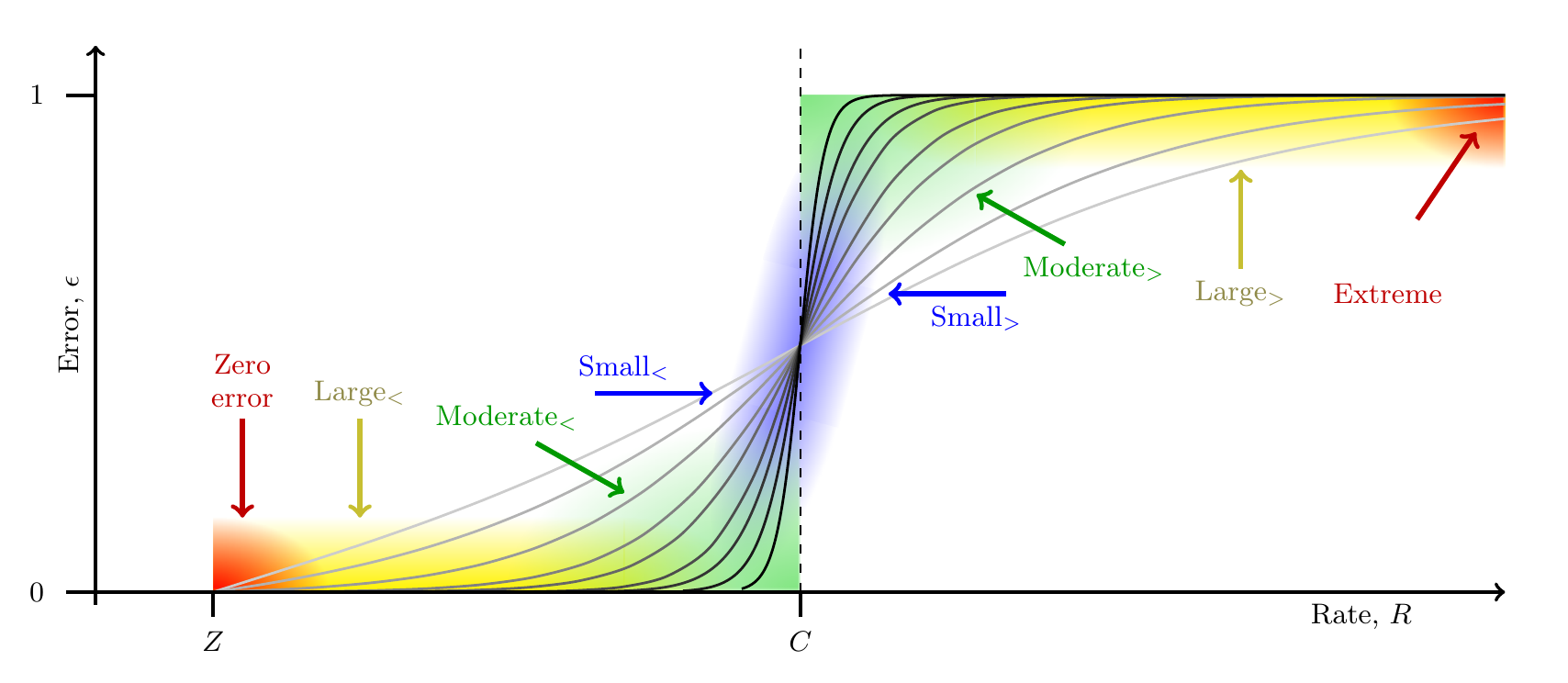}
    \def\yes{{\color{darkgreen}Yes}}
    \def\no{{\color{red}No}}
    
    \begin{tabular}{ccccccccc}
    \toprule
    \multicolumn{2}{c}{\bfseries Regime} & ~~~~ & \textbf{Error ($\epsilon_n$)} & ~~~~ & \textbf{Rate ($R_n$)} &~~& \textbf{Tight?} \\ \midrule                          
    Zero-error & \cref{thm:rate_zero} && Zero && $Z$ && \no\\
    Large$_<$ & \cref{thm:rate_largedev_lo} && Approaching 0 exponentially && $\bigl[Z,C\bigr)$ && \no\\
    Moderate$_<$ & \cref{thm:rate_moddev} && Approaching 0 subexponentially && $C-\omega(1/\sqrt n)\cap o(1)$ && \yes\\
    Small$_<$ & \cref{thm:rate_smalldev} && Constant, $<0.5$ && $C-\Theta(1/\sqrt n)$ && \yes\\ \midrule
    Small$_>$ & \cref{thm:rate_smalldev} && Constant, $>0.5$ && $C+\Theta(1/\sqrt n)$ && \yes\\
    Moderate$_>$ & \cref{thm:rate_moddev} && Approaching 1 subexponentially && $C+\omega(1/\sqrt n)\cap o(1)$ && \yes \\
    Large$_>$ & \cref{thm:rate_largedev_hi} && Approaching 1 exponentialy && $\bigl(C,\infty\bigr)$ && \yes \\
    Extreme & \cref{thm:rate_extreme} && Approaching 1 superexponentially && $\infty$ && \yes \\ \bottomrule
    \end{tabular}

    \caption{\textbf{Summary of our main results.} Asymptotics of transformation rates between quantum dichotomies \mbox{$(\rho_1,\sigma_1)\to(\rho_2,\sigma_2)$} with an error of at most $\epsilon_n$ allowed on the first state. The table summarises the different error regimes, i.e.,\ the different manners in which the error $\epsilon_n$ and rate $R_n$ can scale. In the above, the first-order rate is $C:=D\reli{1}/D\reli{2}$ and zero-error rate is $Z$. For each result we just have upper bounds for general target dichotomies, but for commuting targets, $[\rho_2,\sigma_2]=0$, we have upper \emph{and} lower bounds. The final column denotes whether these bounds coincide, which they do in all-but-one regime.
    }
    \label{fig:sigmoid_rate}
\end{figure*}

The first regime we consider is that of \emph{small deviations}, in which the errors considered are constants other than $0$ or $1$. In this regime, we find that the rate approaches the first-order rate as $O(1/\sqrt n)$, quantified by the relative entropy variance $V\rel{\cdot}{\cdot}$ as well as the sesquinormal distribution $S_{1/\xi}$, where $\xi$ is the \emph{reversibility parameter}~\cite{kumagai2016second,chubb2018beyond}, given by
\begin{align}
    \label{eq:reversibility}
    \xi:=\frac{V(\rho_1\|\sigma_1)}{D(\rho_1\|\sigma_1)} \bigg/ \frac{V(\rho_2\|\sigma_2)}{D(\rho_2\|\sigma_2)}.
\end{align}
Given these, the scaling of the rate in the small deviation regime is:
\begin{restatable}[Small deviation rate]{theorem}{ratesmalldev}
    \label{thm:rate_smalldev}
    Let $\lesssim$/$\simeq$ denote (in)equality up to $o(1/\sqrt n)$. For constant \mbox{$\epsilon\in(0,1)$}, and $[\rho_2,\sigma_2]=0$, the optimal rate is
    \begin{align}
        \label{eq:small_deviation}
        R_n^*(\epsilon)\simeq\frac{D(\rho_1\|\sigma_1)+\sqrt{V\reli{1}/n}\cdot S_{1/\xi}^{-1}(\epsilon)}{D(\rho_2\|\sigma_2)}.
    \end{align}
    Furthermore, if we consider general output dichotomies, $[\rho_2,\sigma_2]\neq0$, then we still have the upper bound
    \begin{align}
        R_n^*(\epsilon)\lesssim\frac{D(\rho_1\|\sigma_1)+\sqrt{V\reli{1}/n}\cdot S_{1/\xi}^{-1}(\epsilon)}{D(\rho_2\|\sigma_2)}.
    \end{align}
\end{restatable}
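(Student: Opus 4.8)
The plan is to reduce the problem to binary hypothesis testing and then feed in its known second-order asymptotics. Since the target dichotomy commutes, $[\rho_2,\sigma_2]=0$, the approximate Blackwell order is governed by the hypothesis-testing conditions established in \cref{subsec:ht}: one has $(\rho_1^{\otimes n},\sigma_1^{\otimes n})\succeq_{(\epsilon,0)}(\rho_2^{\otimes m},\sigma_2^{\otimes m})$ if and only if
\begin{align}
    \label{eq:plan_beta}
    \beta_x\!\left(\rho_1^{\otimes n}\middle\|\sigma_1^{\otimes n}\right)\le\beta_{x-\epsilon}\!\left(\rho_2^{\otimes m}\middle\|\sigma_2^{\otimes m}\right)\qquad\forall\,x\in(\epsilon,1),
\end{align}
which is the $\epsilon_\rho=\epsilon$, $\epsilon_\sigma=0$ instance of \cref{eq:transformation_conditions}. (Equivalently, for achievability one may first apply the pinching channel $\mathcal P_{\sigma_1^{\otimes n}}$, which maps the input to a commuting dichotomy while fixing $\sigma_1^{\otimes n}$ and, by \cref{app:pinch}, perturbs the relative entropy and relative entropy variance by only $o(n)$; \cref{eq:transformation_conditions} then applies exactly, both dichotomies being commuting.) The converse (upper bound) needs only the \emph{necessity} of \eqref{eq:plan_beta}, which holds for arbitrary targets: pulling an optimal test for $(\rho_2^{\otimes m},\sigma_2^{\otimes m})$ at level $x-\epsilon$ back through any channel $\E$ with $\E(\sigma_1^{\otimes n})=\sigma_2^{\otimes m}$ and $T(\E(\rho_1^{\otimes n}),\rho_2^{\otimes m})\le\epsilon$ produces a test for $(\rho_1^{\otimes n},\sigma_1^{\otimes n})$ at level $x$ with the same type-II error. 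Hence $R_n^*(\epsilon)$ is $1/n$ times the largest $m\in\mathbb N$ for which \eqref{eq:plan_beta} holds, which we now estimate asymptotically (the integrality costing only an $O(1/n)$ term in the rate).

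Next I insert the second-order expansion of the hypothesis-testing relative entropy~\cite{li2014second,tomamichel2013hierarchy}, valid for arbitrary (possibly non-commuting) states and uniformly as the error parameter ranges over a compact subinterval of $(0,1)$,
\begin{align}
    \label{eq:plan_stein}
    -\log\beta_y\!\left(\tau^{\otimes k}\middle\|\omega^{\otimes k}\right)=kD(\tau\|\omega)+\sqrt{kV(\tau\|\omega)}\,\Phi^{-1}(y)+O(\log k),
\end{align}
into the logarithm of \eqref{eq:plan_beta}, once with $(\tau,\omega,k)=(\rho_1,\sigma_1,n)$ and once with $(\rho_2,\sigma_2,m)$ (I assume the relative entropies and their variances to be finite and nonzero; the degenerate cases need only minor changes). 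Writing $m=nD\reli1/D\reli2+\Theta(\sqrt n)$ and expanding $\sqrt m$, the largest $m$ compatible with \eqref{eq:plan_beta} at a \emph{fixed} $x$ obeys $m/n=D\reli1/D\reli2+g(x)/(\sqrt n\,D\reli2)+O(\log n/n)$, where
\begin{align}
    \label{eq:plan_g}
    g(x):=\sqrt{V\reli1}\,\Phi^{-1}(x)-\sqrt{D\reli1\,V\reli2/D\reli2}\,\Phi^{-1}(x-\epsilon),
\end{align}
so $R_n^*(\epsilon)=D\reli1/D\reli2+\bigl(\inf_{x\in(\epsilon,1)}g(x)\bigr)/(\sqrt n\,D\reli2)+o(1/\sqrt n)$. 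Both endpoint limits of $g$ diverge to $+\infty$ (through $-\Phi^{-1}(x-\epsilon)$ as $x\to\epsilon^+$, through $\Phi^{-1}(x)$ as $x\to1^-$), so the infimum is attained at an interior $x^*$. The change of variable $x\mapsto1-x+\epsilon$, under which $\Phi^{-1}(x)\mapsto-\Phi^{-1}(x-\epsilon)$ and $\Phi^{-1}(x-\epsilon)\mapsto-\Phi^{-1}(x)$, interchanges the two coefficients in \eqref{eq:plan_g}; factoring out $\sqrt{V\reli1}$ and comparing with the inverse-cdf formula $S_\nu^{-1}(\epsilon)=\min_{x\in(\epsilon,1)}\sqrt\nu\,\Phi^{-1}(x)-\Phi^{-1}(x-\epsilon)$ of \cref{lem:sesqui}, taken at $\nu=1/\xi$ with $\xi$ the reversibility parameter \eqref{eq:reversibility}, identifies $\inf_x g(x)=\sqrt{V\reli1}\,S_{1/\xi}^{-1}(\epsilon)$, which is exactly the right-hand side of \eqref{eq:small_deviation}.

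It remains to turn this into rigorous bounds. For the converse it suffices to evaluate \eqref{eq:plan_beta} at the single interior point $x^*$: as $x^*$ is bounded away from the endpoints, \eqref{eq:plan_stein} holds there uniformly, and taking $m$ a fixed $\delta\sqrt n$ \emph{above} the claimed optimum produces a violation of order $\delta\sqrt n\,D\reli2$, which dominates the $O(\log n)$ error for large $n$; this step is insensitive to $[\rho_2,\sigma_2]$, which gives the general-target upper bound. For achievability one must verify \eqref{eq:plan_beta} for \emph{all} $x\in(\epsilon,1)$ with $m$ a fixed $\delta\sqrt n$ \emph{below} the claimed optimum: on a compact subinterval $[\epsilon+\eta,1-\eta]$ containing $x^*$ this follows from \eqref{eq:plan_stein} with $\Theta(\delta\sqrt n)$ of slack, while on the two boundary strips it reduces, using that $\beta$ is non-increasing in its subscript, to the inequality $\beta_\epsilon(\rho_1^{\otimes n}\|\sigma_1^{\otimes n})\le\beta_\eta(\rho_2^{\otimes m}\|\sigma_2^{\otimes m})$ (and its mirror near $x=1$), which holds once $\eta$ is fixed small enough, since $-\Phi^{-1}(\eta)\to+\infty$. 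Letting $\delta\to0$ slowly with $n$ then gives the claimed expansion up to $o(1/\sqrt n)$.

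I expect the main obstacle to be this endpoint control in the achievability direction: \eqref{eq:plan_stein} breaks down near $x=\epsilon$ (where the output error parameter $x-\epsilon\to0$) and near $x=1$ (where the input error parameter tends to $1$), so one must argue by separate monotonicity and large-deviation estimates that the all-$x$ constraint stays comfortably slack there. A secondary point, needed if achievability is routed through pinching, is the estimate of \cref{app:pinch} that $\mathcal P_{\sigma_1^{\otimes n}}$ perturbs $D$ and $V$ by only $o(n)$, so that the input side of \eqref{eq:plan_stein} survives the pinching.
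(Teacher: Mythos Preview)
Your proposal is correct and follows the same overall strategy as the paper: reduce to the hypothesis-testing ordering of \cref{lem:ht}, feed in the second-order expansion of \cref{lem:ht_smalldev}, identify the resulting $\inf_x g(x)$ with $\sqrt{V\reli1}\,S_{1/\xi}^{-1}(\epsilon)$ via \cref{lem:sesqui}, test the converse at the single minimiser $x^*$, and verify achievability for all $x$ with a slack parameter $\delta$.

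There are two executional differences worth noting. First, for achievability the paper does not split into an interior window plus boundary strips; instead it works throughout with the $\delta$-shifted quantities $\betaL_{x-\delta}$ and $\beta_{x-\epsilon+\delta}$, which for $x\in(\epsilon,1)$ and $\delta<\epsilon/2$ automatically keeps \emph{both} error parameters inside the fixed compact interval $[\delta,1-\delta]$, so the uniform expansion of \cref{lem:ht_smalldev_uni} applies directly for every $x$. A single appeal to monotonicity of $\beta_x,\betaL_x$ then relaxes $(x-\delta,x-\epsilon+\delta)$ to $(x,x-\epsilon)$. This avoids your separate endpoint analysis entirely and is somewhat cleaner. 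Second, your route through ``pinch first, then apply the classical criterion'' is equivalent to the paper's use of the pinched error $\betaL_x$ as the sufficient condition in \cref{lem:ht}, but the relevant fact is not really a ``$D,V$ perturbed by $o(n)$'' statement from \cref{app:pinch}; it is precisely the second half of \cref{lem:ht_smalldev}, which shows that $\betaL$ has the \emph{same} second-order expansion as $\beta$ (the key input being the polynomial growth of $\nu(\sigma^{\otimes n})$). Your change-of-variable $x\mapsto 1-x+\epsilon$ identifying $\inf_x g(x)$ with $\sqrt{V\reli1}\,S_{1/\xi}^{-1}(\epsilon)$ is correct and in fact makes explicit a step the paper's proof invokes without comment.
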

\noindent \textit{Proof.} See \cref{subsubsec:rate_small}.

The second regime we consider is that of \emph{moderate deviations}, in which errors are tending towards either $0$ or~$1$, but only doing so subexponentially. This causes the rate to approach the first-order rate slower than $O(1/\sqrt n)$, specifically:
\begin{restatable}[Moderate deviation rate]{theorem}{ratemoddev}
    \label{thm:rate_moddev}
    Consider an \mbox{$a\in(0,1)$}, and let $\lesssim/\simeq$ denote (in)equality up to $o\left(\sqrt{n^{a-1}}\right)$. Let $\epsilon_n:=\exp(-\lambda n^a)$ for some $\lambda>0$. For $[\rho_2,\sigma_2]=0$, the optimal rate is 
    \begin{aligns}
        R^*_n(\epsilon_n)&\simeq \frac{
        D(\rho_1\|\sigma_1)
        -\abs{1-\xi^{-1/2}}\sqrt{2\lambda V\reli{1}n^{a-1}}
        }{D(\rho_2\|\sigma_2)},\\
        R^*_n(1-\epsilon_n)&\simeq \frac{
        D(\rho_1\|\sigma_1)
        +\left[1+\xi^{-1/2}\right]\sqrt{2\lambda V\reli{1}n^{a-1}}
        }{D(\rho_2\|\sigma_2)}.
    \end{aligns}
    Furthermore, if we consider general output dichotomies, $[\rho_2,\sigma_2]\neq 0$, then we still have the upper bounds
    \begin{aligns}
        R^*_n(\epsilon_n)&\lesssim \frac{
        D(\rho_1\|\sigma_1)
        -\abs{1-\xi^{-1/2}}\sqrt{2\lambda V\reli{1}n^{a-1}}
        }{D(\rho_2\|\sigma_2)},\\
        R^*_n(1-\epsilon_n)&\lesssim \frac{
        D(\rho_1\|\sigma_1)
        +\left[1+\xi^{-1/2}\right]\sqrt{2\lambda V\reli{1}n^{a-1}}
        }{D(\rho_2\|\sigma_2)}.
    \end{aligns}
\end{restatable}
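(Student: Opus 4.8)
The plan is to reduce the quantum dichotomy transformation problem to a classical one via the hypothesis-testing characterisation already available for commuting targets, and then perform an asymptotic analysis of the quantity $\beta_x$ in the moderate deviation regime. First I would invoke the equivalence between approximate Blackwell order and the relative-majorisation conditions of Eq.~\eqref{eq:transformation_conditions}: for commuting output dichotomy $[\rho_2,\sigma_2]=0$ the rate $R^*_n(\epsilon_n)$ is (to within the required precision) the largest $R$ such that $\beta_x(\rho_1^{\otimes n}\|\sigma_1^{\otimes n})\le\beta_{x-\epsilon_n}(\rho_2^{\otimes Rn}\|\sigma_2^{\otimes Rn})$ holds for all relevant $x$. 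For the upper bound with general (non-commuting) targets I would instead use the data-processing monotonicity of $\beta_x$ together with the fact that $\beta_x$ for i.i.d.\ states has a well-controlled asymptotic expansion — this is exactly where the earlier one-sided bounds from the hypothesis-testing section (\cref{subsec:htass}) get used.

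The core technical input is a moderate-deviation expansion of $-\frac1n\log\beta_{x_n}(\rho^{\otimes n}\|\sigma^{\otimes n})$ when $x_n$ approaches $0$ (or $1$) subexponentially. The key steps, in order, are: (i) establish that in the moderate regime $\beta_{x_n}$ is governed by $D\rel{\rho}{\sigma}$ at leading order with a correction of order $\sqrt{n^{a-1}}$ whose coefficient is $\sqrt{2\lambda V\rel{\rho}{\sigma}}$ — this is the quantum moderate-deviation analogue of the classical result and should follow from a Gärtner–Ellis / Chernoff-type argument applied to the (commuting) initial pair after pinching, or be cited from the hypothesis-testing subsection; (ii) write the transformation condition as an inequality between the expansion for the initial pair at error level $x_n$ and the expansion for the output pair at error level $x_n-\epsilon_n$ evaluated at the candidate rate $R$; (iii) solve this inequality for $R$, tracking how the shift from $x_n$ to $x_n-\epsilon_n$ interacts with the subexponential error — crucially, since $\epsilon_n=\exp(-\lambda n^a)$ is itself subexponential, the two error levels $x_n$ and $x_n-\epsilon_n$ produce moderate-deviation corrections whose \emph{difference} is what generates the $\abs{1-\xi^{-1/2}}$ (for $\epsilon_n\to0$) versus $1+\xi^{-1/2}$ (for $1-\epsilon_n$, i.e.\ error $\to1$) structure; (iv) optimise over the free parameter $x$ in the relative-majorisation condition, which is where the $\min_{x}$ structure of $S_\nu^{-1}$ degenerates in the moderate regime into a single dominant term, yielding the stated closed forms. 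For the lower bound I would exhibit an explicit classical channel (a combination of the i.i.d.\ hypothesis-testing–optimal coarse-graining and a small padding/randomisation to absorb the $\epsilon_n$ error), and check via the same expansions that it achieves the claimed rate up to $o(\sqrt{n^{a-1}})$.

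The main obstacle I expect is controlling the interplay between the two error thresholds in the regime where $\epsilon_n$ decays subexponentially: one must show that the gap $\beta_{x-\epsilon_n}(\rho_2^{\otimes Rn}\|\sigma_2^{\otimes Rn})-\beta_x(\cdot\|\cdot)$ contributes at precisely the order $\sqrt{n^{a-1}}$ and with the right constant, rather than being swamped by, or vanishing relative to, the intrinsic moderate-deviation fluctuation of each $\beta$ term. This requires a uniform (in $x$ over a shrinking window) version of the moderate-deviation expansion with explicit error terms $o(\sqrt{n^{a-1}})$, which is more delicate than the pointwise statement; the asymmetry between the $\epsilon_n\to0$ and $1-\epsilon_n$ cases (the sign pattern $\abs{1-\xi^{-1/2}}$ versus $1+\xi^{-1/2}$) comes out of whether the perturbed threshold sits on the same or opposite side of the relevant large-deviation rate function's minimiser, and getting that bookkeeping right is the crux. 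A secondary subtlety is handling the non-commuting initial pair $(\rho_1,\sigma_1)$: one replaces it by its pinching $\pinch{\rho_1^{\otimes n}}{\sigma_1^{\otimes n}}$ and must verify that the pinching cost is negligible at the moderate-deviation scale, which relies on the polynomial bound on the number of distinct eigenvalues of $\sigma_1^{\otimes n}$ established in \cref{app:pinch}.
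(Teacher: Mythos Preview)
Your proposal is correct and follows essentially the same approach as the paper: reduce to the hypothesis-testing ordering of \cref{lem:ht}, feed in the moderate-deviation expansion of $\beta_x$ (or its pinched variant) from \cref{lem:ht_moddev}, handle the non-commuting input by pinching with polynomial cost, and appeal to a uniform version of the expansion (\cref{lem:ht_moddev_uni}) to turn pointwise control into the for-all-$x$ statement that \cref{lem:ht} demands.

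Two small presentational differences are worth noting. First, the paper works throughout in the log-odds variable $\gamma_x$ rather than $\beta_x$, which lets the low- and high-error sub-cases be treated symmetrically as type-I log odds $\pm\lambda n^a$; this is purely cosmetic but cleans up the bookkeeping you flag as ``the crux''. Second, where you speak of optimising over $x$ and having that optimisation degenerate, the paper is more explicit: for the low-error optimality bound it simply evaluates the expansion at \emph{both} endpoints $x\sim 2\epsilon_n$ and $x\sim 1-\epsilon_n$ (equivalently log odds $\mp\lambda n^a$), computes the difference, and observes that the binding constraint is whichever endpoint has the sign of $\xi^{-1/2}-1$ --- this is precisely what produces the absolute value $\abs{1-\xi^{-1/2}}$. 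In the high-error case only one endpoint is relevant and monotonicity of $\beta_x$ handles the rest, so no sign analysis is needed and the $1+\xi^{-1/2}$ drops out directly. Your description of this as ``same or opposite side of the rate function's minimiser'' is the right intuition but the actual mechanism is more elementary than a rate-function argument.
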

\begin{proof}
    See \cref{subsubsec:rate_moderate}.
\end{proof}
Third is the \emph{large deviations} regime, in which the error is either exponentially approaching $0$ (large deviation, low-error), or exponentially approaching $1$ (large deviation, high-error). In this case, the error is small/large enough, so that the asymptotic rate shifts away from the first-order rate, and now depends not just on the relative entropy, but also the R\'enyi relative entropies, specifically:
\begin{restatable}[Large deviation rate, low-error]{theorem}{ratelargedevlo}
    \label{thm:rate_largedev_lo}
    For any error of the form $\epsilon_n=\exp(-\lambda n)$ with constant $\lambda>0$, if $[\rho_2,\sigma_2]=0$, then the optimal rate is lower bounded by
    \begin{align}
        \liminf_{n\to\infty} R_n^*(\epsilon_n)\geq  \min_{-\lambda\leq \mu\leq \lambda} \widecheck r(\mu).
    \end{align}
    Furthermore, if we consider general output dichotomies, $[\rho_2,\sigma_2]\neq 0$, then
    the optimal rate is upper bounded by
    \begin{align}
        \limsup_{n\to\infty} R_n^*(\epsilon_n) \leq \min_{-\lambda\leq \mu\leq \lambda} \overline r(\mu).
    \end{align}
    In the above, $\overline r$ and $\widecheck r$ are defined in terms of R\'{e}nyi relative entropies in \cref{subsubsec:rate_large}, and coincide when \mbox{$[\rho_1,\sigma_1]=[\rho_2,\sigma_2]=0$}.
\end{restatable}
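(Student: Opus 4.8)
The plan is to reduce the dichotomy transformation problem to a hypothesis-testing comparison via the equivalence with relative majorisation (Eq.~\eqref{eq:transformation_conditions}), then apply large-deviation asymptotics to the hypothesis-testing errors $\beta_x$. Concretely, by \cref{eq:transformation_conditions} (valid for commuting targets, and yielding a necessary condition in general via the known one-way implications cited after that equation), the rate $R_n^*(\epsilon_n)$ is governed by the condition $\beta_x(\rho_1^{\otimes n}\|\sigma_1^{\otimes n}) \le \beta_{x-\epsilon_n}(\rho_2^{\otimes R_n n}\|\sigma_2^{\otimes R_n n})$ for all $x \in (\epsilon_n,1)$. The first step is therefore to record, for a generic commuting pair $(\rho,\sigma)$, the large-deviation expansion of $\beta_{\exp(-rn)}(\rho^{\otimes n}\|\sigma^{\otimes n})$ as a function of the exponent $r$: this is the content of the Hoeffding/Han--Kobayashi-type bound, where the exponent of $\beta$ is obtained as a Legendre-type transform of the Petz/min R\'enyi divergences $\Dp_\alpha$, $\Dm_\alpha$. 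I would state this as a lemma (presumably one of the hypothesis-testing results promised in \cref{subsec:htass}) and quote it.

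The second step is to combine the expansions on the two sides. Writing $\epsilon_n = \exp(-\lambda n)$, the type-I error level $x$ on the left side and $x - \epsilon_n$ on the right side can both be taken exponentially small, say $\approx \exp(-\mu n)$ for a free parameter $\mu$; the constraint $x > \epsilon_n$ forces $\mu \le \lambda$, and by a symmetry of the two-sided slack (the $x\mapsto x-\epsilon_n$ shift contributes on both the $\exp(-\mu n)$ and the $\exp(\mu n)$-type branches) one also gets $\mu \ge -\lambda$. Matching the exponents of $\beta$ on both sides then gives, for each admissible $\mu$, a candidate rate $r(\mu)$ expressed through the relevant R\'enyi divergences of $(\rho_1,\sigma_1)$ and $(\rho_2,\sigma_2)$; optimising over the achievable $x$ (equivalently over $\mu \in [-\lambda,\lambda]$) yields $\min_\mu \widecheck r(\mu)$ for the achievability direction and $\min_\mu \overline r(\mu)$ for the converse. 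The gap between $\widecheck r$ and $\overline r$ in the non-commuting-output case comes precisely from the gap between the min-R\'enyi divergence (which controls the achievable strong-converse/Hoeffding exponent via the measured or pinched quantity) and the Petz R\'enyi divergence (which controls the converse bound), and these two collapse to the single $D_\alpha$ when everything commutes, giving $\widecheck r = \overline r$ there.

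For achievability I would use the pinching argument: pinch $\rho_1^{\otimes n}$ onto the eigenbasis of $\sigma_1^{\otimes n}$, losing only a polynomial factor (Hayashi's pinching inequality), so that the input side becomes effectively classical and relative majorisation can be applied exactly against the (already commuting) target; the polynomial loss is absorbed into the $\exp(-\lambda n)$ error budget without changing the exponent. The converse uses monotonicity of $\beta_x$ (equivalently data processing for the hypothesis-testing divergence) together with the Petz-R\'enyi converse bound on $\beta$. The main obstacle I expect is the bookkeeping of \emph{two} simultaneously-exponential scales — the error $\epsilon_n=\exp(-\lambda n)$ and the type-I level $x=\exp(-\mu n)$ — and in particular correctly tracking how the additive $\epsilon_n$ shift in $\beta_{x-\epsilon_n}$ interacts with the Legendre transform: for $\mu < \lambda$ the shift is negligible, for $\mu = \lambda$ it is not, and the regime $\mu < 0$ (where $x$ is bounded away from $0$ but $x-\epsilon_n$ may momentarily dip) needs care. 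Getting the correct domain $[-\lambda,\lambda]$ and the correct form of $\overline r,\widecheck r$ — rather than, say, $[0,\lambda]$ — is the delicate point, and it is where the detailed definitions in \cref{subsubsec:rate_large} must be set up to match.
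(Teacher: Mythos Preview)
Your overall strategy---reduce to the hypothesis-testing ordering of \cref{lem:ht}, parametrise the type-I level exponentially, and feed in the large-deviation expansion of $\beta$---matches the paper's, and your sketch of the optimality direction is essentially what the paper does. But the achievability sketch has two genuine gaps.

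\textbf{Only one pinching.} You propose to pinch $\rho_1^{\otimes n}$ onto the eigenbasis of $\sigma_1^{\otimes n}$ and then run the classical argument. That is the \emph{left} pinching, and it yields an achievable rate expressed through the left-pinched R\'enyi divergences $\DL_\alpha$---in the paper's notation, $\min_\mu \lefthat{r}(\mu)$, not $\min_\mu \widecheck r(\mu)$. The paper runs the same argument a second time with the \emph{right} pinching (pinching $\sigma_1^{\otimes n}$ onto the eigenbasis of $\rho_1^{\otimes n}$), obtaining $\righthat{r}(\mu)$, and then takes the better of the two. The identity $\max\{|\DL_\alpha|,|\DR_\alpha|\}=|\Dm_\alpha|$ (\cref{lem:maxpinch}) is what collapses $\max\{\lefthat{r},\righthat{r}\}$ into $\widecheck r$. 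Without the second pinching you are stuck with $\DL_\alpha$ in regimes where $\DL_\alpha\neq\Dm_\alpha$ (namely $\alpha<1/2$), and your lower bound is strictly weaker than the one stated in the theorem.

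\textbf{Uniformity over the type-I exponent.} The sufficient condition in \cref{lem:ht} demands the $\beta$-inequality for \emph{all} $x\in(\epsilon_n,1)$ at once. Applying the large-deviation lemma pointwise in $\mu$ gives, for each fixed $\mu$, an $n$-threshold beyond which the inequality holds at that $\mu$; but these thresholds can drift as $\mu$ ranges over $[-\lambda,\lambda]$, and you need a single $n$ that works for all of them. The paper addresses this with a uniform version of the large-deviation statement (\cref{lem:ht_largedev_uni}), obtained essentially for free from monotonicity of $\gamma_x$ in $x$. Your sketch does not mention this, and without it the achievability argument does not close.

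One smaller point: writing $x\approx\exp(-\mu n)$ only captures $x\to 0$, and your explanation for why $\mu\geq -\lambda$ (``symmetry of the two-sided slack'') is hand-waving. The paper instead parametrises by the log-odds per copy, $\mu:=\frac{1}{n}L[x]$ with $L[p]=\log\frac{p}{1-p}$, so that $\mu<0$ means $x\to 0$ and $\mu>0$ means $x\to 1$; the domain $\mu\in[-\lambda,\lambda]$ then drops out directly from $x\in(\epsilon_n/2,\,1-\epsilon_n/2)$, and all three branches of $\Gamma_\mu$ (error-exponent and both strong-converse regimes) are handled in one stroke.
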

\noindent \textit{Proof.} See \cref{subsubsec:rate_large}.

\begin{restatable}[Large deviation rate, high-error]{theorem}{ratelargedevhi}
    \label{thm:rate_largedev_hi}
    For any error of the form $\epsilon_n=1-\exp(-\lambda n)$ with constant $\lambda>0$, if $[\rho_2,\sigma_2]=0$ then the optimal rate is
    \begin{align}
        \lim_{n\to\infty }R_n^*(\epsilon_n)=\inf_{\substack{t_1>1\\0<t_2<1}}\frac{\Dm_{t_1}\reli{1}+\left(\frac{t_1}{t_1-1}+\frac{t_2}{1-t_2}\right)\lambda}{D_{t_2}\reli{2}}.
    \end{align}
    Furthermore, if we consider general output dichotomies, $[\rho_2,\sigma_2]\neq 0$, then we still have the upper bound
    \begin{align}
        \limsup_{n\to\infty }R_n^*(\epsilon_n)\leq \inf_{\substack{t_1>1\\0<t_2<1}}\frac{\Dm_{t_1}\reli{1}+\left(\frac{t_1}{t_1-1}+\frac{t_2}{1-t_2}\right)\lambda}{\Dp_{t_2}\reli{2}}.
    \end{align}
\end{restatable}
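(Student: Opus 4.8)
\emph{Proof sketch.} The plan is to rewrite the dichotomy transformation as a coupled pair of binary hypothesis tests, substitute the sharp asymptotic exponents for these tests --- a \emph{strong-converse} exponent for the test on the inputs and a \emph{direct} (Hoeffding) exponent for the test on the outputs --- and then carry out the resulting optimisation over two free R\'enyi parameters. Concretely, I would invoke the hypothesis-testing characterisation of \cref{subsec:ht}: writing $m:=R_nn$, the statement $(\rho_1^{\otimes n},\sigma_1^{\otimes n})\succeq_{(\epsilon_n,0)}(\rho_2^{\otimes m},\sigma_2^{\otimes m})$ is equivalent, for commuting $(\rho_2,\sigma_2)$, and necessary in general, to
\begin{equation}
    \beta_x\rel{\rho_1^{\otimes n}}{\sigma_1^{\otimes n}}\leq \beta_{x-\epsilon_n}\rel{\rho_2^{\otimes m}}{\sigma_2^{\otimes m}}\qquad\forall\,x\in(\epsilon_n,1).
\end{equation}
Necessity is just monotonicity of $\beta_x$ under channels combined with $\beta_x(\widetilde{\rho}_2\|\sigma_2^{\otimes m})\le\beta_{x-\epsilon_n}(\rho_2^{\otimes m}\|\sigma_2^{\otimes m})$ for any $\widetilde{\rho}_2$ that is $\epsilon_n$-close to $\rho_2^{\otimes m}$; this already yields the converse (upper) bound for arbitrary output dichotomies. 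For the achievability half with a non-commuting input I would precede the measurement that realises the classical output by the pinching channel $\pinch{\cdot}{\sigma_1^{\otimes n}}$: it fixes $\sigma_1^{\otimes n}$, makes the first state commuting, and only costs the polynomial factor $v_n$ in $\rho_1^{\otimes n}\le v_n\,\pinch{\rho_1^{\otimes n}}{\sigma_1^{\otimes n}}$, i.e.\ a multiplicative shift of the error argument that changes every exponent by $o(1)$ (cf.\ \cref{app:pinch}).

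Next I would reparametrise both errors exponentially. Because $1-\epsilon_n=\exp(-\lambda n)$, setting $1-x=\exp(-\kappa_1 n)$ and $x-\epsilon_n=\exp(-\kappa_2 n)$ the identity $(1-x)+(x-\epsilon_n)=\exp(-\lambda n)$ forces $\kappa_1,\kappa_2\ge\lambda$ with $\min\{\kappa_1,\kappa_2\}=\lambda$. On the input side $x\to1$ exponentially, which is the strong-converse regime, so the sharp asymptotic exponent (Mosonyi--Ogawa, together with the pinch-then-measure step for achievability) is
\begin{equation}
    -\tfrac1n\log\beta_{1-\exp(-\kappa_1 n)}\rel{\rho_1^{\otimes n}}{\sigma_1^{\otimes n}}\ \longrightarrow\ \inf_{t_1>1}\big(\Dm_{t_1}\reli{1}+\tfrac{t_1}{t_1-1}\,\kappa_1\big),
\end{equation}
expressed through the minimal R\'enyi divergence. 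On the output side $x-\epsilon_n\to0$ exponentially, which is the direct (Hoeffding) regime; with $m=R_nn$ the per-copy error exponent is $\kappa_2/R_n$ and
\begin{equation}
    -\tfrac1m\log\beta_{\exp(-am)}\rel{\rho_2^{\otimes m}}{\sigma_2^{\otimes m}}\ \longrightarrow\ \sup_{0<t_2<1}\big(\Dp_{t_2}\reli{2}-\tfrac{t_2}{1-t_2}\,a\big),
\end{equation}
which for commuting $(\rho_2,\sigma_2)$ is the tight classical Hoeffding bound with $\Dp_{t_2}=D_{t_2}$, and in general is the achievability half of the quantum Hoeffding bound --- all that the converse requires.

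Finally I would combine and optimise. Applying $-\tfrac1n\log(\cdot)$ to the hypothesis-testing inequality and using $m/n\to R$, it becomes the requirement that, for all admissible $(\kappa_1,\kappa_2)$,
\begin{equation}
    \inf_{t_1>1}\big(\Dm_{t_1}\reli{1}+\tfrac{t_1}{t_1-1}\,\kappa_1\big)\ \geq\ \sup_{0<t_2<1}\big(R\,D_{t_2}\reli{2}-\tfrac{t_2}{1-t_2}\,\kappa_2\big).
\end{equation}
The left-hand side is nondecreasing in $\kappa_1$ and the right-hand side nonincreasing in $\kappa_2$, so the binding constraint is the balanced choice $\kappa_1=\kappa_2=\lambda$ (realised e.g.\ by $x=\tfrac12(1+\epsilon_n)$); equality there, rearranged for each pair $(t_1,t_2)$, gives
\begin{equation}
    \lim_{n\to\infty} R_n^*(\epsilon_n)=\inf_{\substack{t_1>1\\0<t_2<1}}\frac{\Dm_{t_1}\reli{1}+\big(\tfrac{t_1}{t_1-1}+\tfrac{t_2}{1-t_2}\big)\lambda}{D_{t_2}\reli{2}},
\end{equation}
and replacing $D_{t_2}$ by $\Dp_{t_2}$ produces the general-output upper bound (only the input-optimality and output-achievability halves are used for the converse). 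As a sanity check, as $\lambda\to0$ this reduces to $D\reli{1}/D\reli{2}$, consistent with \cref{thm:rate_first}.

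The step I expect to be the main obstacle is making the interplay of the two coupled error sequences rigorous: one must establish that among \emph{all} admissible sequences $x=x_n\in(\epsilon_n,1)$ the worst case is exactly $\kappa_1=\kappa_2=\lambda$, and that the $o(1)$ corrections --- from the pinching factors $v_n$, from convergence of the R\'enyi exponents, and from integer rounding of $R_nn$ --- are uniform in $x$, so that the per-$x$ limits may be exchanged with the optimisation. A lesser point is to check that the R\'enyi optima over the open intervals $(1,\infty)$ and $(0,1)$ are approached away from the degenerate endpoints, which fails only in trivial cases such as $D\reli{2}=0$.
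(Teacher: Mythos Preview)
Your approach is essentially the paper's: reduce via \cref{lem:ht} to the hypothesis-testing inequality, apply the strong-converse exponent (sandwiched R\'enyi $\Dm_{t_1}$) on the input and the Hoeffding exponent on the output, and identify the binding point $x_n=(1+\epsilon_n)/2$, i.e.\ your $\kappa_1=\kappa_2=\lambda$. Your flagged ``main obstacle'' of uniformity in $x$ is in fact a non-issue in this high-error regime: for achievability the paper establishes only the single inequality $\betaL_{\epsilon_n}\rel{\rho_1^{\otimes n}}{\sigma_1^{\otimes n}}<\beta_{1-\epsilon_n}\rel{\rho_2^{\otimes rn}}{\sigma_2^{\otimes rn}}$ at exponent $\lambda$ and then relaxes to all $x\in(\epsilon_n,1)$ via the finite-$n$ monotonicity of $x\mapsto\beta_x$, so only the pointwise large-deviation limit at $\lambda$ is invoked and no uniform bound is needed (the paper explicitly contrasts this with the low-error case, \cref{thm:rate_largedev_lo}, where uniformity is genuinely required).
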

\noindent \textit{Proof.} See \cref{subsubsec:rate_large}.

Finally, the fourth regime we consider is that of \emph{extreme deviations}, i.e., with super-exponentially decaying errors. The first is the low-error case of errors superexponentially approaching zero, including exactly zero-error. In this case, we get an expression for the asymptotic rate which is quite similar to the first-order expression, but involves a minimisation over the minimal relative entropies instead of just \emph{the} relative entropy. It gives an additional operational interpretation of the minimal R\'enyi entropy~\cite{LiYao2022}, and is a non-commutative generalisation of Ref.~\cite{MuPomattoStrackTamuz2019,FarooqFritzHaapasaloTomamichel2023}. Specifically, the zero-error rate is:
\begin{restatable}[Zero-error rate]{theorem}{ratezero}
    \label{thm:rate_zero}
    For $[\rho_2,\sigma_2]=0$ the optimal zero-error rate is {\ctc lower bounded}
    \begin{align}
        \ctc \liminf_{n\to\infty}R_n^*(0)
        &\geq \ctc \max\left\lbrace
        \inf_{\alpha\in\mathbb R} \frac{\DL_\alpha\reli1}{D_\alpha\reli2},
        \inf_{\alpha\in\mathbb R} \frac{\DR_\alpha\reli1}{D_\alpha\reli2}
        \right\rbrace.
    \end{align}
    where the divergences $\DL_\alpha$ and $\DR_\alpha$ are defined in \eqref{eq:pinched_renyi_def1} and \eqref{eq:pinched_renyi_def2}. More generally, if $[\rho_2,\sigma_2]\neq 0$, then the optimal transformation rate for all $n$ is upper bounded
    \begin{align}
        R_n^*(0) \leq 
        \min_{\alpha\in\overline{\mathbb R}} \frac{\Dm_\alpha\reli{1}}{\Dm_\alpha\reli{2}}.
    \end{align}
\end{restatable}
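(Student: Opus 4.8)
The plan is to handle the two bounds separately, as they require fundamentally different ideas. For the upper bound with general (possibly non-commuting) target dichotomies, I would exploit data processing: if $(\rho_1^{\otimes n},\sigma_1^{\otimes n})\succeq_{(0,0)}(\rho_2^{\otimes R_n n},\sigma_2^{\otimes R_n n})$ via some channel $\E$, then any divergence $\mathbb{D}$ monotone under CPTP maps must satisfy $\mathbb{D}(\rho_1^{\otimes n}\|\sigma_1^{\otimes n})\geq \mathbb{D}(\E(\rho_1^{\otimes n})\|\E(\sigma_1^{\otimes n}))=\mathbb{D}(\rho_2^{\otimes R_n n}\|\sigma_2^{\otimes R_n n})$. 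Choosing $\mathbb{D}=\Dm_\alpha$ (the minimal/sandwiched Rényi divergence, which satisfies data processing for $\alpha\in[1/2,\infty]$, and one should also invoke the reversed-sandwiched variant or the Petz divergence appropriately to cover $\alpha<1/2$ — this is presumably why the statement writes $\overline{\mathbb{R}}$ and uses $\Dm_\alpha$ throughout) and using additivity $\Dm_\alpha(\tau^{\otimes m}\|\omega^{\otimes m})=m\,\Dm_\alpha(\tau\|\omega)$, one gets $n\,\Dm_\alpha\reli{1}\geq R_n n\,\Dm_\alpha\reli{2}$, hence $R_n\leq \Dm_\alpha\reli{1}/\Dm_\alpha\reli{2}$ for every admissible $\alpha$; minimising over $\alpha$ gives the claimed bound, valid for all $n$ (no asymptotics needed). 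I would need to be slightly careful about signs: $\alpha-1$ changes sign at $\alpha=1$, so the direction of the inequality on $\log\Tr(\cdots)$ flips, but one checks that the divergence itself (after dividing by $\alpha-1$) is monotone in the same direction throughout, so the bound $R_n\le \Dm_\alpha\reli1/\Dm_\alpha\reli2$ holds uniformly.

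For the lower bound with commuting targets, the strategy is constructive via the hypothesis-testing characterisation in Eq.~\eqref{eq:transformation_conditions}: since $[\rho_2,\sigma_2]=0$, the transformation $(\rho_1^{\otimes n},\sigma_1^{\otimes n})\succeq_{(0,0)}(\rho_2^{\otimes m},\sigma_2^{\otimes m})$ with $m=R_n n$ exists iff $\beta_x(\rho_1^{\otimes n}\|\sigma_1^{\otimes n})\le\beta_x(\rho_2^{\otimes m}\|\sigma_2^{\otimes m})$ for all $x\in(0,1)$. The appearance of the \emph{pinched} Rényi divergences $\DL_\alpha,\DR_\alpha$ (rather than $\Dm_\alpha$ or $\Dp_\alpha$) signals that the right tool on the input side is the pinching map $\pinch{\cdot}{\sigma_1^{\otimes n}}$ onto the eigenspaces of $\sigma_1^{\otimes n}$: pinching commutes with $\sigma_1^{\otimes n}$, can only decrease $\beta_x$ up to a polynomial pinching-factor, and makes everything classical so that one can apply the classical zero-error relative-majorisation result (the non-commutative generalisation of \cite{MuPomattoStrackTamuz2019,FarooqFritzHaapasaloTomamichel2023,LiYao2022} alluded to in the text). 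Concretely, I would: (i) pinch $\rho_1^{\otimes n}$ to obtain classical $\v p_1^{(n)}$ versus $\v q_1^{(n)}$; (ii) invoke the classical result that zero-error majorisation of $n$-fold products is governed by $\min_\alpha$ of ratios of classical Rényi divergences, applied to the staircase/cumulative profiles of $\beta_x$; (iii) identify the resulting exponent with $\DL_\alpha\reli1$ (and symmetrically $\DR_\alpha$ by pinching in the $\rho_1$-eigenbasis instead, or by the obvious reversal), taking the better of the two; (iv) relate the classical rate back to the quantum one, absorbing the polynomial pinching overhead into the $\liminf$ (it costs only $o(1)$ in the rate). The $\max$ over the two pinchings reflects that either choice of pinching basis yields a valid lower bound.

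The main obstacle I anticipate is step~(ii)–(iii) of the lower bound: proving the classical zero-error conversion rate for i.i.d.\ dichotomies equals $\inf_\alpha D_\alpha(\v p_1\|\v q_1)/D_\alpha(\v p_2\|\v q_2)$ and then matching the pinched input quantities to exactly $\DL_\alpha$ and $\DR_\alpha$ as \emph{defined} in \eqref{eq:pinched_renyi_def1}–\eqref{eq:pinched_renyi_def2}. The classical statement itself is a Cramér/large-deviations argument on the comparison of $\beta_x$ curves — one writes $\beta_x$ for the product distribution via its Legendre-type representation, so that the zero-error condition $\beta_x(\v p_1^{\otimes n})\le\beta_{x'}(\v p_2^{\otimes m})$ for all $x$ reduces to a comparison of Rényi-divergence-defined exponents across all "slices", and the extremal slice is selected by the $\inf_\alpha$. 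Getting the pinched-Rényi identification right (in particular the $\lefthat{}$/$\righthat{}$ asymmetry and which range of $\alpha\in\mathbb{R}$ — note $\alpha$ ranges over \emph{all} reals, including $\alpha<0$ and $\alpha>1$, so both "tails" of the $\beta_x$ curve contribute) will require careful bookkeeping against the definitions in \cref{subsubsec:rate_large}, and is where I'd expect the bulk of the technical work; the rest is packaging via pinching inequalities and the SDP duality for $\beta_x$ already recalled in \eqref{eq:beta1}–\eqref{eq:beta3}.
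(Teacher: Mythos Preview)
Your upper bound argument matches the paper's exactly: additivity plus data processing for $\Dm_\alpha$ gives the single-shot bound $R_n^*(0)\leq \Dm_\alpha\reli1/\Dm_\alpha\reli2$ for every $\alpha$, and the paper states precisely this in one line.

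For the lower bound, your instinct---pinch on the input side, reduce to a classical comparison of $\beta_x$ curves---is correct, but step~(ii) has a gap. After pinching $\rho_1^{\otimes n}$ with respect to $\sigma_1^{\otimes n}$, the resulting classical pair $(\v p_1^{(n)},\v q_1^{(n)})$ is \emph{not} an i.i.d.\ product: the pinching projectors are onto the joint eigenspaces of $\sigma_1^{\otimes n}$, which are typically high-dimensional and mix the tensor factors, so $\pinch{\rho_1^{\otimes n}}{\sigma_1^{\otimes n}}$ is a genuinely correlated state. You therefore cannot directly invoke the classical i.i.d.\ zero-error result of \cite{MuPomattoStrackTamuz2019,FarooqFritzHaapasaloTomamichel2023}. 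The polynomial pinching overhead you mention controls the \emph{size} of the divergences but does not restore product structure. (A minor related slip: your ``iff'' for the $\beta_x$ ordering only holds when \emph{both} dichotomies commute; for non-commuting inputs it is merely necessary, and sufficiency requires the pinched $\betaL_x$ or $\betaR_x$---see \cref{lem:ht}.)

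The paper avoids any external zero-error theorem and instead verifies the sufficient condition $\betaL_x(\rho_1^{\otimes n}\|\sigma_1^{\otimes n})<\beta_x(\rho_2^{\otimes rn}\|\sigma_2^{\otimes rn})$ for all $x\in(0,1)$ directly, by splitting the range of $x$ into two bands. For $x$ with log-odds $O(n)$ it uses the large-deviation analysis of \emph{pinched} hypothesis testing (\cref{lem:ht_largedev}), which is established for the correlated pinched sequence via the non-i.i.d.\ results of \cite{HiaiMosonyiOgawa2007,MosonyiOgawa14} and is exactly where the regularised quantities $\DL_\alpha,\DR_\alpha$ enter; uniformity in the exponent (\cref{lem:ht_largedev_uni}) lets this cover a compact interval of log-odds simultaneously. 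For $x$ super-exponentially close to $0$ or $1$, the paper uses the single-shot extreme-deviation analysis (\cref{lem:ht_extremedev_singleshot}), which shows that in those tails $\beta_x$ is exactly linear with slope governed by $D_{\pm\infty}$, reducing the comparison there to $r<\DL_{\pm\infty}\reli1/D_{\pm\infty}\reli2$. The two bands are chosen to overlap, covering all $x$. This large/extreme decomposition is the idea your plan is missing: the classical zero-error theorem you cite implicitly performs the same decomposition, but only for i.i.d.\ inputs, and making it work for the correlated pinched inputs is precisely the content of the pinched variants in \cref{lem:ht_largedev,lem:ht_extremedev}.
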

\noindent \textit{Proof.} See \cref{subsubsec:rate_extreme}.

Lastly, we are left with the final case of errors exponentially approaching $1$, wherein the rate diverges to infinity:
\begin{restatable}[Extremely high-error rate]{theorem}{rateextreme}
    \label{thm:rate_extreme}
    For \mbox{$[\rho_2,\sigma_2]=0$}, if the error is allowed to be super-exponentially close to $1$, then the optimal rate is unbounded,
    \begin{align}
        \lim_{n\to\infty}R_n^*\bigl(1-\exp(-\omega(n))\bigr)=\infty.
    \end{align}
\end{restatable}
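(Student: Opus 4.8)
The plan is to prove this with an essentially trivial explicit construction, exploiting the fact that when the tolerated error on the first state is within a super-exponentially small amount of $1$, the $\rho$-branch imposes (asymptotically) no constraint at all: one may simply discard the input and prepare whatever the $\sigma$-branch requires. Concretely, writing $\epsilon_n=1-\exp(-f(n))$ with $f(n)=\omega(n)$ (so $f(n)/n\to\infty$), I would fix an arbitrary target rate $R>0$ and show that $R_n^*(\epsilon_n)\geq R$ for all sufficiently large $n$; since $R$ is arbitrary, this yields $R_n^*(\epsilon_n)\to\infty$.

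The key step is to take the trash-and-prepare channel $\mathcal E_n(X):=\Tr[X]\,\sigma_2^{\otimes Rn}$. It maps $\sigma_1^{\otimes n}\mapsto\sigma_2^{\otimes Rn}$ exactly, so the zero-error $\sigma$-constraint is satisfied, while $\mathcal E_n(\rho_1^{\otimes n})=\sigma_2^{\otimes Rn}$. By the Fuchs--van de Graaf inequality together with multiplicativity of the fidelity under tensor products,
\begin{align}
    T\!\left(\mathcal E_n(\rho_1^{\otimes n}),\,\rho_2^{\otimes Rn}\right)
    = T\!\left(\sigma_2^{\otimes Rn},\,\rho_2^{\otimes Rn}\right)
    \leq \sqrt{1-F(\sigma_2,\rho_2)^{Rn}}
    \leq 1-\tfrac12\,F(\sigma_2,\rho_2)^{Rn}.
\end{align}
Thus the transformation at rate $R$ is admissible as soon as $\tfrac12 F(\sigma_2,\rho_2)^{Rn}\geq\exp(-f(n))$, i.e.\ as soon as $f(n)\geq Rn\ln\!\big(1/F(\sigma_2,\rho_2)\big)+\ln 2$, which holds for all large $n$ because $f(n)/n\to\infty$.

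The only point that needs care is the degenerate case $F(\sigma_2,\rho_2)=0$ (disjoint supports of $\rho_2,\sigma_2$), in which this construction fails; this is excluded by the paper's standing support/finiteness conventions on the target dichotomy (which in particular force $D\reli{2}<\infty$, hence overlapping supports). Should one wish to be even more careful, in the borderline subcase one may instead use the measure-and-prepare channel $X\mapsto\Tr[\Pi X]\,\sigma_2^{\otimes Rn}+\Tr[(\iden-\Pi)X]\,\rho_2^{\otimes Rn}$ with $\Pi$ the projector onto $\mathrm{supp}\,\sigma_1^{\otimes n}$, which still sends $\sigma_1^{\otimes n}$ to $\sigma_2^{\otimes Rn}$ exactly while pushing the $\rho$-error down to $(\Tr[\Pi_{\sigma_1}\rho_1])^n$, and this tends to $0$ unless $\mathrm{supp}\,\rho_1\subseteq\mathrm{supp}\,\sigma_1$. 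I expect this edge-case bookkeeping to be the only (minor) obstacle --- making this by far the easiest of the six rate theorems. A more hands-off alternative is to combine \cref{thm:rate_largedev_hi} with the monotonicity of $R_n^*(\epsilon)$ in $\epsilon$: for each fixed $\lambda>0$ one has $\epsilon_n\geq 1-\exp(-\lambda n)$ eventually, so $\liminf_n R_n^*(\epsilon_n)\geq\lim_n R_n^*(1-\exp(-\lambda n))\geq\lambda/D\reli{2}$ (using $\Dm_{t_1}\reli{1}\geq 0$, $\tfrac{t_1}{t_1-1}+\tfrac{t_2}{1-t_2}>1$, and $D_{t_2}\reli{2}\leq D\reli{2}$ for $t_2<1$ in the large-deviation formula), and then $\lambda\to\infty$ finishes.
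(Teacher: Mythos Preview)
Your proof is correct and takes a genuinely different, more elementary route than the paper. The paper stays within its unified hypothesis-testing framework: it invokes the sufficient condition of \cref{lem:ht}, applies the single-shot extreme deviation analysis (\cref{lem:ht_extremedev_singleshot}) to both $\betaL_{\epsilon_n}\rel{\rho_1^{\otimes n}}{\sigma_1^{\otimes n}}$ and $\beta_{1-\epsilon_n}\rel{\rho_2^{\otimes rn}}{\sigma_2^{\otimes rn}}$, and uses monotonicity of $\beta_x$ to cover the full interval $x\in(\epsilon_n,1)$. Your trash-and-prepare channel $\mathcal E_n(X)=\Tr[X]\,\sigma_2^{\otimes Rn}$ bypasses all of this machinery entirely --- no pinching, no Neyman--Pearson, no \cref{lem:ht} --- and the Fuchs--van de Graaf bound with multiplicativity of fidelity is all that is needed. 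The edge case $F(\sigma_2,\rho_2)=0$ is indeed excluded by the paper's standing assumption that $\sigma_2$ has full support. Your second alternative via \cref{thm:rate_largedev_hi} and monotonicity in $\epsilon$ is also valid (and the lower bound $\lambda/D\reli{2}$ you extract is correct), though it is less self-contained since it inherits the full large-deviation apparatus. The paper's approach buys uniformity of method across all error regimes; yours buys a one-line proof that makes transparent \emph{why} the rate is unbounded --- the $\rho$-constraint is asymptotically vacuous.
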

\noindent \textit{Proof.} See \cref{subsubsec:rate_extreme}.

Given these theorems, we also make two conjectures. Firstly, we notice that the form of the small deviation result, \cref{thm:rate_smalldev}, is identical to the infidelity-based results in Refs.~\cite{kumagai2016second,chubb2018beyond}. Our first conjecture is that this extends more generally to other distance measures.

\begin{conj}
    For any fixed and non-maximal transformation error $\epsilon>0$ measured by a (quantum) statistical distance $\delta$ (perhaps subject to some additional `niceness' constraints), the optimal rate for transforming quantum dichotomies with commuting target states in the small deviation regime is given by Eq.~\eqref{eq:small_deviation} with the sesquinormal distribution, $S_{1/\xi}$, replaced by the generalised Rayleigh-Normal distributions $S_{1/\xi}^{(\delta)}$, defined in Eq.~\eqref{eq:sesqui_gen}.
\end{conj}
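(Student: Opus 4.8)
The plan is to reproduce the two-step architecture behind \cref{thm:rate_smalldev} --- reduce the approximate dichotomy conversion to a statement about hypothesis-testing regions, then analyse those regions asymptotically --- while pushing every appearance of the error metric into a single ``smoothing'' of the region. The first and decisive step is a \emph{generalised Blackwell equivalence} for the distance $\delta$: for commuting targets $[\rho_2,\sigma_2]=0$ one would want $(\rho_1,\sigma_1)\succeq_{(\delta\leq\epsilon,0)}(\rho_2,\sigma_2)$ to be equivalent to the existence of a curve $B$ lying pointwise below the testing boundary $x\mapsto\beta_x(\rho_1\|\sigma_1)$ and at $\delta$-distance at most $\epsilon$ from $x\mapsto\beta_x(\rho_2\|\sigma_2)$. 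For the trace distance this must collapse to \eqref{eq:transformation_conditions} with $\epsilon_\sigma=0$, and for the infidelity to the Kumagai--Hayashi condition~\cite{kumagai2016second}; indeed, the ``niceness'' constraints in the conjecture can plausibly be taken to be \emph{exactly} whatever axioms (data-processing monotonicity, joint quasi-convexity, lower semicontinuity, and a variational/SDP-type representation keeping the smoothed boundary convex) make such an equivalence go through. As a by-product one should upgrade \cref{lem:sesqui} by verifying that $S_\nu^{(\delta)}$, defined in \eqref{eq:sesqui_gen}, is itself a cdf under those axioms.

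The second step is the second-order asymptotics of the two boundaries, essentially unchanged from the trace-distance case. Via the Berry--Esseen refinement of the quantum Stein lemma --- equivalently, the central limit theorem for the log-likelihood-ratio (free-energy) random variable --- the curve $\beta_x(\rho_1^{\otimes n}\|\sigma_1^{\otimes n})$, after the affine change of variables $x\mapsto\Phi^{-1}(x)$ together with the $n D(\rho_1\|\sigma_1)\pm\sqrt{n\,V(\rho_1\|\sigma_1)}$ rescaling on the exponent axis, converges to the standard normal cdf $\Phi$, while $\beta_x(\rho_2^{\otimes R_nn}\|\sigma_2^{\otimes R_nn})$ converges to $\phi_{\mu,\nu}$ with $\mu$ and $\nu$ the same functions of the normalised rate offset $\sqrt n\,(R_n-D(\rho_1\|\sigma_1)/D(\rho_2\|\sigma_2))/\sqrt{V(\rho_1\|\sigma_1)}$ and of the reversibility parameter $\xi$ that already govern \cref{thm:rate_smalldev}. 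For non-commuting $\rho_1$ one inserts a pinching onto the eigenbasis of $\sigma_1$ and invokes Hayashi's pinching inequality; as in the trace-distance proof this perturbs the relevant smoothed quantities only at order $o(1/\sqrt n)$ and so does not move the second-order term. Feeding these limits into the generalised Blackwell condition, the largest admissible rate offset is the one for which $\inf_{A\geq\Phi}\delta(A',\phi_{\mu,\nu})\leq\epsilon$ becomes tight, i.e. the one reading off $(S_{1/\xi}^{(\delta)})^{-1}(\epsilon)$ after the reciprocal-parameter duality of \cref{lem:sesqui}; this is exactly \eqref{eq:small_deviation} with $S_{1/\xi}$ replaced by $S_{1/\xi}^{(\delta)}$.

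The main obstacle is the first step. Both known special cases currently require bespoke arguments exploiting structure unavailable in general --- an explicit SDP for $\beta_x$ and the resulting convex testing region in the trace-distance case, and the specific monotonicity and joint-concavity properties of fidelity in the infidelity case --- and it is far from clear that a single clean axiom set produces both the equivalence \emph{and} the precise matching between ``an $\epsilon$-ball in $\delta$ around the output region'' and the $\inf_{A\geq\Phi}$ over cdfs above $\Phi$ that defines $S_\nu^{(\delta)}$. A secondary difficulty is the usual achievability/converse asymmetry: the converse (upper bound on $R_n^*$) should persist for non-commuting targets, but the matching achievability uses $[\rho_2,\sigma_2]=0$ so that the output testing region is the exact classical one, and lifting that restriction would require the still-open fully quantum Blackwell characterisation.
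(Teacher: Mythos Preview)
The statement you are attempting to prove is a \emph{conjecture} in the paper, not a theorem. The paper provides no proof; it explicitly presents this as an open problem motivated by the coincidence in form between the trace-distance result (\cref{thm:rate_smalldev}) and the earlier infidelity-based results of Kumagai--Hayashi and Chubb--Tomamichel--Korzekwa. There is therefore no ``paper's own proof'' against which to compare your proposal.

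That said, your proposal is a reasonable sketch of what a proof \emph{might} look like, and you correctly identify the architecture and the obstacles. Your own final paragraph is essentially the reason this remains a conjecture: the first step --- a generalised Blackwell equivalence valid for an arbitrary nice statistical distance $\delta$ --- is not known, and the two cases that are understood (trace distance and infidelity) each rely on structure specific to that distance. The paper's authors leave the ``niceness'' constraints deliberately vague for exactly this reason. Until someone isolates axioms on $\delta$ that (i) yield a smoothed-testing-region characterisation of $\succeq_{(\delta\leq\epsilon,0)}$ and (ii) make $S_\nu^{(\delta)}$ a genuine cdf with the right duality, the second-order asymptotic step you outline cannot be run, and the statement remains open.
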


Secondly, all of the achievability bounds rely on connections to hypothesis testing that only apply for commuting targets $[\rho_2,\sigma_2]=0$, while all of our optimality bounds apply for general states. We conjecture that there might exist alternative protocols capable of saturating these bounds.
\begin{conj}
\label{conj:quantum}
    All of the optimality bounds in \cref{\ratetheorems} are achievable, for general states, $[\rho_1,\sigma_1]\neq 0$ and $[\rho_2,\sigma_2]\neq 0$.
\end{conj}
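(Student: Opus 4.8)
\textit{Proof strategy.} The plan is to lift each achievability argument from the commuting-target setting to genuinely quantum targets, replacing the hypothesis-testing/Blackwell reduction — the \emph{only} place where $[\rho_2,\sigma_2]=0$ is used in the proofs of \cref{\ratetheorems} — by a direct, coherent construction of the channel $\E$. Since every optimality bound in the excerpt is phrased in terms of (smoothed) Petz or sandwiched R\'enyi divergences, the first step is to collapse all six regimes into a single \emph{one-shot achievability} statement: for a transformation $(\rho_1,\sigma_1)\to(\rho_2,\sigma_2)$ with error $\epsilon$ allowed on the first state and exact reproduction of the second, exhibit a channel $\E$ with $\E(\sigma_1)=\sigma_2$ and $T\bigl(\E(\rho_1),\rho_2\bigr)\le\epsilon$ whenever a suitable inequality between one-shot divergences (min-/max-/hypothesis-testing relatives of $\Dm$ and $\Dp$) holds. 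Granting such a lemma, the asymptotic theorems would follow from the already-established asymptotic expansions of these one-shot quantities (asymptotic equipartition, second-order, moderate- and large-deviation), exactly as in the converse directions the paper already proves.

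The heart of the matter is then a one-shot \emph{coherent dilution} lemma. I would factor $\E$ as distillation followed by dilution: first map $\bigl(\rho_1^{\ot n},\sigma_1^{\ot n}\bigr)$ onto a classical ``standard'' dichotomy (a near-uniform distribution relative to a sharp reference), which is possible because the intermediate object commutes and is thus governed by relative majorization; then map this standard resource onto $\bigl(\rho_2^{\ot m},\sigma_2^{\ot m}\bigr)$. Only the second step feels the non-commutativity, and it reduces to the following one-shot problem: from $\iden_d/d$ together with a flagged ``sharp'' input, produce $\sigma_2$ exactly and $\rho_2$ up to small trace distance. A natural candidate is a randomised coherent preparation: choose purifications $\ket{\psi_{\rho_2}}$ and $\ket{\psi_{\sigma_2}}$ on a common dilation, use the standard resource to drive a controlled isometry interpolating between them, and discard the environment; the convex-split / position-based-decoding toolbox should then bound the residual error by a smoothing of the sandwiched R\'enyi divergence $\Dm_\alpha\reli 2$. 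An alternative is a catalytic construction attaching a slowly-varying coherent reference frame (a ``clock''/battery), mirroring the coherent-reference construction used in the thermodynamic part of this paper; success there would additionally show that those thermodynamic techniques extend to arbitrary channels and arbitrary $\sigma_2$.

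Two auxiliary ingredients are needed. First, to control the $n\to\infty$ behaviour of the coherent-dilution error one would pass through a (possibly exponential) de Finetti reduction on the $m$ output copies, reducing estimates on $\rho_2^{\ot m}$ and $\sigma_2^{\ot m}$ to estimates on i.i.d.\ extensions of permutation-invariant states; this is what lets the sub-exponential and exponential error budgets of the moderate- and large-deviation regimes absorb the overhead of synthesising coherence. Second, one must check consistency: the one-shot achievability bound, once expanded, must reproduce the \emph{Petz}-divergence denominators ($\Dp_{t_2}\reli 2$ in \cref{thm:rate_largedev_hi}, $\overline r$ in \cref{thm:rate_largedev_lo}, $\Dm_\alpha\reli 2$ in \cref{thm:rate_zero}) rather than the weaker commuting values, so that achievability and the already-proven converses pinch.

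The main obstacle — and the reason this remains a conjecture — is that, unlike for commuting targets, there is no known one-shot characterisation of dichotomy conversion with a non-commuting target: relative majorisation is only necessary, and strictly so beyond qubit targets. Worse, the most obvious bridge fails decisively: one \emph{cannot} first produce the pinched dichotomy $\bigl(\pinch{\rho_2^{\ot m}}{\sigma_2^{\ot m}},\sigma_2^{\ot m}\bigr)$ (which the commuting machinery handles) and then recover, because pinching destroys coherence irreversibly even in the i.i.d.\ limit — already for a qubit $\rho_2$ not commuting with $\sigma_2$ one has $F\bigl(\rho_2^{\ot m},\pinch{\rho_2^{\ot m}}{\sigma_2^{\ot m}}\bigr)=\Theta(1/\sqrt m)\to 0$. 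Any proof must therefore generate the output coherences ``from scratch'' in a way that is simultaneously exact on $\sigma_2^{\ot m}$, and I expect the zero-error and large-deviation-low regimes, where the extremal algebraic structure of the optimal channel matters most, to be the hardest cases; a complete resolution may well require first settling the one-shot problem, perhaps only up to the gap between the Petz and sandwiched R\'enyi divergences, which conveniently closes at second order.
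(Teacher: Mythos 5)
The statement you address is \cref{conj:quantum}, which the paper does not prove --- it is stated explicitly as a conjecture and no proof is offered, so there is no paper argument to compare against; what you have written is (rightly) a research program plus a clear account of why it is not yet a proof.

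Your diagnosis of where the paper's achievability machinery breaks is correct and is the same obstruction the paper itself alludes to in \cref{sec:outlook}. The achievability direction of \cref{\ratetheorems} runs entirely through the pinched hypothesis-testing reduction of \cref{lem:ht}, whose sufficient condition requires a commuting target so that the classical Blackwell equivalence of Ref.~\cite{renes2016relative} applies; and you correctly note that one cannot sidestep this by pinching the \emph{target}, since for non-commuting $\rho_2,\sigma_2$ one has $F\bigl(\rho_2^{\ot m},\pinch{\rho_2^{\ot m}}{\sigma_2^{\ot m}}\bigr)\to 0$ (your $\Theta(1/\sqrt m)$ rate is right for a pure qubit $\rho_2$: there the fidelity is the 2-norm squared of the binomial type distribution). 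This is exactly why the paper leaves the statement as a conjecture, and it also matches the paper's caveat that even a proof of \cref{conj:quantum} would only settle the Gibbs-preserving-map question, not thermal operations.

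Where the proposal does not yet close a gap: the constructive core --- a one-shot coherent-dilution lemma synthesising $(\rho_2,\sigma_2)$ from a standard commuting resource, with error controlled by a smoothed min-divergence --- is the missing ingredient, and no such lemma currently exists in the trace-distance, exact-on-$\sigma_2$ setting; the de Finetti and convex-split steps are plausible scaffolding but cannot be assembled without it. One small correction to your consistency check: the converse denominators you would need to match are not uniformly ``Petz.'' In \cref{thm:rate_largedev_hi} and in $\overline r$ of \cref{thm:rate_largedev_lo} they are the Petz divergences $\Dp_{t_2}\reli 2$, but in \cref{thm:rate_zero} they are the minimal (sandwiched) divergences $\Dm_\alpha\reli 2$, and for $[\rho_2,\sigma_2]\neq 0$ these do not coincide. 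So whether the conjecture is even true at the stated tightness --- as opposed to there being a genuine Petz-versus-sandwiched gap between achievability and optimality for non-commuting targets --- is itself part of what must be resolved; your closing remark gestures at this, and it deserves to be made central rather than parenthetical.
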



\subsection{Coherent quantum thermodynamics}
\label{subsec:coh_thermo}

Our technical results find applications in quantum thermodynamics because of the following result, whose proof can be found in \cref{app:thermal}.

\begin{theorem}
    \label{thm:thermo}
    For $\sigma_1=\gamma_1$ and $\sigma_2=\gamma_2$, both being thermal Gibbs states, the optimal transformation rates, captured by
    \cref{\ratetheoremsthermo}, can be attained by thermal operations. Moreover, for energy-incoherent input and output states, $\rho_1$ and $\rho_2$, this extends to all error regimes, i.e., \cref{\ratetheorems} characterise optimal transformation rates under thermal operations.
\end{theorem}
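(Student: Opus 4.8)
The plan is to split the statement into two parts with rather different difficulty: the energy-incoherent case, which is essentially bookkeeping and yields all of \cref{\ratetheorems}, and the genuinely coherent-input case with $\sigma_1=\gamma_1$, $\sigma_2=\gamma_2$, which yields \cref{\ratetheoremsthermo} and is where the real work lies. In both cases the optimality (upper) bounds are immediate: a thermal operation $\E$ is Gibbs-preserving, $\E(\gamma_1^{\otimes n})=\gamma_2^{\otimes R_nn}$ exactly, so any $\E$ with $T(\E(\rho_1^{\otimes n}),\rho_2^{\otimes R_nn})\leq\epsilon_n$ already witnesses $(\rho_1^{\otimes n},\gamma_1^{\otimes n})\succeq_{(\epsilon_n,0)}(\rho_2^{\otimes R_nn},\gamma_2^{\otimes R_nn})$; hence every general-channel upper bound in \cref{\ratetheorems} applies verbatim to thermal operations. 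So only achievability needs an argument.

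For the incoherent case ($\rho_1,\rho_2$ block-diagonal in the respective energy eigenbases) I would simply invoke the equivalence reviewed in \cref{subsec:thermo}, from Refs.~\cite{horodecki2013fundamental,Lostaglio2019}: $\rho_1\xrightarrow[\mathrm{TO}]{\epsilon}\rho_2$ holds if and only if $(\rho_1,\gamma_1)\succeq_{(\epsilon,0)}(\rho_2,\gamma_2)$ as a dichotomy. The achievability halves of \cref{\ratetheorems} construct a channel witnessing exactly this relation (between the $n$-fold dichotomies) at the stated rate, so no separate construction is required and every regime transfers to thermal operations unchanged — this gives the ``Moreover'' clause covering all of \cref{\ratetheorems}.

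The substantive part is achievability for a coherent input $\rho_1$. Here the output is still incoherent ($[\rho_2,\gamma_2]=0$), so the general-channel achievability of \cref{\ratetheoremsthermo} rests only on the hypothesis-testing quantities $\beta_x(\rho_1^{\otimes n}\|\gamma_1^{\otimes n})$ and their second-order expansion in terms of $D(\rho_1\|\gamma_1)$ and $V(\rho_1\|\gamma_1)$. The strategy is to exhibit a thermal operation that converts $\rho_1^{\otimes n}$ into an \emph{energy-incoherent} state $\hat\rho_n$ of an enlarged system with Gibbs state $\hat\gamma_n$, such that the dichotomy $(\hat\rho_n,\hat\gamma_n)$ has the same second-order profile as $(\rho_1^{\otimes n},\gamma_1^{\otimes n})$, i.e. $D(\hat\rho_n\|\hat\gamma_n)=nD(\rho_1\|\gamma_1)-o(\sqrt n)$ with the matching statement for the variance (equivalently $\beta_x(\hat\rho_n\|\hat\gamma_n)=\beta_x(\rho_1^{\otimes n}\|\gamma_1^{\otimes n})(1+o(1))$ uniformly over the relevant window of $x$); composing this preprocessing thermal operation with the incoherent-case achievability thermal operation then closes the argument. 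To build the preprocessing map I would adjoin a coherence battery — a reference system carrying only sub-extensive free energy, prepared with the help of a vanishing fraction of the input copies — and argue, as in \cref{app:battery}, that because the energy of $\rho_1^{\otimes n}$ fluctuates on a scale $O(\sqrt n)$ its coherence occupies a band of only $O(\sqrt n)$ energy modes, which a suitably sized battery can process energy-conservingly while being returned close to its initial state, with the induced transformation error and free-energy defect both $o(\sqrt n)$.

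The main obstacle is exactly this coherent-input achievability step: proving that a thermal operation assisted by a bounded, approximately returned coherence reference can harvest the full coherent free energy $nD(\rho_1\|\gamma_1)$ \emph{and} its fluctuations $nV(\rho_1\|\gamma_1)$ to within the $o(\sqrt n)$ precision demanded by second-order asymptotics, despite the single-shot no-go results on processing coherence under thermal operations. This requires carefully tracking how reference-frame (covariance-breaking) errors accumulate and interact with the $\sqrt n$-scale energy window, and doing so uniformly across the small-, moderate-, large- (high-error) and extreme-deviation regimes, where $\epsilon_n$ tends to $0$ or $1$ and the tolerance for reference degradation tightens accordingly — which is also why the low-error large-deviation and zero-error regimes, for which no tight general-channel rate is available for coherent inputs in the first place, are deliberately excluded from this part of the statement.
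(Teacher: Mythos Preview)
Your treatment of the optimality bounds and of the energy-incoherent case is fine and matches the paper. The coherent-input achievability, however, takes a wrong turn.

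You assert that the general-channel achievability of \cref{\ratetheoremsthermo} ``rests only on the hypothesis-testing quantities $\beta_x(\rho_1^{\otimes n}\|\gamma_1^{\otimes n})$'' and then propose an elaborate coherence-battery construction to thermally harvest the full coherent free energy to $o(\sqrt n)$ precision. But this is not what the achievability proofs in \cref{subsec:rates} actually use: for the input dichotomy they always invoke the \emph{left-pinched} quantity $\betaL_x(\rho_1^{\otimes n}\|\gamma_1^{\otimes n})=\beta_x\bigl(\pinch{\rho_1^{\otimes n}}{\gamma_1^{\otimes n}}\big\|\gamma_1^{\otimes n}\bigr)$, and the corresponding sufficient condition in \cref{lem:ht} is built from the channel $\lefthat{\mathcal E}=\mathcal E\circ\mathcal P_{\gamma_1^{\otimes n}}$. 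The observation you are missing is that the pinching map $\mathcal P_{\gamma_1^{\otimes n}}$ with respect to the Gibbs state is itself a thermal operation (it is the time-average of the free Hamiltonian evolution, hence energy-conserving and Gibbs-preserving). Once pinched, the input is energy-incoherent, and the remaining map $\mathcal E$ acts between commuting dichotomies and can be taken to be a thermal operation by the equivalence you already cite. The composition is therefore a thermal operation, with no reference system or battery required. The second-order content --- that $\betaL_x$ has the same small/moderate/large-high asymptotics as $\beta_x$ --- is already established in \cref{lem:ht_smalldev,lem:ht_moddev,lem:ht_largedev}, so nothing further needs to be proved.

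Your coherence-battery route is not merely more complicated; as you yourself note, making it rigorous runs into the no-go results on thermal coherence processing, and controlling reference degradation uniformly across all four regimes is precisely the kind of obstacle the pinching argument sidesteps entirely. Your explanation for why \cref{thm:rate_largedev_lo,thm:rate_zero} are excluded is also off: it is not (primarily) that the general-channel rates are loose there, but that those achievability proofs additionally rely on \emph{right}-pinching $\mathcal P_{\rho_1^{\otimes n}}$, which is not a thermal operation when $[\rho_1,\gamma_1]\neq 0$.
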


Thus, \cref{\ratetheoremsthermo} describe optimal rates~$R^*_n$ for state transformations under thermal operations between $n$ copies of generic quantum states $\rho_1$ and $R^*_nn$ copies of energy-incoherent states $\rho_2$, in most error regimes. This is not true \cref{thm:rate_largedev_lo,thm:rate_zero} since, as we shall see in \cref{subsec:rates}, these proofs explicitly leverage non-thermal operations when dealing with energy coherent states. Nevertheless, in \cref{app:thermal} we show how we can extract not-necessarily-tight bounds on the achievable rates under thermal operations in these regimes. 

Moreover, one can relatively straightforwardly generalise these results to obtain work-assisted optimal transformation rates. In this case, work is either invested to increase the rate of transformation, or extracted for the price of decreasing the rate. 
More precisely, consider an ancillary battery system $W$ with energy levels $\ket{0}_W$ and $\ket{1}_W$ separated by an energy gap $w$ \cite{horodecki2013fundamental,brandao2013resource,Brand_o_2015,lipka2021second,Lobejko2022workfluctuations}. Then, we say that there exists a $w$-assisted thermal operation transforming $\rho_1$ into a state $\epsilon$-close to $\rho_2$ if
\begin{equation}
    \rho_1\otimes \ketbra{0}{0}_W \xrightarrow[\mathrm{TO}]{\epsilon} \rho_2\otimes \ketbra{1}{1}_W,
\end{equation}
where $w>0$ corresponds to work extraction, whereas $w<0$ means work investment. As we show in \cref{app:battery}, one can modify the proof of \cref{thm:rate_smalldev} and arrive at the following result (note that analogous modifications of \cref{thm:rate_moddev,thm:rate_largedev_lo} are also possible).

\begin{restatable}[Optimal work-assisted rate in the small deviation regime]{theorem}{thmsmallwork}
    \label{thm:small_work}
    Consider a battery system with an energy gap
    \begin{equation}
        w = w_1 n + w_2 \sqrt{n},
    \end{equation}
    with constant $w_1$ and $w_2$. Then, for any fixed transformation error $\epsilon\in(0,1)$, the optimal rate $R_n^*$ for $w$-assisted thermodynamic transformation between $n$ copies of a generic state $\rho_1$ and $R_n^*n$ copies of an energy-incoherent state $\rho_2$ is given by
    \begin{align}
        \label{eq:small_deviation_work}
        R^*_n(\epsilon)\simeq\frac{D(\rho_1\|\gamma_1)-\beta w_1}{D(\rho_2\|\gamma_2)}+\frac{\sqrt{V(\rho_1\|\gamma_1)}S_{1/\xi'}^{-1}(\epsilon)-\beta w_2}{\sqrt{n}D(\rho_2\|\gamma_2)}
        ,
    \end{align}
    where 
    \begin{align}
    \xi':=\frac{V(\rho_1\|\gamma_1)}{D(\rho_1\|\gamma_1)-\beta w_1} \bigg/ \frac{V(\rho_2\|\gamma_2)}{D(\rho_2\|\gamma_2)},
\end{align}
    and $\simeq$ denotes an equality up to terms of order $o(1/\sqrt n)$. Moreover, when $\rho_2=\gamma_2$, any positive transformation rate $R_n^*$ is possible as long as
    \begin{equation}
        \frac{\beta w}{n} \lesssim D(\rho_1\|\gamma_1)+\sqrt{\frac{V(\rho_1\|\gamma_1)}{n}}\Phi^{-1}(\epsilon).
    \end{equation}
\end{restatable}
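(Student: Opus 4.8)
The plan is to reduce the $w$-assisted thermodynamic transformation to a battery-augmented instance of the dichotomy problem of \cref{thm:rate_smalldev}, and then track how the energy gap $w=w_1n+w_2\sqrt n$ filters through its second-order expansion. First I would give the battery $W$ the Hamiltonian $H_W=w\proj{1}_W$, so that $\gamma_W=(\proj{0}_W+e^{-\beta w}\proj{1}_W)/Z_W$ with $Z_W=1+e^{-\beta w}$, and use \cref{thm:thermo} to relate $\rho_1^{\otimes n}\otimes\proj{0}_W\xrightarrow[\mathrm{TO}]{\epsilon}\rho_2^{\otimes R_n n}\otimes\proj{1}_W$ to the approximate dichotomy transformation $(\rho_1^{\otimes n}\otimes\proj{0}_W,\gamma_1^{\otimes n}\otimes\gamma_W)\succeq_{(\epsilon,0)}(\rho_2^{\otimes R_n n}\otimes\proj{1}_W,\gamma_2^{\otimes R_n n}\otimes\gamma_W)$: achievability by thermal operations is inherited from the proof of \cref{thm:thermo} with the battery appended, while the matching converse holds because thermal operations preserve Gibbs states, so the general-target upper bound of \cref{thm:rate_smalldev} applies. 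A short computation with the SDP \eqref{eq:beta1}--\eqref{eq:beta3} shows that the optimal hypothesis test for the augmented states puts all its weight on the $\proj{0}_W$ (resp.\ $\proj{1}_W$) sector, giving $\beta_x(\rho_1^{\otimes n}\otimes\proj{0}_W\|\gamma_1^{\otimes n}\otimes\gamma_W)=Z_W^{-1}\beta_x(\rho_1^{\otimes n}\|\gamma_1^{\otimes n})$ and the analogous identity with an extra factor $e^{-\beta w}$ on the target side; the $Z_W$'s cancel in \eqref{eq:transformation_conditions}, so the augmented relative-majorisation condition is exactly $e^{\beta w}\beta_x(\rho_1^{\otimes n}\|\gamma_1^{\otimes n})\le\beta_{x-\epsilon}(\rho_2^{\otimes R_n n}\|\gamma_2^{\otimes R_n n})$ for all $x\in(\epsilon,1)$, i.e.\ the bare dichotomy condition with the input hypothesis-testing curve rigidly shifted by $\beta w$ on a logarithmic scale.

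Next I would feed this shift into the argument behind \cref{thm:rate_smalldev}. Writing $\beta w=\beta w_1n+\beta w_2\sqrt n$ and using the Strassen-type expansion $-\log\beta_x(\rho_1^{\otimes n}\|\gamma_1^{\otimes n})=nD(\rho_1\|\gamma_1)+\sqrt{nV(\rho_1\|\gamma_1)}\,\Phi^{-1}(x)+O(\log n)$, the shifted input profile $-\log(e^{\beta w}\beta_x(\rho_1^{\otimes n}\|\gamma_1^{\otimes n}))$ is exactly the one an input dichotomy with relative entropy $D(\rho_1\|\gamma_1)-\beta w_1$ and relative entropy variance $V(\rho_1\|\gamma_1)$ would produce, together with a rigid second-order translation by $-\beta w_2\sqrt n$. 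Re-running the proof of \cref{thm:rate_smalldev} with these replacements, the leading rate becomes $[D(\rho_1\|\gamma_1)-\beta w_1]/D(\rho_2\|\gamma_2)$; since the output Gaussian must then be matched at $\approx n[D(\rho_1\|\gamma_1)-\beta w_1]/D(\rho_2\|\gamma_2)$ copies, the width ratio that selects the sesquinormal parameter becomes $\xi'$ (the input denominator of the $V/D$-ratio being replaced by $D(\rho_1\|\gamma_1)-\beta w_1$), producing $S_{1/\xi'}^{-1}(\epsilon)$ in place of $S_{1/\xi}^{-1}(\epsilon)$; and the deterministic $-\beta w_2\sqrt n$ shift of the available resource passes additively through the division by $D(\rho_2\|\gamma_2)$. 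Collecting these terms gives \eqref{eq:small_deviation_work}.

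For the degenerate case $\rho_2=\gamma_2$ the formula is ill-defined since $D(\rho_2\|\gamma_2)=0$. Here I would observe that arbitrarily many copies of the Gibbs state can be appended for free, so the transformation succeeds for every positive $R_n$ if and only if the pure work-extraction step $\rho_1^{\otimes n}\otimes\proj{0}_W\xrightarrow[\mathrm{TO}]{\epsilon}\proj{1}_W$ does, the inert $\gamma_2^{\otimes R_n n}$ factor dropping out of the $\beta$-conditions by data processing in both directions. The relative-majorisation condition for work extraction reads $e^{\beta w}\beta_x(\rho_1^{\otimes n}\|\gamma_1^{\otimes n})\le1-x+\epsilon$ for all $x\in(\epsilon,1)$; for any fixed $x>\epsilon$ the left-hand side, relative to its value at $x=\epsilon$, is suppressed by a factor that vanishes as $n\to\infty$, while the right-hand side changes only by an $O(1)$ factor, so the binding constraint is $x\to\epsilon^+$, namely $e^{\beta w}\beta_\epsilon(\rho_1^{\otimes n}\|\gamma_1^{\otimes n})\le1$. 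Applying the expansion once more and dividing by $n$ gives exactly the stated sufficient condition, with $O(\log n/n)=o(1/\sqrt n)$ absorbed into $\lesssim$.

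The hard part will be Step 2: showing rigorously that the two deterministic corrections from the battery---the first-order shift $D(\rho_1\|\gamma_1)\mapsto D(\rho_1\|\gamma_1)-\beta w_1$, which changes not only the leading rate but also, through the number of output copies, the width ratio that fixes the sesquinormal parameter $\xi\mapsto\xi'$, and the rigid second-order shift $-\beta w_2\sqrt n$---propagate through the curve-matching optimisation underlying \cref{thm:rate_smalldev} precisely as claimed and with no spurious cross terms. One should also check that the relative-majorisation conditions remain \emph{necessary} for arbitrary (in particular energy-coherent) inputs, which the converse relies on; this holds because $\beta_x$ cannot decrease under a channel.
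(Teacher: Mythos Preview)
Your proposal is correct and follows essentially the same route as the paper's proof in \cref{app:battery}: both factor the battery out of the hypothesis-testing quantities (your SDP computation is the content of the paper's \cref{lem:battery}), obtain the condition $e^{\beta w}\beta_x\reli{1}\le\beta_{x-\epsilon}\reli{2}$, and then feed the resulting $\beta w_1n+\beta w_2\sqrt n$ shift into the second-order expansion underlying \cref{thm:rate_smalldev}. The paper is slightly more explicit in separating the pinched (sufficient, via \cref{lem:thermal_op}) and non-pinched (necessary) conditions before observing that \cref{lem:ht_smalldev} makes them coincide, and it arrives at the rate by writing the single inequality \eqref{eq:battery_nec_suff} and solving for $R_n$, whereas you phrase the same computation as ``replacing $D\reli 1$ by $D\reli 1-\beta w_1$ and re-running \cref{thm:rate_smalldev}''; these are cosmetic differences.
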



\subsection{Entanglement transformations}
\label{subsec:res_ent}
Due to the relation between transforming commuting quantum dichotomies and LOCC transformations discussed in \cref{sec:frame_ent}, our technical results also find applications in the resource theory of entanglement.

\begin{restatable}{theorem}{thmentanglement}
    \label{thm:entanglement}
    For pure bipartite states $\psi_1$ and $\psi_2$ characterised by Schmidt vectors $\v{p_1}$ and $\v{p_2}$, the optimal transformation rates between $\psi_1$ and $\psi_2$ under LOCC are captured by \cref{\ratetheorems} with the following substitutions (including the substitutions in the expression for $\xi$):
    \begin{aligns}
        D(\rho_i\|\sigma_i)&\rightarrow H(\v{p_i}),\\
        V(\rho_i\|\sigma_i)&\rightarrow V(\v{p_i}),\\
        \Dp_t(\rho_i\|\sigma_i),~ \Dm_t(\rho_i\|\sigma_i)&\rightarrow H_t(\v{p_i}).
    \end{aligns}
\end{restatable}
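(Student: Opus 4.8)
The plan is to use the reduction, recalled in \cref{sec:frame_ent}, of LOCC interconversion of pure bipartite states to approximate transformation of \emph{commuting} dichotomies, and then to re-run the classical (commuting) case of the analysis of \cref{subsec:rates} on the resulting condition. First, fix $n$ and let $m$ be the number of output copies, and write $\rho_i$ for the diagonal matrix carrying the Schmidt vector $\v{p_i}$, as in \cref{sec:frame_ent}. Since $\psi_i^{\otimes k}$ has Schmidt vector $\v{p_i}^{\otimes k}$ and Schmidt rank $d_i^k$, applying \cref{eq:entanglement_condition} to $\psi_1^{\otimes n}$ and $\psi_2^{\otimes m}$ gives that $\psi_1^{\otimes n}\xrightarrow[\mathrm{LOCC}]{\epsilon}\psi_2^{\otimes m}$ holds if and only if
\begin{align}
  d_2^{\,m}\,\beta_x\!\left(\rho_2^{\otimes m}\middle\|\frac{\iden}{d_2^{m}}\right)\le d_1^{\,n}\,\beta_{x-\epsilon}\!\left(\rho_1^{\otimes n}\middle\|\frac{\iden}{d_1^{n}}\right)\qquad\text{for all }x\in(\epsilon,1).
\end{align}
This is exactly the relative-majorisation condition \cref{eq:transformation_conditions} for the point-mass-padded, uniform-reference commuting dichotomies of \cref{sec:frame_ent}, now at the $n$- and $m$-fold level.

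Next I would identify the relevant information quantities. For a distribution $\v q$ on $d$ symbols and $\rho_{\v q}:=\mathrm{diag}(\v q)$, one has $D(\rho_{\v q}\|\iden/d)=\log d-H(\v q)$, $V(\rho_{\v q}\|\iden/d)=V(\v q)$, and, both states commuting, $\Dp_\alpha(\rho_{\v q}\|\iden/d)=\Dm_\alpha(\rho_{\v q}\|\iden/d)=D_\alpha(\rho_{\v q}\|\iden/d)=\log d-H_\alpha(\v q)$. The dichotomies in the condition above are not pure tensor powers of a single dichotomy, but the padding contributes the \emph{same} additive constant ($n\log d_1+m\log d_2$) to the relative entropy and to each R\'enyi relative entropy of both the source and the target; these constants are side-independent, and cancel in every source-vs-target comparison entering the asymptotic analysis of \cref{subsec:rates}. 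What survives governing the asymptotics is $mH(\v{p_2})$ against $nH(\v{p_1})$ (and the analogous R\'enyi quantities), i.e.\ one is left with the Shannon entropy, entropy variance and R\'enyi entropies of the Schmidt vectors in place of the relative-entropic quantities of \cref{\ratetheorems}.

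I would then repeat, regime by regime, the argument of \cref{subsec:rates} for this condition, using the classical (commuting-case) hypothesis-testing expansions of \cref{subsec:htass} for $\beta_x$ of a product distribution against the uniform distribution, in the small, moderate, large and extreme deviation regimes. After the $\log d_1,\log d_2$ cancellations, the admissibility inequalities on $m$ in each regime are obtained from those on $R_n n$ in \cref{\ratetheorems} by the substitutions $D(\rho_i\|\sigma_i)\to H(\v{p_i})$, $V(\rho_i\|\sigma_i)\to V(\v{p_i})$, $\Dp_t(\rho_i\|\sigma_i),\Dm_t(\rho_i\|\sigma_i)\to H_t(\v{p_i})$ (hence also $\xi\to V(\v{p_1})H(\v{p_2})/(H(\v{p_1})V(\v{p_2}))$); solving for the largest admissible $m$ and dividing by $n$ yields the claimed optimal LOCC rates. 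Here one uses that $m=\Theta(n)$, with leading coefficient the relevant first-order (resp.\ extreme-deviation) rate, so that any coefficient sitting in a second-order correction — a priori depending on both $n$ and $m$, for instance through the sesquinormal parameter $mV(\v{p_2})/\bigl(nV(\v{p_1})\bigr)\to 1/\xi$ — may be evaluated at the leading behaviour and re-expressed in $n$ alone without changing the stated order of accuracy. For the zero- and extreme-deviation regimes one additionally notes that, both dichotomies being commuting, $\DL_\alpha$, $\DR_\alpha$, $\Dp_\alpha$ and $\Dm_\alpha$ all collapse to $D_\alpha$, so the substitution is unambiguous and the two bounds of \cref{thm:rate_zero} coincide.

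The genuinely non-trivial step — the only place where one is doing more than quoting a theorem — is that the dimension-matching padding forbids a black-box appeal to \cref{\ratetheorems}: the dichotomies produced by the reduction are ``block IID'' (a tensor power of the $\psi_2$-dichotomy tensored with a tensor power of a trivial point-mass dichotomy), with uniform references whose dimension depends on both $n$ and $m$. One must check that the Berry--Esseen, moderate-deviation and large-deviation estimates underpinning \cref{subsec:rates} remain valid in this setting. This is routine — the padding pieces are deterministic, contribute nothing to variances or R\'enyi quantities, and shift every relevant log-likelihood statistic by the identical $n\log d_1+m\log d_2$ on both sides — but it is the step carrying the real content of the proof.
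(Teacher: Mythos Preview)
Your proposal is correct and follows essentially the same route as the paper's proof sketch in \cref{app:entanglement}: start from the $n$-fold version of \cref{eq:entanglement_condition}, observe that the relevant hypothesis-testing quantities are those of the Schmidt vectors against the uniform distribution, note that the dimensional factors cancel, and then re-run the regime-by-regime analysis of \cref{subsec:rates} rather than invoking \cref{\ratetheorems} as a black box.

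The one place your framing differs slightly is in the final paragraph. You view the obstruction to a black-box appeal as the ``block IID'' structure of the \emph{padded} dichotomies, and then argue that the deterministic padding contributes only a common additive shift. The paper instead works directly with the \emph{unpadded} condition carrying the explicit prefactors $d_1^{\,n}$ and $d_2^{\,m}$, so that both $\beta_x$'s are evaluated on genuine tensor powers $(\v{p_i}^{\otimes k},\v{f_i}^{\otimes k})$ and no block-IID extension of the hypothesis-testing lemmas is needed at all; the prefactors are simply carried through the logarithm and cancel against the $\log d_i$ pieces of $D_\alpha(\v{p_i}\|\v{f_i})=\log d_i-H_\alpha(\v{p_i})$. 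Your route is not wrong---the padding really is deterministic and harmless---but the paper's framing sidesteps the issue entirely. The paper also flags explicitly that the roles of source and target are reversed relative to \cref{lem:ht} (Nielsen's theorem runs the majorisation ``backwards''), which you handle implicitly by writing the condition down correctly and promising to re-derive rather than substitute.
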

The details of necessary manipulations to arrive at the above result can be found in \cref{app:entanglement}.


\section{Discussion and applications}
\label{sec:app}

\kk{
\subsection{Phenomenological model}

We start the discussion by giving an intuitive, but completely non-rigorous, ``derivation'' of the thermodynamic small deviation rate (rates in other regimes can potentially also be ``derived'' in a similar fashion). It is based on three assumptions. First, assume that the thermodynamic resource content of a given state $\rho$ is a random variable $\log\rho-\log\gamma$ (a difference between log-likelihoods for the state and the thermal state), so that its mean and variance are given by the non-equilibrium free energy $D(\rho\|\gamma)$ and its fluctuations $V(\rho\|\gamma)$. Second, the distribution of the thermodynamic resource content of $\rho^{\otimes n}$ for large $n$ is a Gaussian with mean $nD(\rho\|\gamma)$ and variance $nV(\rho\|\gamma)$. And third, assume that every transformation that does not increase the resource content, even probabilistically, is allowed.

Using these three assumptions, let us now find the smallest transformation error $\epsilon$ for which a thermodynamic transformation with a rate $R_n^*$ from the initial state $\rho_1^{\otimes n}$ to the target state $\rho_2^{\otimes R_n^*n}$ is possible. Cumulative distribution functions of the resource content of the initial and target states are given by $\Phi_{\mu_1,\nu_1}$ and $\Phi_{\mu_2,\nu_2}$, where
\begin{subequations}
\begin{align}
    &\mu_1=nD(\rho_1\|\gamma),\quad \mu_2=R_n^*nD(\rho_2\|\gamma),\\
    &\nu_1=nV(\rho_1\|\gamma),\quad \nu_2=R_n^*nV(\rho_2\|\gamma).
\end{align}
\end{subequations}
Let us also denote the cumulative distribution of the resource content of the final state by $A$. Then, the condition for non-increasing the resource content is given by $A\geq \Phi_{\mu_1,\nu_1}$ (i.e., there is always more probability mass with lower resource content for the final state as compared to the initial state). The minimal transformation error is then given by
\begin{equation}
    \label{eq:error_pheno}
    \!\!\epsilon = \!\!\inf_{A\geq \Phi_{\mu_1,\nu_1}}\!\! \delta(A,\Phi_{\mu_2,\nu_2})= \inf_{A\geq \Phi} \delta(A,\Phi_{\mu,\nu})= S_\nu(\mu),
\end{equation}
where
\begin{equation}
    \mu=\frac{\mu_2-\mu_1}{\sqrt{\nu_1}},\quad \nu = \frac{\nu_2}{\nu_1}.
\end{equation}

Finally, by applying $S^{-1}_\nu$ to both sides of Eq.~\eqref{eq:error_pheno}, using the expressions for $\mu$ and $\nu$, and keeping only the leading terms in $n$, we end up recovering the thermodynamic small deviation rate:
\begin{equation}
    R_n^*(\epsilon) =\frac{D(\rho_1\|\gamma)+ \sqrt{V(\rho_1\|\gamma)/n}\cdot S^{-1}_{1/\xi}(\epsilon)}{D(\rho_2\|\gamma)},
\end{equation}
where $\xi$ is given by Eq.~\eqref{eq:reversibility} with $\sigma_1=\sigma_2=\gamma$.

}


\subsection{Optimal thermodynamic protocols with coherent inputs}
\label{subsec:opt_thermo_prot}

The obtained results can be straightforwardly applied to study the optimal performance of thermodynamic protocols, where the processed systems may be initially prepared in coherent superpositions of different energy eigenstates. In what follows, we will briefly discuss how this can be done and what it means for work extraction, information erasure, and thermodynamically free encoding of information. We note that in all these protocols the final states are energy-incoherent, and thus our results allow one to study them in full generality.

In the work extraction protocol, one uses a thermal bath and $n$ copies of a system in a state $\rho$ to excite the battery system $W$ over the energy gap $w$. The aim is to find the largest possible $w$ as a function of the allowed transformation error $\epsilon$. In other words, one wants to find the largest $w$ for which the following thermodynamic transformation exists:
\begin{equation}
    \rho^{\otimes n}\otimes \ketbra{0}{0}_W \xrightarrow[\mathrm{TO}]{\epsilon} \ketbra{1}{1}_W.
\end{equation}
This problem can be directly addressed by employing \cref{thm:small_work} with the target state of the system being thermal, which results in
    \begin{equation}
        \frac{w}{n} \leq \frac{1}{\beta}\left(D(\rho\|\gamma)+\sqrt{\frac{V(\rho\|\gamma)}{n}} \Phi^{-1}(\epsilon)\right).
    \end{equation}
The above yields the optimal amount of $\epsilon$-deterministic work that can be extracted per one copy of the processed system, and generalises the previous small deviation results on work extraction from incoherent states~\cite{chubb2018beyond} and pure states~\cite{biswas2022fluctuation} to general quantum states. 

In the information erasure protocol, one aims at using a thermal bath and the de-excitation of a battery system $W$ with minimal possible energy gap $|w|$ to reset $n$ copies of a system with a trivial Hamiltonian and in a state $\rho$ into a pure state $\ketbra{0}{0}$. This corresponds to finding the smallest $|w|$ (note that, since we de-excite the battery, 
we have $w<0$)
for which the following thermodynamic transformation exists:
\begin{equation}
    \rho^{\otimes n}\otimes \ketbra{0}{0}_W \xrightarrow[\mathrm{TO}]{\epsilon} \ketbra{0}{0}^{\otimes n} \otimes \ketbra{1}{1}_W.
\end{equation}
 Employing \cref{thm:small_work} and solving for $w$ that allows one to achieve rate 1, one arrives at the thermodynamic cost of information erasure per one copy of the system:
\begin{equation}
    \frac{|w|} {n}\simeq \frac{1}{\beta}\left(S(\rho)-\sqrt{\frac{V(\rho)}{n}}\Phi^{-1}(\epsilon)\right),
\end{equation}
which again generalises the previously known results for erasing incoherent states.

Finally, the problem of thermodynamically free encoding of information, introduced in Ref.~\cite{korzekwa2022encoding} and studied for incoherent and pure states in Ref.~\cite{biswas2022fluctuation}, is stated as follows. A sender is given $n$ copies of a quantum system $\rho$ that act as an information carrier, and wants to encode one of $M$ messages into these systems without using any thermodynamic resources, so employing only thermal operations. The aim is to find the maximal number $M$ of messages that can be encoded in a way that allows for decoding them with error probability at most $\epsilon$. In Ref.~\cite{korzekwa2022encoding} it was shown that, in the small deviation regime, $M$ is upper bounded by
\begin{equation}
    \label{eq:encoding}
    \frac{\log M(\rho^{\otimes n},\epsilon)}{n}\lesssim D(\rho\|\gamma)+\sqrt{\frac{V(\rho\|\gamma)}{n}}\Phi^{-1}(\epsilon),
\end{equation}
and in Ref.~\cite{biswas2022fluctuation} it was proved that the above bound can be achieved for states $\rho$ that are either energy-incoherent or pure. Using the results we obtained here, this can be generalised to arbitrary quantum states $\rho$ in the following way. Consider the following thermodynamic transformation:
\begin{equation}
    \rho^{\otimes n} \xrightarrow[\mathrm{TO}]{\epsilon} \ketbra{0}{0}_A^{\otimes R n},
\end{equation}
where the final system $A$ consists of $Rn$ two-level subsystems with trivial Hamiltonians. Note that since all energy levels of the final systems are degenerate, the sender can map the state $\ketbra{0}{0}_A^{\otimes R n}$ to any of $2^{Rn}$ basis states using thermal operations. Thus, the sender can encode $M=2^{Rn}$ messages that, moreover, can be decoded with probability of error $\epsilon$ simply through a computational basis measurement. It is then straightforward to employ \cref{thm:rate_smalldev,thm:thermo} to obtain \cref{eq:encoding} with $\lesssim$ replaced by~$\simeq$.

\begin{figure*}[t]
    \centering  
    \subfloat[]{
        \includegraphics[width=0.8\columnwidth]{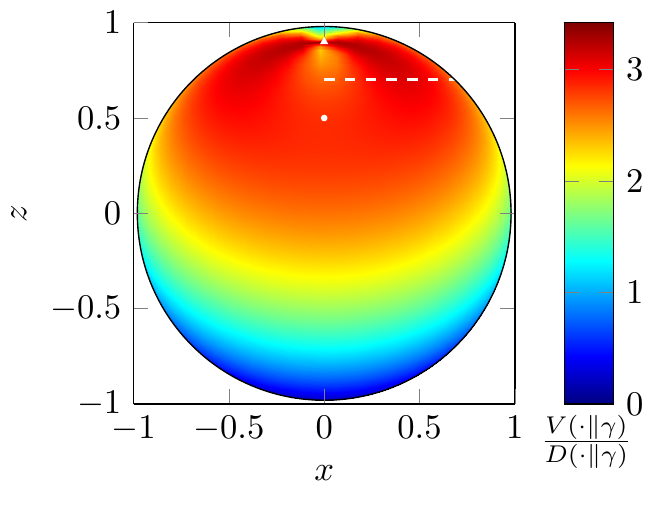}
        \label{subfig:resonance_a}
    }
    \subfloat[]{
        \includegraphics[width=\columnwidth]{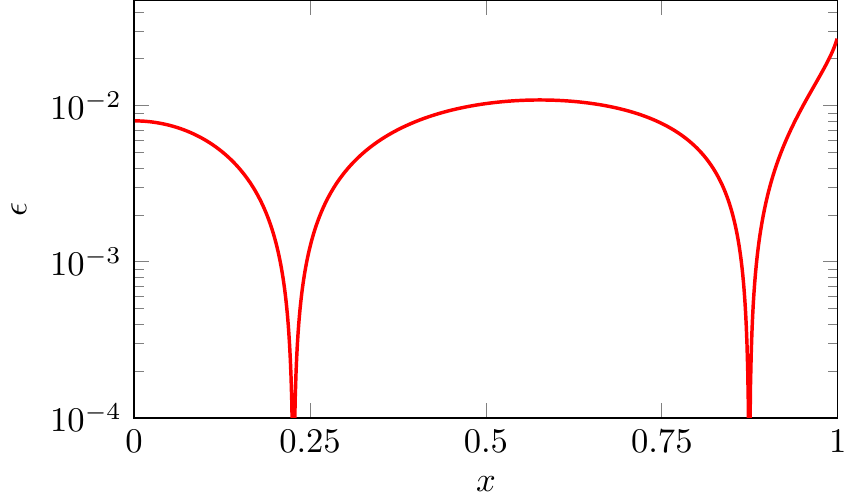}
        \label{subfig:resonance_b}
    }
    \caption{\textbf{Coherent resonance in thermodynamic transformations of two-level systems.} (a)~The ratio $V(\rho\|\gamma)/D(\rho\|\gamma)$ (encoding the resonance condition) for qubit states lying in the $xz$ plane of the Bloch sphere for a thermal state \mbox{$\gamma=\mathrm{diag}(0.95,0.05)$} (indicated by a white triangle). The white disk corresponds to the final state $\rho_2=\mathrm{diag}(0.75,0.25)$, while the dashed white line indicates a family of initial states $\rho_1(x)$ with diagonal $(0.85,0.15)$ and off-diagonal elements equal to $\sqrt{0.85\cdot 0.15}\cdot x$ for $x\in[0,1]$. (b)~Threshold transformation error $\epsilon$ required to achieve the asymptotic transformation rate $D(\rho_1(x)\|\gamma)/D(\rho_2\|\gamma)$ for finite number $n$ of transformed systems (i.e., $\epsilon$ such that the second-order correction term in Eq.~\eqref{eq:small_deviation} disappears). Resonance is obtained when the relative free energy fluctuations $V/D$ are the same for the initial state $\rho_1(x)$ and the final state $\rho_2$, i.e., when $\xi=1$. }
    \label{fig:resonance}
\end{figure*}


\subsection{Resonance phenomena}
\label{subsec:res}

One of the fundamental observations in the resource theory of thermodynamics is that all state transformations become reversible in the asymptotic limit~\cite{brandao2013resource}. Indeed, \cref{thm:rate_first} clearly states that for $n \rightarrow \infty$ the conversion rates $R$ and $R'$ for transformations $\rho_1^{\ot n}\rightarrow \rho_2^{\ot R n}$ and $\rho_2^{\ot n} \rightarrow \rho_1^{\ot R' n}$ become inversely proportional to each other, $R = 1/R'$. This is generally no longer true when we move outside of the idealised asymptotic scenario with $n\to\infty$. For example, in the thermodynamic protocols analysed in the previous section we have seen the deteriorating effect of finite-size transformations, i.e., due to the finite number of thermodynamically processed systems, the transformations are irreversible and lead to free energy dissipation that is related to the free energy fluctuations measured by $V(\rho\|\gamma)$~\cite{biswas2022fluctuation}. As a result, the performance of small quantum thermal machines may be seriously limited. Similar behaviour can be observed in the resource theory of pure-state entanglement or coherence.

Interestingly, it was recently found that these finite-size effects can be significantly mitigated by carefully engineering the resource conversion process~\cite{korzekwa2019avoiding}. More specifically, by appropriately tuning the initial and final states, so that the reversibility parameter $\xi=1$, the second-order correction to the optimal rate may vanish in the limit of zero transformation error. Thus, up to higher order terms, reversibility is restored. This intriguing phenomenon, termed resource resonance, was first predicted in Ref.~\cite{kumagai2016second} for pure-state entanglement transformations and then generalised to thermodynamic transformations between energy-incoherent states in Ref.~\cite{korzekwa2019avoiding}. The results we presented in this paper, allow us to extend the resource resonance phenomenon in three novel ways that we will now discuss.


\subsubsection{Coherent resonance}

For simplicity, let us focus on thermodynamic transformations between $n$ copies of a two-level system in a general state $\rho_1$ and $Rn$ copies of a two-level system in an energy-incoherent state $\rho_2$, assuming that the thermal Gibbs state $\gamma$ is the same for initial and final systems. Using \cref{thm:rate_smalldev} together with \cref{thm:thermo}, we get that the optimal transformation error $\epsilon$ for the asymptotic rate $R=D(\rho_1\|\gamma)/D(\rho_2\|\gamma)$ (i.e., avoiding dissipation) is given by
\begin{equation}
    \epsilon = S_{1/\xi} (0),
\end{equation}
which vanishes for $\xi=1$ and increases from 0 to 1/2 for $\xi>1$ and $\xi<1$. Without loss of generality, let us parameterise the initial and final states in the energy eigenbasis by:
\begin{equation}
    \!\!\!\rho_1(x)\!=\!\begin{pmatrix}
        p&\! x\sqrt{p(1\!-\!p)}\\
        x\sqrt{p(1\!-\!p)}&\! 1\!-\!p
    \end{pmatrix},~
    \rho_2\!=\!\begin{pmatrix}
        q&0\\
        0&1\!-\!q
    \end{pmatrix}\!,\!
\end{equation}
with $p,q,x\in[0,1]$. Then, for a fixed $p$ and $q$ (and given~$\gamma$), we can consider a family of initial states parameterised by $x$ (see \cref{subfig:resonance_a}). This corresponds to probabilistic mixtures of an energy-incoherent state \mbox{$p\ketbra{0}{0}+(1-p)\ketbra{1}{1}$} and a pure coherent superposition of energy eigenstates $\sqrt{p}\ket{0}+\sqrt{1-p}\ket{1}$, so that $x\in[0,1]$ smoothly connects between completely incoherent and completely coherent initial states. In \cref{subfig:resonance_b} we present the non-trivial dependence of the transformation error $\epsilon$ on the coherence level $x$, where we can observe two resonant values of $x$ for which error-free and dissipationless transformations (up to second-order asymptotics) are possible. This clearly illustrates that quantum coherence can play an important role in avoiding free energy dissipation in thermodynamic transformations of quantum states.


\subsubsection{Work-assisted resonance}

Looking at the optimal work-assisted rate from \cref{thm:small_work}, we see that the whenever $\xi'=1$, one can choose $w_2=0$ to make the second-order asymptotic correction vanish for zero transformation error $\epsilon$. Crucially, the value of $\xi'$ can be controlled by the amount $w_1$ of invested (or extracted) work per one copy of the system. By choosing
\begin{equation}
    w_1 = \frac{1}{\beta} \left( D(\rho_1\|\gamma_1)-\frac{V(\rho_1\|\gamma_1)}{V(\rho_2\|\gamma_2)}D(\rho_2\|\gamma_2)\right),
\end{equation}
which results in the optimal rate given by
\begin{equation}
    \label{eq:rate_work_rev}
    R=\frac{V(\rho_1\|\gamma_1)}{V(\rho_2\|\gamma_2)},
\end{equation}
one can perform an error-free and dissipationless transformation. In other words, the total initial state of the system and battery gets transformed to the total final state with zero-error and equal free energy content (up to second-order terms). This thus opens a way for bringing two states into resonance by investing or extracting work.

The work-assisted resonance can be understood by first noticing that the resonance condition can be seen as requiring the total fluctuations of the initial system to be equal to the total fluctuations of the final system, up to first-order in $n$. Without work-assistance this means that
\begin{equation}
    \label{eq:equal_fluct}
    V(\rho_1^{\ot n}\|\gamma_1^{\ot n}) = V(\rho_2^{\ot Rn}\|\gamma_2^{\ot Rn})
\end{equation}
and given the asymptotic value of $R$ it yields
\kk{\begin{equation}
    V(\rho_1\|\gamma_1) = \frac{D(\rho_1\|\gamma_1)}{D(\rho_2\|\gamma_2)}V(\rho_2\|\gamma_2),
\end{equation}}
which is exactly the original resonance condition $\xi=1$. Now, bringing the battery system does not change fluctuations (since at the initial and final time the battery is in a pure energy eigenstate with zero fluctuations), but affects the rate $R$. The work-assisted resonance condition is achieved by increasing or decreasing $R$ through an appropriate choice of $w$, so that \cref{eq:equal_fluct} is satisfied, which happens for $R$ given by \cref{eq:rate_work_rev}.


\subsubsection{Strong resonance}
\label{subsubsec:strong_res}

In Ref.~\cite{korzekwa2019avoiding} a resonance phenomenon was observed for transformations operating at the first-order asymptotic rate. Specifically, such transformations generically incur a constant error, but it was shown that if a resonance condition is met these errors are in fact exponentially suppressed. That result was built upon the small and moderate deviation results of Refs.~\cite{kumagai2016second,chubb2018beyond,chubb2019moderate}, but large and extreme deviation analyses had not been performed at the time that would allow for exponentially small errors to be analysed. By extending to large and extreme deviation analyses in this paper, it can in fact be seen that there exists an even stronger notion of resonance, which we term \emph{strong resonance}, in contrast to the \emph{weak resonance} of Ref.~\cite{korzekwa2019avoiding}, in which errors are not just exponentially suppressed, but entirely eliminated.

An illustration of weak and strong resonance is presented in \cref{fig:resonance}. Weak resonance corresponds to the second-order corrections in the small and moderate deviation rates, \cref{thm:rate_smalldev,thm:rate_moddev}, vanish, and occurs when
\begin{align}
    \frac{V\reli{1}}{D\reli{1}}=\frac{V\reli{2}}{D\reli{2}}.
\end{align}
Strong resonance corresponds to the situation in which the large and extreme deviation rates, \cref{thm:rate_largedev_lo,thm:rate_zero}, also collapse down to the first-order rate, in other words when
\begin{align}
    \operatorname*{arg\,min}\limits_{\alpha\in\overline{\mathbb R}}\frac{\Dm_\alpha\reli{1}}{D_\alpha\reli{2}}=1.
\end{align}

We present a numerical example of a set of states that exhibit both strong and weak resonance in \cref{app:res}, and discuss the relationship between weak and strong resonance in \cref{app:consistency}.

\begin{figure}[t]
    \centering
    \includegraphics[width=\linewidth]{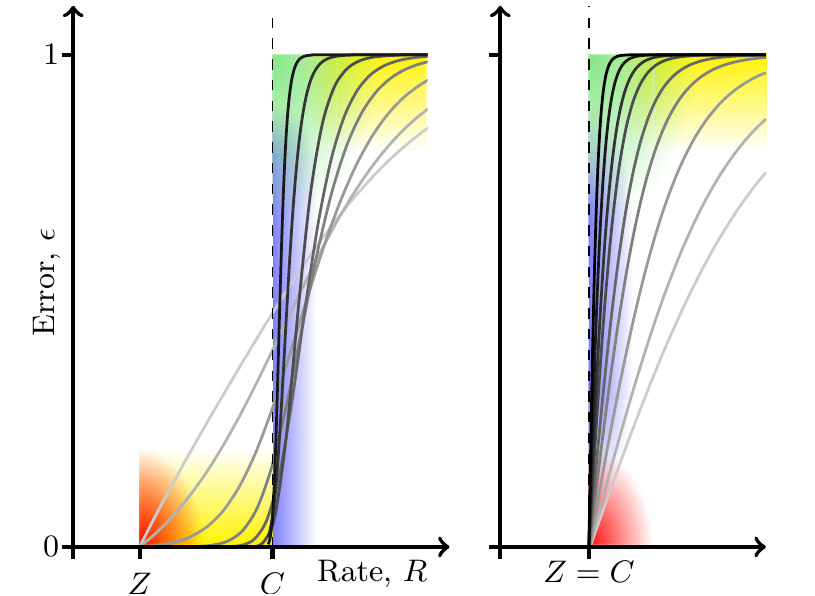}

    \caption{\textbf{Weak and strong resonance phenomena.} Left:~Weak resonance, in which the small and moderate regimes at rates $R<C$ collapse, but the large and extreme regimes persist, i.e.\ $Z<C$. Right:~Strong resonance, in which \emph{all} error regimes at rates below $C$ collapse, i.e.\ $Z=C$. See \cref{fig:sigmoid_rate} for an explanation of the various error regimes indicated, as well as the definitions of $Z$ and $C$.}

    \label{fig:res}
\end{figure}


\subsection{Entanglement transformations}
\label{subsec:ent}

Let us now make a few brief comments on \cref{thm:entanglement}. It is very important to note that related results have previously appeared in the literature. First, Ref.~\cite{kumagai2016second} derived the optimal second-order rates for pure-state bipartite entanglement transformations in the small-deviation regime using infidelity to measure transformation error. Later, Ref.~\cite{chubb2019moderate} extended these results to the moderate deviation regime. Finally, Ref.~\cite{Jensen_2019} investigated exact asymptotic transformations and derived optimal rates in the zero-error regime.  

Our work differs from these results in three ways. First, we extend the analysis to the previously unaddressed large deviation regime. This allows us to, e.g., predict a strong resonance phenomena for entanglement transformations. Second, our results hold for a different error measure (trace distance instead of infidelity). And third, probably most importantly, we propose a novel methodology employing dichotomies and hypothesis testing that allows us to easily characterise asymptotic rates in a unified manner across various error regimes, and avoid many arduous subtleties along the way. We believe that this approach brings a significant simplification and clarity as compared to the previous techniques. 

On the flip side, we need to mention that while infidelity measure between the Schmidt vectors $\v{p}_1$ and $\v{p}_2$ has a clear operational meaning (since it is precisely the infidelity between the corresponding entangled states $\psi_1$ and $\psi_2$), the use of the trace distance may be less useful. Still, one can directly relate the trace distance $\delta$ between $\v{p}_1$ and $\v{p}_2$ to the probability $P$ of distinguishing bipartite entangled states $\psi_1$ and $\psi_2$ locally by one party:~\mbox{$P=(1+\delta)/2$}.

Finally, we recall that it was proven in Ref.~\cite{du2015conditions} that the pure-state transformation laws in the resource theory of coherence~\cite{baumgratz2014quantifying}, i.e., conditions under which pure superpositions of distinguished basis states can be mapped to each other under incoherent operations, are also characterised by the majorisation relation. Thus, \cref{thm:entanglement} can be straightforwardly applied to describe optimal rates for pure-state coherence transformations (simply $\v{p}_1$ and $\v{p}_2$ need to represent occupations of the initial and target states in the distinguished basis).


\section{Derivations}
\label{sec:derive}

In this section we will give proofs of our results on the asymptotic analysis of the transformation rates between quantum dichotomies in several different error regimes. We will break this analysis down into three stages. In \cref{subsec:ht} we will review the relationship between Blackwell ordering and hypothesis testing, generalising  the existing analysis beyond the fully commuting case to allow for results where the input dichotomy is non-commuting, and partial results when the target dichotomy is also non-commuting. Critically, once established, this connection allows us to rather straightforwardly extend the existing asymptotic analyses of hypothesis testing to transformation rates between quantum dichotomies. In \cref{subsec:htass} we review the existing results around hypothesis testing, with some necessary technical extensions. Finally in \cref{subsec:rates} we put everything together, giving the final proofs of transformation rates in each error regime.


\subsection{Hypothesis testing and pinched hypothesis testing}
\label{subsec:ht}

The data processing inequality ensures that the approximate Blackwell ordering \mbox{$(\rho_1,\sigma_1)\succeq_{(\epsilon_\rho,\epsilon_\sigma)}(\rho_2,\sigma_2)$} implies
\begin{align}
    \beta_x\reli{1}\leq \beta_{x-\epsilon_\rho}\reli{2}+\epsilon_\sigma\quad\forall x\in(\epsilon_\rho,1).
\end{align}
Extending this, it is shown in Ref.~\cite{renes2016relative} that the two conditions are in fact equivalent for commuting states \mbox{$[\rho_1,\sigma_1]=[\rho_2,\sigma_2]=0$}. Thus, for such states the analysis of transformation rates can be entirely reduced to the analysis of hypothesis testing. Unfortunately, the situation for non-commuting quantum states is not so clean: it is known that such a hypothesis testing condition is \emph{not} generally a sufficient condition for Blackwell ordering~\cite{Matsumoto2014,jenvcova2012comparison,reeb2011hilbert,buscemi2012comparison}. 

While there is no known sufficient condition that can be phrased in terms of regular hypothesis testing, we instead consider a modified task we call \emph{pinched hypothesis testing}, which does provide such a sufficient condition for non-commuting input states. This condition does not, however, extend to non-commuting target states, and we leave this for future work. 

We will use $\pinch\cdot\tau$ to denote the pinching with respect to the eigenspaces of $\tau$. Specifically, it is defined by
\begin{align}
    \pinch X\tau:=\sum_\lambda \Pi_\lambda X\Pi_\lambda,
\end{align}
where $\Pi_\lambda$ are the eigenspace projectors of $\tau$, i.e.,\ \mbox{$\tau=\sum_\lambda \lambda\Pi_\lambda$}. The task of \emph{pinched hypothesis testing} is to distinguish between the states $\pinch{\rho}{\sigma}$ and $\sigma$, or between $\rho$ and $\pinch{\sigma}{\rho}$. Correspondingly, we define the \emph{left-pinched} and \emph{right-pinched} type-II hypothesis testing error as  
\begin{subequations}
\label{eq:pinched_beta}
    \begin{align}
    \betaL_x\rel{\rho}{\sigma}
    :=&
    \beta_x\rel{\pinch{\rho}{\sigma}}{\sigma},\\
    \betaR_x\rel{\rho}{\sigma}
    :=&
    \beta_x\rel\rho{\pinch\sigma\rho}.
    \end{align}
\end{subequations}
By the data-processing inequality we know that pinching cannot make states easier to distinguish, and thus the pinched error is at least the non-pinched error,
\begin{align}
\betaL_x\rel{\rho}{\sigma},\betaR_x\rel{\rho}{\sigma}\geq \beta_x\rel{\rho}{\sigma},
\end{align}
with equality if $\left[\rho,\sigma\right]=0$. Having defined the pinched error, we now show how it can be used to construct a \emph{sufficient} condition for non-commuting Blackwell ordering.

\begin{lemma}[Conditions for approximate Blackwell ordering for commuting second dichotomy]
    \label{lem:ht}
    Consider the approximate Blackwell ordering \mbox{$(\rho_1,\sigma_1)\succeq_{(\epsilon_\rho,\epsilon_\sigma)}(\rho_2,\sigma_2)$}. A \emph{necessary} condition for this ordering is given by
    \begin{align}
        \beta_x\reli{1}\leq \beta_{x-\epsilon_\rho}\reli{2}+\epsilon_\sigma\quad\forall x\in(\epsilon_\rho,1).
    \end{align}
    If the second dichotomy is commuting, $[\rho_2,\sigma_2]=0$, then a sufficient condition for this ordering is given by either
    \begin{subequations}
    \begin{align}
        \label{eq:betatildeordering_left}
        \betaL_x\reli{1}\leq \beta_{x-\epsilon_\rho}\reli{2}+\epsilon_\sigma\quad\forall x\in(\epsilon_\rho,1),
    \end{align}
    or by 
    \begin{align}
        \label{eq:betatildeordering_right}
        \betaR_x\reli{1}\leq \beta_{x-\epsilon_\rho}\reli{2}+\epsilon_\sigma\quad\forall x\in(\epsilon_\rho,1).
    \end{align}
    \end{subequations}
\end{lemma}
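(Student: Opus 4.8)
The necessary condition is immediate from data processing as already noted, so the content is the sufficiency of \eqref{eq:betatildeordering_left} (the proof of \eqref{eq:betatildeordering_right} being symmetric under swapping the roles of $\rho$ and $\sigma$). The plan is to build the required channel $\mathcal E$ in two stages: first a \emph{pinching stage} that maps $(\rho_1,\sigma_1)$ to the commuting dichotomy $(\pinch{\rho_1}{\sigma_1},\sigma_1)$, and then a \emph{classical stage} that handles the now-commuting-on-both-ends problem $(\pinch{\rho_1}{\sigma_1},\sigma_1)\succeq_{(\epsilon_\rho,\epsilon_\sigma)}(\rho_2,\sigma_2)$ by invoking the classical Blackwell equivalence recalled around \eqref{eq:transformation_conditions}. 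The key observation that glues these together is that pinching $\sigma_1$ with respect to its own eigenspaces is the identity, $\pinch{\sigma_1}{\sigma_1}=\sigma_1$, so the map $\mathcal P_{\sigma_1}$ sends the dichotomy $(\rho_1,\sigma_1)$ to $(\pinch{\rho_1}{\sigma_1},\sigma_1)$ exactly, with no error incurred in this stage.

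Concretely, I would first note $[\pinch{\rho_1}{\sigma_1},\sigma_1]=0$ since both are block-diagonal in the eigenbasis of $\sigma_1$, and that by definition $\betaL_x\reli{1}=\beta_x(\pinch{\rho_1}{\sigma_1}\|\sigma_1)$. Hence hypothesis \eqref{eq:betatildeordering_left} reads
\begin{align}
    \beta_x\rel{\pinch{\rho_1}{\sigma_1}}{\sigma_1}\leq \beta_{x-\epsilon_\rho}\reli{2}+\epsilon_\sigma\quad\forall x\in(\epsilon_\rho,1).
\end{align}
Since $(\pinch{\rho_1}{\sigma_1},\sigma_1)$ is a commuting dichotomy and $(\rho_2,\sigma_2)$ is a commuting dichotomy by assumption, both sides of this inequality are the classical hypothesis-testing quantities of the corresponding probability distributions obtained by simultaneous diagonalisation. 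By Blackwell's equivalence theorem in the form quoted in \eqref{eq:transformation_conditions} (as used in Ref.~\cite{renes2016relative}), this inequality is exactly equivalent to the existence of a quantum channel $\mathcal E_0$ with $T(\mathcal E_0(\pinch{\rho_1}{\sigma_1}),\rho_2)\leq\epsilon_\rho$ and $T(\mathcal E_0(\sigma_1),\sigma_2)\leq\epsilon_\sigma$, i.e.\ $(\pinch{\rho_1}{\sigma_1},\sigma_1)\succeq_{(\epsilon_\rho,\epsilon_\sigma)}(\rho_2,\sigma_2)$.

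Composing, set $\mathcal E:=\mathcal E_0\circ\mathcal P_{\sigma_1}$, which is a valid quantum channel (pinching is CPTP and channels compose). Then $\mathcal E(\rho_1)=\mathcal E_0(\pinch{\rho_1}{\sigma_1})$ and $\mathcal E(\sigma_1)=\mathcal E_0(\pinch{\sigma_1}{\sigma_1})=\mathcal E_0(\sigma_1)$, so both trace-distance bounds are inherited from $\mathcal E_0$, establishing $(\rho_1,\sigma_1)\succeq_{(\epsilon_\rho,\epsilon_\sigma)}(\rho_2,\sigma_2)$. The argument for \eqref{eq:betatildeordering_right} is identical after exchanging $\rho\leftrightarrow\sigma$ and $\epsilon_\rho\leftrightarrow\epsilon_\sigma$ in the intermediate dichotomy: pinch with respect to $\rho_1$ instead, use $\pinch{\rho_1}{\rho_1}=\rho_1$, and apply the classical equivalence to $(\rho_1,\pinch{\sigma_1}{\rho_1})$.

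The only genuinely delicate point is making sure the classical Blackwell equivalence \eqref{eq:transformation_conditions} is being applied in exactly the regime where it holds: it requires \emph{both} dichotomies in the comparison to be commuting, which is why the pinching stage is essential (it forces the source dichotomy to commute) and why the hypothesis $[\rho_2,\sigma_2]=0$ on the target cannot be dropped — there is no analogous pinching trick that would fix a non-commuting target without changing the problem. I would also want to double-check that the quantities $\beta_x$ appearing on the two sides are genuinely the classical ones (they are, because $\beta_x$ of a commuting pair coincides with the classical hypothesis-testing error of the jointly-diagonalised distributions, as noted in the discussion following \eqref{eq:beta3}), so that the classical theorem applies verbatim. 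No serious obstacle beyond this bookkeeping is expected.
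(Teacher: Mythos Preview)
Your proposal is correct and follows essentially the same approach as the paper: pinch to produce a commuting source dichotomy, invoke the classical Blackwell equivalence from Ref.~\cite{renes2016relative} to obtain a channel on the pinched pair, and compose with the pinching map using $\pinch{\sigma_1}{\sigma_1}=\sigma_1$ (resp.\ $\pinch{\rho_1}{\rho_1}=\rho_1$). The paper's proof is identical in structure and key steps, so there is nothing further to add.
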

\begin{proof}
    As noted above, the necessary condition simply follows from the data processing inequality~\cite{renes2016relative}, so we need only prove the sufficient condition. We start by assuming that the pinched hypothesis testing inequality, \cref{eq:betatildeordering_left}, holds. Expanding out the definition of \mbox{$\betaL_x(\cdot\|\cdot)$}, this is equivalent to
    \begin{align}
       \!\!\! \beta_x\rel{\pinch{\rho_1}{\sigma_1}}{\sigma_1}\leq \beta_{x-\epsilon_\rho}\reli{2}+\epsilon_\sigma\quad\forall x\in(\epsilon_\rho,1).
    \end{align}
    Pinching a state causes it to commute, in the sense that $[\pinch{\cdot}{\sigma},\sigma]\equiv 0$. As such, the first dichotomy $(\pinch{\rho_1}{\sigma_1},\sigma_1)$ is commuting, and the second dichotomy $(\rho_2,\sigma_2)$ is also commuting by assumption. Applying Ref.~\cite[Thm.~2]{renes2016relative}, this in turn implies the Blackwell ordering on the pinched states,
    \begin{align}
        \left(\pinch{\rho_1}{\sigma_1},\sigma_1\right)\succeq_{(\epsilon_\rho,\epsilon_\sigma)}(\rho_2,\sigma_2).
    \end{align}
    Next, we want to argue that approximate Blackwell ordering has a data-processing property. By definition this ordering implies the existence of a channel $\mathcal E$ such that
    \begin{align}
        T\Bigl(\mathcal E\bigl(\pinch{\rho_1}{\sigma_1}\bigr),\rho_2\Bigr)\leq \epsilon_\rho
        ~~\text{and}~~
        T\left(\mathcal E(\sigma_1),\sigma_2\right)\leq \epsilon_\sigma.
    \end{align}
    If we define $\lefthat{\mathcal E}:=\mathcal E\circ \mathcal P_{\sigma_1}$, and recall that $\pinch{\sigma_1}{\sigma_1}=\sigma_1$, then these expressions can be rewritten
    \begin{align}
        T\left(\lefthat{\mathcal E}(\rho_1),\rho_2\right)\leq \epsilon_\rho
        ~~\text{and}~~
        T\left(\lefthat{\mathcal E}(\sigma_1),\sigma_2\right)\leq \epsilon_\sigma,
    \end{align}
    which in turn implies the required Blackwell ordering of the two dichotomies. For \cref{eq:betatildeordering_right} a similar argument can be given, with $\righthat{\mathcal E}:=\mathcal E\circ \mathcal P_{\rho_1}$.
\end{proof}

We now have both necessary and sufficient conditions for approximate Blackwell ordering of quantum dichotomies that are of the same form. Unlike the commuting case, however, these two conditions are no longer identical involving the pinched and non-pinched variants of hypothesis testing. As such, this will generally open up a gap between the upper and lower bounds that can be placed upon transformation rates using this technique, which makes this approach unsuitable in the single-shot setting. However, as we will see later in this section,  in the asymptotic setting the pinched and non-pinched variants of hypothesis testing have identical asymptotic behaviour in most error regimes, closing these gaps and allowing us to give optimal expressions of transformation rates beyond the first-order asymptotics. 


\subsection{Asymptotic analyses of hypothesis testing}
\label{subsec:htass}

In this subsection we want to review the relevant asymptotic analyses of hypothesis testing, putting these results into a common notation for easier use later, as well as extending these analyses to the pinched variant of the task where necessary. The cornerstone of asymptotic analysis of hypothesis testing is Stein's Lemma. While sufficient to give a first-order analysis of transformation rates between quantum dichotomies, we will see that we require refinements upon Stein's Lemma to go beyond first-order. We will start this section by describing Stein's Lemma, and giving some intuition for the different regimes of refinements thereto. We will then go through each error regime reviewing the refined asymptotic analysis in each, extending these analyses to the pinched variant of the problem as necessary.

Consider the task of distinguishing between two states $\rho^{\otimes n}$ and $\sigma^{\otimes n}$. To avoid technical issues, we will assume that $\sigma$ is of full support. Intuitively, each additional copy of the states should give us a constant amount of new information, allowing us to multiplicatively reduce the chance of failing to distinguish the two states, leading to exponentially decreasing hypothesis testing errors. In general there is a trade-off between the type-I and -II errors. A natural simplification of this more general question would be the following: if we constrain one of our errors to be constant, how does the other error decay? The answer is that the error decays exponentially, with that exponent being given by the relative entropy. This fact is known as Stein's Lemma, and will form the backbone of this subsection.

\begin{lemma}[Quantum Stein's Lemma~\cite{HiaiPetz1991,OgawaNagaoka1999}]
    \label{lem:ht_stein}
    For any \mbox{$\epsilon\in(0,1)$}
    \begin{align}
        \lim_{n\to\infty}-\frac 1n \log \beta_\epsilon\rel{\rho^{\otimes n}}{\sigma^{\otimes n}}=D\reli{}.
    \end{align}
\end{lemma}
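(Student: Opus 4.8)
The plan is to prove the two matching inequalities separately. The \emph{direct part}, $\liminf_{n\to\infty}-\frac1n\log\beta_\epsilon\rel{\rho^{\ot n}}{\sigma^{\ot n}}\ge D\rel{\rho}{\sigma}$, is the Hiai--Petz direction, while the \emph{strong converse}, $\limsup_{n\to\infty}-\frac1n\log\beta_\epsilon\rel{\rho^{\ot n}}{\sigma^{\ot n}}\le D\rel{\rho}{\sigma}$, is the Ogawa--Nagaoka direction; together these give the claimed limit. The full-support assumption on $\sigma$ guarantees $D\rel{\rho}{\sigma}<\infty$ and that all R\'enyi divergences appearing below are finite.

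For the direct part I would first reduce to a commuting problem by pinching onto the eigenspaces of $\sigma^{\ot n}$. Since $\pinch{\rho^{\ot n}}{\sigma^{\ot n}}$ commutes with $\sigma^{\ot n}$, and $\sigma^{\ot n}$ has at most $v_n:=\binom{n+d-1}{d-1}=\mathrm{poly}(n)$ distinct eigenvalues, the pinching inequality $\rho^{\ot n}\le v_n\,\pinch{\rho^{\ot n}}{\sigma^{\ot n}}$ holds. Fix $\delta>0$ and let $Q_n$ be the Neyman--Pearson test that projects onto $\{\pinch{\rho^{\ot n}}{\sigma^{\ot n}}\ge \exp(n(D\rel{\rho}{\sigma}-\delta))\,\sigma^{\ot n}\}$. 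Multiplying the defining operator inequality by $Q_n$ and taking traces gives $\Tr(\sigma^{\ot n}Q_n)\le \exp(-n(D\rel{\rho}{\sigma}-\delta))$, so it only remains to show $\Tr(\rho^{\ot n}Q_n)\to 1$. As $Q_n$ is block-diagonal with respect to $\sigma^{\ot n}$ we have $\Tr(\rho^{\ot n}Q_n)=\Tr(\pinch{\rho^{\ot n}}{\sigma^{\ot n}}Q_n)$, so this is a classical statement for the commuting pair $\bigl(\pinch{\rho^{\ot n}}{\sigma^{\ot n}},\sigma^{\ot n}\bigr)$: it follows from $D\rel{\pinch{\rho^{\ot n}}{\sigma^{\ot n}}}{\sigma^{\ot n}}\ge nD\rel{\rho}{\sigma}-\log v_n$ (the pinching inequality used inside the relative entropy) together with a second-moment estimate showing the pinched relative-entropy density concentrates around its mean. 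Finally let $\delta\to0$.

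For the strong converse the i.i.d.\ structure enters only through additivity. The input is the single-copy estimate $-\log\beta_\epsilon\rel{\omega}{\tau}\le \Dm_\alpha\rel{\omega}{\tau}+\frac{\alpha}{\alpha-1}\log\frac1{1-\epsilon}$, valid for each $\alpha>1$ (one may use $\Dp_\alpha$ instead; both suffice); applied to $\omega=\rho^{\ot n}$, $\tau=\sigma^{\ot n}$, with $\Dm_\alpha\rel{\rho^{\ot n}}{\sigma^{\ot n}}=n\,\Dm_\alpha\rel{\rho}{\sigma}$, dividing by $n$, sending $n\to\infty$ and then $\alpha\to1^+$ (so that $\Dm_\alpha\rel{\rho}{\sigma}\to D\rel{\rho}{\sigma}$) yields $\limsup_{n\to\infty}-\frac1n\log\beta_\epsilon\rel{\rho^{\ot n}}{\sigma^{\ot n}}\le D\rel{\rho}{\sigma}$. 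I expect this single-copy converse estimate to be the main obstacle: it is the heart of the Ogawa--Nagaoka argument and, in contrast to the direct part, cannot be reduced to a classical large-deviation computation, since the optimal test discriminating $\rho^{\ot n}$ from $\sigma^{\ot n}$ genuinely fails to commute with $\sigma^{\ot n}$; one needs a truly noncommutative operator inequality (of H\"older / Araki--Lieb--Thirring type) controlling $\Tr(\rho^{\ot n}Q)$ by $\Tr(\sigma^{\ot n}Q)$ and a R\'enyi quantity.
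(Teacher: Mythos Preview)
The paper does not prove this lemma at all: it is stated as a citation of the classical results of Hiai--Petz and Ogawa--Nagaoka, with no accompanying argument, and is used only as a black box (the first-order input underlying the refined analyses of \cref{lem:ht_smalldev,lem:ht_moddev,lem:ht_largedev}). So there is no ``paper's own proof'' to compare against.

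That said, your sketch is a faithful outline of the standard proofs from those references. Two remarks. First, in the direct part, the step ``the pinched relative-entropy density concentrates around its mean'' is the place where some care is needed: the pinched state $\pinch{\rho^{\otimes n}}{\sigma^{\otimes n}}$ is not i.i.d., so you cannot invoke a classical law of large numbers directly; the usual route is a Chebyshev bound together with a variance estimate that survives the pinching (or, equivalently, the information-spectrum argument of Nagaoka--Hayashi). Second, for the strong converse you have correctly identified the crux: the single-copy inequality $-\log\beta_\epsilon\rel{\omega}{\tau}\le \Dm_\alpha\rel{\omega}{\tau}+\frac{\alpha}{\alpha-1}\log\frac{1}{1-\epsilon}$ (or its Petz-divergence variant) is exactly what Ogawa--Nagaoka establish via an operator-monotonicity/H\"older argument, and once it is in hand the rest is additivity plus $\alpha\to1^+$. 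Your description of why this step is genuinely noncommutative is accurate.
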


\begin{figure*}[t]
    \centering
    \includegraphics[width=\linewidth]{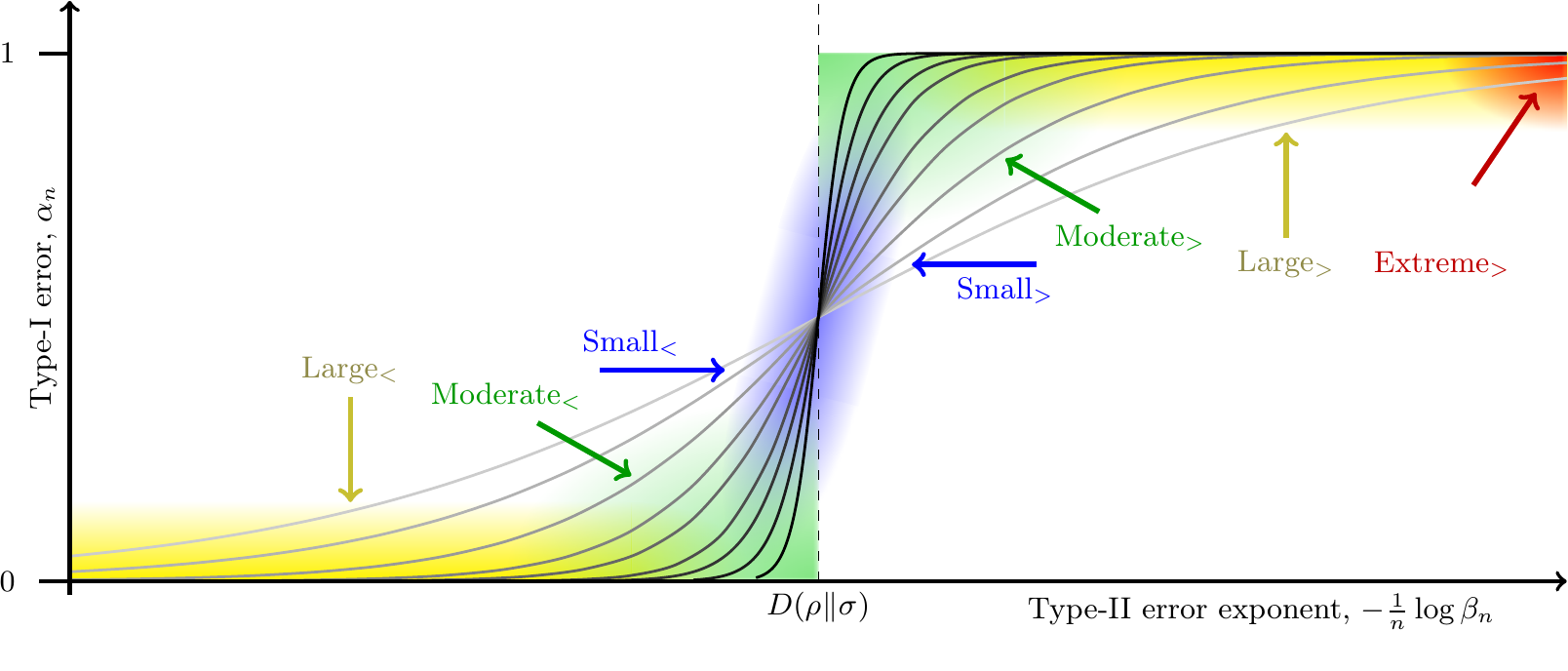}
    \def\yes{{\color{darkgreen}Yes}}
    \def\no{{\color{red}No}}
    \begin{tabular}{ccccccccc}
    \toprule
    \multicolumn{2}{c}{\bfseries Regime} & ~~~~ & \textbf{Type-I ($\alpha_n$)} & ~~~~ & \textbf{Type-II exponent ($-\frac 1n \log \beta_n$)} &~~&\textbf{Equal?} \\ \midrule
    Large$_<$ & \cref{lem:ht_largedev} && Approaching 0 exponentially && $\bigl[0,D(\rho\|\sigma)\bigr)$ &&\no\\
    Moderate$_<$ & \cref{lem:ht_moddev} && Approaching 0 subexponentially && $D(\rho\|\sigma)-\omega(1/\sqrt n)\cap o(1)$ &&\yes \\
    Small$_<$ & \cref{lem:ht_smalldev} && Constant, $<0.5$ && $D(\rho\|\sigma)-\Theta(1/\sqrt n)$ &&\yes \\ \midrule
    Small$_>$ & \cref{lem:ht_smalldev} && Constant, $>0.5$ && $D(\rho\|\sigma)+\Theta(1/\sqrt n)$ &&\yes \\
    Moderate$_>$ & \cref{lem:ht_moddev} && Approaching 1 subexponentially && $D(\rho\|\sigma)+\omega(1/\sqrt n)\cap o(1)$ &&\yes \\
    Large$_>$ & \cref{lem:ht_largedev} && Approaching 1 exponentialy && $\bigl(D(\rho\|\sigma),\infty\bigr)$ &&\yes \\
    Extreme$_>$ & \cref{lem:ht_extremedev} && Approaching 1 superexponentially && $\infty$ &&\yes \\ \bottomrule
    \end{tabular}

    \caption{\textbf{Trade-off between the optimal type-I and -II errors.} An illustrative sketch of the trade-off between the optimal type-I and -II errors of the hypothesis test between two states $\rho^{\otimes n}$ and $\sigma^{\otimes n}$ as $n$ grows. Here $\alpha_n$ is the optimal type-I error, $\beta_n$ the optimal type-II error, and $-\frac 1n \log \beta_n$ the type-II error exponent. Each of the grey curves correspond to a trade-off $(\alpha_n,\beta_n)$, for a given $n$, with darker curves correspond to growing $n$. The fact these curves approach a step at the relative entropy is equivalent to Stein's Lemma, \cref{lem:ht_stein}. Each of the coloured regions corresponds to a deviation regimes in which we will consider refinements to Stein's lemma in this subsection. In the table we present the scaling in each regime. For the details and explicit expressions for all of the scaling constants see the corresponding lemmas, \cref{lem:ht_smalldev,lem:ht_moddev,lem:ht_largedev,lem:ht_extremedev}. The final column denotes whether the asymptotics of the pinched and non-pinched variants of hypothesis testing are identical, which they are in all regimes by where both errors are exponentially decreasing.
    }

    \label{fig:sigmoid_ht}
\end{figure*}

As mentioned above, Stein's Lemma alone will only be sufficient to give first-order rates, and we will require more refined asymptotic analysis to go beyond this. In \cref{fig:sigmoid_ht} we present a sketch of the various error regimes we will consider. The idea is that as $n$ increases Stein's lemma states that the trade-off between the type-I error and type-II error exponent becomes a step at the relative entropy, and each of the refinements seek to quantify the rate of that convergence in different ways. Specifically, as shown in the table in \cref{fig:sigmoid_ht}, our analysis will be divided up based on the scaling of the type-I error considered. The two most important regimes will be the small deviation and large deviation regimes, in which the type-I error is a constant bounded away from 0/1, or exponentially approaching 0/1, respectively. This leaves us with two edge cases: the intermediate regime of subexponential decay is termed moderate deviation, and for completeness we also consider the regime in which the type-I error superexponentially approaches 1, which will be required for our analysis of transformation rates in the zero-error setting.

In \cref{app:consistency} we will non-rigorously discuss the interplay between these regimes and the consistency between these results, and in \cref{app:uni} we will shown how all of the below analyses can be strengthened to have a uniformity property which will be necessary in some of the proofs of transformations rates given in \cref{subsec:rates}.


\subsubsection{Small deviation}
\label{subsubsec:ht_small}

The first regime we consider is the small deviation. As stated in \cref{fig:sigmoid_ht}, in this regime the type-I error is a fixed constant between $0$ and $1$, and we want to know the asymptotic behaviour of the type-II error exponent. Stein's Lemma tells us that this exponent must approach the relative entropy, and the small deviation analysis, also known as the \emph{second-order expansion}, states that this convergence happens as $\Theta(1/\sqrt n)$.

\begin{lemma}[Small deviation analysis of hypothesis testing]
    \label{lem:ht_smalldev}
    For any constant $\epsilon\in(0,1)$, the hypothesis testing type-II errors scale as
    \begin{aligns}
        -\frac 1n\log\beta_\epsilon\rel{\rho^{\otimes n}}{\sigma^{\otimes n}}&\simeq D\rel{\rho}{\sigma}+\sqrt{\frac{V\rel{\rho}{\sigma}}n}\Phi^{-1}(\epsilon),\\
        -\frac 1n\log\betaL_\epsilon\rel{\rho^{\otimes n}}{\sigma^{\otimes n}}&\simeq D\rel{\rho}{\sigma}+\sqrt{\frac{V\rel{\rho}{\sigma}}n}\Phi^{-1}(\epsilon),
    \end{aligns}
    where $\simeq$ denotes equality up to terms $o(1/\sqrt n)$.
\end{lemma}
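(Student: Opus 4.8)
The non-pinched statement is the standard second-order (Bahadur--Rao/Strassen-type) expansion of the quantum hypothesis testing error, due to \cite{tomamichel2013hierarchy,li2014second}; I will simply invoke it. The content to add is therefore the pinched version, i.e.\ that replacing $\rho^{\otimes n}$ by $\pinch{\rho^{\otimes n}}{\sigma^{\otimes n}}$ in the first argument does not change the expansion up to $o(1/\sqrt n)$. The plan is a sandwich argument: the lower bound
\begin{align}
    -\frac 1n\log\betaL_\epsilon\rel{\rho^{\otimes n}}{\sigma^{\otimes n}}
    \geq -\frac 1n\log\beta_\epsilon\rel{\rho^{\otimes n}}{\sigma^{\otimes n}}
\end{align}
is immediate from data processing (pinching can only increase $\beta$), so the whole task is the matching upper bound, i.e.\ showing pinching costs at most $o(1/\sqrt n)$ in the exponent.

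For the upper bound I would use the standard pinching inequality. Since $\sigma^{\otimes n}$ has at most $\mathrm{poly}(n)$ distinct eigenvalues (the number of eigenvalues of $\sigma^{\otimes n}$ grows polynomially in $n$ — if $\sigma$ has $d$ distinct eigenvalues then $\sigma^{\otimes n}$ has at most $\binom{n+d-1}{d-1}$ of them), Hayashi's pinching inequality gives
\begin{align}
    \pinch{\rho^{\otimes n}}{\sigma^{\otimes n}} \geq \frac{1}{v_n}\,\rho^{\otimes n},
    \qquad v_n := \abs{\mathrm{spec}(\sigma^{\otimes n})} = \mathrm{poly}(n).
\end{align}
Taking a near-optimal test $Q$ for the original problem $(\rho^{\otimes n},\sigma^{\otimes n})$ achieving type-I error $\epsilon$ and type-II error $\beta_\epsilon$, this operator inequality shows that $Q$ has type-I error at most $\epsilon$ against $\pinch{\rho^{\otimes n}}{\sigma^{\otimes n}}$ only up to a factor $v_n$ blow-up; more carefully, one argues that a test achieving type-I error $\epsilon + o(1)$ against the pinched state exists with essentially the same type-II error. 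The cleanest route: use that $\beta_\epsilon$ is continuous and ``Lipschitz-like'' in $\epsilon$ on the relevant scale, together with $\pinch{\rho^{\otimes n}}{\sigma^{\otimes n}} \geq \rho^{\otimes n}/v_n$, to deduce
\begin{align}
    \betaL_\epsilon\rel{\rho^{\otimes n}}{\sigma^{\otimes n}}
    \leq \beta_{\epsilon'}\rel{\rho^{\otimes n}}{\sigma^{\otimes n}}
\end{align}
for some $\epsilon' = \epsilon - o(1/\sqrt n)$ (or, more robustly, $\epsilon'$ a constant slightly less than $\epsilon$ and then a limiting argument), and then note that $-\frac 1n\log\beta_{\epsilon'}$ differs from $-\frac 1n\log\beta_\epsilon$ by $O(1/n)$ since the second-order term $\sqrt{V/n}\,\Phi^{-1}(\cdot)$ is continuous and the correction from $\epsilon\to\epsilon'$ is $O((\epsilon-\epsilon')/\sqrt n)=o(1/n)$. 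Because $\log v_n = O(\log n)$, the multiplicative $v_n$ contributes only $O(\log n /n) = o(1/\sqrt n)$ to the exponent, which is absorbed in $\simeq$.

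\textbf{Main obstacle.} The delicate point is handling the interplay between the polynomial pinching factor $v_n$ and the constraint on the type-I error: one cannot simply say ``the type-I error against the pinched state is $\leq v_n \epsilon$'' because that may exceed $1$. The clean fix is to run the pinching inequality in the direction that keeps the type-I constraint intact — namely, pick the optimal test for the \emph{pinched} problem and push it down to the original problem, or equivalently invoke the already-established uniform version of the small-deviation expansion (referenced as forthcoming in \cref{app:uni}) to absorb the $o(1/\sqrt n)$ slack in $\epsilon$. I would also double-check the full-support assumption on $\sigma$ is what makes $v_n$ genuinely polynomial (rather than the assumption being vacuous), and that the expansion constants $D\rel\rho\sigma$ and $V\rel\rho\sigma$ are unaffected by pinching the first argument — which holds because $D$ and $V$ here are computed for the pair $(\rho,\sigma)$, and pinching $\rho$ with respect to $\sigma$'s eigenbasis does \emph{not} in general preserve these, so one must verify that the expansion for the pinched sequence still has leading terms $D\rel\rho\sigma$, $V\rel\rho\sigma$ rather than $D\rel{\pinch\rho\sigma}\sigma$, $V\rel{\pinch\rho\sigma}\sigma$ — which is precisely what the sandwich between $\beta_\epsilon$ and $\beta_{\epsilon'}$ for the \emph{unpinched} pair delivers.
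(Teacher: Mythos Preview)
Your approach is essentially the paper's: cite the unpinched second-order expansion, get one direction of the pinched statement from data processing, and get the other direction from the pinching inequality together with the polynomial eigenvalue count of $\sigma^{\otimes n}$, then let the slack in $\epsilon$ go to zero using continuity of $\Phi^{-1}$.

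Two corrections. First, your displayed data-processing inequality has the wrong sign: pinching increases $\beta$, so $\betaL_\epsilon\geq\beta_\epsilon$ and hence $-\tfrac 1n\log\betaL_\epsilon \leq -\tfrac 1n\log\beta_\epsilon$. This is the \emph{upper} bound on the pinched exponent; the nontrivial direction is the \emph{lower} bound, which is indeed what your pinching-inequality argument is aimed at (your ``upper''/``lower'' labels are swapped throughout). Second, the step you flag as the main obstacle---converting the operator inequality $\pinch{\rho^{\otimes n}}{\sigma^{\otimes n}}\geq \rho^{\otimes n}/v_n$ into a clean relation between $\betaL_\epsilon$ and $\beta_{\epsilon'}$---is exactly where the paper does not improvise: it invokes a one-shot inequality from \cite{TomamichelHayashi2012} (combining their Eqs.~(14),~(20),~(27)) of the form
\begin{align}
    \betaL_\epsilon\rel\rho\sigma\leq \beta_{\epsilon-2\delta}\rel\rho\sigma\cdot \frac{2^8(\epsilon-\delta)\,\nu(\sigma)^2}{\delta^5(1-\epsilon+\delta)},
\end{align}
with $\nu(\sigma)$ the number of distinct eigenvalues. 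Substituting $\rho^{\otimes n},\sigma^{\otimes n}$ gives an $O(\log n)=o(\sqrt n)$ additive correction in the log, after which one applies the unpinched expansion at $\epsilon-2\delta$ and sends $\delta\to 0^+$. Your ``Lipschitz-like'' handwave is pointing at the right place, but the concrete inequality above is what actually closes the argument.
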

\begin{proof}
    The scaling of $\beta_\epsilon\rel{\rho^{\otimes n}}{\sigma^{\otimes n}}$ is directly a restatement of Ref.~\cite[Prop.~16]{TomamichelTan2015}, a result which originates in Ref.~\cite{li2014second,TomamichelHayashi2012}, so we are just left with showing the pinched variant $\betaL_\epsilon\rel{\rho^{\otimes n}}{\sigma^{\otimes n}}$ has the same scaling up to second-order. 
    
    Firstly, we note that $\betaL_\epsilon\rel{\cdot}{\cdot}\geq \beta_\epsilon\rel{\cdot}{\cdot}$, so the upper bound straightforwardly holds, leaving only the lower bound left to prove. For this we turn to Ref.~\cite{TomamichelHayashi2012}, specifically combining Equations (14,~20,~27) to give that, for any $0<\delta<\epsilon/3$,
    \begin{align}
        \betaL_\epsilon\rel{\rho}{\sigma}\leq \beta_{\epsilon-2\delta}\rel{\rho}{\sigma}\cdot \frac{2^8(\epsilon-\delta)\nu(\sigma)^2}{\delta^5(1-\epsilon+\delta)},
    \end{align}
    where $\nu(\sigma)$ denotes the number of unique eigenvalues of~$\sigma$. Notice that, for any finite dimensional $\sigma$, the number of eigenvalues of the tensor power $\sigma^{\otimes n}$ only scales polynomially, $\nu\left(\sigma^{\otimes n}\right)\leq n^{\nu(\sigma )}=n^{O(1)}$. Using this, we can now substitute $\rho\to\rho^{\otimes n}$ and $\sigma\to\sigma^{\otimes n}$, giving
    \begin{align}
        \log\betaL_\epsilon\rel{\rho^{\otimes n}}{\sigma^{\otimes n}}\leq \log\beta_{\epsilon-2\delta}\rel{\rho^{\otimes n}}{\sigma^{\otimes n}}+O(\log n).
    \end{align}
    Importantly, this logarithmic error is $o(\sqrt n)$ and can therefore be neglected to second-order. As such, we get the bound 
    \begin{aligns}
        -\frac 1n\log\betaL_\epsilon\rel{\rho^{\otimes n}}{\sigma^{\otimes n}}
        &\gtrsim -\frac 1n\log\beta_{\ctc\epsilon-2\delta}\rel{\rho^{\otimes n}}{\sigma^{\otimes n}},\\
        &\gtrsim D\reli{}
        +\sqrt{\frac{V\reli{}}n}\Phi^{-1}(\epsilon-2\delta).
    \end{aligns}
    As this holds for any $\delta\in (0,\epsilon/3)$, and $\Phi^{-1}$ is continuous on $(0,1)$, we can take $\delta\to 0^+$, giving 
    \begin{align}
       \!\!\! -\frac 1n\log\betaL_\epsilon\rel{\rho^{\otimes n}}{\sigma^{\otimes n}}
        &\!\gtrsim\! D\reli{}
        +\sqrt{\frac{V\reli{}}n}\Phi^{-1}(\epsilon),\!
    \end{align}
    as required.
\end{proof}


\subsubsection{Large deviation}
\label{subsubsec:ht_large}

The next most important error regime is that of large deviations, which is the regime in which both errors are exponentially approaching either 0 or 1. Stein's lemma suggests that as long as the type-II error exponent is less than the relative entropy, then the type-I error will also be exponentially decreasing with $n$; but if it exceeds the relative entropy, then we expect the type-I error to be exponentially increasing towards 1. 

The existing expressions of these results in the literature are all phrased in terms of these exponents directly. But, using this notation, the large deviation regime would need to be divided up into several different forms based on whether the errors are approaching 0 or 1, usually termed the \emph{error exponent} and \emph{strong converse exponent} regimes. Instead, we will combine all of these results in a single unified notation by concerning ourselves not with error \emph{probabilities}, but the error \emph{log odds}. This unified notation dramatically simplifies our later proofs which rely upon these bounds, and to our knowledge this formulation has not appeared elsewhere in the literature.

The idea to unify these regimes is to consider a `signed exponent'. For a quantity $p_n$ which is exponentially approaching 0, the exponent is given by $-\frac 1n \log p_n$, and similarly if $p_n$ is exponentially approaching 1 then the exponent is given by $-\frac 1n \log (1-p_n)$. The idea is to combine these two functions to give a single expression which can yield both exponents. Specifically, we will use the \emph{logit function} which is simply the difference between $\log p$ and $\log (1-p)$,
\begin{align}
L[p]:=\log\frac p{1-p}.
\end{align}
As required, now we can think of $\frac 1n L[p_n]$ as an exponent that covers both cases where $p_n$ is approaching 0 or 1 in the sign of this exponent. Specifically, {\ctc for any $\lambda>0$,}
\begin{aligns}
    \frac 1n L[p_n]&\to-\lambda & &\iff & -\frac 1n \log p_n&\to\lambda,\\
    \frac 1n L[p_n]&\to+\lambda & &\iff & -\frac 1n \log(1-p_n)&\to\lambda.
\end{aligns}
While there are other functions that have this property, one thing to note about the logit function specifically is that if $p$ is a probability then $L[p]$ is the associated \emph{log odds}. As we will see below, it turns out that the standard large deviation results can be more succinctly expressed in terms of the type-I and -II error \emph{log odds} instead of error \emph{probabilities}.

To allow us to express the large (and moderate) deviation results in terms of the \emph{log odds} we will define the optimal type-II log odds $\gamma_{\ctc x}(\rho\|\sigma)$---in analogy to the optimal type-II error probability $\beta_x\reli{}$---as the solution to the optimisation
\begin{aligns}
    \min_Q \quad & L\bigl[\Tr(\sigma Q)\bigr], \\
    \textnormal{subject to} \quad & 0 \leq Q \leq 1, \\
    & L\bigl[1-\Tr (\rho Q)\bigr] \leq x.
\end{aligns}
And as with the error probability we will also require the pinched variants, defined as
\begin{aligns}
    \gammaL_x(\rho\|\sigma):=\gamma_x\rel{\pinch{\rho}{\sigma}}{\sigma},\\
    \gammaR_x(\rho\|\sigma):=\gamma_x\rel\rho{\pinch\sigma\rho}.
\end{aligns}
In terms of error probabilities these definitions are equivalent to 
\begin{aligns}
    \gamma_x(\rho\|\sigma)&=L\left[ \beta_{L^{-1}[x]}(\rho\|\sigma) \right],\\
    \gammaL_x(\rho\|\sigma)&=L\left[ \betaL_{L^{-1}[x]}(\rho\|\sigma) \right],\\
    \gammaR_x(\rho\|\sigma)&=L\left[ \betaR_{L^{-1}[x]}(\rho\|\sigma) \right].
\end{aligns}

As argued above, a nice feature of this formulation is that we can describe all large deviation results in a single unified way. There are three different regimes of large deviation results, only two of which are captured in \cref{fig:sigmoid_ht}. Applying Stein's lemma, \cref{lem:ht_stein}, alongside a dual version where we swap the states, we can see that there is a regime in which both error probabilities decay exponentially, albeit with exponents no greater than the respectively relative entropies. If, however, one of the errors decays with an exponent greater than the relative entropy then the other error will in fact exponentially approach 1. This is illustrated in \cref{fig:logodds} using our log odds formulation.

\begin{figure}
    \includegraphics[width=\linewidth]{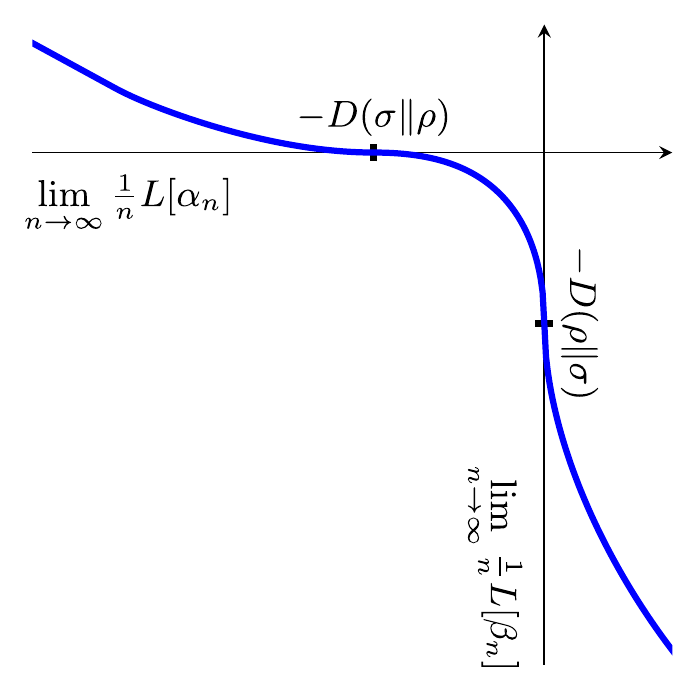}
    \caption{\textbf{Trade-off using log odds.} The trade-off between the type-I and -II error log odds per copy---$\lim\limits_{n\to\infty}\frac 1nL[\alpha_n]$ and $\lim\limits_{n\to\infty}\frac 1nL[\beta_n]$ respectively---for the hypothesis test between $\rho^{\otimes n}$ and $\sigma^{\otimes n}$, in the limit of growing $n$. The bottom-left quadrant corresponds to the regime in which both errors are decaying exponentially, with exponents bound by the relative entropies $D(\sigma\|\rho)$ and $D\reli{}$ respectively. The top-left regime corresponds to a type-I error which is decaying even more rapidly, causing the type-II error to instead increase towards 1, and the bottom-right the converse of this. This curve is generated by plotting $\Gamma_\lambda\reli{}$ from \cref{lem:ht_largedev} for two randomly generated $d=5$ qudit states.
    }
    \label{fig:logodds}
\end{figure}

Before we give the large deviation bound, we need several additional definitions that will be critical for the pinched case. Define the \emph{left-pinched} and \emph{right-pinched} R\'enyi relative entropies as
\begin{aligns}
    \label{eq:pinched_renyi_def1}
    \DL_\alpha\reli{} := \lim_{n\to\infty} \frac 1n D_\alpha\rel{\pinch{\rho^{\otimes n}}{\sigma^{\otimes n}}}{\sigma^{\otimes n}},\\
    \label{eq:pinched_renyi_def2}
    \DR_\alpha\reli{} := \lim_{n\to\infty} \frac 1n D_\alpha\rel{\rho^{\otimes n}}{\pinch{\sigma^{\otimes n}}{\rho^{\otimes n}}}.
\end{aligns}
We note that due to the duality property of the classical relative entropy $(1-\alpha)D_\alpha\rel pq=\alpha D_{1-\alpha}\rel qp$, we straightforwardly have
\begin{align}
(1-\alpha)\DL_\alpha\reli{}=\alpha\DR_{1-\alpha}\rel\sigma\rho.
\end{align}
We also note that for $\alpha\geq 0$ the left-pinched coincides with the sandwiched relative entropy and for $\alpha\leq 1$ the right-pinched coincides with the reverse sandwiched relative entropy~\cite{muller2013quantum,Tomamichel2016}, i.e.\ 
\begin{aligns}
    \DL_\alpha\reli{} &= \frac{1}{\alpha-1} \log\Tr\left(\left(\sqrt \rho\sigma^{\frac{1-\alpha}\alpha}\sqrt\rho\right)^{\alpha}\right)~\,~\forall\alpha\geq 0,\\
    \DR_\alpha\reli{} &= \frac{1}{\alpha-1} \log\Tr\left(\left(\sqrt \sigma\rho^{\frac\alpha{1-\alpha}}\sqrt\sigma\right)^{1-\alpha}\right)~\forall\alpha\leq 1,
\end{aligns}
but we know of no closed-form solution for either outside of these ranges. In \cref{app:pinch} we show the existence, and some important properties, of these relative entropies. A consequence of these expressions is that the regular relative entropy can be recovered by taking the limits of $\alpha$ going to $1$ and $0$ respectively,
\begin{align}
    D\reli{}=\lim_{\alpha\to 1}\DL_\alpha\reli{}=\lim_{\alpha\to 0}\frac{1-\alpha}{\alpha}\DR_\alpha\rel\sigma\rho.
\end{align}
In our results we will also need the counterpart quantity found when exchanging these limits, which we will denote $\Dstar\reli{}$, defined
\begin{align}
    \Dstar\reli{}:=\lim_{\alpha\to 1}\DR_\alpha\reli{}=\lim_{\alpha\to 0}\frac{1-\alpha}{\alpha}\DL_\alpha\rel\sigma\rho.
\end{align}
About this we note that the data-processing inequality gives $\Dstar\reli{}\leq D\reli{}$, with equality if $[\rho,\sigma]=0$.

With these definitions in hand, we can now present the large deviation bound on hypothesis testing.

\begin{lemma}[Large deviation analysis of hypothesis testing]
    \label{lem:ht_largedev}
    For any $\lambda\in \mathbb R$, define the asymptotic non-pinched/pinched log odds error per copy as
    \begin{aligns}
        \Gamma_\lambda(\rho\|\sigma)&:=\lim_{n\to\infty} \frac 1n \gamma_{\lambda n}\rel{\rho^{\otimes n}}{\sigma^{\otimes n}},\\
        \GammaL_\lambda(\rho\|\sigma)&:=\lim_{n\to\infty} \frac 1n \gammaL_{\lambda n}\rel{\rho^{\otimes n}}{\sigma^{\otimes n}},\\
        \GammaR_\lambda(\rho\|\sigma)&:=\lim_{n\to\infty} \frac 1n \gammaR_{\lambda n}\rel{\rho^{\otimes n}}{\sigma^{\otimes n}}.
    \end{aligns}
    Then, all of these limits exist, and each is given by
    \begin{aligns}
        \Gamma_\lambda(\rho\|\sigma)\!=\!&
        \begin{dcases}
                \sup_{t<0}\Dm_t\reli{}+\frac{t}{1-t}\lambda &
                \lambda<-D\rel{\sigma}{\rho},\\
                \inf_{0<t<1}\!\!\!-\Dp_t\reli{}-\frac{t}{1-t}\lambda &
                -D(\sigma\|\rho)< \lambda< 0,\\
                \sup_{t>1} -\Dm_t\reli{}+\frac{t}{1-t}\lambda &
                \lambda>0,
        \end{dcases}\\
        \GammaL_\lambda(\rho\|\sigma)\!=\!&
        \begin{dcases}
        \sup_{t<0}\DL_t\reli{}+\frac{t}{1-t}\lambda & 
                \lambda<-\Dstar\rel{\sigma}{\rho},\\
                \inf_{0<t<1}\!\!\!-\DL_t\reli{}-\frac{t}{1-t}\lambda & 
                -\Dstar(\sigma\|\rho)< \lambda< 0,\\
                \sup_{t>1} -\DL_t\reli{}+\frac{t}{1-t}\lambda & 
                \lambda> 0,
        \end{dcases}\\
        \GammaR_\lambda(\rho\|\sigma)\!=\!&
        \begin{dcases}
        \sup_{t<0}\DR_t\reli{}+\frac{t}{1-t}\lambda & 
                \lambda<-D\rel{\sigma}{\rho},\\
                \inf_{0<t<1}\!\!\!-\DR_t\reli{}-\frac{t}{1-t}\lambda & 
                -D(\sigma\|\rho)< \lambda< 0,\\
                \sup_{t>1} -\DR_t\reli{}+\frac{t}{1-t}\lambda & 
                \lambda> 0,
        \end{dcases}
    \end{aligns}
    including the edge cases
    \begin{aligns}
        \Gamma_{-D\rel\sigma\rho}\reli{}&=0, & \Gamma_0\reli{}&=-D\reli{},\\
        \GammaL_{-\Dstar\rel\sigma\rho}\reli{}&=0,~~ & \GammaL_0\reli{}&=-D\reli{},\\
        \GammaR_{-D\rel\sigma\rho}\reli{}&=0, & \GammaR_0\reli{}&=-\Dstar\reli{},
    \end{aligns}
    and the limits
    \begin{align}
        \Gamma_{\pm\infty}\reli{}=\GammaL_{\pm\infty}\reli{}=\GammaR_{\pm\infty}\reli{}=\mp\infty.
    \end{align}
\end{lemma}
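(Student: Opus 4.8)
The plan is to establish the three formulas together, treating the non-pinched case first and then reducing the pinched cases to it. The starting point is the well-known large deviation theory for quantum hypothesis testing. For $\lambda<0$ (both errors exponentially small), the type-I error exponent is governed by the \emph{error exponent} (Hoeffding bound), and for $\lambda>0$ (type-II error probability approaching $1$) by the \emph{strong converse exponent}. I would quote these in their standard forms: the Hoeffding--Han--Kobayashi--Nagaoka result expresses the optimal trade-off via a Legendre transform of the Petz R\'enyi divergence $\Dp_t$ for $t\in(0,1)$, while the strong converse exponent of Mosonyi--Ogawa is a Legendre transform of the sandwiched (minimal) R\'enyi divergence $\Dm_t$ for $t>1$; the region $\lambda<-D(\sigma\|\rho)$, where the type-I error exponent itself approaches $D(\sigma\|\rho)$ from the dual Stein regime, is handled by applying the strong converse result with the roles of $\rho$ and $\sigma$ swapped, which after the change of variable $t\mapsto 1-t$ and the R\'enyi duality $(1-\alpha)\Dp_\alpha\rel pq=\alpha\Dp_{1-\alpha}\rel qp$ (and its sandwiched analogue) yields the $\sup_{t<0}\Dm_t\reli{}+\tfrac{t}{1-t}\lambda$ branch. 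The map $\lambda\mapsto L^{-1}[\lambda n]$ converts ``exponent'' statements into ``log odds'' statements, so the content is just a rewriting of the classical/quantum large deviation results in the unified logit notation introduced above; I would verify the translation on the two boundary cases $\lambda\to 0$ (recovering $\Gamma_0=-D\reli{}$ from Stein's Lemma, \cref{lem:ht_stein}) and $\lambda\to-D(\sigma\|\rho)$ (recovering $\Gamma_{-D(\sigma\|\rho)}=0$ from the dual Stein's Lemma), and note $\Gamma_{\pm\infty}=\mp\infty$ follows since an error probability cannot exceed $1$ nor be negative.

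For $\GammaR_\lambda$, the reduction is immediate: by definition $\GammaR_\lambda(\rho\|\sigma)$ is the asymptotic log odds for distinguishing $\rho^{\ot n}$ from $\pinch{\sigma^{\ot n}}{\rho^{\ot n}}$, and since $[\rho,\pinch{\sigma}{\rho}]=0$ in the relevant asymptotic sense (more precisely the pinched state commutes with $\rho$), the non-pinched formula applies verbatim with $\sigma$ replaced by the pinched state; one then observes that the R\'enyi divergences appearing are exactly the pinched R\'enyi divergences, because $\Dp_\alpha\rel{\rho^{\ot n}}{\pinch{\sigma^{\ot n}}{\rho^{\ot n}}}$ and $\Dm_\alpha\rel{\rho^{\ot n}}{\pinch{\sigma^{\ot n}}{\rho^{\ot n}}}$ agree (commuting case) and their regularized limit is $\DR_\alpha\reli{}$ by definition~\eqref{eq:pinched_renyi_def2}; the denominator in $D(\sigma\|\rho)$ versus $\Dstar(\sigma\|\rho)$ in the threshold is tracked using $\lim_{\alpha\to 1}\DR_\alpha\reli{}=\Dstar\reli{}$ applied with swapped arguments, which by the duality relation $(1-\alpha)\DL_\alpha\reli{}=\alpha\DR_{1-\alpha}\rel\sigma\rho$ comes out to the stated $-D(\sigma\|\rho)$. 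The case $\GammaL_\lambda$ is dual: $\GammaL_\lambda(\rho\|\sigma)$ distinguishes $\pinch{\rho^{\ot n}}{\sigma^{\ot n}}$ from $\sigma^{\ot n}$, again a commuting pair, so the classical large deviation formula applies with $\rho$ replaced by the pinched state, giving $\DL_t$ throughout and thresholds at $-\Dstar(\sigma\|\rho)$ via $\lim_{\alpha\to 1}\DR_\alpha\rel\sigma\rho=\Dstar\rel\sigma\rho$ and the cross-relation to $\DL$.

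The remaining technical points are to confirm that the limits defining $\Gamma_\lambda$, $\GammaL_\lambda$, $\GammaR_\lambda$ actually exist (rather than just $\liminf/\limsup$ bounds): for the non-pinched case this is part of the cited error-exponent and strong-converse-exponent theorems, and for the pinched cases existence follows once we know the regularized pinched R\'enyi divergences $\DL_\alpha$, $\DR_\alpha$ exist, which is exactly what is established in \cref{app:pinch}. The edge cases $\Gamma_0=-D\reli{}$ etc.\ require a small separate argument because the Legendre-transform formulas have a removable singularity at $t=1$ (resp.\ $t=0$); here one takes the limit $t\to 1$ inside the $\sup/\inf$ and uses $\lim_{t\to1}\Dp_t=\lim_{t\to1}\DL_t=D$ together with $\lim_{t\to1}\DR_t=\Dstar$, which accounts for the asymmetry between the $\Gamma_0,\GammaL_0$ values ($-D$) and the $\GammaR_0$ value ($-\Dstar$). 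I expect the main obstacle to be bookkeeping: correctly matching each of the three $\lambda$-subintervals to the right divergence variant (Petz vs.\ sandwiched vs.\ their pinched regularizations) and getting all the R\'enyi dualities and the $t\mapsto 1-t$ substitutions to line up so that the swapped-argument strong converse result lands precisely on the $\sup_{t<0}$ branch with the correct sign on $\tfrac{t}{1-t}\lambda$; the pinching reductions themselves are conceptually straightforward once the non-pinched statement is in place.
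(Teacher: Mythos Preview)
Your overall structure is sound, but there is a genuine gap in how you handle the pinched variants $\GammaL_\lambda$ and $\GammaR_\lambda$. You propose that ``the non-pinched formula applies verbatim with $\sigma$ replaced by the pinched state'' because the pinched pair commutes. This does not work: the non-pinched formula you have just established concerns the asymptotics of $\gamma_{\lambda n}\rel{\tau_1^{\otimes n}}{\tau_2^{\otimes n}}$ for \emph{fixed} single-copy states $\tau_1,\tau_2$, whereas the sequence $\bigl(\rho^{\otimes n},\pinch{\sigma^{\otimes n}}{\rho^{\otimes n}}\bigr)$ is not of the form $(\tau_1^{\otimes n},\tau_2^{\otimes n})$ for any fixed $\tau_2$. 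Pinching at the single-copy level and then tensoring yields $D_\alpha\rel{\rho}{\pinch{\sigma}{\rho}}$, not the regularized quantity $\DR_\alpha\reli{}$, and these differ in general. The paper closes this gap by invoking the large-deviation theory for \emph{correlated} (non-iid) state sequences---Hiai--Mosonyi--Ogawa for the error-exponent region and Mosonyi--Ogawa for the strong-converse region---which establishes that the iid formulas carry over with the single-copy divergence replaced by its regularization, provided the regularized R\'enyi divergence exists \emph{and is differentiable} in $\alpha$. Mere existence of $\DL_\alpha$, $\DR_\alpha$ (which you cite from \cref{app:pinch}) is not enough; differentiability is an additional hypothesis of those correlated-state theorems and is also proved in \cref{app:pinch}.

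A second, smaller issue: for the $\lambda>0$ branch you write ``type-II error probability approaching $1$'' and claim the strong converse result gives the formula directly. In the paper's conventions, $\lambda>0$ means the type-I error approaches $1$ and the type-II error approaches $0$, so what you actually need is the type-II exponent \emph{as a function of} the type-I strong-converse exponent---i.e.\ the inverse of the Mosonyi--Ogawa function. The paper carries out this inversion explicitly by exhibiting the candidate inverse $g$ and showing $f\circ g\circ f=f$ (a quasi-inverse argument). You would discover this when writing out the details, but it is a step that should be flagged rather than assumed.
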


\begin{proof}
    Start by dividing the range of $\lambda$ into three parameter regions, corresponding to the quadrants of \cref{fig:logodds}, 
    \begin{aligns}
        \mathcal R_L:=&\bigl(-\infty,-\kk{D(\sigma\|\rho)}\bigr),\\
        \mathcal R_M:=&\bigl(\kk{-D(\sigma\|\rho)},0\bigr),\\
        \mathcal R_R:=&(0,\infty).
    \end{aligns}
    We note that the region $\mathcal R_M$ corresponds to the so-called `error exponent' regime where both the type-I and type-II errors are exponentially decreasing, whereas $\mathcal R_L$/$\mathcal R_R$ are referred to as the `strong converse' regime where one error is exponentially decaying but the other exponentially approaching $1$. Regions $\mathcal R_M$ and $\mathcal R_L$ have been previously studied, in fact the expressions for $\Gamma_\lambda$ on $\mathcal R_M$ {\ctc can be derived from} Refs.~\cite{Hayashi07,Nagaoka06}, and the expressions for $\Gamma_\lambda$ on $\mathcal R_L$ {\ctc can be derived from } Refs.~\cite{MosonyiOgawa13,MosonyiOgawa14}\footnote{\ctc It should be noted that the results in Refs.~\cite{Hayashi07,Nagaoka06,MosonyiOgawa13,MosonyiOgawa14} are given with the roles of type-I and -II errors reversed, but by swapping the states $\rho\leftrightarrow\sigma$ and inverting these bounds they can be shown to be equivalent to the expressions for $\Gamma_\lambda$ given above.}. 
    
    By swapping the states we can extend the result in $\mathcal R_L$ to $\mathcal R_R$. Recall that $\Gamma_{\gamma}\reli{}$ quantifies the optimal type-II error possible for a given type-I error. As swapping the states corresponds to exchanging the two error types, this means that $\Gamma_\gamma\rel{\sigma}{\rho}$ must therefore quantify the optimal type-I error given for a given type-II error. {\ctc As such, we can think of the two functions}
    \begin{align}
        \lambda\mapsto \Gamma_\lambda(\rho\|\sigma)
        \qquad\text{and}\qquad
        \lambda\mapsto \Gamma_\lambda(\sigma\|\rho),
    \end{align}
    {\ctc as both describing the trade-off between two types of error, as functions of the type-I and -II errors respectively, and are therefore inverses.\footnote{\ctc Strictly speaking, these functions are not necessarily \emph{inverses}, but only \emph{quasi-inverses}, which are functions which act as inverses on each others domain/co-domain, i.e.\ functions $f$ and $g$ such that \mbox{$f\circ g\circ f=f$} and \mbox{$g\circ f\circ g = g$}.}} So, to find an expression for $\Gamma_\lambda$ on $\mathcal R_R$, we simply need to find the inverse of the flipped version on ${\ctc \lambda\leq D\reli{}}$. To this end, we define
    \begin{aligns}
        f(\lambda):=&\sup_{t>1}\frac{1-t}{t}\left[\lambda+\Dm_t(\rho\|\sigma) \right],\\
        g(\lambda):=&\sup_{s>1}\frac{s}{1-s}\lambda-\Dm_s(\rho\|\sigma).
    \end{aligns}
    By utilising the identity
    \begin{align}
        \Dm_\alpha\rel{\rho}{\sigma}&=\frac{\alpha}{1-\alpha}\Dm_{1-\alpha}\rel{\sigma}{\rho},
    \end{align}
    we can see that $f(\lambda)=\Gamma_\lambda(\sigma\|\rho)$ for $\lambda \leq {\ctc-D\reli{}}$, and we now want to argue that $g$ is its inverse. Consider composition of $f$ and $g$, which gives
    \begin{aligns}
        (f\circ g)(\lambda)&=\sup_{t>1}\inf_{s>1}\frac{1-t}{t}\left[ \Dm_t\reli{}-\Dm_s\reli{}+\frac{s}{1-s}\lambda \right],\\
        (g\circ f)(\lambda)&=\sup_{s>1}\inf_{t>1}\frac{s}{1-s}\frac{1-t}{t}\left[\lambda+\Dm_t\reli{}\right]-\Dm_s\reli{}.
    \end{aligns}
    If we let $t=s$ and $s=t$ in the inner optimisations in each of these,  we get $(f\circ g)(\lambda)\geq \lambda$ and $(g\circ f)(\lambda)\geq \lambda$, respectively. Using these, together with the fact that $f$ is monotonically {\ctc non-increasing}, we have
    \begin{aligns}
        f \circ g \circ f=(f \circ g) \circ f\geq f,\\
        f \circ g \circ f=f \circ (g \circ f)\leq f.
    \end{aligns}
    Thus, we have that $f\circ g\circ f=f$, and so $f$ and $g$ are {\ctc quasi-}inverses. As $f$ is $\Gamma_\lambda(\sigma\|\rho)$ on $\lambda{\ctc \leq -D\reli{}}$, then $g$ must correspond to $\Gamma_\lambda\reli{}$ for $\lambda\in \mathcal R_R$, as required.

    Next we turn to the pinched variants $\GammaL_\lambda$/$\GammaR_\lambda$---we will start with $\GammaL_\lambda$. Here, we want to consider the hypothesis testing between the pinched state $\pinch{\rho^{\otimes n}}{\sigma^{\otimes n}}$ and~$\sigma^{\otimes n}$. As these states are no longer iid, this can be considered as a hypothesis testing problem between two correlated states. Thankfully, the problem of extending the previously mentioned hypothesis testing analyses to the case of correlated states has been considered. Specifically, Ref.~\cite[Thm.~4.8]{HiaiMosonyiOgawa2007} considers the error exponent regime ($\mathcal R_M$) and Ref.~\cite[Cor~IV.6]{MosonyiOgawa14} the strong converse regime ($\mathcal R_L/\mathcal R_R$). In both cases, it has been shown that, if the regularised relative entropy exists and is differentiable (see \cref{app:pinch}), then the standard iid results can be extended where the single-copy relative entropy is replaced with this regularised quantity. For our case, looking at pinched states, this means that the change when going from non-pinched to pinched hypothesis testing is
    \begin{align}
        \Dp\reli{},\Dm\reli{} \to \DL\reli{}.
    \end{align}
    Making this substitution, we get the required expression for $\GammaL_\lambda$ on $\mathcal R_L$, $\mathcal R_M$, and $\mathcal R_R$. An analogous argument can also be made for $\GammaR_\lambda$.
    
    The only values of $\lambda$ left to consider are the edge cases and limits. In each case these follow from the monotonicity of $\Gamma_\lambda$/$\GammaL_\lambda$/$\GammaR_\lambda$ in $\lambda$. The edge cases
    \begin{align}
        \Gamma_{-D\rel\sigma\rho}\reli{}&=0, & \Gamma_0\reli{}&=-D\reli{}
    \end{align}
    correspond to \cref{lem:ht_stein} (and its state-reversed analogue), and similarly
    \begin{aligns}
        \GammaL_{-\Dstar\rel\sigma\rho}\reli{}&=0, & \GammaL_0\reli{}&=-D\reli{},\\
        \GammaR_{-D\rel\sigma\rho}\reli{}&=0, & \GammaR_0\reli{}&=-\Dstar\reli{},
    \end{aligns}
    to the pinched variants of Stein's Lemma. Lastly the limits
    \begin{align}
        \Gamma_{\pm\infty}\reli{}=\GammaL_{\pm\infty}\reli{}=\GammaR_{\pm\infty}\reli{}=\mp\infty
    \end{align}
    follow from the fact that \mbox{$\beta_0\rel{\cdot}{\cdot}=\betaL_0\rel{\cdot}{\cdot}=\betaR_0\rel{\cdot}{\cdot}=1$} and \mbox{$\beta_1\rel{\cdot}{\cdot}=\betaL_1\rel{\cdot}{\cdot}=\betaR_1\rel{\cdot}{\cdot}=0$} generically for any states of full support.
\end{proof}


\subsubsection{Moderate deviation}
\label{subsubsec:ht_mod}

In the small deviation regime we considered type-I errors which were constant, and in the large we considered type-I errors which were exponentially approaching 0/1. This leaves a gap for errors which are approaching 0 or 1, but doing so sub-exponentially. This regime is referred to as \emph{moderate deviations}~\cite{dembo98,chubb2017moderate,cheng17}. As with large deviations it will be advantageous to express these results in term of type-I/-II log odds. 

Where small deviations correspond to type-I log odds which are constant in $n$, and large deviation to any log odds which scales as $\pm \lambda n$, moderate will refer to any log odds which scales as $\pm \lambda n^a$ where $\lambda>0$ and $a\in(0,1)$. We note that in some other papers considering moderate deviations---such as Refs.~\cite{chubb2017moderate,cheng17}---these results are considered more generally for any sequence $x_n$ such that $\lim_{n\to\infty} x_n=0$ and $\lim_{n\to\infty} nx_n=\pm \infty$. We will restrict to this polynomial subset of such sequences primarily for notational convenience, but note that all of the below results can be extended to these more general moderate sequences.

\begin{lemma}[Moderate deviation analysis of hypothesis testing]
    \label{lem:ht_moddev}
    For any $\lambda>0$ and $a\in(0,1)$ the type-II log odds of error scales as
    \begin{aligns}
        \frac 1n \gamma_{\pm \lambda n^a}\rel{\rho^{\otimes n}}{\sigma^{\otimes n}}&\simeq-D(\rho\|\sigma)\mp
        \sqrt{2 V\reli{}\lambda n^{a-1}},\\
        \frac 1n \gammaL_{\pm \lambda n^a}\rel{\rho^{\otimes n}}{\sigma^{\otimes n}}&\simeq-D(\rho\|\sigma)\mp
        \sqrt{2 V\reli{}\lambda n^{a-1}},
    \end{aligns}
    where $\simeq$ denotes equality up to terms scaling as $o\left(\sqrt{n^{a-1}}\right)$.
\end{lemma}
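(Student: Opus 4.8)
The plan is to translate the \emph{log-odds} statement into the corresponding statement about error \emph{probabilities}, invoke the known moderate-deviation analyses of ordinary quantum hypothesis testing, and then dispatch the left-pinched variant by the same polynomial-factor sandwich used in the proof of \cref{lem:ht_smalldev}. Concretely, I would start from the identities $\gamma_x\rel{\cdot}{\cdot}=L\bigl[\beta_{L^{-1}[x]}\rel{\cdot}{\cdot}\bigr]$ and $\gammaL_x\rel{\cdot}{\cdot}=L\bigl[\betaL_{L^{-1}[x]}\rel{\cdot}{\cdot}\bigr]$. Since $L^{-1}[-\lambda n^a]=\exp(-\lambda n^a)\bigl(1+O(\exp(-\lambda n^a))\bigr)$ and $L^{-1}[+\lambda n^a]=1-\exp(-\lambda n^a)\bigl(1+O(\exp(-\lambda n^a))\bigr)$, and $\beta_x$ is monotone in $x$, the relevant type-I bound is, up to a perturbation that merely rescales $\lambda$ by $1+o(1)$ (harmless inside the square root), $\epsilon_n:=\exp(-\lambda n^a)$ in the low-error case and $1-\epsilon_n$ in the high-error case. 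Moreover, in both regimes the optimal type-II error decays exponentially with an exponent bounded away from $0$, so $L[\beta_n]=\log\beta_n+O(\beta_n)=\log\beta_n+o(1)$, negligible after dividing by $n$ against the $\Theta(\sqrt{n^{a-1}})$ moderate term. Hence the claim reduces to
\begin{align}
 -\frac1n\log\beta_{\epsilon_n}\rel{\rho^{\otimes n}}{\sigma^{\otimes n}}&\simeq D\reli{}-\sqrt{2V\reli{}\,\lambda n^{a-1}},\\
 -\frac1n\log\beta_{1-\epsilon_n}\rel{\rho^{\otimes n}}{\sigma^{\otimes n}}&\simeq D\reli{}+\sqrt{2V\reli{}\,\lambda n^{a-1}},
\end{align}
together with the analogous pair for $\betaL$ in place of $\beta$.

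For the non-pinched case, the first line is precisely the (error-exponent) moderate-deviation analysis of quantum hypothesis testing, Refs.~\cite{chubb2017moderate,cheng17} (classically Ref.~\cite{dembo98}). The second, strong-converse line I would obtain by the standard tilting argument: take a cutoff test built from the sandwiched R\'enyi divergence $\Dm_{1+s_n}\reli{}$ with tilting parameter $s_n=\Theta\bigl(n^{(a-1)/2}\bigr)$ tuned so that the induced type-I error is $\exp(-\lambda n^a(1+o(1)))$, use the standard second-order expansion of the sandwiched R\'enyi divergence $\Dm_{1+s}\reli{}=D\reli{}+\tfrac{s}{2}V\reli{}+o(s)$ about $s=0$, and match it from below by the strong-converse branch ($\lambda>0$) of \cref{lem:ht_largedev} taken in the limit $\lambda\to0^+$; indeed, optimising $\sup_{t>1}\bigl(-\Dm_t\reli{}+\tfrac{t}{1-t}\lambda\bigr)$ near $t=1$ with the same expansion produces exactly $-D\reli{}-\sqrt{2V\reli{}\lambda}$ as $\lambda\to0^+$, which is the required consistency check.

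For the left-pinched quantities I would follow the proof of \cref{lem:ht_smalldev} almost verbatim. The inequality $\betaL_\epsilon\rel{\cdot}{\cdot}\ge\beta_\epsilon\rel{\cdot}{\cdot}$ gives one direction; for the reverse I would invoke the bound of Ref.~\cite{TomamichelHayashi2012}, $\betaL_\epsilon\rel{\rho^{\otimes n}}{\sigma^{\otimes n}}\le\beta_{\epsilon-2\delta}\rel{\rho^{\otimes n}}{\sigma^{\otimes n}}\cdot\frac{2^8(\epsilon-\delta)\,\nu(\sigma^{\otimes n})^2}{\delta^5(1-\epsilon+\delta)}$, valid for $0<\delta<\epsilon/3$, now applied with $\epsilon\in\{\epsilon_n,\,1-\epsilon_n\}$ and the shrinking choice $\delta=\delta_n:=\epsilon_n/4$. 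Then the shifted type-I bound $\epsilon_n-2\delta_n=\tfrac12\epsilon_n$ (respectively $1-\tfrac32\epsilon_n$) is still of the moderate form with the same $\lambda$ up to $1+o(1)$, while $\nu(\sigma^{\otimes n})\le n^{\nu(\sigma)}$ makes the comparison factor $\exp(O(n^a))$, so it only adds $O(n^a)$ to $\log\betaL_\epsilon$, i.e.\ $O(n^{a-1})=o(n^{(a-1)/2})$ after dividing by $n$. Plugging in the non-pinched bounds and letting $n\to\infty$ then yields the pinched statement with the same constants.

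I expect the main obstacle to be exactly this last bookkeeping. Unlike the small-deviation regime of \cref{lem:ht_smalldev}, the type-I error here is itself $n$-dependent and tends to an extreme value, so the Tomamichel--Hayashi comparison must be run with a shrinking $\delta_n$, which inflates the comparison factor from polynomial to $\exp(\Theta(n^a))$; it is then essential that $n^a=o\bigl(n^{(1+a)/2}\bigr)$ --- i.e.\ that $a<1$ strictly, precisely the moderate hypothesis --- for this factor to stay subdominant to the $\Theta(n^{(1+a)/2})$ moderate term in $\log\betaL_\epsilon$. A secondary difficulty, should one want a fully self-contained argument, is the strong-converse moderate statement (the second line of the display), which may have to be assembled from the R\'enyi-tilting and Taylor-expansion argument above rather than quoted off the shelf.
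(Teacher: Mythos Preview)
Your proposal is correct and follows essentially the same route as the paper: cite the moderate-deviation hypothesis-testing result for the non-pinched quantity, then sandwich the pinched variant using the Tomamichel--Hayashi inequality with a shrinking $\delta_n$ and observe that the resulting $O(n^a)$ overhead is $o(n^{(1+a)/2})$ precisely because $a<1$.

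Two minor points of comparison. First, the paper treats \emph{both} sign cases of the non-pinched statement as covered directly by Ref.~\cite[Thm.~1]{chubb2017moderate}, so the R\'enyi-tilting argument you sketch for the high-error (``strong-converse'') side is not needed; your caution here is reasonable but superfluous. Second, in the pinched sandwich the paper makes two separate substitutions---$\delta=\epsilon/4$ when $\epsilon\to0$ and $\delta=1-\epsilon$ when $\epsilon\to1$---whereas you use the single choice $\delta=\epsilon_n/4$ for both; either works, and both give a multiplicative overhead of order $\epsilon_n^{-O(1)}\cdot\mathrm{poly}(n)=\exp(O(n^a))$, which is the crucial estimate.
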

\begin{proof}
    The non-pinched result is just a restatement of the hypothesis testing result from Ref.~\cite[Thm.~1]{chubb2017moderate}. For the pinched quantity we will use a similar argument to that used in the small deviation case of \cref{lem:ht_smalldev}. 
    
    The data-processing inequality $\gammaL_x\reli{}\geq \gamma_x\reli{}$ trivially gives us a lower bound on the scaling of $\gammaL$. For the upper bound we return to the inequality
    \begin{align}
        \betaL_\epsilon\rel{\rho}{\sigma}\leq \beta_{\epsilon-2\delta}\rel{\rho}{\sigma}\cdot \frac{2^8(\epsilon-\delta)\nu(\sigma)^2}{\delta^5(1-\epsilon+\delta)}
    \end{align}
    considered previously in the proof of \cref{lem:ht_smalldev}. We want to use this bound in the two moderate regimes, in which the type-I error is approaching 0/1. To do this, consider the two equalities when we substitute \mbox{$\delta=\epsilon/4$} for the case of $\epsilon$ approaching $0$, and \mbox{$\delta=1-\epsilon$} for the case of $\epsilon$ approaching $1$ (with the added requirement that $\epsilon>3/4$), giving 
    \begin{aligns}
        \betaL_\epsilon\rel{\rho}{\sigma} &\leq \beta_{\epsilon/2}\rel{\rho}{\sigma}\cdot \frac{3\cdot2^{16}\nu(\sigma)^2}{\epsilon^4(1-3\epsilon/4)},\\
        \betaL_\epsilon\rel{\rho}{\sigma} &\leq \beta_{3\epsilon-2}\rel{\rho}{\sigma}\cdot \frac{2^{7}(2\epsilon-1)\nu(\sigma)^2}{(1-\epsilon)^6}.
    \end{aligns}

    Now we want to make the substitutions $\rho\to\rho^{\otimes n}$ and $\sigma\to\sigma^{\otimes n}$. In the former case we will substitute \mbox{$\epsilon\to L^{-1}[-\lambda n^a]$}, and in the latter \mbox{$\epsilon\to L^{-1}[+\lambda n^a]$}. Taking logarithms, and recalling that the number of unique eigenvalues $\nu\left(\sigma^{\otimes n}\right)$ scales only polynomially with $n$, this gives us the upper bounds expressed in terms of log odds per copy of
    \begin{align}
        \frac 1n \gammaL_{\pm \lambda n^a}\rel{\rho^{\otimes n}}{\sigma^{\otimes n}}&\leq \frac 1n \gamma_{\pm\lambda n^a}\rel{\rho^{\otimes n}}{\sigma^{\otimes n}}+O(n^{a-1}).
    \end{align}
    Lastly, as $a<1$ we have that $n^{a-1}=o\left(\sqrt{n^{a-1}}\right)$, allowing us to neglect this error term, and thus conclude that the pinched and non-pinched log odds per copy must scale identically up to $\simeq$, as required.
\end{proof}


\subsubsection{Extreme deviation}
\label{subsubsec:ht_extreme}

Now that we have dealt with type-I errors that do not approach 0 or 1, approach them sub-exponentially, and approach them exponentially, we are left with one final case: when the error approaches 0 or 1 \emph{super}-exponentially. We will see that if we consider super-exponentially scaling errors, the problem of hypothesis testing becomes `boring', in the sense that we get a simple linear trade-off between the two types of error, as the error constraints are too strict for any meaningful trade-off. While boring in and of itself, the analysis of hypothesis testing in this regime is needed for technical reasons in the proof of the zero-error transformation rates between quantum dichotomies to come in \cref{subsubsec:rate_extreme}.

One silver lining of the boringness of this regime is that we do not need to consider an asymptotic number of states and can start with a single-shot statement. The quantum Neymann-Pearson lemma~\cite{helstrom1969quantum,holevo1972analog} states that the trade-off between type-I and type-II hypothesis testing error can be characterised by Neyman-Pearson tests of the form
\begin{align}
    Q_t:=\left\lbrace \rho-t\sigma>0 \right\rbrace,
\end{align}
for $t\geq 0$, where $\lbrace M>0\rbrace$ denotes the projector onto the eigenspaces of $M$ corresponding to positive eigenvalues. Specifically, the claim is that the optimal trade-off between the errors is either given by a test of the form $Q_t$, or, when $t$ corresponds to a value at which $Q_t$ changes rank, a convex combination of $\lim_{s\to t^-}Q_s$ and $Q_t$. 

{\ctc
Recall that the $\alpha\to\pm\infty$ limiting cases of the minimal divergence are given by the max-divergence~\cite[\S 4.2]{Tomamichel2016},
\begin{aligns}
    \Dm_{+\infty}\reli{}&
    = \log \lambda_{\max}\left(\sigma^{-1/2}\rho\sigma^{-1/2}\right),\\
    \Dm_{-\infty}\reli{}&
    = -\log\lambda_{\max}\left(\rho^{-1/2}\sigma\rho^{-1/2}\right).
\end{aligns}
}

\begin{lemma}[Single-shot extreme deviation analysis of hypothesis testing]
    \label{lem:ht_extremedev_singleshot}
    For any $x\leq \lambda_{\min}(\rho)$
    \begin{aligns}
        \beta_{1-x}\reli{} &= x\cdot \exp\left(-\Dm_{+\infty}\rel{\rho}{\sigma}\right),\\
        1-\beta_x\reli{} &= x\cdot \exp\left(-\Dm_{-\infty}\reli{}\right),
    \end{aligns}
    and
    \begin{aligns}
        \betaR_{1-x}\reli{} &= x\cdot \exp\Bigl(-D_{+\infty}\rel\rho{\pinch\sigma\rho}\Bigr),\\
        1-\betaR_x\reli{} &= x\cdot \exp\Bigl(-D_{-\infty}\rel\rho{\pinch\sigma\rho}\Bigr).
    \end{aligns}
    Also, for any $x\leq \lambda_{\min}\left(\pinch\rho\sigma\right)$
    \begin{aligns}
        \betaL_{1-x}\reli{} &= x\cdot \exp\Bigl(-D_{+\infty}\rel{\pinch\rho\sigma}{\sigma}\Bigr),\\
        1-\betaL_x\reli{} &= x\cdot \exp\Bigl(-D_{-\infty}\rel{\pinch\rho\sigma}{\sigma}\Bigr).
    \end{aligns}  
\end{lemma}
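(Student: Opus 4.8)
The plan is to prove the two unpinched identities by a short direct argument and then read off all four pinched identities from them by substitution, using that a pinched pair always commutes. Throughout we may assume $\rho>0$ (otherwise $x\leq\lambda_{\min}(\rho)=0$ and the claims are vacuous) and $\sigma>0$ (the standing assumption of this subsection). The key fact is that $\Dm_{+\infty}\reli{}=\log\lambda_{\max}(\sigma^{-1/2}\rho\sigma^{-1/2})$ is equivalent to the operator inequality $\rho\leq\exp(\Dm_{+\infty}\reli{})\,\sigma$, and likewise $\Dm_{-\infty}\reli{}=-\log\lambda_{\max}(\rho^{-1/2}\sigma\rho^{-1/2})$ is equivalent to $\sigma\leq\exp(-\Dm_{-\infty}\reli{})\,\rho$.

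For the first identity, the lower bound is immediate: a test $Q$ that is feasible for $\beta_{1-x}\reli{}$ obeys $\Tr(\rho Q)\geq x$, so inserting $\rho\leq\exp(\Dm_{+\infty}\reli{})\sigma$ and using $Q\geq0$ gives $\Tr(\sigma Q)\geq x\exp(-\Dm_{+\infty}\reli{})$. For the matching upper bound I would exhibit the explicit test saturating it, a sub-normalised rank-one projector (a limiting Neyman--Pearson test in the sense of the quantum Neyman--Pearson lemma stated above, namely a convex combination involving the vanishing test at the top rank-change value of $t$). Concretely, let $\ket\phi$ be a unit eigenvector of $\sigma^{-1/2}\rho\sigma^{-1/2}$ for the eigenvalue $\lambda_{\max}$ and set $\ket\psi:=\sigma^{-1/2}\ket\phi/\sqrt{\braopket{\phi}{\sigma^{-1}}{\phi}}$; a one-line computation gives $\braopket{\psi}{\rho}{\psi}=\lambda_{\max}\braopket{\psi}{\sigma}{\psi}$, so that $\braopket{\psi}{\rho}{\psi}/\braopket{\psi}{\sigma}{\psi}=\exp(\Dm_{+\infty}\reli{})$. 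Taking $Q:=c\proj\psi$ with $c:=x/\braopket{\psi}{\rho}{\psi}$ then yields $\Tr(\rho Q)=x$ and $\Tr(\sigma Q)=x\exp(-\Dm_{+\infty}\reli{})$, and $Q$ is feasible since $c\leq1$ — this last step uses exactly the hypothesis $x\leq\lambda_{\min}(\rho)\leq\braopket{\psi}{\rho}{\psi}$. The second identity is dual: writing $Q'=\iden-Q$ turns the type-I constraint into $\Tr(\rho Q')\leq x$ and yields $1-\beta_x\reli{}=\max\{\Tr(\sigma Q'):0\leq Q'\leq\iden,\ \Tr(\rho Q')\leq x\}$; now $\Tr(\sigma Q')\leq\exp(-\Dm_{-\infty}\reli{})\Tr(\rho Q')\leq x\exp(-\Dm_{-\infty}\reli{})$ from $\sigma\leq\exp(-\Dm_{-\infty}\reli{})\rho$, and equality is attained by $Q'=c'\proj{\psi'}$ with $\ket{\psi'}\propto\rho^{-1/2}\ket\chi$ for $\ket\chi$ a top eigenvector of $\rho^{-1/2}\sigma\rho^{-1/2}$ and $c':=x/\braopket{\psi'}{\rho}{\psi'}$, where once more $x\leq\lambda_{\min}(\rho)\leq\braopket{\psi'}{\rho}{\psi'}$ forces $c'\leq1$.

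The pinched identities then follow by unpacking the definitions in \eqref{eq:pinched_beta} and reusing the two unpinched ones with one state pinched. Applying the first identity with $\sigma$ replaced by $\pinch\sigma\rho$ (which is full support since $\sigma$ is) gives $\betaR_{1-x}\reli{}=\beta_{1-x}\rel{\rho}{\pinch\sigma\rho}=x\exp(-\Dm_{+\infty}\rel{\rho}{\pinch\sigma\rho})$ for $x\leq\lambda_{\min}(\rho)$, and since $[\rho,\pinch\sigma\rho]=0$ the minimal divergence collapses to the ordinary one, $\Dm_{+\infty}\rel{\rho}{\pinch\sigma\rho}=D_{+\infty}\rel{\rho}{\pinch\sigma\rho}$; the $1-\betaR_x$ identity comes identically from the second. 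For $\betaL$ one instead replaces $\rho$ by $\pinch\rho\sigma$, so the smallness hypothesis becomes $x\leq\lambda_{\min}(\pinch\rho\sigma)$ and $[\pinch\rho\sigma,\sigma]=0$ again collapses the divergences, giving the last pair of identities. I do not foresee a serious obstacle: the only delicate point is verifying that the rank-one test constant never exceeds $1$, and that is precisely where the eigenvalue hypotheses $x\leq\lambda_{\min}(\rho)$ and $x\leq\lambda_{\min}(\pinch\rho\sigma)$ are used; everything else is the elementary, exactly-linear trade-off characteristic of this ``boring'' extreme-deviation regime.
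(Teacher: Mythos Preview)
Your proof is correct and follows essentially the same line as the paper's: construct the rank-one achieving test from a top eigenvector of the relevant similarity-transformed operator, use the hypothesis $x\leq\lambda_{\min}(\rho)$ to ensure the scaling constant stays at most one, and obtain the pinched versions by substitution.

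There is one genuine difference worth noting. For optimality, the paper appeals to the quantum Neyman--Pearson lemma to argue that the optimal test must be a convex combination of $Q_0=I$ and the first non-trivial projective test $Q_{t^*}$; this forces an extra non-degeneracy assumption on the top eigenvalue of $\rho^{-1/2}\sigma\rho^{-1/2}$, which the paper then removes by a perturbation argument. Your optimality direction instead uses the operator inequalities $\rho\leq\exp(\Dm_{+\infty})\sigma$ and $\sigma\leq\exp(-\Dm_{-\infty})\rho$ directly, which gives the bound for every feasible test in one line and is entirely insensitive to degeneracy. So your route is slightly more elementary and sidesteps the perturbation step altogether, while the paper's Neyman--Pearson framing makes more explicit why the extreme regime is ``boring'' (the optimal test is literally a convex combination with the trivial test). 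Both approaches land on the same achieving test, and the reduction to the pinched identities via $[\rho,\pinch\sigma\rho]=0$ and $[\pinch\rho\sigma,\sigma]=0$ is identical.
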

\begin{proof}
    Firstly, we note that we can rewrite the Neymann-Pearson test $Q_t$ as
    \begin{align}
        Q_t = \left\lbrace \rho^{1/2}\left(I-t\rho^{-1/2}\sigma\rho^{-1/2}\right)\rho^{1/2}> 0\right\rbrace.
    \end{align}
    Clearly $Q_0=I$, but from this we can also see that $Q_t=I$ for any $t<t^*$, where
    \begin{align}
        t^*:=&1/\lambda_{\max}(\rho^{-1/2}\sigma\rho^{-1/2})
        =\exp\left(\Dm_{-\infty}\reli{}\right).    
    \end{align}
    As such, we can see that the first non-trivial projective Neyman-Pearson test is given by $Q_{t^*}$. Since $Q_{t}$ are necessarily not full rank for any $t>t^*$, they must have a type-I error of at least $\lambda_{\min}(\rho)$. Thus, to get a type-I error of $0<x<\lambda_{\min}(\rho)$ we must consider a test that is a convex combination of $Q_0=I$ and $Q_{t^*}$.

    {\ctc Asssume for the moment that $\rho^{-1/2}\sigma\rho^{-1/2}$ has a non-degenerate maximal eigenvector (we will return to this below), with eigenvalue $1/t^*$ and eigenvector $\ket\psi$.} Then, this first non-trivial projective test is 
    \begin{align}
        \ctc
        Q_{t^*}=I-\frac{\rho^{-1/2}\ketbra{\psi}{\psi}\rho^{-1/2}}{\braopket{\psi}{\rho^{-1}}{\psi}}.
    \end{align}
    So a test $Q$ which is a convex combination of $Q_0$ and $Q_{t^*}$, and has type-I error of $x$, takes the form
    \begin{align}
        Q:=I-x\cdot{\ctc \rho^{-1/2}\ketbra\psi\psi\rho^{-1/2}}
        .
    \end{align}
    As $0\leq x< \lambda_{\min}(\rho)$, we have that this is a valid test, and the type-I error is simply given by $x$ as required, 
    \begin{align}
        1-\Tr (\rho Q)=x.
    \end{align}
    For the type-II we get the desired expression,
    \begin{aligns}
        \beta_x(\rho\|\sigma)
        &=\Tr (\sigma Q)\\
        &=1-x\cdot{\ctc \Tr(\sigma \rho^{-1/2}\ketbra\psi\psi\rho^{-1/2})}\\
        &=1-x\cdot{\ctc \bra{\psi}{\rho^{-1/2}\sigma\rho^{-1/2}}\ket{\psi}}\\
        &=1-x/t^*\\
        &=1-x\cdot \exp\left(-\Dm_{-\infty}\reli{}\right).
    \end{aligns}
    For $x>1-\lambda_{\min}(\rho)$ a similar argument can be given \emph{mutatis mutandis} by considering the last non-trivial test, which gives
    \begin{align}
        \beta_x\reli{}=(1-x)\exp\left(-\Dm_{+\infty}\reli{}\right).
    \end{align}
    Finally, the pinched results trivially follow from the non-pinched variants by making the substitution $\sigma\to\pinch\sigma\rho$ and $\rho\to\pinch\rho\sigma$, respectively.

    {\ctc In the above we assumed that $\rho^{-1/2}\sigma\rho^{-1/2}$ has a non-degenerate maximal eigenvalue---an assumption to which we now return. The idea now is to show that we can perturb the state $\sigma$ by an arbitrarily small amount to break any such degeneracy. Specifically, consider letting $\ket\psi$ be an arbitrary maximal eigenvector of $\rho^{-1/2}\sigma\rho^{-1/2}$, and define the (unnormalised) state $\sigma_\epsilon$ as
    \begin{align}
        \sigma_\epsilon := \sigma + \epsilon \rho^{1/2}\proj\psi\rho^{1/2}. 
    \end{align}
    We can see that this breaks the degeneracy as
    \begin{align}
        \rho^{-1/2}\sigma_\epsilon\rho^{-1/2} = \rho^{-1/2}\sigma\rho^{-1/2} + \epsilon\proj\psi,
    \end{align}
    allowing us to apply the above proof to give expressions for $\beta_x\rel{\rho}{\sigma_\epsilon}$. Next, we can see that $\beta_x\rel{\rho}{\sigma_\epsilon}\to\beta_x\rel{\rho}{\sigma}$ as $\epsilon\to 0^+$ as the difference is bounded by the trace norm,
    \begin{aligns}
        \abs{\beta_x\rel{\rho}{\sigma_\epsilon}-\beta_x\reli{}}
        &\leq \norm{\sigma_\epsilon-\sigma}_{\Tr}\\
        &=\epsilon\braopket{\psi}{\rho}{\psi}\\
        &\leq \epsilon.
    \end{aligns}
    }
\end{proof}

Applying this single-shot analysis to the case of an asymptotically large number of copies of each state, we can get asymptotic expressions for the log odds per copy that are comparable with the large deviation expressions of \cref{lem:ht_largedev}, for both non-pinched and pinched hypothesis testing problems.

\begin{lemma}[Extreme deviation analysis of hypothesis testing]
    \label{lem:ht_extremedev}
    For any $\lambda > -\log \lambda_{\min}(\rho)$:
    \begin{aligns}
        \Gamma_{\pm\lambda}\reli{}&=\mp\lambda-\Dm_{\pm\infty}\reli{},\\
        \GammaL_{\pm\lambda}\reli{}&=\mp\lambda-\DL_{\pm\infty}\reli{},\\
        \GammaR_{\pm\lambda}\reli{}&=\mp\lambda-\DR_{\pm\infty}\reli{},
    \end{aligns}
    where we recall that
    \begin{aligns}
        \Gamma_{\lambda}\reli{}:=&\lim_{n\to \infty} \frac 1n \gamma_{+\lambda n}\reli{},\\
        \GammaL_{\lambda}\reli{}:=&\lim_{n\to \infty} \frac 1n \gammaL_{+\lambda n}\reli{},\\
        \GammaR_{\lambda}\reli{}:=&\lim_{n\to \infty} \frac 1n \gammaR_{+\lambda n}\reli{}.
    \end{aligns}
\end{lemma}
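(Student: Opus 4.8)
The plan is to reduce the statement to the single-shot identities of \cref{lem:ht_extremedev_singleshot} applied to $n$-fold tensor powers, and then take the limit of the per-copy log odds. Recall that $\gamma_x\reli{}=L\bigl[\beta_{L^{-1}[x]}\reli{}\bigr]$, and analogously for the pinched variants. The first step is to observe that the hypothesis $\lambda>-\log\lambda_{\min}(\rho)$ guarantees, for all sufficiently large $n$, that the relevant type-I error level lies inside the window on which \cref{lem:ht_extremedev_singleshot} applies: for the $-\lambda$ case one needs $L^{-1}[-\lambda n]\le\lambda_{\min}\bigl(\rho^{\otimes n}\bigr)=\lambda_{\min}(\rho)^n$, and for the $+\lambda$ case $1-L^{-1}[\lambda n]\le\lambda_{\min}(\rho)^n$. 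Both hold because $\tfrac1n\log L^{-1}[-\lambda n]\to-\lambda$ and $\tfrac1n\log\bigl(1-L^{-1}[\lambda n]\bigr)\to-\lambda$ while $-\lambda<\log\lambda_{\min}(\rho)$. For the pinched statements one instead needs $x\le\lambda_{\min}\bigl(\pinch{\rho^{\otimes n}}{\sigma^{\otimes n}}\bigr)$; since pinching is a doubly stochastic channel it cannot decrease the smallest eigenvalue, so $\lambda_{\min}\bigl(\pinch{\rho^{\otimes n}}{\sigma^{\otimes n}}\bigr)\ge\lambda_{\min}(\rho)^n$ and the same window works.

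For the non-pinched claim I would then substitute $\rho\to\rho^{\otimes n}$, $\sigma\to\sigma^{\otimes n}$ into the two single-shot identities, using that $\Dm_{\pm\infty}$ is exactly additive on tensor powers --- immediate from the closed forms $\Dm_{+\infty}\reli{}=\log\lambda_{\max}(\sigma^{-1/2}\rho\sigma^{-1/2})$, $\Dm_{-\infty}\reli{}=-\log\lambda_{\max}(\rho^{-1/2}\sigma\rho^{-1/2})$ together with $\lambda_{\max}(A^{\otimes n})=\lambda_{\max}(A)^n$. Taking $\tfrac1n\log$ of each identity yields the per-copy decay of $\beta$ (respectively of $1-\beta$); applying $L[\cdot]$ and dividing by $n$, one of the two contributions in $L[\beta]=\log\beta-\log(1-\beta)$ vanishes in the limit while the other assembles into the stated $\mp\lambda$ plus $\Dm_{\pm\infty}\reli{}$. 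For $\GammaL_{\pm\lambda}$ and $\GammaR_{\pm\lambda}$ the argument is identical, except that the divergences occurring in the pinched single-shot identities, $D_{\pm\infty}\rel{\pinch{\rho^{\otimes n}}{\sigma^{\otimes n}}}{\sigma^{\otimes n}}$ and $D_{\pm\infty}\rel{\rho^{\otimes n}}{\pinch{\sigma^{\otimes n}}{\rho^{\otimes n}}}$, are not exactly linear in $n$; here I would invoke the regularized quantities $\DL_{\pm\infty}$, $\DR_{\pm\infty}$ of \cref{eq:pinched_renyi_def1,eq:pinched_renyi_def2} (whose relevant limits exist by \cref{app:pinch}) to write them as $n\,\DL_{\pm\infty}\reli{}+o(n)$ and $n\,\DR_{\pm\infty}\reli{}+o(n)$, so that the corrections wash out after dividing by $n$.

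The main obstacle I anticipate is bookkeeping rather than anything deep: one must carefully pair each of the four cases (sign of $\lambda$, times pinched or not) with the correct branch of \cref{lem:ht_extremedev_singleshot} --- the ``$\beta_{1-x}$'' branch, where the type-II error tends to $0$, versus the ``$1-\beta_x$'' branch, where it tends to $1$ --- since in $L[\beta]=\log\beta-\log(1-\beta)$ it is exactly this branch that determines which term survives the $n\to\infty$ limit and with what sign. A secondary point is verifying that the chosen error level really does lie in the admissible window for all large $n$, which in the $-\lambda$ case rests on $\Dm_{-\infty}\reli{}\ge\log\lambda_{\min}(\rho)$ (a consequence of $\rho^{-1/2}\sigma\rho^{-1/2}\le\rho^{-1}$ since $\sigma\le\iden$), ensuring that $\beta$ collapses to its limiting value at the claimed exponential rate rather than more slowly.
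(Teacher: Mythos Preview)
Your proposal is correct and follows essentially the same route as the paper: apply \cref{lem:ht_extremedev_singleshot} to the $n$-fold tensor powers, use additivity of $\Dm_{\pm\infty}$ for the non-pinched case, and invoke the regularised quantities $\DL_{\pm\infty}$, $\DR_{\pm\infty}$ for the pinched case. The one point of genuine difference is how you establish that the error level lands in the admissible window for the pinched states. The paper uses the pinching inequality $\pinch{\rho^{\otimes n}}{\sigma^{\otimes n}}\geq \rho^{\otimes n}/n^d$, which only gives $\lambda_{\min}\bigl(\pinch{\rho^{\otimes n}}{\sigma^{\otimes n}}\bigr)\geq \lambda_{\min}(\rho)^n/n^d$ and hence relies on the \emph{strict} inequality $\lambda>-\log\lambda_{\min}(\rho)$ to absorb the polynomial correction. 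Your argument---that pinching is unital and CPTP, hence $\lambda\bigl(\pinch{\rho}{\sigma}\bigr)\prec\lambda(\rho)$, so the smallest eigenvalue can only increase---is cleaner and gives the sharper bound $\lambda_{\min}\bigl(\pinch{\rho^{\otimes n}}{\sigma^{\otimes n}}\bigr)\geq\lambda_{\min}(\rho)^n$ directly. Both work, but yours avoids the logarithmic slack entirely.
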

\begin{proof}
    For the non-pinched case, we can simply apply the single-shot result, \cref{lem:ht_extremedev_singleshot}, to the states $\rho^{\otimes n}$ and~$\sigma^{\otimes n}$, using the additivity of the $\Dm_{\pm\infty}$, and expressing the type-I and -II errors in terms of log-odds.

    We will do similarly for the pinched case. Firstly, we use the pinching inequality
    \begin{align}
        \pinch{\rho^{\otimes n}}{\sigma^{\otimes n}}\geq \frac{\rho^{\otimes n}}{\abs{\mathrm{spec}(\sigma^{\otimes n})}}\geq \frac{\rho^{\otimes n}}{n^d},
    \end{align}
    and so
    \begin{align}
        \log \lambda_{\min}\left(\pinch{\rho^{\otimes n}}{\sigma^{\otimes n}}\right) \geq n\lambda_{\min}(\rho)-O(\log n).
    \end{align}
    Thus, if we have a \emph{strict} inequality $\lambda >-\log\lambda_{\min}(\rho)$, then
    \begin{align}
        n\lambda \gtev -\log \lambda_{\min}\left(\pinch{\rho^{\otimes n}}{\sigma^{\otimes n}}\right).
    \end{align}
    Given this, we can now substitute the pinched states into \cref{lem:ht_extremedev_singleshot}, which give the desired expressions for $\GammaL_{\pm \lambda}$ and $\GammaR_{\pm \lambda}$.
\end{proof}


\subsection{Transformation rates}
\label{subsec:rates}

In this section, we will take the asymptotic analysis of hypothesis testing from the previous section and extend it to transformation rates between quantum dichotomies. To be more concrete, for some sequence of errors $\epsilon_n$ and fixed states $\rho_1,\rho_2,\sigma_1,\sigma_2$, recall the definition of $R_n^*(\epsilon_n)$ as the maximum $R_n$ such that
\begin{align}
    (\rho^{\otimes n}_1, \sigma^{\otimes n}_1) \succeq_{(\epsilon_n,0)} 
    (\rho^{\otimes R_nn}_2, \sigma^{\otimes R_nn}_2).
\end{align}
We will be studying the scaling of $R_n^*(\epsilon_n)$ for various scaling regimes of $\epsilon_n$. While we will restrict below to just the case of such one-sided errors, we cover how these techniques can be extended, and what the resulting rates are, in the more general regime of two-sided errors in \cref{app:two-side}.

To spare the reader from being subjected to the phrase `for sufficiently large $n$' \textit{ad nauseam}, we use notation $\ltev$ and $\gtev$ to denote eventual inequalities for the following proofs. In other words, we will use $a_n\ltev b_n$ and $b_n\gtev a_n$ as shorthand for
\begin{align}
    \exists N:~a_n<b_n~\forall n\geq N.
\end{align}
We note that if the quantities are functions, then this will just denote pointwise eventual inequality, i.e., $f_n(x)\ltev g_n(x)$ is shorthand for 
\begin{align}
    \forall x~\exists N(x):~f_n(x)<g_n(x)~\forall n\geq N(x),
\end{align}
and not
\begin{align}
    \exists N:~f_n(x)<g_n(x)~\forall x,\forall n\geq N.
\end{align}
Upgrading from such pointwise inequalities to uniform inequalities will be important in the achievability proofs to come, requiring uniform versions of the lemmas in \cref{subsec:htass}, which are presented in \cref{app:uni}.

We start with the first-order rate, since all of the remaining results in this section are refinements of this first-order rate. Moreover, all of the remaining proofs will follow a general approach that extends the proof below. We will quantify the optimal transformation rates regime by providing both upper and lower bounds, referred to as the \emph{optimality} and \emph{achievability} bounds, respectively. In all cases these bounds will follow from \cref{lem:ht}, which provides both necessary and sufficient conditions for the existence of a transformation in terms of hypothesis testing quantities. 

The first-order transformation rate is captured by the following theorem.

\ratefirst*
\begin{proof}
    We start with optimality. Consider a rate \mbox{$R>D\reli{1}/D\reli{2}$}. As $\epsilon\in(0,1)$, we also have that \mbox{$(1\pm \epsilon)/2\in(0,1)$}, allowing us to apply \cref{lem:ht_stein}. On the input side this gives
    \begin{align}
        \lim_{n\to\infty }-\frac 1n \log \beta_{\frac{1+\epsilon}2}\rel{\rho_1^{\otimes n}}{\sigma_1^{\otimes n}}=D\reli{1},
    \end{align}
    and on the target side
    \begin{aligns}
        &\lim_{n\to\infty }-\frac 1n \log \beta_{\frac{1-\epsilon}2}\rel{\rho_2^{\otimes Rn}}{\sigma_2^{\otimes Rn}}\notag\\
        &\qquad\qquad=\lim_{m\to\infty }-\frac Rm \log \beta_{\frac{1-\epsilon}2}\rel{\rho_2^{\otimes m}}{\sigma_2^{\otimes m}}\\
        &\qquad\qquad=RD\reli{2}\\
        &\qquad\qquad>D\reli{1}.
    \end{aligns}
    If we let $x=(1+\epsilon)/2$, then
    \begin{align}
        \beta_{x}\rel{\rho_1^{\otimes n}}{\sigma_1^{\otimes n}}
        \gtev
        \beta_{x-\epsilon}\rel{\rho_2^{\otimes Rn}}{\sigma_2^{\otimes Rn}},
    \end{align}
    and so by \cref{lem:ht} this means that transformation at a rate of $R$ is eventually \emph{not} possible. Thus, $R$ provides an upper bound for asymptotic optimal transformation rate. As this argument holds for \emph{any} \mbox{$R>D\reli{1}/D\reli{2}$} this means
    \begin{align}
        \limsup_{n\to\infty} R_n^*(\epsilon)\leq \frac{D\reli{1}}{D\reli{2}}.
    \end{align}
    
    Next, we proceed to proving the achievability for commuting target dichotomies, $[\rho_2,\sigma_2]=0$. Consider a rate \mbox{$r<D\reli{1}/D\reli{2}$}. To show a rate is achievable, we need to consider the pinched hypothesis testing of the input dichotomy. Specifically, we will use the limits from \cref{lem:ht_smalldev}
    \begin{aligns}
        \lim_{n\to\infty }-\frac 1n \log \betaL_{\epsilon}\rel{\rho_1^{\otimes n}}{\sigma_1^{\otimes n}}&=D\reli{1},\\
        \lim_{n\to\infty }-\frac 1n \log \beta_{1-\epsilon}\rel{\rho_2^{\otimes rn}}{\sigma_2^{\otimes rn}}&=rD\reli{2}.
    \end{aligns}
    As \mbox{$r<D\reli{1}/D\reli{2}$}, combining these gives
    \begin{align}
        \betaL_{\epsilon}\rel{\rho_1^{\otimes n}}{\sigma_1^{\otimes n}}
        \ltev
        \beta_{1-\epsilon}\rel{\rho_2^{\otimes rn}}{\sigma_2^{\otimes rn}}.
    \end{align}
    But this is only true for specific errors, while \cref{lem:ht} requires such an inequality for \emph{all} $x$ in a range. How do we span this gap? For the first-order problem (and high-error cases of moderate and large deviations) this is easily solved by simply using the monotonicity of $\beta_x\rel{\cdot}{\cdot}$, $\betaL_x\rel{\cdot}{\cdot}$ and $\betaR_x\rel{\cdot}{\cdot}$ as functions of $x$ for fixed states. We can use this to relax the preceding inequality to
    \begin{align}
        \betaL_{x}\rel{\rho_1^{\otimes n}}{\sigma_1^{\otimes n}}
        \ltev
        \beta_{x-\epsilon}\rel{\rho_2^{\otimes rn}}{\sigma_2^{\otimes rn}}~~\forall x\in(\epsilon,1).
    \end{align}
    We note that because this set of inequalities (parameterised by $x$) follows from the previous $x$-independent inequality, there is no issue of uniformity, i.e.,~there exists an $x$-independent $N$ such that this holds for $n\geq N$. As such, we can apply \cref{lem:ht}, which gives that transformation at a rate of $r$ \emph{is} eventually achievable, and thus
    \begin{align}
        \liminf_{n\to\infty} R_n^*(\epsilon)\geq \frac{D\reli{1}}{D\reli{2}}.
    \end{align}
\end{proof}

As with hypothesis testing in the previous subsection, we will now spend the rest of this subsection giving our refinements on this first-order result for different regimes of the scaling of the error $\epsilon_n$. A summary of these different regimes is given in \cref{fig:sigmoid_rate}.


\subsubsection{Small deviation}
\label{subsubsec:rate_small}

We start with the small deviation regime, in which the error is a constant $\epsilon\in(0,1)$. The proof is broadly similar to that of the first-order rate in \cref{thm:rate_first}, but more care has to be taken to capture the second-order contribution, especially on the achievability side. Also, recall that \cref{eq:reversibility} defined the reversibility parameter as
\begin{align}
    \xi:=\frac{V(\rho_1\|\sigma_1)}{D(\rho_1\|\sigma_1)} \bigg/ \frac{V(\rho_2\|\sigma_2)}{D(\rho_2\|\sigma_2)}.
\end{align}
Then we have the following.

\ratesmalldev*
\begin{proof}
    Consider a small slack parameter $\delta>0$ and define the rate $R_n$ as
    \begin{align}
        R_n:=\frac{D(\rho_1\|\sigma_1)+\sqrt{V\reli{1}/n}\cdot S_{1/\xi}^{-1}(\epsilon)+\delta/\sqrt n}{D(\rho_2\|\sigma_2)}.
    \end{align}
    Recall that by \cref{lem:sesqui} the term $S_{1/\xi}^{-1}(\epsilon)$ can be expressed as a minimum, 
    \begin{align}
        S_{1/\xi}^{-1}(\epsilon)=\min_{x\in(\epsilon,1)} \left[\Phi^{-1}(x)-\sqrt{1/\xi}\cdot \Phi^{-1}(x-\epsilon)\right].
    \end{align}
    Let $x^*>\epsilon$ denote the value of $x$ at which this minimum is attained, such that
    \begin{align}
        R_n=&\frac{D(\rho_1\|\sigma_1)}{D(\rho_2\|\sigma_2)}
        +\frac{\delta}{\sqrt n D\reli{2}}
        +\sqrt{\frac{V\reli{1}}{n D^2\reli{2}}}\Phi^{-1}(x^*)
        \notag\\
        &\quad-\sqrt{\frac{V\reli{2}D\reli{1}}{nD^3\reli{2}}} \Phi^{-1}(x^*-\epsilon).
    \end{align}
    Next we turn to \cref{lem:ht_smalldev}. For the input dichotomy this gives
    \begin{align}
        &-\frac 1n\log\beta_{x^*}\rel{\rho_1^{\otimes n}}{\sigma_1^{\otimes n}}
        \notag\\
        &\qquad\qquad\simeq D\reli{1}+\sqrt{\frac{V\reli{1}}n}\Phi^{-1}(x^*),
    \end{align}
    and for the target we can substitute in $R_n$ to get
    \begin{aligns}
        \!\!\!&-\frac 1n\log\beta_{x^*-\epsilon}\rel{\rho_2^{\otimes R_nn}}{\sigma_2^{\otimes R_nn}}\notag\\
         \!\!\!&\qquad\simeq R_nD\reli{2}+\sqrt{\frac{R_nV\reli{2}}n}\Phi^{-1}(x^*-\epsilon)\\
         \!\!\!&\qquad= D\reli{1}+\sqrt{\frac{V\reli{1}}n}\Phi^{-1}(x^*)+\frac{\delta}{\sqrt n}\\
         \!\!\!&\qquad\simeq -\frac 1n\log\beta_{x^*}\rel{\rho_1^{\otimes n}}{\sigma_1^{\otimes n}}+\frac{\delta}{\sqrt n}.
    \end{aligns}
    Thanks to this $\delta>0$ term, we can therefore conclude that
    \begin{align}
        \beta_{x^*}\rel{\rho_1^{\otimes n}}{\sigma_1^{\otimes n}}
        \gtev
        \beta_{x^*-\epsilon}\rel{\rho_2^{\otimes R_nn}}{\sigma_2^{\otimes R_nn}},
    \end{align}
    and so by \cref{lem:ht} the transformations at the rate $R_n$ are eventually \emph{not} possible. As this is true for all $\delta>0$, this then upper bounds the optimal rate
    \begin{align}
        R_n^*(\epsilon)\lesssim \frac{D(\rho_1\|\sigma_1)+\sqrt{V\reli{1}/n}\cdot S_{1/\xi}^{-1}(\epsilon)}{D(\rho_2\|\sigma_2)}.
    \end{align}

    Now we turn to achievability. Once again consider a small slack parameter $0<\delta<\epsilon/2$, and define the rate $r_n$ as
    \begin{align}
        r_n:=&\frac{D\reli{1}}{D\reli{2}}-\frac{\delta}{\sqrt n D\reli{2}}\\\notag
        &~+\frac{1}{\sqrt n}\min_{y\in[\epsilon,1]} \Biggl[
        \frac{\sqrt{V\reli{1}}}{D\reli{2}}
        \Phi^{-1}(y-\delta)\\\notag
        &~\qquad\qquad-\sqrt{\frac{V\reli{2}D\reli{1}}{D\reli{2}^3}}\Phi^{-1}(y-\epsilon+\delta)\Biggr].
    \end{align}
     Again, for \kk{convenience}, let $y^*$ denote a the minimiser in the above optimisation over $y$. If we let $x\in(\epsilon,1)$ then applying \cref{lem:ht_smalldev} to the input gives
    \begin{align}
        &-\frac 1n\log \betaL_{x-\delta}\rel{\rho_1^{\otimes n}}{\sigma_1^{\otimes n}}\notag\\
        &\qquad\simeq D\reli{1}
        +\sqrt{\frac{V\reli{1}}{n}}\cdot\Phi^{-1}(x-\delta),
    \end{align}
    and to the target gives
    \begin{aligns}
        &-\frac 1n\log \beta_{x-\epsilon+\delta}\rel{\rho_2^{\otimes r_nn}}{\sigma_2^{\otimes r_nn}}\notag\\
        &\quad\qquad\simeq r_nD\reli{2}
        \\\notag&\qquad\qquad~~
        +\sqrt{\frac{r_nV\reli{2}}{n}}\cdot\Phi^{-1}(x-\epsilon+\delta),\\
        &\quad\qquad\simeq D\reli{1}-\delta/\sqrt n
        \\\notag&\qquad\qquad
        +\sqrt{\frac{V\reli{2}D\reli{1}}{nD\reli{2}}}\Phi^{-1}(x-\epsilon+\delta)
        \\\notag&\qquad\qquad
        +\sqrt{\frac{V\reli{1}}{n}}\Phi^{-1}(y^*-\delta)
        \\\notag&\qquad\qquad
        -\sqrt{\frac{V\reli{2}D\reli{1}}{nD\reli{2}}}\Phi^{-1}(y^*-\epsilon+\delta)
        .
    \end{aligns}
    Combining these we have
    \begin{align}
        &-\frac 1n\log \beta_{x-\epsilon+\delta}\rel{\rho_2^{\otimes r_nn}}{\sigma_2^{\otimes r_nn}}\notag\\
        &\quad\qquad\simeq -\frac 1n\log \betaL_{x-\delta}\rel{\rho_1^{\otimes n}}{\sigma_1^{\otimes n}}-\delta/\sqrt n
        \\\notag&\qquad\qquad
        ~~-\sqrt{\frac{V\reli{1}}{n}}\Phi^{-1}(x-\delta)
        \\\notag&\qquad\qquad
        ~~+\sqrt{\frac{V\reli{2}D\reli{1}}{nD\reli{2}}}\Phi^{-1}(x-\epsilon+\delta)
        \\\notag&\qquad\qquad
        ~~-\sqrt{\frac{V\reli{2}D\reli{1}}{nD\reli{2}}}\Phi^{-1}(y^*-\epsilon+\delta)
        \\\notag&\qquad\qquad
        ~~+\sqrt{\frac{V\reli{1}}{n}}\Phi^{-1}(y^*-\delta)
        .
    \end{align}
    Recalling that $y^*$ was defined as the minimiser over just such an expression, we get
    \begin{align}
        &-\frac 1n\log \beta_{x-\epsilon+\delta}\rel{\rho_2^{\otimes r_nn}}{\sigma_2^{\otimes r_nn}}\notag\\
        &\qquad\qquad\lesssim -\frac 1n\log \betaL_{x-\delta}\rel{\rho_1^{\otimes n}}{\sigma_1^{\otimes n}}-\delta/\sqrt n.
    \end{align}
    Because of $\delta>0$, this in turn implies the eventual inequality
    \begin{align}
        \betaL_{x-\delta}\rel{\rho_1^{\otimes n}}{\sigma_1^{\otimes n}}
        \ltev \beta_{x-\epsilon+\delta}\rel{\rho_2^{\otimes r_nn}}{\sigma_2^{\otimes r_nn}}.
    \end{align}
    Finally we can relax out the slack parameter, once again using the fact that $\beta_x$ and $\betaL_x$ are monotone decreasing as functions of $x$, giving 
    \begin{align}
        \label{eq:type2ineq}
        \betaL_{x}\rel{\rho_1^{\otimes n}}{\sigma_1^{\otimes n}}
        \leq \beta_{x-\epsilon}\rel{\rho_2^{\otimes r_nn}}{\sigma_2^{\otimes r_nn}}.
    \end{align}
    
    We note that in the above proof we have skipped over the issue of uniformity. While we did show that \cref{eq:type2ineq} holds for all $x$ eventually, it still remains to be seen that it eventually holds for all $x$, the latter of which would be required to apply \cref{lem:ht}. To put it less confusingly, we have shown that there exists an $N$ such that \cref{eq:type2ineq} holds for all $n>N$, but have not ruled out the possibility that $N$ depends on~$x$. Such a dependence might mean that there is no $x$-independent $N$ beyond which this expression holds for all $x$, which is what would be needed by \cref{lem:ht}. However, if we swap out \cref{lem:ht_smalldev} with its uniform version (\cref{lem:ht_smalldev_uni}, presented and proven in \cref{app:uni}), then we do indeed get an $N$ that is independent of $x$ (though still dependent on $\rho_1,\rho_2,\sigma_1,\sigma_2,\epsilon,\delta$ of course), removing this issue. Given this, we can now conclude that \cref{eq:type2ineq} eventually holds for all $x\in(\epsilon,1)$. Having dealt with this uniformity issue, we can we can return to \cref{lem:ht}, which allows us to conclude that transformation at the rate $r_n$ \emph{is} eventually possible. This held true for all small $\delta>0$. Taking the limit $\delta\to 0^+$, and using the continuity of $\Phi^{-1}$ on $(0,1)$, we can see that this rate does indeed limit to the desired expression, lower bounding the optimal rate as 
    \begin{align}
        R_n^*(\epsilon_n)\gtrsim \frac{D(\rho_1\|\sigma_1)+\sqrt{V\reli{1}/n}\cdot S_{1/\xi}^{-1}(\epsilon)}{D(\rho_2\|\sigma_2)}.
    \end{align}
\end{proof}


\subsubsection{Large deviation}
\label{subsubsec:rate_large}

Now we turn to the large deviation regime, in which we consider errors which are exponentially approaching either $0$ or $1$, which we refer to a low and high-error.
The general structure of the proof follows that of the small deviation case, but will be split into two sub-regimes: high and low-error. In high-error we have that $\epsilon_n$ is exponentially close to $1$, and so the region $x\in(\epsilon_n,1)$ is quite small. This allows us to get an optimal expression for the transformation rate with a single application of the large deviation analysis of hypothesis testing \cref{lem:ht_largedev}, similar to the proof of the first-order rate \cref{thm:rate_first}. For the low-error case, however, the region $x\in(\epsilon_n,1)$ is quite large, requiring us to consider hypothesis testing for a whole interval of possible error exponents. Here the proof will more closely follow that of the small deviation case, \cref{thm:rate_smalldev}, running into the same uniformity issues. As the high-error proof is simpler, we shall start there.

\ratelargedevhi*
\begin{proof}
    Consider a rate $R$ such that
    \begin{align}
        R> \inf_{\substack{t_1>1\\0<t_2<1}}\frac{\Dm_{t_1}\reli{1}+\left(\frac{t_1}{t_1-1}+\frac{t_2}{1-t_2}\right)\lambda}{\Dp_{t_2}\reli{2}}.
        \label{eq:rate_largedev_hi_rate}
    \end{align}
    Rearranging this gives
    \begin{align}
        &\sup_{t_1>1}-\Dm_{t_1}\reli{1}+\frac{t_1}{1-t_1}\lambda\notag\\
        &\qquad\qquad>~ 
        \inf_{0<t_2<1}-R\Dp_{t_2}\reli{2}+\frac{t_2}{1-t_2}\lambda.
    \end{align}
    Recalling the definition of $\Gamma$ from \cref{lem:ht_largedev}, this is equivalent to
    \begin{align}
        \Gamma_{+\lambda}\reli{1}>R\Gamma_{-\lambda/R}\reli{2}.
        \label{eq:rate_largedev_hi_opt}
    \end{align}
    The idea now is to connect this rate to the large deviation hypothesis testing quantities from \cref{lem:ht_largedev}. Consider hypothesis testing of the input/target with a type-I error log odds of $\pm\lambda$. \cref{lem:ht_largedev} gives for the input dichotomy
    \begin{align}
        \lim_{n\to\infty} \frac 1n \gamma_{+\lambda n}\rel{\rho_1^{\otimes n}}{\sigma_1^{\otimes n}}=\Gamma_{+\lambda}\reli{1},
    \end{align}
    and for the target dichotomy
    \begin{aligns}
        &\lim_{n\to\infty} \frac 1n \gamma_{-\lambda n}\rel{\rho_2^{\otimes Rn}}{\sigma_1^{\otimes Rn}}\notag
        \\
        &\qquad\qquad\qquad
        =\lim_{m\to\infty} \frac Rm \gamma_{-\lambda m/R}\rel{\rho_2^{\otimes m}}{\sigma_1^{\otimes m}}\\
        &\qquad\qquad\qquad=R\Gamma_{-\lambda/R}\reli{2}.
    \end{aligns}
    We can put the above limits back in terms of the type-II error \emph{probability} as
    \begin{subequations}
    \label{eq:rate_largedev_hi_beta}
    \begin{align}
        \lim_{n\to\infty}\frac 1n L\left[ \beta_{L^{-1}[-\lambda n]}\rel{\rho_1^{\otimes n}}{\sigma_1^{\otimes n}} \right]&=\Gamma_{+\lambda}\reli{1},\\
        \lim_{n\to\infty}\frac 1n L\left[ \beta_{L^{-1}[+\lambda n]}\rel{\rho_2^{\otimes Rn}}{\sigma_2^{\otimes Rn}} \right]&=R\Gamma_{-\lambda/R}\reli{2},
    \end{align}
    \end{subequations}
    where we recall that $L[x]:=\log\frac{x}{1-x}$. As $\epsilon_n$ is exponentially approaching $1$ with an exponent of $\lambda$, $(1\pm \epsilon_n)/2$ are exponentially approaching $0$ and $1$ respectively, both also with an exponent of $\lambda$. Putting this in terms of log odds as we did in \cref{subsubsec:ht_large}, this means that 
    \begin{align}
        \lim_{n\to\infty}\frac 1nL\left[\frac{1\pm \epsilon_n}2\right]= \pm \lambda,
    \end{align}
    Using the uniformity of the large deviation analysis of hypothesis testing shown in \cref{lem:ht_largedev_uni} then implies that \cref{eq:rate_largedev_hi_beta} can be extended to 
    \begin{aligns}
        \lim_{n\to\infty}\frac 1n L\left[ \beta_{\frac{1+\epsilon_n}2}\rel{\rho_1^{\otimes n}}{\sigma_1^{\otimes n}} \right]&= \Gamma_{+\lambda}\reli{1},\\
        \lim_{n\to\infty}\frac 1n L\left[ \beta_{\frac{1-\epsilon_n}2}\rel{\rho_2^{\otimes Rn}}{\sigma_2^{\otimes Rn}} \right]&= R\Gamma_{-\lambda/R}\reli{2}.
    \end{aligns}
    Recalling back to \cref{eq:rate_largedev_hi_opt}, and using the monotonicity of $L\left[\cdot\right]$, then we have
    \begin{align}
        \beta_{x_n}\rel{\rho_1^{\otimes n}}{\sigma_1^{\otimes n}} \gtev
        \beta_{x_n-\epsilon_n}\rel{\rho_2^{\otimes Rn}}{\sigma_2^{\otimes Rn}}.
    \end{align}
    for $x_n:=\frac{1+\epsilon_n}{2}$. By \cref{lem:ht} this means that transformation at a rate of $R$ is eventually \emph{not} possible. As this held for \emph{any} $R$ above satisfying \cref{eq:rate_largedev_hi_rate} this means that this implies a corresponding upper bound on the optimal rate,
    \begin{align}
        \limsup_{n\to\infty }R_n^*(\epsilon_n)\leq \inf_{\substack{t_1>1\\0<t_2<1}}\frac{\Dm_{t_1}\reli{1}+\left(\frac{t_1}{t_1-1}+\frac{t_2}{1-t_2}\right)\lambda}{\Dp_{t_2}\reli{2}},
    \end{align}
    as required.
    
    Next consider a rate $r$ such that
    \begin{align}
        r< \inf_{\substack{t_1>1\\0<t_2<1}}\frac{\DL_{t_1}\reli{1}+\left(\frac{t_1}{t_1-1}+\frac{t_2}{1-t_2}\right)\lambda}{D_{t_2}\reli{2}}.
        \label{eq:rate_largedev_hi_rate2}
    \end{align}
    Similar to optimality, this can be rearranged into the inequality
    \begin{align}
        \GammaL_{+\lambda}\reli{1}<r\Gamma_{-\lambda/r}\reli{2},
    \end{align}
    and so
    \begin{align}
        \gammaL_{+\lambda n} \rel{\rho_1^{\otimes n}}{\sigma_1^{\otimes n}} \ltev \gamma_{-\lambda n} \rel{\rho_2^{\otimes rn}}{\sigma_2^{\otimes rn}},
    \end{align}
    or equivalently
    \begin{align}
        \betaL_{L^{-1}\left[+\lambda n\right]} \rel{\rho_1^{\otimes n}}{\sigma_1^{\otimes n}} \ltev \beta_{L^{-1}\left[-\lambda n\right]} \rel{\rho_2^{\otimes rn}}{\sigma_2^{\otimes rn}}.
    \end{align}
    Recalling that $\epsilon_n=1-\exp(-\lambda n)$, we have
    \begin{aligns}
        \lim_{n\to\infty} \frac 1n L\left[1-\epsilon_n\right]&=-\lambda,\\
        \lim_{n\to\infty} \frac 1n L\left[\epsilon_n\right]&=+\lambda,
    \end{aligns}
    and therefore we eventually have
    \begin{align}
        \betaL_{\epsilon_n} \rel{\rho_1^{\otimes n}}{\sigma_1^{\otimes n}} \ltev \beta_{1-\epsilon_n} \rel{\rho_2^{\otimes rn}}{\sigma_2^{\otimes rn}}.
    \end{align}
    Lastly we use monotonicity of the type-II error, which allows us to relax this to the more broader inequality
    \begin{align}
        \betaL_{x} \rel{\rho_1^{\otimes n}}{\sigma_1^{\otimes n}} \ltev \beta_{x-\epsilon_n} \rel{\rho_2^{\otimes rn}}{\sigma_2^{\otimes rn}},
    \end{align}
    for all $x\in(\epsilon_n,1)$. We note that unlike with the proof of the small deviation rate, \cref{lem:ht_smalldev}, there is no concern about the uniformity of the asymptotic analysis of hypothesis testing, as we have only need to apply the large deviation analysis \cref{lem:ht_largedev} for a single error exponent~$\lambda$. 
    This diversion aside, we can now apply \cref{lem:ht}, which allows us to conclude that the rate $r$ \emph{is} eventually achievable. As this was true for any rate of the form \cref{eq:rate_largedev_hi_rate2} this implies a corresponding lower bound on the optimal rate,
    \begin{align}
        \liminf_{n\to\infty }R_n^*(\epsilon_n)\geq \inf_{\substack{t_1>1\\0<t_2<1}}\frac{\Dm_{t_1}\reli{1}+\left(\frac{t_1}{t_1-1}+\frac{t_2}{1-t_2}\right)\lambda}{D_{t_2}\reli{2}}.
    \end{align}
\end{proof}

Next, we turn to the trickier case of low-error. Before stating and proving the result, we will need some definitions. As we saw in the high-error case, the proof came down to satisfying inequalities of the form
\begin{aligns}
    \Gamma_{+\lambda}\rel{\rho_1^{\otimes n}}{\rho_2^{\otimes n}}
    &\leq 
    R\Gamma_{-\lambda/R}\rel{\rho_2^{\otimes Rn}}{\rho_2^{\otimes Rn}},\\
    \GammaL_{+\lambda}\rel{\rho_1^{\otimes n}}{\rho_2^{\otimes n}}
    &\leq 
    R\Gamma_{-\lambda/R}\rel{\rho_2^{\otimes Rn}}{\rho_2^{\otimes Rn}}
    ,\\
    \GammaR_{+\lambda}\rel{\rho_1^{\otimes n}}{\rho_2^{\otimes n}}
    &\leq 
    R\Gamma_{-\lambda/R}\rel{\rho_2^{\otimes Rn}}{\rho_2^{\otimes Rn}}
    .
\end{aligns}
In the low-error case, we will need inequalities of the form
\begin{aligns}
    \Gamma_{+\mu}\rel{\rho_1^{\otimes n}}{\rho_2^{\otimes n}}
    &\leq 
    R\Gamma_{+\mu/R}\rel{\rho_2^{\otimes Rn}}{\rho_2^{\otimes Rn}},\\
    \GammaL_{+\mu}\rel{\rho_1^{\otimes n}}{\rho_2^{\otimes n}}
    &\leq 
    R\Gamma_{+\mu/R}\rel{\rho_2^{\otimes Rn}}{\rho_2^{\otimes Rn}}
    ,\\
    \GammaR_{+\mu}\rel{\rho_1^{\otimes n}}{\rho_2^{\otimes n}}
    &\leq 
    R\Gamma_{+\mu/R}\rel{\rho_2^{\otimes Rn}}{\rho_2^{\otimes Rn}}
    .
\end{aligns}
Importantly, for the low-error case, we will need to satisfy these not just for a single $\mu$, but for all $-\lambda\leq \mu\leq \lambda$ (see the below proof for details). As such, it will be helpful to define the rates which saturate the above inequalities. Specifically, let $\overline r(\mu)$, $\lefthat r(\mu)$ and $\righthat r(\mu)$ denote the largest rates satisfying these inequalities, in the non-pinched/left-pinched/right-pinched cases respectively. By expanding the definitions of $\Gamma$/$\GammaL$/$\GammaR$, one can come up with explicit formulations of these rates, which share their piece-wise structure, specifically
\begin{aligns}
    \overline r(\mu):=&\begin{dcases}
        r_1(\mu) & \mu<-D\rel{\sigma_1}{\rho_1},\\
        \overline r_2(\mu) & -D\rel{\sigma_1}{\rho_1}<\mu<0,\\
        r_3(\mu) & \mu>0,
    \end{dcases}\\
    \lefthat r(\mu):=&\begin{dcases}
        \lefthat r_1(\mu) & \mu<-\Dstar\rel{\sigma_1}{\rho_1},\\
        \lefthat r_2(\mu) & -\Dstar\rel{\sigma_1}{\rho_1}<\mu<0,\\
        r_3(\mu) & \mu>0,
    \end{dcases}\\
    \righthat r(\mu):=&\begin{dcases}
        r_1(\mu) & \mu<-D\rel{\sigma_1}{\rho_1},\\
        \righthat r_2(\mu) & -D\rel{\sigma_1}{\rho_1}<\mu<0,\\
        \righthat r_3(\mu) & \mu>0,
    \end{dcases}
\end{aligns}
where 
\begin{aligns}
    r_1(\mu):=&\sup_{t_2<0}\inf_{t_1<0}\frac{-\Dm_{t_1}\reli{1}+\left(\frac{t_1}{t_1-1}-\frac{t_2}{t_2-1}\right)\mu}{-\Dm_{t_2}\reli{2}},\\
    \overline r_2(\mu):=&\inf_{0<t_2<1}\sup_{0<t_1<1}\frac{\Dp_{t_1}\reli{1}+\left(\frac{t_1}{1-t_1}-\frac{t_2}{1-t_2}\right)\mu}{\Dp_{t_2}\reli{2}},\\
    r_3(\mu):=&\sup_{t_2>1}\inf_{t_1>1}\frac{\Dm_{t_1}\reli{1}+\left(\frac{t_1}{t_1-1}-\frac{t_2}{t_2-1}\right)\mu}{\Dm_{t_2}\reli{2}},
\end{aligns}
and
\begin{aligns}
    \lefthat r_1(\mu):=&\sup_{t_2<0}\inf_{t_1<0}\frac{-\DL_{t_1}\reli{1}+\left(\frac{t_1}{t_1-1}-\frac{t_2}{t_2-1}\right)\mu}{-D_{t_2}\reli{2}},\\
    \lefthat r_2(\mu):=&\inf_{0<t_2<1}\sup_{0<t_1<1}\frac{\DL_{t_1}\reli{1}+\left(\frac{t_1}{1-t_1}-\frac{t_2}{1-t_2}\right)\mu}{D_{t_2}\reli{2}},\\
    \righthat r_2(\mu):=&\inf_{0<t_2<1}\sup_{0<t_1<1}\frac{\DR_{t_1}\reli{1}+\left(\frac{t_1}{1-t_1}-\frac{t_2}{1-t_2}\right)\mu}{D_{t_2}\reli{2}},\\
    \righthat r_3(\mu):=&\sup_{t_2>1}\inf_{t_1>1}\frac{\DR_{t_1}\reli{1}+\left(\frac{t_1}{t_1-1}-\frac{t_2}{t_2-1}\right)\mu}{D_{t_2}\reli{2}}.
\end{aligns}
Lastly, similar to $\overline r(\mu)$, we define $\widecheck r(\mu)$ as
\begin{align}
    \widecheck r(\mu):=&\begin{dcases}
        r_1(\mu) & \mu<-D\rel{\sigma_1}{\rho_1},\\
        \widecheck r_2(\mu) & -D\rel{\sigma_1}{\rho_1}<\mu<0,\\
        r_3(\mu) & \mu>0,
    \end{dcases}
\end{align}
where
\begin{align}
    \widecheck r_2(\mu):=&\inf_{0<t_2<1}\sup_{0<t_1<1}\frac{\Dm_{t_1}\reli{1}+\left(\frac{t_1}{1-t_1}-\frac{t_2}{1-t_2}\right)\mu}{D_{t_2}\reli{2}}.
\end{align}

These definitions in hand, we can turn to the low-error rate.

\ratelargedevlo*
\begin{proof}
    We once again start with optimality. Let \mbox{$0<\delta<\lambda$} be a small constant, and consider a rate $R$ such that
    \begin{align}
        R>\min_{-\lambda+\delta\leq \mu\leq \lambda-\delta}\overline r(\mu).
    \end{align}
    This means that there exists a $-\lambda< \mu^*<\lambda$ such that $R>\overline r(\mu^*)$, and therefore that
    \begin{align}
        \Gamma_{\mu^*}\reli{1}>R\Gamma_{\mu^*/R}\reli{2}.
    \end{align}
    By a similar chain of reasoning to that used in the proof of \cref{thm:rate_largedev_hi}, this implies
    \begin{align}
        \beta_{L^{-1}[\mu^*n]}\rel{\rho_1^{\otimes n}}{\sigma_1^{\otimes n}} \gtev \beta_{L^{-1}[\mu^*n]}\rel{\rho_2^{\otimes Rn}}{\sigma_2^{\otimes Rn}}.
    \end{align}
    Let $x_n:=L^{-1}[\mu^* n]$. As $\mu^*>-\lambda$ we have that $x_n$ dominates over $\epsilon_n$, and so the monotonicity of $\beta_x$ allows us to relax this to 
    \begin{align}
        \beta_{x_n}\rel{\rho_1^{\otimes n}}{\sigma_1^{\otimes n}}\gtev \beta_{x_n-\epsilon_n}\rel{\rho_2^{\otimes Rn}}{\sigma_2^{\otimes Rn}}.
    \end{align}
    By \cref{lem:ht} this means that transformation at a rate of $R$ is eventually \emph{not} possible. If we now take $\delta\to 0^+$, this gives a upper bound on the optimal rate of
    \begin{align}
        \limsup_{n\to\infty} R_n^*(\epsilon_n)\leq \min_{-\lambda\leq \mu\leq \lambda}\overline r(\mu).
    \end{align}
    
    Now to achievability. Let $\delta>0$ be a small constant, and consider a rate $r$ such that
    \begin{align}
        r<\min_{-\lambda-\delta\leq\mu\leq\lambda+\delta}\lefthat r(\mu).
    \end{align}
    This means that
    \begin{align}
        \GammaL_{\mu}\reli{1}<r\Gamma_{\mu/r}\reli{2},
    \end{align}
    and thus by \cref{lem:ht_largedev}
    \begin{align}
        \betaL_{L^{-1}[\mu n]}\rel{\rho_1^{\otimes n}}{\sigma_1^{\otimes n}} \ltev \beta_{L^{-1}[\mu n]}\rel{\rho_2^{\otimes Rn}}{\sigma_2^{\otimes Rn}},
    \end{align}
    for any $-\lambda-\delta\leq\mu\leq\lambda+\delta$. Note that applying \cref{lem:ht_largedev} only gives this convergence pointwise, but if we swap this out for the uniform version (\cref{lem:ht_largedev_uni} given in \cref{app:uni}), then this can be strengthened to a uniform statement. Specifically, we get that for sufficiently large $n$, this holds for all $\mu$ such that $\abs{\mu}\leq \lambda+\delta$ in that range. Recalling that $\epsilon_n:=\exp(-\lambda n)$, and therefore corresponds to a log odds per copy of $-\lambda$, we can see that for any probability $y_n\in(\epsilon_n/2,1-\epsilon_n/2)$ we have
    \begin{align}
        \frac 1n L(y_n)\in [-\lambda,\lambda]\subset(-\lambda-\delta,\lambda+\delta)
    \end{align}
    for sufficiently large $n$. As such, we have
    \begin{align}
        \betaL_{y_n}\rel{\rho_1^{\otimes n}}{\sigma_1^{\otimes n}} \ltev \beta_{y_n}\rel{\rho_2^{\otimes Rn}}{\sigma_2^{\otimes Rn}},
    \end{align}
    for all such $y_n$. Using the monotonicity of $\beta_x$ and $\betaL_x$ allows us to relax this to
    \begin{align}
        \betaL_{y_n-\epsilon_n/2}\rel{\rho_1^{\otimes n}}{\sigma_1^{\otimes n}} \ltev \beta_{y_n+\epsilon_n/2}\rel{\rho_2^{\otimes Rn}}{\sigma_2^{\otimes Rn}}.
    \end{align}
    Lastly, we shift this by $x_n:=y_n-\epsilon/2$, which yields
    \begin{align}
        \betaL_{y_n-\epsilon_n/2}\rel{\rho_1^{\otimes n}}{\sigma_1^{\otimes n}} \ltev \beta_{y_n+\epsilon_n/2}\rel{\rho_2^{\otimes Rn}}{\sigma_2^{\otimes Rn}},
    \end{align}
    for $x_n\in(\epsilon_n,1)$. As in deriving this inequality we employed not just the pointwise \cref{lem:ht_largedev}, but the uniform \cref{lem:ht_largedev_uni}, we therefore have it uniformly, which allows us to utilise \cref{lem:ht}. This in turn tells us that transformation at rate $r$ \emph{is} eventually possible. Taking $\delta\to0^+$, this yields the corresponding lower bound on the optimal rate of 
    \begin{align}
        \liminf_{n\to\infty} R_n^*(\epsilon_n)\geq \min_{-\lambda\leq \mu\leq\lambda}\lefthat r(\mu).
    \end{align}

    Repeating the above argument for $\righthat r(\mu)$ also gives an achievability bound
    \begin{align}
        \liminf_{n\to\infty} R_n^*(\epsilon_n)\geq \min_{-\lambda\leq \mu\leq\lambda}\righthat r(\mu).
    \end{align}
    Combining these gives
    \begin{align}
        \liminf_{n\to\infty} R_n^*(\epsilon_n)\geq \min_{-\lambda\leq \mu\leq\lambda}\max\left\lbrace \lefthat r(\mu),\righthat r(\mu)\right\rbrace.
    \end{align}
    By applying \cref{lem:maxpinch} it can be shown that
    \begin{align}
    \max\lbrace \lefthat r(\mu), \righthat r(\mu) \rbrace=\widecheck r(\mu),
    \end{align}
    giving the final achievability bound
    \begin{align}
        \liminf_{n\to\infty} R_n^*(\epsilon_n)\geq \min_{-\lambda\leq \mu\leq\lambda}\widecheck r(\mu).
    \end{align}
\end{proof}


\subsubsection{Moderate deviation}
\label{subsubsec:rate_moderate}

So far we have considered constant error and exponentially decaying error, which leaves a gap of errors which decay sub-exponentially, known as the moderate deviation regime. Much like the large deviation case, this will contain a slightly easier high-error case and a slightly trickier low-error case, and the proof will follow as a streamlined version of the proof used for \cref{thm:rate_largedev_lo,thm:rate_largedev_hi}. Recall from \cref{eq:reversibility} that the reversibility parameter is defined as
\begin{align}
    \xi:=\frac{V(\rho_1\|\sigma_1)}{D(\rho_1\|\sigma_1)} \bigg/ \frac{V(\rho_2\|\sigma_2)}{D(\rho_2\|\sigma_2)}.
\end{align}
Then we have the following:
 \ratemoddev*
\begin{proof}
    We begin with the more involved low-error case of $R_n^*(\epsilon_n)$, returning to the high-error case of $R_n^*(1-\epsilon_n)$ at the end of the proof. As is customary, we start with optimality. Let \mbox{$0<\lambda'<\lambda$} be a constant, and consider a rate $R_n$ defined
    \begin{align}
        R_n:=\frac{D(\rho_1\|\sigma_1)
        -\abs{1-\xi^{-1/2}}\sqrt{2\lambda' V\reli{1}n^{a-1}}
        }{D(\rho_2\|\sigma_2)}.
    \end{align}
    Applying \cref{lem:ht_moddev} to the input state, we get
    \begin{align}
        \frac 1n \gamma_{\pm \lambda n^{a}}\rel{\rho_1^{\otimes n}}{\sigma_1^{\otimes n}}\simeq -D\reli{1}\mp\sqrt{2\lambda V\reli{1}n^{a-1}},
    \end{align}
    and for the target state we have
    \begin{aligns}
        &\frac1n \gamma_{\pm \lambda n^{a}}\rel{\rho_2^{\otimes R_nn}}{\sigma_2^{\otimes R_nn}}\notag\\
        &\qquad\simeq -R_nD\reli{1}\mp\sqrt{2\lambda R_nV\reli{2}n^{a-1}}.
    \end{aligns}
    Taking a difference of these and expanding out $R_n$ gives
    \begin{aligns}
        &\frac 1n \gamma_{\pm \lambda n^{a}}\rel{\rho_1^{\otimes n}}{\sigma_1^{\otimes n}}-\frac1n \gamma_{\pm \lambda n^{a}}\rel{\rho_2^{\otimes R_nn}}{\sigma_2^{\otimes R_nn}}\notag\\
        &\qquad \simeq -\sqrt{2\lambda 'V\reli{1}n^{a-1}}\abs{1-\xi^{-1/2}}\\
        &\qquad\qquad\mp
        \sqrt{2\lambda V\reli{1}n^{a-1}}\left[1-\xi^{-1/2}\right]. \notag
    \end{aligns}
    So if we take $s:=\mathrm{sgn}(\xi^{-1/2}-1)$ to be the sign which makes this second term positive, then $\lambda'<\lambda$ tells us that this term must asymptotically dominate, and as such we can conclude
    \begin{align}
        \frac 1n \gamma_{s\lambda n^{a}}\rel{\rho_1^{\otimes n}}{\sigma_1^{\otimes n}}\gtev \frac1n \gamma_{s\lambda n^{a}}\rel{\rho_2^{\otimes R_nn}}{\sigma_2^{\otimes R_nn}}.
    \end{align}
    Recalling that $\epsilon_n:=\exp(-\lambda n^a)$, this means that $\epsilon_n$ has an asymptotic log odds per copy of $-\lambda n^a$, as does $2\epsilon_n$. Using this, we can re-express the above in terms of the type-II error probabilities as
    \begin{aligns}
        \xi>1:& & \beta_{2\epsilon_n}\rel{\rho_1^{\otimes n}}{\sigma_1^{\otimes n}}&\gtev \beta_{\epsilon_n}\rel{\rho_2^{\otimes R_nn}}{\sigma_2^{\otimes R_nn}},\\
        \xi<1:& & \beta_{1-\epsilon_n}\rel{\rho_1^{\otimes n}}{\sigma_1^{\otimes n}}&\gtev \beta_{1-2\epsilon_n}\rel{\rho_2^{\otimes R_nn}}{\sigma_2^{\otimes R_nn}}.
    \end{aligns}
    By \cref{lem:ht} this means that transformation at a rate of $R_n$ is asymptotically not possible, and thus that
    \begin{align}
        R^*_n(\epsilon_n)&\lesssim \frac{
        D(\rho_1\|\sigma_1)
        -\abs{1-\xi^{-1/2}}\sqrt{2\lambda V\reli{1}n^{a-1}}
        }{D(\rho_2\|\sigma_2)}.
    \end{align}
    
    The achievability proof follows similarly. The idea is that the bottleneck will once appear at log odds of $\pm \lambda n^{a-1}$, and to satisfy both the rate will require an absolute value around the term $1-\xi^{1/2}$. For this to give achievability we will need to use the uniform version of the moderate deviation analysis of hypothesis testing (\cref{lem:ht_moddev_uni}, presented in \cref{app:uni}).
    
    As for the high-error case this sign issue does not arise. In this case we can follow an approach similar to the achievability proof of \cref{thm:rate_first}. By using \cref{lem:ht_moddev} we can show that 
    \begin{align}
        \beta_{\frac{1+\epsilon_n}2}\rel{\rho_1^{\otimes n}}{\sigma_1^{\otimes n}}\gtev \beta_{\frac{1-\epsilon_n}2}\rel{\rho_2^{\otimes R_n n}}{\sigma_1^{\otimes R_n n}}
    \end{align}
    for an appropriately chosen rate $R_n$ which will yield the optimality bound, and for achievability we first show
    \begin{align}
        \betaL_{\epsilon_n}\rel{\rho_1^{\otimes n}}{\sigma_1^{\otimes n}}\ltev \beta_{1-\epsilon_n}\rel{\rho_2^{\otimes r_n n}}{\sigma_1^{\otimes r_n n}},
    \end{align}
    and then use monotonicity to extend this to the ordering required by \cref{lem:ht}.
\end{proof}


\subsubsection{Extreme deviation}
\label{subsubsec:rate_extreme}

The argument for the zero-error case follows similarly to the low-error large deviation case.

\ratezero*
\begin{proof}
    As this is the zero-error case, the optimality side is pretty straightforward. Any additive and data-processing quantity $Q\rel{\cdot}{\cdot}$ puts a single-shot bound on the largest possible transformation rate for all $n$ of the form
    \begin{align}
        R_n^*(\epsilon) \leq \frac{Q\reli{1}}{Q\reli{2}}.
    \end{align}
    If we consider the minimal relative entropies $\Dm_\alpha$, this gives
    \begin{align}
        R_n^*(0) \leq 
        \min_{\alpha\in\overline{\mathbb R}} \frac{\Dm_\alpha\reli{1}}{\Dm_\alpha\reli{2}}.
    \end{align}
    In the case of coherent outputs one could include other possible monotones $Q$, which could constrain the zero-error rate further. 

    Now, we turn to the tricky part, achievability. Consider a rate constant $r$ such that 
    \begin{align}
        r<\inf_{\alpha\in\mathbb R} \frac{\DL_\alpha\reli{1}}{D_\alpha\reli{2}}.
    \end{align}
    We note that this rate is almost of the form we want, but involves the pinched relative entropy and not the minimal, and is therefore suboptimal---we will return to this. We want to prove that
    \begin{align}
        \betaL_x\rel{\rho_1^{\otimes n}}{\sigma_1^{\otimes n}}\stackrel{!}<\beta_x\rel{\rho_2^{\otimes rn}}{\sigma_2^{\otimes rn}}
    \end{align}
    eventually holds for all $x$. To do this, we will need to combine both the extreme and large deviation analysis.

    First, we start with high errors. Noticing that \mbox{$r<\DL_{+\infty}\reli{1}/D_{+\infty}\reli{2}$} and recalling that $\DL_\alpha$ is defined as the pinched-and-regularised relative entropy, this means that
    \begin{align}
        r\ltev\frac{
        D_{+\infty}
        \rel{\pinch{\rho_1^{\otimes n}}{\sigma_1^{\otimes n}}}{\sigma_1^{\otimes n}}}{nD_{+\infty}\reli{2}}.
    \end{align}
    The pinching inequality gives 
    \begin{align}
        \lambda_{\min}\left(\pinch{\rho_1^{\otimes n}}{\sigma_1^{\otimes n}}\right) \geq\frac{\lambda_{\min}(\rho)^n}{\abs{\mathrm{spec}(\sigma^{\otimes n})}} \geq \lambda_{\min}^{dn}(\rho),
    \end{align}
    so any \mbox{$x\leq \min\lbrace\lambda_{\min}^d(\rho_1),\lambda_{\min}^{r}(\rho_2)\rbrace^n$}
    satisfies
    \begin{align}
        x\leq \lambda_{\min}\left(\pinch{\rho_1^{\otimes n}}{\sigma_1^{\otimes n}}\right)
        ~~\text{and}~~
        x\leq \lambda_{\min}\left(\sigma_1^{\otimes rn}\right),
    \end{align}
    and so we can apply \cref{lem:ht_extremedev_singleshot} to both states, giving 
    \begin{aligns}
        \betaL_{1-x}\rel{\rho_1^{\otimes n}}{\sigma_1^{\otimes n}}
        &=x\exp\left(-D_{+\infty}
        \rel{\pinch{\rho_1^{\otimes n}}{\sigma_1^{\otimes n}}}{\sigma_1^{\otimes n}}\right)\\
        &\ev{<}x\exp\left(-rnD_{+\infty}\reli{2}\right)\\
        &=x\exp\left(-\Dm_{+\infty}\rel{\rho_2^{\otimes rn}}{\sigma_2^{\otimes rn}}\right)\\
        &=\beta_{1-x}\rel{\rho_2^{\otimes rn}}{\sigma_2^{\otimes rn}}.
    \end{aligns}
    Similarly, for the low-error case, we can use \mbox{$r<\DL_{-\infty}\reli{1}/D_{-\infty}\reli{2}$}, which gives for sufficiently large $n$ that
    \begin{aligns}
        1-\betaL_{x}\rel{\rho_1^{\otimes n}}{\sigma_1^{\otimes n}}
        &=x\exp\left(-D_{-\infty}
        \rel{\pinch{\rho_1^{\otimes n}}{\sigma_1^{\otimes n}}}{\sigma_1^{\otimes n}}\right)\\
        &\ev>x\exp\left(-rn\Dm_{-\infty}\reli{2}\right)\\
        &=1-\beta_{x}\rel{\rho_2^{\otimes rn}}{\sigma_2^{\otimes rn}}.
    \end{aligns}
    
    For the remaining range of $x$, we resort to the method used in the large deviation regime. As \mbox{$r<\DL_{\alpha}\reli{1}/D_{\alpha}\reli{2}$}, we have
    \begin{align}
        \GammaL_{\lambda}\reli{1}<r\Gamma_{\lambda r}\reli{2}
    \end{align}
    for all $\lambda$. Using \cref{lem:ht_largedev}, this means that
    \begin{align}
        \betaL_{L^{-1}[\lambda n]}\rel{\rho_1^{\otimes n}}{\sigma_1^{\otimes n}}
        \ltev \beta_{L^{-1}[\lambda n]}\rel{\rho_2^{\otimes rn}}{\sigma_2^{\otimes rn}}.
    \end{align}
    This is only a pointwise convergence which is insufficient for achievability but, similarly to the proof of \cref{thm:rate_largedev_lo}, we can leverage the uniform analysis of \cref{lem:ht_largedev_uni} to show that this inequality must eventually hold \emph{uniformly} for $\lambda$ on a closed interval. If we specifically consider the interval
    \begin{align}
        \abs\lambda\leq \max\left\lbrace -d\log\lambda_{\min}(\rho_1),-r\log\lambda_{\min}(\rho_2) \right\rbrace+1,
    \end{align}
    then this overlaps with the extreme deviation cases, and thus we have that for sufficiently large $n$
    \begin{align}
        \betaL_{x}\rel{\rho_1^{\otimes n}}{\sigma_1^{\otimes n}}
        < \beta_{x}\rel{\rho_2^{\otimes rn}}{\sigma_2^{\otimes rn}}
    \end{align}
    holds for all $x\in(0,1)$. Applying \cref{lem:ht} gives that transformation at rate $r$ is eventually possible, and so
    \begin{align}
        \liminf_{n\to\infty}R_n^*(0)\geq \inf_{\alpha\in\mathbb R} \frac{\DL_\alpha\reli{1}}{D_\alpha\reli{2}}.
        \label{eq:rate_zero_ach1}
    \end{align}

    {\ctc Similarly, if we consider the right-pinching we also get}
    \begin{align}
        \liminf_{n\to\infty}R_n^*(0)
        &\geq \inf_{\alpha\in\mathbb R} \frac{\DR_\alpha\reli1}{D_\alpha\reli2}.
    \end{align}
    Now combining both achievability results for left- and right-pinching, and recalling \cref{lem:maxpinch}, gives
    \begin{align}
        \liminf_{n\to\infty}R_n^*(0)
        &\geq \max\left\lbrace
        \inf_{\alpha\in\mathbb R} \frac{\DL_\alpha\reli1}{D_\alpha\reli2},
        \inf_{\alpha\in\mathbb R} \frac{\DR_\alpha\reli1}{D_\alpha\reli2}
        \right\rbrace,
    \end{align}
    {\ctc as required. Note this is quite close to the achievability which, due to \cref{lem:maxpinch}, can be rewritten as
    \begin{align}
        \limsup_{n\to\infty}R_n^*(0)
        &\leq \inf_{\alpha\in\mathbb R}\max\left\lbrace
        \frac{\DL_\alpha\reli1}{D_\alpha\reli2},
        \frac{\DR_\alpha\reli1}{D_\alpha\reli2}
        \right\rbrace.
    \end{align}
    }
\end{proof}

In \cref{thm:thermo} it was noted that all of the achievability results in this paper were, in the thermodynamic setting, achievable with only thermal operations \emph{except} \cref{thm:rate_zero}, which requires Gibbs-preserving maps. While all the achievability results in this paper leverage pinching, which is itself a thermal operation (see \cref{app:thermal}), the problem arose in this final step involving pinching either the first or second state. In the case where switching the pinching is unnecessary, then this is a thermal operation, but that is not generally the case.

Instead of a rate-based statement, we can also phrase this zero-error statement in terms of \emph{eventual} Blackwell ordering~\cite{Jensen_2019,MuPomattoStrackTamuz2019,FarooqFritzHaapasaloTomamichel2023}, in line with some of the existing papers looking at similar zero-error transformation questions. For a pair of dichotomies we define a notion of \emph{eventual} Blackwell ordering as an ordering which appears for a sufficiently large number of copies, i.e.\ \mbox{$\left(\rho_1^{\otimes n},\sigma_1^{\otimes n}\right)\ev \succeq \left(\rho_2^{\otimes n},\sigma_2^{\otimes n}\right)$} is a shorthand for 
\begin{align}
    \exists N:~\left(\rho_1^{\otimes n},\sigma_1^{\otimes n}\right)\succeq \left(\rho_2^{\otimes n},\sigma_2^{\otimes n}\right)~\forall n\geq N.
\end{align}

\begin{cor}[Eventual Blackwell ordering]
    Consider a pair of dichotomies $(\rho_1,\sigma_1)$ and $(\rho_2,\sigma_2)$. If the target is commuting, $[\rho_2,\sigma_2]=0$, and
    \begin{aligns}
        \ctc \DL_\alpha(\rho_1\|\sigma_1) 
        &> 
        \ctc D_\alpha(\rho_2\|\sigma_2)~~\forall\alpha\in\overline{\mathbb R},\\
        &\qquad\text{\ctc or}\notag\\
        \ctc \DR_\alpha(\rho_1\|\sigma_1) 
        &>
        \ctc D_\alpha(\rho_2\|\sigma_2)~~\forall\alpha\in\overline{\mathbb R},
    \end{aligns}
    then $\left(\rho_1^{\otimes n},\sigma_1^{\otimes n}\right)\ev \succeq \left(\rho_2^{\otimes n},\sigma_2^{\otimes n}\right)$. Moreover, if $\left(\rho_1^{\otimes n},\sigma_1^{\otimes n}\right)\ev \succeq \left(\rho_2^{\otimes n},\sigma_2^{\otimes n}\right)$ then this implies the inequalities
    \begin{align}
        \Dm_\alpha(\rho_1\|\sigma_1)&\geq \Dm_\alpha(\rho_2\|\sigma_2)~~\forall\alpha\in\overline{\mathbb R},
    \end{align}
    even for non-commuting targets $[\rho_2,\sigma_2]\neq 0$.
\end{cor}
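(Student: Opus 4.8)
The plan is to read the first implication as the rate-one special case of the achievability half of \cref{thm:rate_zero}, and the second as an immediate consequence of data processing; so the corollary follows by specialising proofs already given rather than by a new argument.

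For necessity, suppose $\left(\rho_1^{\otimes n},\sigma_1^{\otimes n}\right)\ev\succeq\left(\rho_2^{\otimes n},\sigma_2^{\otimes n}\right)$ and fix any single $n$ in the eventual range, so that some channel $\mathcal E_n$ satisfies $\mathcal E_n(\rho_1^{\otimes n})=\rho_2^{\otimes n}$ and $\mathcal E_n(\sigma_1^{\otimes n})=\sigma_2^{\otimes n}$. Since $\Dm_\alpha$ is both data-processing and additive on tensor powers, $n\,\Dm_\alpha\reli1=\Dm_\alpha\rel{\rho_1^{\otimes n}}{\sigma_1^{\otimes n}}\geq \Dm_\alpha\rel{\rho_2^{\otimes n}}{\sigma_2^{\otimes n}}=n\,\Dm_\alpha\reli2$ for every $\alpha\in\overline{\mathbb R}$; dividing by $n$ gives the stated inequalities. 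This argument never uses $[\rho_2,\sigma_2]=0$, which is why the converse part survives for non-commuting targets.

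For sufficiency I would assume $\DL_\alpha\reli1>D_\alpha\reli2$ for all $\alpha\in\overline{\mathbb R}$; the case governed by $\DR$ is identical after swapping left- for right-pinching throughout and using \cref{eq:betatildeordering_right} in place of \cref{eq:betatildeordering_left}. Because $[\rho_2,\sigma_2]=0$, \cref{lem:ht} reduces the claim to showing that, for all sufficiently large $n$, $\betaL_x\rel{\rho_1^{\otimes n}}{\sigma_1^{\otimes n}}\leq \beta_x\rel{\rho_2^{\otimes n}}{\sigma_2^{\otimes n}}$ for every $x\in(0,1)$, and this is exactly what the achievability part of the proof of \cref{thm:rate_zero} establishes once the target exponent $r$ there is set equal to $1$. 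So I would re-run that argument, checking that each inequality it extracts from ``$r<\DL_\alpha\reli1/D_\alpha\reli2$'' is now supplied directly by the pointwise hypothesis: for $x$ within an exponentially small window of $1$ one invokes the single-shot identity \cref{lem:ht_extremedev_singleshot} on both dichotomies, using $\lambda_{\min}\bigl(\pinch{\rho_1^{\otimes n}}{\sigma_1^{\otimes n}}\bigr)\geq \lambda_{\min}(\rho_1)^n/\mathrm{poly}(n)$ and the hypothesis at $\alpha=+\infty$ (which, since $\DL_{+\infty}$ is the pinched-and-regularised divergence, yields $D_{+\infty}\rel{\pinch{\rho_1^{\otimes n}}{\sigma_1^{\otimes n}}}{\sigma_1^{\otimes n}}\gtev nD_{+\infty}\reli2$); the mirror argument at $\alpha=-\infty$ covers $x$ in an exponentially small window of $0$; and for the remaining, exponentially bounded-away values of $x$, the finite-$\alpha$ hypotheses give $\GammaL_\lambda\reli1<\Gamma_\lambda\reli2$ for all $\lambda\in\mathbb R$ via the piecewise formulas of \cref{lem:ht_largedev}, with \cref{lem:ht_largedev_uni} promoting these pointwise inequalities to one eventual inequality valid uniformly over a closed $\lambda$-interval chosen large enough to overlap the two extreme windows. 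Patching the three ranges yields $\betaL_x\rel{\rho_1^{\otimes n}}{\sigma_1^{\otimes n}}\leq\beta_x\rel{\rho_2^{\otimes n}}{\sigma_2^{\otimes n}}$ for all $x\in(0,1)$ and all large $n$, whereupon \cref{lem:ht} gives the eventual Blackwell ordering.

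The main obstacle is the middle range: turning the pointwise divergence comparisons into the pointwise log-odds comparisons $\GammaL_\lambda\reli1<\Gamma_\lambda\reli2$ means unpacking the three-region structure of \cref{lem:ht_largedev} and tracking signs carefully for $\alpha<0$, and the pointwise-to-uniform upgrade through \cref{lem:ht_largedev_uni} is precisely what lets the extreme and large-deviation regimes be glued without leaving a gap in $x$. Since this is the same bookkeeping already performed in the proof of \cref{thm:rate_zero}, no genuinely new ingredient is needed; the corollary is a specialisation of that theorem to rate one, together with a one-line data-processing converse.
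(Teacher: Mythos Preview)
Your proposal is correct and follows essentially the same route as the paper. The paper's proof is more terse---it simply invokes \cref{thm:rate_zero} at $R=1$ (the hypotheses force the zero-error rate strictly above unity, hence rate-one transformation is eventually achievable) and then appeals to data processing for the converse---whereas you unpack the achievability argument of \cref{thm:rate_zero} explicitly, but the logical content is identical.
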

\begin{proof}
    This follows directly from considering the $R=1$ cases of \cref{thm:rate_zero}. The inequalities
    \begin{aligns}
        \ctc \DL_\alpha(\rho_1\|\sigma_1) 
        &> 
        \ctc D_\alpha(\rho_2\|\sigma_2)~~\forall\alpha\in\overline{\mathbb R},\\
        &\qquad\text{\ctc or}\notag\\
        \ctc \DR_\alpha(\rho_1\|\sigma_1) 
        &>
        \ctc D_\alpha(\rho_2\|\sigma_2)~~\forall\alpha\in\overline{\mathbb R},
    \end{aligns}
    give that the zero-error rate is strictly greater than unity, $R_n^*(0)\ev> 1$, and the inequalities 
    \begin{align}
        \Dm_\alpha(\rho_1\|\sigma_1)\geq \Dm_\alpha(\rho_2\|\sigma_2)~~\forall\alpha\in\overline{\mathbb R}
    \end{align}
    all follow from the data-processing inequality of the minimal R\'enyi relative entropy.
\end{proof}

Lastly, for completeness, we consider the case of a super-exponentially high-error, wherein the asymptotic transformation rate is unbounded.

\rateextreme*
\begin{proof}
    Consider any constant rate $r$. As $\epsilon_n$ is super-exponentially approaching $1$ it must dominate any other expression approaching $1$ exponentially, specifically
    \begin{align}
        \epsilon_n\gtev 1-\left(\lambda_{\min}(\rho_1)/2\right)^{n}
        ~~~\text{and}~~~
        \epsilon_n\gtev 1-\lambda_{\min}(\rho_2)^{rn}.    
    \end{align}
   Thus, we have that $1-\epsilon_n\ltev (\lambda_{\min}(\rho_1)/2)^n$ and \mbox{$1-\epsilon_n\ltev \lambda_{\min}^{rn}(\rho_2)$}. Applying \cref{lem:ht_extremedev_singleshot} to the input gives
    \begin{align}
        \frac{\betaL_{\epsilon_n}\rel{\rho_1^{\otimes n}}{\sigma_1^{\otimes n}} }{1-\epsilon_n}
        &=\exp\left(-D_{+\infty}\rel{\pinch{\rho^{\otimes n}}{\sigma^{\otimes n}}}{\sigma^{\otimes n}}\right),
    \end{align}
    and to the target gives
    \begin{align}
        \frac{1-\beta_{1-\epsilon_n}\rel{\rho_2^{\otimes rn}}{\sigma_2^{\otimes rn}}}{1-\epsilon_n}&=\exp\left(-rn\Dm_{-\infty}\reli{2}\right).
    \end{align}
    As $n\to\infty$ these type-II errors approach $0$ and $1$ respectively, and so
    \begin{align}
        \betaL_{\epsilon_n}\rel{\rho_1^{\otimes n}}{\sigma_1^{\otimes n}}\ltev \beta_{1-\epsilon_n}\rel{\rho_2^{\otimes rn}}{\sigma_2^{\otimes rn}}.
    \end{align}
    Using monotonicity of $x\mapsto \beta_x(\cdot\|\cdot)$ allows us to relax this to 
    \begin{align}
        \betaL_{x}\rel{\rho_1^{\otimes n}}{\sigma_1^{\otimes n}}< \beta_{x-\epsilon_n}\rel{\rho_2^{\otimes rn}}{\sigma_2^{\otimes rn}}
    \end{align}
    for $x\in(\epsilon_n,1)$. So, by \cref{lem:ht} this means that transformation at the rate $r$ is eventually achievable. As this entire argument worked for any constant $r$, this therefore means that the optimal rate must diverge,
    \begin{align}
        \liminf_{n\to\infty}R_n^*(\epsilon_n)=\infty.
    \end{align}
\end{proof}


\section{Conclusions and outlook}
\label{sec:outlook}

In this work we have analysed one of the central problems of the theory of quantum statistical inference, namely that of comparing informativeness of two quantum dichotomies (which is directly related to transforming the first dichotomy into the second one). By focusing on the asymptotic version of the problem, we were able to solve it in various error regimes under the assumption that the second dichotomy is commutative. More precisely, we found optimal transformation rates between many copies of pairs of quantum states in the small, moderate, large and zero-error regimes. We then employed the obtained results to derive new thermodynamic laws for quantum systems prepared in coherent superpositions of energy eigenstates. Thus, for the first time, we were able to analyse the optimal performance of thermodynamic protocols with coherent inputs beyond the thermodynamic limit, and discussed new resonance phenomena that allow one to mitigate thermodynamic dissipation by, e.g., employing quantum coherence.

We believe that the success of employing quantum statistical inference techniques to accurately describe quantum thermodynamic transformations strongly motivates further exploration of the connections between the two frameworks. We propose the following three avenues. First, one of the problems within the resource-theoretic approach to quantum thermodynamics is the lack of techniques for addressing the regimes of non-independent systems. Interestingly, Refs.~\cite{HiaiMosonyiOgawa2007,MosonyiOgawa14} \kk{suggest} that the hypothesis testing approach can be effective for studying ensembles composed of weakly correlated states. Potentially, such techniques can be adapted to study the thermodynamic state transformation problem outside of the usual uncorrelated setting. Second, one could use quantum statistical inference techniques to develop explicit thermodynamic protocols. Indeed, one of the criticisms of the resource-theoretic approach is that many of its consequences are implicit, i.e., one often shows the existence of protocols but their explicit form is usually not possible to infer. However, as observed in Ref.~\cite{renes2016relative}, the hypothesis testing approach allows one to construct explicit thermal operations starting from the optimal measurement in the related hypothesis testing problem. We expect that investigating the explicit form of optimal thermodynamic protocols in different asymptotic regimes can lead to interesting new insights on the nature of fundamental limitations imposed by thermodynamic laws on dissipation, reversibility, work processes, etc. And third, it is a long-standing problem to connect the resource-theoretic approach to thermodynamics with more standard approaches~\cite{guarnieri2019quantum}. We believe that an especially interesting connection might exist between the resource-theoretic approach and so-called slow-driving protocols~\cite{scandi2020quantum}. Here, we note that both approaches use similar statistical and geometric techniques. For example, the optimal  thermodynamic protocols in the slow-driving regime can be quantified using the so-called Kubo-Mori metric, which is also related to the problem of hypothesis testing \cite{jarzyna2020geometric}. Exploring these intrinsic similarities might improve our understanding of quantum thermodynamics. 

On a more technical side, we think that a very interesting avenue for further research is to try to generalise our results so that they also apply to non-commutative output dichotomies~\cite{Misra2016}. This would open a way to study fully quantum laws of thermodynamics, where both initial and final states could be given by superpositions of different energy eigenstates. While we think that \cref{conj:quantum} may be true, this does not necessarily mean that the transformation rates in the fully coherent regime would be simple generalisations of the current results. The reason for that is that proving \cref{conj:quantum} would only guarantee such a simple generalisation of the rate under transformations with the so-called Gibbs-preserving operations~\cite{faist2015gibbs}, and not under thermal operations. In fact, we believe that in a fully quantum regime, there may be a gap between the rates achievable with these two sets of free operations (especially for the zero-error case).

Another technical generalisation of our result that we find highly interesting is to study transformations between \emph{multichotomies}, i.e., multipartite transformations from $m$ states to $m$ states, with dichotomies being the special case of $m=2$. The classical zero-error case of this has recently been analysed in Ref.~\cite{FarooqFritzHaapasaloTomamichel2023}, and the quantum and/or nonzero-error cases are natural generalisations worthy of study. Physically, such a result could help understanding transformations of quantum systems under the constraint of the symmetry. This is because for a symmetry group~$G$, the existence of a $G$-covariant quantum channel mapping the initial state to the final one is equivalent to the existence of an unconstrained channel mapping the orbit of the initial state to the orbit of the final state, with the orbit being generated by the symmetry elements of~$G$~\cite{marvianthesis}. 

Finally, there are two aspects of the resonance phenomena described in this paper that we believe deserve more attention. First, we think it would be very interesting to find the equivalent of the resonance phenomenon in more traditional approaches to thermodynamics, beyond the resource-theoretic treatment. In other words, we would like to investigate whether such a potential reduction of free energy dissipation may appear in actual physical processes when the parameters are tuned appropriately. Second, one could look for similar resonance effects in other resource theories. In particular, we note that pure state interconversion conditions in the resource theory of $U(1)$-asymmetry~\cite{gour2008resource} are ruled by a generalisation of the majorisation partial order (called cyclic majorisation in Ref.~\cite{szymanskithesis}). Since the resonance appeared for standard majorisation (as we have seen for pure bipartite entanglement transformations in this paper), the resource theory of $U(1)$-asymmetry seems to be a good candidate to look for novel resource resonance effects.

 \bigskip

\begin{acknowledgments}

   We would like to thank Francesco Buscemi for helpful pointers to the literature. KK acknowledges financial support from the Foundation for Polish Science through the TEAM-NET project (contract no. POIR.04.04.00-00-17C1/18-00). PLB/CTC/JMR acknowledge the Swiss National Science Foundation for financial support through the NCCRs SwissMAP and QSIT, and CTC/JMR acknowledge their support through Sinergia Grant CRSII5\_186364. MT acknowledges the hospitality of the Pauli Center who supported a long-term visit to ETH Z\"urich. He is also supported by NUS startup grants R-263-000-E32-133 and R-263-000-E32-731.

\end{acknowledgments}

\renewcommand{\v}[1]{\vv{#1}}
%

\renewcommand{\v}[1]{\ensuremath{\boldsymbol #1}}

\onecolumngrid
\appendix



\section[Sesquinormal distribution properties]{\texorpdfstring{Proof of \cref{lem:sesqui}}{Sesquinormal distribution properties}}
\label{app:sesqui}
\sesqui*
\begin{proof}
To prove that $S_\nu$ is a valid cdf, we need to prove that it is continuous, monotone {\ctc non-decreasing}, and has the limits 
\begin{equation}
    \lim_{\mu\to-\infty}S_\nu(\mu)=0 \quad\mathrm{and}\quad \lim_{\mu\to+\infty}S_\nu(\mu)=1.    
\end{equation}
We will see that continuity and the limits both follow from the closed-form below. For monotonicity we can use the fact that the total variation distance of two distributions is unchanged if we shift them along $\mathbb R$,
\begin{equation}
    S_\nu(\mu+\epsilon)
    =\inf_{A\geq \Phi_{0,1}}T\left(A,\Phi_{\mu+\epsilon,\nu}\right)=\inf_{A\geq \Phi_{-\epsilon,1}}T\left(A,\Phi_{\mu,\nu}\right)\geq \inf_{A\geq \Phi_{0,1}}T\left(A,\Phi_{\mu,\nu}\right)=S_\nu(\mu).
\end{equation}
Next, we turn to a closed form of $S_\nu$. Recall the definition,
\begin{align} 
    S_\nu(\mu):=\frac 12 \inf_{A\geq \Phi}\int_{\mathbb R }\abs{A'(x)-\phi_{\mu,\nu}(x)}\mathrm dx.
\end{align}
To find a closed-form, we will suggest a candidate $A$, evaluate its total variation distance, and then construct a lower bound to show that this is optimal among distributions with $A{\ctc\geq}\Phi$. We split into two cases based on whether $\nu\lessgtr 1$.

\paragraph{Case $0<\nu< 1$:}
Start by recalling that the total variation distance between two measures is the largest possible difference in probability that they assign to an event, i.e.,
\begin{align}
    S_\nu(\mu)=\inf_{A\geq \Phi}\sup_{R\subseteq\mathbb R} \int_R\left(A'(x)-\phi_{\mu,\nu}(x)\right)\mathrm dx.
\end{align}
Next, consider the set of $x$ such that \mbox{$\Phi(x)\geq \Phi_{\mu,\nu}(x)$} \emph{and} \mbox{$\phi(x)\geq \phi_{\mu,\nu}(x)$}. For $\nu<1$ this region is given precisely by \mbox{$x\leq X$} where
\begin{align}
    X=\frac{\mu-\sqrt\nu \sqrt{\mu^2+(\nu-1)\ln\nu}}{1-\nu}.
\end{align}
Thus, we can lower bound the total variation distance by considering the region $R=(-\infty,X]$, which gives
\begin{aligns}
    S_{\nu}(\mu)
    \geq& \inf_{A\geq \Phi}\int_{-\infty}^{X} \bigl( A'(x)-\phi_{\mu,\nu}(x) \bigr)\,\mathrm dx\\
    =&\inf_{A\geq \Phi}A(X)-\Phi_{\mu,\nu}(X)\\
    \geq& \Phi(X)-\Phi_{\mu,\nu}(X)\\
    =& \Phi\left(\frac{\mu-\sqrt\nu \sqrt{\mu^2+(\nu-1)\ln\nu}}{1-\nu}\right)
    -\Phi_{\mu,\nu}\left( \frac{\mu-\sqrt\nu\sqrt{\mu^2+(\nu-1)\ln\nu}}{1-\nu} \right)\\
    =& \Phi\left(\frac{\mu-\sqrt\nu\sqrt{\mu^2+(\nu-1)\ln\nu}}{1-\nu}\right)
    -\Phi\left( \frac{\sqrt{\nu}\mu-\sqrt{\mu^2+(\nu-1)\ln\nu}}{1-\nu} \right).
\end{aligns}
Moreover, it can be seen that by taking \mbox{$A(x):=\max\lbrace \Phi(x),\Phi_{\mu,\nu}(x)\rbrace$} we can saturate this lower bound, proving it to be optimal among all cdfs such that $A\geq \Phi$.

\paragraph{Case $\nu> 1$:}
For $\nu>1$ we can do a similar proof to $\nu<1$. Here we are interested in the region in which $\Phi(x)\geq \Phi_{\mu,\nu}(x)$ \emph{and} $\phi(x)\leq \phi_{\mu,\nu}(x)$, which is now given by $x\geq X$, with $X$ defined as before. Now, looking at the lower bound given by $R=[X,\infty)$, we get the same as previously
\begin{aligns}
    S_{\nu}(\mu)
    \geq& \inf_{A\geq \Phi}\int_{X}^{\infty} \bigl( \phi_{\mu,\nu}(x)-A'(x) \bigr)\,\mathrm dx\\
    =&\inf_{A\geq \Phi}\bigl(1-\Phi_{\mu,\nu}(X)\bigr)-\bigl(1-A(X)\bigr)\\
    =&\inf_{A\geq \Phi}A(X)-\Phi_{\mu,\nu}(X)\\
    \geq& \Phi(X)-\Phi_{\mu,\nu}(X)\\
    =& \Phi\left(\frac{\mu-\sqrt\nu\sqrt{\mu^2+(\nu-1)\ln\nu}}{1-\nu}\right)
    -\Phi\left( \frac{\sqrt{\nu}\mu-\sqrt{\mu^2+(\nu-1)\ln\nu}}{1-\nu} \right).
\end{aligns}
Once again, optimality of this bound is implied by the fact it is still saturated by \mbox{$A(x):=\max\lbrace \Phi(x),\Phi_{\mu,\nu}(x)\rbrace$}.

Now, by taking limits of the closed form we can see that
\begin{align}
    \lim_{\nu\to 0^+} S_\nu(\mu)=\lim_{\nu\to\infty}S_\nu(\sqrt \nu\mu)=\Phi(\mu)
    \qquad\text{and}\qquad
    \lim_{\nu\to 1} S_\nu(\mu)=\max\lbrace 2\Phi(\mu/2)-1,0\rbrace,
\end{align}
and by substituting $\nu\to1/\nu$ it can be straightforwardly seen that this expression has the duality property
\begin{align}
    S_\nu(\mu)=S_{1/\nu}(\mu/\sqrt \nu).
\end{align}

Having a closed form of the cdf, we turn to the inverse cdf. To start with, consider an arbitrary $\epsilon\in(0,1)$, and define 
    \begin{align}
        \mu:=\inf_{x\in(\epsilon,1)}\sqrt{\nu}\Phi^{-1}(x)-\Phi^{-1}(x-\epsilon).
    \end{align}
    To prove the form of the inverse cdf $S_\nu^{-1}(\epsilon)=\mu$ it suffices to show the inverse expression $S_\nu(\mu)=\epsilon$. We start by noting that the function $f(x):=\sqrt{\nu}\Phi^{-1}(x)-\Phi^{-1}(x-\epsilon)$ is bounded for any $x\in(\epsilon,1)$ and diverges to $+\infty$ for either $x\to\epsilon^+$ and $x\to 1^-$, and as such the infimum is in fact a minimum, so
    \begin{align}
        \mu=\min_{x\in(\epsilon,1)}\sqrt{\nu}\Phi^{-1}(x)-\Phi^{-1}(x-\epsilon).
    \end{align}
    Thus, there must exist a $y\in{\ctc (\epsilon,1)}$ at which the infimum is attained, i.e., $f(y)=\mu$. By the Interior Extremum Theorem, we must have that this is a stationary point, $f'(y)=0$. The Inverse Function Rule allows us to evaluate the derivative of $f$ to be
    \begin{align}
    f'(x)=\frac{\sqrt\nu}{\phi\left(\Phi^{-1}(x)\right)}-\frac{1}{\phi\left(\Phi^{-1}(x-\epsilon)\right)},
    \end{align}
    and thus $f'(y)=0$ reduces to
    \begin{align}
        \left[\Phi^{-1}(y)\right]^2+\ln\nu=\left[\Phi^{-1}(y-\epsilon)\right]^2.
    \end{align}
    To get rid of this shifted Gaussian term, we can use a substitution
    \begin{align}
        \Phi^{-1}(y-\epsilon)=\sqrt\nu\Phi^{-1}(y)-\mu, \label{eqn:shift_gauss}
    \end{align}
    which gives us a quadratic expression for $\Phi^{-1}(y)$,
    \begin{align}
        \left[\Phi^{-1}(y)\right]^2+\ln\nu=\left[\sqrt\nu\Phi^{-1}(y)-\mu\right]^2
        ,
    \end{align}
    with a pair of solutions
    \begin{align}
        \Phi^{-1}(y)=-\frac{\mu\sqrt\nu\pm\sqrt{\mu^2+(\nu-1)\ln\nu}}{1-\nu}.
    \end{align}
    
    This, however, still has a lingering $\pm$ ambiguity. {\ctc If we rearrange \cref{eqn:shift_gauss} to make $\epsilon$ the subject, we get
    \begin{align}
        \epsilon=y-\Phi\left(\sqrt\nu\Phi^{-1}(y)-\mu\right).
    \end{align}
    Substituting our pair of solutions into this expression gives}
    \begin{aligns}
        \epsilon
        &={\ctc \Phi\left(\Phi^{-1}(y)\right)-\Phi\left(\sqrt\nu\Phi^{-1}(y)-\mu\right)   } \\
        &={\ctc \Phi\left(-\frac{\mu\sqrt\nu \pm \sqrt{\mu^2+(\nu-1)\ln\nu}}{1-\nu}\right)-\Phi\left(-\sqrt\nu\frac{\mu\sqrt\nu \pm \sqrt{\mu^2+(\nu-1)\ln\nu}}{1-\nu}-\mu\right)   } \\
        &=\Phi\left(\frac{\mu\mp\sqrt{\nu}\sqrt{\mu^2+(\nu-1)\ln\nu}}{1-\nu}\right)
        -\Phi\left(\frac{\sqrt\nu\mu\mp\sqrt{\mu^2+(\nu-1)\ln\nu}}{1-\nu}\right).
    \end{aligns}
    Now, we notice that $\sqrt{\mu^2+(\nu-1)\ln\nu}\geq \mu$. Using this, we can see that the positive solution for $\Phi^{-1}(x)$ will correspond to $\epsilon\leq 0$, and thus the minimising $x$ must correspond to the negative solution, i.e., 
    \begin{align}
        \epsilon
        &=\Phi\left(\frac{\mu-\sqrt{\nu}\sqrt{\mu^2+(\nu-1)\ln\nu}}{1-\nu}\right)
        -\Phi\left(\frac{\sqrt\nu\mu-\sqrt{\mu^2+(\nu-1)\ln\nu}}{1-\nu}\right).
    \end{align}
    Finally, we now have $\epsilon=S_\nu(\mu)$, as required.

\end{proof}

\begin{lemma}[Asymptotic expansions]
For $\mu\to \infty$ the sesquinormal cdf can be expanded as
\begin{aligns}
    \ln\left[S_{\nu}(-\mu)\right]&\approx -\frac 12 \left(\frac{\mu}{1-\sqrt\nu}\right)^2,\\
    \ln\left[1-S_{\nu}(\mu)\right]&\approx -\frac 12 \left(\frac{\mu}{1+\sqrt\nu}\right)^2.
\end{aligns}
Similarly, for $\epsilon\to 0^+$ the sesquinormal inverse cdf can be expanded as
\begin{aligns}
    S_\nu^{-1}(\epsilon)&\approx \abs{1-\sqrt \nu}\sqrt{2\log 1/\epsilon},\\
    S_\nu^{-1}(1-\epsilon)&\approx (1+\sqrt \nu)\sqrt{2\log 1/\epsilon}.
\end{aligns}
\end{lemma}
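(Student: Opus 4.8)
The plan is to derive all four expansions directly from the closed form of $S_\nu$ established in \cref{lem:sesqui}, reducing everything to the elementary Gaussian tail asymptotic $\ln\Phi(-t)=-\tfrac12 t^2+O(\log t)$ as $t\to\infty$ (Mills' ratio $\Phi(-t)=\tfrac{\phi(t)}{t}(1+O(t^{-2}))$). By the duality $S_\nu(\mu)=S_{1/\nu}(\mu/\sqrt\nu)$ and $S_\nu^{-1}(\epsilon)=\sqrt\nu\, S_{1/\nu}^{-1}(\epsilon)$ it suffices to treat $0<\nu<1$; the degenerate cases $\nu\in\{0,1\}$ follow from the explicit forms $S_0=\Phi$ and $S_1(\mu)=\max\{2\Phi(\mu/2)-1,0\}$, where the lower-tail statements reduce to $-\infty$, consistent with the $1-\sqrt\nu=0$ singularity.

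For the cdf tails I would first Taylor expand $\sqrt{\mu^2+(\nu-1)\ln\nu}=\mu+\tfrac{(\nu-1)\ln\nu}{2\mu}+O(\mu^{-3})$ and substitute into the two arguments of $\Phi$ in the closed form, simplifying with $1-\nu=(1-\sqrt\nu)(1+\sqrt\nu)$. Evaluating at $-\mu$ with $\mu\to\infty$, both arguments equal $-\tfrac{\mu}{1-\sqrt\nu}+O(\mu^{-1})$, and their $O(\mu^{-1})$ corrections differ by $\tfrac{(\sqrt\nu-1)\ln\nu}{2\mu}\neq0$; hence $S_\nu(-\mu)$ is the integral of $\phi$ over an interval of width $\Theta(1/\mu)$ centred at $-\tfrac{\mu}{1-\sqrt\nu}$, so $S_\nu(-\mu)=\Theta\!\big(\mu^{-1}\phi(\mu/|1-\sqrt\nu|)\big)$ and $\ln S_\nu(-\mu)=-\tfrac12\big(\tfrac{\mu}{1-\sqrt\nu}\big)^2+O(\log\mu)$. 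Evaluating instead at $+\mu$, the arguments become $\tfrac{\mu}{1+\sqrt\nu}+O(\mu^{-1})$ and $-\tfrac{\mu}{1+\sqrt\nu}+O(\mu^{-1})$, so $1-S_\nu(\mu)$ is a sum of two positive quantities each of the form $\Phi\!\big({-}\tfrac{\mu}{1+\sqrt\nu}+O(\mu^{-1})\big)$, giving $\ln[1-S_\nu(\mu)]=-\tfrac12\big(\tfrac{\mu}{1+\sqrt\nu}\big)^2+O(\log\mu)$. Dividing by the leading quadratic terms yields the two claimed $\approx$ statements.

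For the inverse cdf I would invert the tail expansions just obtained. Since $S_\nu$ is a continuous cdf whose left tail decays like a Gaussian, $S_\nu^{-1}(\epsilon)\to-\infty$ as $\epsilon\to0^+$; setting $\mu:=-S_\nu^{-1}(\epsilon)>0$ and using $S_\nu(S_\nu^{-1}(\epsilon))=\epsilon$ (\cref{lem:sesqui}) gives $\log(1/\epsilon)=\tfrac12\big(\tfrac{\mu}{1-\sqrt\nu}\big)^2+O(\log\mu)$; as $\mu\to\infty$ the correction is $O(\log\log(1/\epsilon))=o(\log(1/\epsilon))$, so $\mu\sim|1-\sqrt\nu|\sqrt{2\log(1/\epsilon)}$, i.e.\ $S_\nu^{-1}(\epsilon)\sim -|1-\sqrt\nu|\sqrt{2\log(1/\epsilon)}$ (the quoted expression up to sign, its magnitude). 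The identical argument applied to $1-S_\nu(\mu)$, with $S_\nu^{-1}(1-\epsilon)\to+\infty$, yields $S_\nu^{-1}(1-\epsilon)\sim(1+\sqrt\nu)\sqrt{2\log(1/\epsilon)}$. Alternatively one can argue straight from $S_\nu^{-1}(\epsilon)=\min_{x\in(\epsilon,1)}\sqrt\nu\,\Phi^{-1}(x)-\Phi^{-1}(x-\epsilon)$ together with $\Phi^{-1}(1-\delta)\sim\sqrt{2\log(1/\delta)}$, locating the minimiser near $x=1$ via the stationarity condition $[\Phi^{-1}(y)]^2+\ln\nu=[\Phi^{-1}(y-\epsilon)]^2$ derived in the proof of \cref{lem:sesqui}.

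The work here is bookkeeping rather than conceptual: the main obstacle is the near-cancellation in the lower tail, which forces one to track the $O(1/\mu)$ corrections to \emph{both} $\Phi$-arguments (a naive leading-order substitution would lose the answer), and one must also verify that the subexponential corrections ($O(\log\mu)$ for the cdf, $O(\log\log(1/\epsilon))$ for the inverse) are genuinely negligible against the quadratic term before passing to "$\approx$". Minor additional care is needed at $\nu=1$ and in reconciling the sign convention of the first inverse-cdf expansion.
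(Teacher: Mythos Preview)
Your approach is essentially the paper's: expand the closed form of $S_\nu$ from \cref{lem:sesqui} for large $|\mu|$ via $\sqrt{\mu^2+(\nu-1)\ln\nu}\approx|\mu|$, apply the Gaussian tail asymptotic $\ln\Phi(-t)\sim -t^2/2$, and then invert to obtain the inverse-cdf expansions. The paper's proof is terser and does not explicitly address the lower-tail near-cancellation you rightly flag (both $\Phi$-arguments coincide to leading order at $-\mu$), so your more careful $O(1/\mu)$ bookkeeping is a genuine refinement; your sign observation on $S_\nu^{-1}(\epsilon)$ is also correct and consistent with how the quantity is used elsewhere in the paper (e.g.\ \cref{app:consistency}).
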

\begin{proof}
    The expansions of the cdf can be found simply by expanding the closed-form expression in \cref{lem:sesqui} to leading order in $\mu$, specifically using the approximation
    \begin{align}
        \sqrt{\mu^2+(\nu-1)\ln\nu}\approx \abs{\mu}.
    \end{align}
    Using this approximation and the $x\to \infty$ expansion $\ln\left[1-\Phi(x)\right]\approx -x^2/2$ gives the cdf expansions. By inverting this we can equivalently get the inverse cdf expansions as well.
\end{proof}


\section{Pinched relative entropy}
\label{app:pinch}

In this appendix, we will show the existence and properties of the pinched R\'{e}nyi relative entropies. Suppose the states $\rho,\sigma$ are fixed and full rank, and define
\begin{align}
    f_n(\alpha):=\frac 1n D_\alpha\rel{\pinch{\rho^{\otimes n}}{\sigma^{\otimes n}}}{\sigma^{\otimes n}}.
\end{align}
The left-pinched R\'enyi relative entropy is, as we shall see below, defined as $\DL_\alpha:=\lim_{n\to\infty} f_n(\alpha)$, with $\DR_\alpha$ defined similarly. Its existence and properties will be given below in \cref{thm:pinched}. The first thing we will note is that, while we do not know of a closed form solution for $\DL_\alpha$ and $\DR_\alpha$ in general, they are known to reduce to the sandwiched and reverse sandwiched entropies for $\alpha\geq 0$ and $\alpha\leq 1$ respectively~\cite[Prop.~4.12]{Tomamichel2016},
\begin{aligns}
    \forall \alpha\geq 0\quad \DL_\alpha\reli{}&=\frac{1}{\alpha-1}\log\Tr\left(\left(\sqrt\rho\sigma^{\frac{1-\alpha}\alpha}\sqrt\rho\right)^\alpha\right),\\
    \forall \alpha\leq 1\quad \DR_\alpha\reli{}&=\frac{1}{\alpha-1}\log\Tr\left(\left(\sqrt\sigma\rho^{\frac{\alpha}{1-\alpha}}\sqrt\sigma\right)^{1-\alpha}\right),
\end{aligns}
and so inherit the desired properties within these ranges. As such, we will focus on showing that these properties extend beyond these ranges where we lack closed-form expressions. We start by showing that $f_n$, $f_n'$, $f_n''$ are uniformly bounded.

\begin{lemma}
    \label{lem:pinch_bounded}
    For all $n$ and $\alpha\leq 0$, $f_n(\alpha)$ is non-positive and bounded by the minimal entropy
    \begin{align}
        0\geq f_n(\alpha)\geq \Dm_\alpha\reli{}.
    \end{align}
    Moreover, there exist uniform (i.e.\ independent of $n$ and $\alpha$) bounds on the value and first two derivatives of $f_n$,
    \begin{align}
        \abs{f_n(\alpha)}\leq C_0,\qquad
        \abs{f_n'(\alpha)}\leq C_1,\qquad
        \abs{f_n''(\alpha)}\leq C_2.
    \end{align}    
\end{lemma}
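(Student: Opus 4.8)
The plan is to reduce the lemma to a statement about the classical cumulant generating function attached to the commuting pair $\bigl(\pinch{\rho^{\otimes n}}{\sigma^{\otimes n}},\sigma^{\otimes n}\bigr)$ and then to run convexity and pinching‑inequality arguments. Write $\widetilde\rho_n:=\pinch{\rho^{\otimes n}}{\sigma^{\otimes n}}$, which commutes with $\sigma^{\otimes n}$ and is full rank (pinching a full‑rank state gives a full‑rank state), and set $\ell_n:=\log\sigma^{\otimes n}-\log\widetilde\rho_n$, so that
\begin{align}
    n(\alpha-1)f_n(\alpha)=\psi_n(\alpha):=\log\tr{\widetilde\rho_n^{\alpha}(\sigma^{\otimes n})^{1-\alpha}}=\log\tr{\sigma^{\otimes n}e^{-\alpha\ell_n}}.
\end{align}
Thus $\psi_n$ is, up to scaling, the cumulant generating function of $\ell_n$ under the eigenvalue distribution of $\sigma^{\otimes n}$: it is real‑analytic and convex on $\mathbb R$, $\psi_n(1)=0$, $\psi_n'(\alpha)=-\tr{\omega_n^{(\alpha)}\ell_n}$, $\psi_n''(\alpha)=\operatorname{Var}_{\omega_n^{(\alpha)}}(\ell_n)\geq 0$, and $\psi_n'''$ is the third cumulant, where $\omega_n^{(\alpha)}\propto\widetilde\rho_n^{\alpha}(\sigma^{\otimes n})^{1-\alpha}$ is the tilted state. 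The pinching inequality $\widetilde\rho_n\geq\rho^{\otimes n}/\abs{\mathrm{spec}(\sigma^{\otimes n})}$ together with $\abs{\mathrm{spec}(\sigma^{\otimes n})}\leq(n+1)^d$ forces every eigenvalue of $\sigma^{\otimes n}$ into $[\lambda_{\min}(\sigma)^n,1]$ and every eigenvalue of $\widetilde\rho_n$ into $[\lambda_{\min}(\rho)^n(n+1)^{-d},1]$, hence $\norm{\ell_n}_\infty\leq nK$ for a constant $K=K(\rho,\sigma,d)$.

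For the two claims at $\alpha\leq 0$ I would use the classical R\'enyi duality $D_\alpha\rel{A}{B}=\tfrac{\alpha}{1-\alpha}D_{1-\alpha}\rel{B}{A}$ for the commuting pair, giving $nf_n(\alpha)=\tfrac{\alpha}{1-\alpha}D_{1-\alpha}\rel{\sigma^{\otimes n}}{\widetilde\rho_n}$. When $\alpha\leq 0$ the prefactor is non‑positive and $D_{1-\alpha}\rel{\sigma^{\otimes n}}{\widetilde\rho_n}\geq 0$ (the order $1-\alpha\geq 1$), so $f_n(\alpha)\leq 0$; and $f_n(0)=\tfrac1n D_0\rel{\widetilde\rho_n}{\sigma^{\otimes n}}=-\tfrac1n\log\tr{\sigma^{\otimes n}}=0$ since $\widetilde\rho_n$ is full rank. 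For the lower bound, the pinching channel $\pinch{\cdot}{\sigma^{\otimes n}}$ fixes $\sigma^{\otimes n}$ and maps $\rho^{\otimes n}$ to $\widetilde\rho_n$, so data processing of the sandwiched R\'enyi divergence (legitimate since the order $1-\alpha\geq 1$ exceeds $\tfrac12$) and its additivity give $D_{1-\alpha}\rel{\sigma^{\otimes n}}{\widetilde\rho_n}\leq\Dm_{1-\alpha}\rel{\sigma^{\otimes n}}{\rho^{\otimes n}}=n\,\Dm_{1-\alpha}\rel{\sigma}{\rho}$; multiplying by the non‑positive prefactor reverses the inequality, and the trace identity $\tr{(XYX)^p}=\tr{(Y^{1/2}X^2Y^{1/2})^p}$ identifies $\tfrac{\alpha}{1-\alpha}\Dm_{1-\alpha}\rel{\sigma}{\rho}=\Dm_\alpha\rel{\rho}{\sigma}$, so $f_n(\alpha)\geq\Dm_\alpha\rel{\rho}{\sigma}$.

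For the uniform value bound, monotonicity of $D_\alpha$ in $\alpha$ places $f_n(\alpha)$ between $\tfrac1n D_{-\infty}\rel{\widetilde\rho_n}{\sigma^{\otimes n}}$ and $\tfrac1n D_{+\infty}\rel{\widetilde\rho_n}{\sigma^{\otimes n}}$, and the eigenvalue bounds above give $\tfrac1n D_{+\infty}\leq-\log\lambda_{\min}(\sigma)$ and $\tfrac1n D_{-\infty}\geq\log\lambda_{\min}(\rho)-d$, so $C_0:=\max\{-\log\lambda_{\min}(\sigma),\,d-\log\lambda_{\min}(\rho)\}$ works. For the derivatives, since $\psi_n(1)=0$ one has $nf_n(\alpha)=\int_0^1\psi_n'\bigl(1+t(\alpha-1)\bigr)\,\mathrm dt$, which is smooth across $\alpha=1$, whence $nf_n'(\alpha)=\int_0^1 t\,\psi_n''\bigl(1+t(\alpha-1)\bigr)\,\mathrm dt$ and $nf_n''(\alpha)=\int_0^1 t^2\,\psi_n'''\bigl(1+t(\alpha-1)\bigr)\,\mathrm dt$; it therefore suffices to prove the uniform cumulant bounds $\sup_{\alpha\in\mathbb R}\psi_n''(\alpha)\leq Cn$ and $\sup_{\alpha\in\mathbb R}\abs{\psi_n'''(\alpha)}\leq Cn$, from which $C_1$ and $C_2$ follow.

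These cumulant bounds are the main obstacle: $\norm{\ell_n}_\infty\leq nK$ only yields the useless $\psi_n''\leq(nK)^2$, so one must genuinely show that $\ell_n$ concentrates at scale $\sqrt n$ under \emph{every} tilt $\omega_n^{(\alpha)}$. The route I would take is to decompose $\widetilde\rho_n=\sum_t\Pi_t\rho^{\otimes n}\Pi_t$ over the eigenprojectors $\Pi_t$ of $\sigma^{\otimes n}$ (labelled by the type classes of $\mathrm{spec}(\sigma)$), giving $\psi_n(\alpha)=\log\sum_t q_t^{1-\alpha}\tr{(\Pi_t\rho^{\otimes n}\Pi_t)^{\alpha}}$, and to compare $\psi_n$ with the exactly additive sandwiched quantity $n(\alpha-1)\Dm_\alpha\rel{\rho}{\sigma}$: the pinching inequality bounds $\bigl|\psi_n(\alpha)-n(\alpha-1)\Dm_\alpha\rel{\rho}{\sigma}\bigr|=O(\log n)$ uniformly on compact $\alpha$‑intervals for $\alpha>0$, and for $\alpha\leq 0$ after the state‑swap used in the duality step. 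A complex‑analytic Cauchy estimate is unavailable here --- the phases $e^{-\mathrm i(\operatorname{Im}\alpha)\ell_n}$ stay coherent only on a complex disc of radius $\sim 1/n$ --- so the upgrade from closeness of $\psi_n$ to closeness of its first three derivatives must be carried out on the real axis, bounding $\operatorname{Var}_{\omega_n^{(\alpha)}}(\ell_n)$ and the third cumulant directly from the block expansion and the permutation invariance of $\omega_n^{(\alpha)}$, controlling the two‑site covariances of $\log\sigma^{\otimes n}$ and $\log\widetilde\rho_n$ to be $O(1/n)$. For $\abs\alpha$ large these vanish because $\omega_n^{(\alpha)}$ concentrates on the extreme eigenvalues of $\ell_n$, so the suprema over all real $\alpha$ are attained on a compact set and are $O(n)$.
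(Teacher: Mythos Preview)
Your handling of non-positivity, the lower bound $f_n(\alpha)\geq\Dm_\alpha\reli{}$, and the value bound $C_0$ is correct and essentially matches the paper (your duality detour via $D_{1-\alpha}(\sigma^{\otimes n}\|\widetilde\rho_n)$ is a minor variant of the paper's direct use of the reversed data-processing inequality at $\alpha\leq 0$). For $C_1$ the paper's route is much shorter than your integral representation: writing $g_n:=(\alpha-1)f_n=\tfrac1n\psi_n$, one has that $g_n'(\alpha)$ is the mean of $\tfrac1n\ell_n$ under the tilt $\omega_n^{(\alpha)}$, so $|g_n'|\leq\tfrac1n\|\ell_n\|_\infty\leq M$ directly from your pinching estimate; since $\alpha\leq 0$ forces $|\alpha-1|\geq 1$, the identity $f_n'=(g_n'-f_n)/(\alpha-1)$ then gives $|f_n'|\leq M+C_0$ with no further work. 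Your approach instead already requires the cumulant bound $\psi_n''\leq Cn$ just to control $C_1$.

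For $C_2$ there is a genuine gap. You correctly observe that the trivial estimate $\psi_n''=\operatorname{Var}_{\omega}(\ell_n)\leq\|\ell_n\|_\infty^2=O(n^2)$ is useless, and that what is actually needed --- via either your integral formula or the paper's identity $f_n''=(g_n''-2f_n')/(\alpha-1)$ with $g_n''=\tfrac1n\psi_n''$ --- is a bound $\psi_n''\leq Cn$ uniformly in $\alpha$. The paper at this step simply asserts $|g_n''|\leq M^2$, i.e.\ it writes the square of the normalised bound $\tfrac1n\|\ell_n\|_\infty$ rather than $\tfrac1n$ times $\|\ell_n\|_\infty^2$, so it glosses over exactly the difficulty you flag. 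But your proposed resolution is not a proof either: the $O(\log n)$ closeness of $\psi_n$ to the additive sandwiched quantity does not transfer to $\psi_n''$ or $\psi_n'''$ without the complex-analytic Cauchy estimate you yourself rule out, and permutation invariance of $\omega_n^{(\alpha)}$ does not by itself force two-site covariances of $\ell_n$ to be $O(1/n)$ --- that is precisely the concentration statement to be established, not a tool one can invoke. A complete argument would require a direct variance bound for $\ell_n$ under every tilt, uniform in $\alpha$, which neither your sketch nor the paper's argument supplies.
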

\begin{proof}
Firstly, the non-positivity of $f_n$ follows from the fact that $D_\alpha$ is non-positive for $\alpha\leq 0$. The lower bound on~$f_n$ follows from the data-processing inequality (recalling that the DPI is reversed for $\alpha\leq 0$), and additivity of the minimal relative entropy,
\begin{align}
f_n=\frac 1n \Dm_\alpha\rel{\pinch{\rho^{\otimes n}}{\sigma^{\otimes n}}}{\sigma^{\otimes n}}\geq \frac 1n \Dm_\alpha\rel{\rho^{\otimes n}}{\sigma^{\otimes n}}=\Dm_\alpha\reli{}.
\end{align}
Furthermore, given that $\Dm_\alpha\reli{}\geq \Dm_{-\infty}\reli{}$ for all $\alpha$, we have a uniform lower bound on $f_n$, i.e.,\ \mbox{$C_0:=\Dm_{-\infty}\reli{}$}.

Next, we turn to the derivatives. Before applying it to our states, we start by looking at what form the derivatives of the (classical) R\`enyi relative entropy take in the abstract, say for two classical distributions $p$ and $q$. For notational simplicity we are going to assume all logarithms below are natural to avoid factors of $\ln b$. Given that we are only concerned with non-positive $\alpha$ and are not concerned with $\alpha=1$, we can switch to looking at the unnormalised variant of the R\'enyi relative entropy of the form $(\alpha-1)D_\alpha\rel pq$. Taking derivatives of this gives
\begin{aligns}
    (\alpha-1)D_\alpha\rel pq &= \log \sum_i p_i^{\alpha}q_i^{1-\alpha},\\
    \left((\alpha-1)D_\alpha\rel pq\right)' &= \frac{\sum_i \ln\frac{p_i}{q_i}\cdot p_i^{\alpha}q_i^{1-\alpha}}{\sum_i p_i^{\alpha}q_i^{1-\alpha}},\\
    \left((\alpha-1)D_\alpha\rel pq\right)'' &= 
    \frac 
    { 
        \left(\sum_i \ln^2\frac{p_i}{q_i}\cdot p_i^{\alpha}q_i^{1-\alpha}\right)\cdot
        \left(\sum_i p_i^{\alpha}q_i^{1-\alpha}\right)^2
        -
        \left(\sum_i \ln\frac{p_i}{q_i}\cdot p_i^{\alpha}q_i^{1-\alpha}\right)^2 
    }
    {\left(\sum_i p_i^{\alpha}q_i^{1-\alpha}\right)^2}.
\end{aligns}
Conveniently, the first and second derivatives take the form of moments. Specifically, if we consider the distribution,
\begin{align}
    w_i:=\frac{p_i^{\alpha}q_i^{1-\alpha}}{\sum_j p_j^{\alpha}q_j^{1-\alpha}},
\end{align}
then the derivatives become the mean and variance of $\ln\frac{p}{q}$ with respect to $w$,
\begin{aligns}
    \left((\alpha-1)D_\alpha\rel pq\right)' &= \sum_i w_i \ln\frac{p_i}{q_i},\\
    \left((\alpha-1)D_\alpha\rel pq\right)'' &= \sum_i w_i\ln^2\frac{p_i}{q_i}-\left( \sum_i w_i\ln\frac{p_i}{q_i} \right)^2.
\end{aligns}
So now we can uniformly bound both in terms of $\abs{\ln p_i/q_i}\leq \max\lbrace -\ln \min_i p_i,-\ln \min_i q_i\rbrace$, specifically
\begin{aligns}
    \abs{\left((\alpha-1)D_\alpha\rel pq\right)'} &\leq \max\lbrace -\ln \min_i p_i,-\ln \min_i q_i\rbrace,\\
    \abs{\left((\alpha-1)D_\alpha\rel pq\right)''} &\leq \max\lbrace -\ln \min_i p_i,-\ln \min_i q_i\rbrace^2.
\end{aligns}

Now we want to return to $f_n$, wherein $p=\pinch{\rho^{\otimes n}}{\sigma^{\otimes n}}$ and $q=\sigma^{\otimes n}$. Before that, we need to deal with the pinching. Specifically, if we use the the pinching inequality
\begin{align}
    \pinch{\rho^{\otimes n}}{\sigma^{\otimes n}} \geq \frac{\rho^{\otimes n}}{\abs{\mathrm{spec}(\sigma^{\otimes n})}}\geq \frac{\rho^{\otimes n}}{n^d},
\end{align}
we can see that
\begin{align}
    -\frac 1n \ln \lambda_{\min}\left( \pinch{\rho^{\otimes n}}{\sigma^{\otimes n}} \right)
    &\leq -\frac 1n \ln \frac{\lambda_{\min}^n(\rho)}{n^d}
    =\frac{d\ln n}{n}-\ln\lambda_{\min}(\rho)
    \leq d-\ln\lambda_{\min}(\rho),
\end{align}
and thus
\begin{align}
    \max\left\lbrace
    -\frac 1n \ln \lambda_{\min}\left( \pinch{\rho^{\otimes n}}{\sigma^{\otimes n}} \right) ,
    -\ln \lambda_{\min}\left(\sigma\right)
    \right\rbrace\leq M,
\end{align}
where $M:=\max\left\lbrace d-\ln\lambda_{\min}(\rho), -\ln\lambda_{\min}(\sigma) \right\rbrace$. 

Now we return to $f_n$. We take the unnormalised version of this,
\begin{align}
    (\alpha-1)f_n(\alpha)=\frac{1}{n} (\alpha-1)D\rel{\pinch{\rho^{\otimes n}}{\sigma^{\otimes n}}}{\sigma^{\otimes n}},
\end{align}
so by the above arguments the derivatives can be bounded
\begin{aligns}
    \abs{\left((\alpha-1)f_n(\alpha)\right)'} &\leq  \max\left\lbrace
    -\frac 1n \ln \lambda_{\min}\left( \pinch{\rho^{\otimes n}}{\sigma^{\otimes n}} \right) ,
    -\ln \lambda_{\min}\left(\sigma\right)
    \right\rbrace\leq M,\\
    \abs{\left((\alpha-1)f_n(\alpha)\right)''} &\leq  \max\left\lbrace
    -\frac 1n \ln \lambda_{\min}\left( \pinch{\rho^{\otimes n}}{\sigma^{\otimes n}} \right) ,
    -\ln \lambda_{\min}\left(\sigma\right)
    \right\rbrace^2\leq M^2.
\end{aligns}

We thus have that the derivatives of the $(\alpha-1)f_n(\alpha)$ are bounded, so all that is left is to show is that this necessarily extends to $f_n(\alpha)$ itself. By straightforward algebraic manipulation, we can write the derivatives of the latter quantity in terms of those of the former, specifically
\begin{aligns}
f_n'(\alpha)&= \frac{\left((\alpha-1)f_n(\alpha)\right)'-f_n(\alpha)}{\alpha-1},\\
f_n''(\alpha)&= \frac{\left((\alpha-1)f_n(\alpha)\right)''-2f_n'(\alpha)}{\alpha-1}.
\end{aligns}
Given that $\alpha\leq 0$, and is therefore gapped away from $\alpha=1$, this causes no issues. Specifically, if we take $C_1:=M+C_0$ and $C_2:=M^2+2C_1$, then the desired uniform bounds hold as required.
\end{proof}

Before attacking the existence and properties of $\DL_\alpha$, we need one final theorem that allows us to leverage these uniform bounds to extend properties of $\lbrace f_n\rbrace_n$ through to $\DL_\alpha$. This theorem is a corollary of the Arzel\`a-Ascoli Theorem. 

\begin{lemma}[Arzel\`a-Ascoli Theorem~{\cite[Cor.~11.6.11]{Lebl2018}}]
\label{lem:aa}
    Let $\lbrace f_n\rbrace_n$ be a sequence of differentiable functions on a compact domain which are uniformly bounded, and whose derivative is also uniformly bounded. Then, there exists a uniformly convergent subsequence $\lbrace f_{m_n}\rbrace_n$.
\end{lemma}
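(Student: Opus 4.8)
The plan is to deduce this statement directly from the classical Arzel\`a--Ascoli theorem, so that the only real work is verifying its two hypotheses for our family: pointwise boundedness and equicontinuity. The first is immediate --- uniform boundedness of the values, $\abs{f_n(x)}\leq C_0$ for all $n$ and all $x$ in the domain, is in particular pointwise boundedness with a common bound. Note also that differentiability of each $f_n$ entails continuity, so we are genuinely in the setting of continuous functions on a compact metric space, as required.

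For equicontinuity, the key step is to upgrade the uniform derivative bound to a uniform Lipschitz bound via the mean value theorem: for any two points $x,y$ of the (interval) domain and any $n$, there is a $\zeta$ between them with $f_n(x)-f_n(y)=f_n'(\zeta)(x-y)$, whence $\abs{f_n(x)-f_n(y)}\leq C_1\abs{x-y}$. Thus every member of $\lbrace f_n\rbrace_n$ is $C_1$-Lipschitz with the \emph{same} constant, which makes the family uniformly equicontinuous: given $\epsilon>0$ one may take $\delta:=\epsilon/C_1$, independent of $n$ and of the base point.

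Having verified both hypotheses, I would invoke Arzel\`a--Ascoli to extract a subsequence $\lbrace f_{m_n}\rbrace_n$ converging uniformly on the compact domain, which is the claim. I do not expect a genuine obstacle here; the one mild point to keep track of is that the ``compact domain'' arising in \cref{app:pinch} is a closed interval of R\'enyi parameters (e.g.\ $\alpha\in[-K,0]$), so the mean value theorem applies on the nose and equicontinuity drops out cleanly --- on a non-convex compact set one would instead chain estimates along a covering, but that refinement is not needed for the use made of \cref{lem:aa} in establishing the existence and regularity of $\DL_\alpha$ and $\DR_\alpha$.
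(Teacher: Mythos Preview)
Your argument is correct and is the standard textbook route: uniform boundedness gives pointwise boundedness, the uniform derivative bound plus the mean value theorem gives a common Lipschitz constant and hence equicontinuity, and then the classical Arzel\`a--Ascoli theorem yields the uniformly convergent subsequence. Your remark about the domain being a closed interval of R\'enyi parameters is apt and matches how the lemma is actually applied in \cref{app:pinch}.

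There is nothing to compare against, however: the paper does not supply its own proof of \cref{lem:aa}. It simply cites the result as a corollary from Lebl's textbook and uses it as a black box in the proof of \cref{thm:pinched}. So your proposal is not so much an alternative as a self-contained justification the paper chose to outsource.
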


With this in hand, we turn to proving the properties of the pinched relative entropies.

\begin{theorem}[Properties of the pinched relative entropy]
    \label{thm:pinched}
    Define the left-pinched relative entropy as
    \begin{align}
         \DL_\alpha\reli{}:=& \lim_{n\to\infty}\frac 1n D_\alpha\rel{\pinch{\rho^{\otimes n}}{\sigma^{\otimes n}}}{\sigma^{\otimes n}}.
    \end{align}
    For full rank states and $\alpha\in\overline{\mathbb R}$, the pinched relative entropy has the following properties:
    \begin{itemize}
        \item Existence: $\DL_\alpha\reli{}$ exists.
        \item (Non-)positivity: $\DL_\alpha\reli{}$ is non-negative for $\alpha\geq 0$, and non-positive for $\alpha\leq 0$.
        \item Subminimality: $\DL_\alpha\reli{}\leq \Dm_\alpha\reli{}$ for $\alpha\geq 0$, and $\DL_\alpha\reli{}\geq\Dm_\alpha\reli{}$ for $\alpha\leq 0$. 
        \item Differentiability: $\alpha\mapsto \DL_\alpha\reli{}$ is differentiable.
    \end{itemize}
    Moreover, all of these properties also extend to the right-pinched relative entropy
    \begin{align}
         \DR_\alpha\reli{}:=& \lim_{n\to\infty}\frac 1n D_\alpha\rel{\rho^{\otimes n}}{\pinch{\sigma^{\otimes n}}{\rho^{\otimes n}}}
         .
    \end{align}
\end{theorem}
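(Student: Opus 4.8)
The plan is to split the range of $\alpha$ into the part on which closed forms are available and the part on which they are not, dealing with the latter by extracting differentiability from the uniform bounds of \cref{lem:pinch_bounded} via Arzel\`a--Ascoli.

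\textbf{Reduction.} For $\alpha\in[0,\infty)$ the left-pinched divergence coincides with the sandwiched R\'enyi divergence recalled just above (see \cite{Tomamichel2016}), so on that range existence, non-negativity, the ordering $\DL_\alpha\reli{}\le\Dm_\alpha\reli{}$ (from the known relations between the sandwiched and minimal R\'enyi divergences), and smoothness are all standard; likewise for $\DR_\alpha$ on $(-\infty,1]$. Hence it suffices to treat $\DL_\alpha$ for $\alpha\le 0$. For $\DR_\alpha$ on the remaining range $\alpha\ge 1$ I would not repeat the argument but instead transport everything from $\DL$ using the duality $(1-\alpha)\DL_\alpha\reli{}=\alpha\,\DR_{1-\alpha}\rel{\sigma}{\rho}$ noted in \cref{subsubsec:ht_large}, which is inherited from the classical R\'enyi duality precisely because the pinched dichotomies commute: this expresses $\DR_\alpha\reli{}$ for $\alpha\ge 1$ through $\DL_{1-\alpha}\rel\sigma\rho$ with $1-\alpha\le 0$ times a prefactor that is smooth and of constant negative sign on $(1,\infty)$, and pairing it with the matching duality $\Dm_\alpha\rel\rho\sigma=\tfrac{\alpha}{1-\alpha}\Dm_{1-\alpha}\rel\sigma\rho$ yields $\DR_\alpha\reli{}\ge 0$ and $\DR_\alpha\reli{}\le\Dm_\alpha\reli{}$ there; the endpoints $\alpha=1,\pm\infty$ are handled as limits.

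\textbf{Existence for $\alpha\le 0$.} Write $b_n(\alpha):=n f_n(\alpha)=D_\alpha\rel{\pinch{\rho^{\otimes n}}{\sigma^{\otimes n}}}{\sigma^{\otimes n}}$. Since each eigenspace of $\sigma^{\otimes(n+m)}$ is a direct sum of products of eigenspaces of the two factors, the product pinching refines the joint one, giving $(\mathcal P_{\sigma^{\otimes n}}\otimes\mathcal P_{\sigma^{\otimes m}})\circ\mathcal P_{\sigma^{\otimes(n+m)}}=\mathcal P_{\sigma^{\otimes n}}\otimes\mathcal P_{\sigma^{\otimes m}}$, and this channel fixes $\sigma^{\otimes(n+m)}$. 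Applying the reversed data-processing inequality that holds for $\Dm_\alpha$ at $\alpha\le 0$ (the very fact used in the proof of \cref{lem:pinch_bounded}) together with additivity of $D_\alpha$ over tensor products yields the subadditivity $b_{n+m}(\alpha)\le b_n(\alpha)+b_m(\alpha)$. By Fekete's lemma \cite{Fekete}, $\DL_\alpha\reli{}:=\lim_n\tfrac1n b_n(\alpha)=\inf_n\tfrac1n b_n(\alpha)$ exists, and by \cref{lem:pinch_bounded} it lies in $[\Dm_{-\infty}\reli{},0]$ and is therefore finite. Passing the single-$n$ inequalities $0\ge f_n(\alpha)\ge\Dm_\alpha\reli{}$ of \cref{lem:pinch_bounded} to the limit gives non-positivity and sub-minimality on $\alpha\le 0$, and monotonicity in $\alpha$ (inherited from each $f_n$) covers the $\alpha=-\infty$ endpoint.

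\textbf{Differentiability (the main obstacle).} This is the part I expect to be genuinely delicate, since no closed form is available for $\alpha\le 0$ and the property must be wrung out of the equicontinuity supplied by \cref{lem:pinch_bounded}. Fix a compact interval $I\subset(-\infty,0)$. On $I$ the family $\{f_n\}$ is uniformly bounded ($|f_n|\le C_0$), $\{f_n'\}$ is uniformly bounded ($|f_n'|\le C_1$) and uniformly Lipschitz ($|f_n''|\le C_2$); so $\{f_n\}$ and $\{f_n'\}$ are each bounded and equicontinuous. By Arzel\`a--Ascoli (\cref{lem:aa}), any subsequence of $\{f_n\}$ has a further subsequence $(n_k)$ along which $f_{n_k}\to\bar f$ and $f_{n_k}'\to g$ uniformly on $I$, and the standard theorem on differentiating uniform limits then gives $\bar f\in C^1(I)$ with $\bar f'=g$. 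But the pointwise limit already exists and equals $\alpha\mapsto\DL_\alpha\reli{}$, so $\bar f$ is that function restricted to $I$; as the same uniform limit is produced by \emph{every} subsequence, in fact $f_n\to\DL_\alpha\reli{}$ uniformly on $I$ and $\alpha\mapsto\DL_\alpha\reli{}$ is $C^1$ on $I$. Since $I$ was arbitrary, it is $C^1$ on $(-\infty,0)$; gluing with the smoothness on $(0,\infty)$ coming from the closed form, and checking (a short direct computation from the two representations) that the one-sided values and derivatives agree at $\alpha=0$, gives differentiability on all of $\mathbb{R}$. The right-pinched statements then follow from the duality as in the Reduction step, completing the proof.
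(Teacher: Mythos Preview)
Your proposal is correct and follows essentially the same route as the paper: reduce to $\alpha\le 0$ via the closed forms, obtain existence from the pinching composition identity plus (reversed) data processing and Fekete's lemma, extract differentiability from the uniform derivative bounds of \cref{lem:pinch_bounded} via Arzel\`a--Ascoli, and transfer to $\DR_\alpha$ by the duality $(1-\alpha)\DL_\alpha\reli{}=\alpha\,\DR_{1-\alpha}\rel\sigma\rho$. Your identification of \emph{sub}additivity (and hence $\DL_\alpha=\inf_n f_n$) is the correct direction for $\alpha\le 0$; the paper's text states ``superadditive'' here, which appears to be a slip, but the convergence conclusion via Fekete is of course unaffected.
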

\begin{proof}
    The sandwiched relative entropy has all of the above properties~\cite{Tomamichel2016}, and coincides with the pinched relative entropy with $\alpha\geq 0$, so we need only show that these properties hold for $\alpha\leq 0$ as well.

    If we consider the composition of pinching a composite system, we have
    \begin{align}
        \left( \mathcal P_X \otimes \mathcal P_Y \right)\left( \mathcal P_{X\otimes Y}(A)\right)=
        \left( \mathcal P_X \otimes \mathcal P_Y \right)\left(A\right).
    \end{align}
    This, together with the data-processing inequality, gives
    \begin{aligns}
    D_\alpha\rel{\pinch{\rho^{\otimes (n+m)}}{\sigma^{\otimes (n+m)}}}{\sigma^{\otimes (n+m)}}
    &\leq D_\alpha\rel{(\mathcal P_{\sigma^{\otimes n}}\otimes\mathcal P_{\sigma^{\otimes m}})\left(\rho^{\otimes (n+m)}\right)}{\sigma^{\otimes (n+m)}},\\
    &=D_\alpha\rel{\pinch{\rho^{\otimes n}}{\sigma^{\otimes n}}}{\sigma^{\otimes n}} + D_\alpha\rel{\pinch{\rho^{\otimes m}}{\sigma^{\otimes m}}}{\sigma^{\otimes m}},
    \end{aligns}
    or in other words $(n+m){\ctc f_{n+m}(\alpha)}\geq n{\ctc f_n(\alpha)}+m{\ctc f_m(\alpha)}$. {\ctc Applying Fekete's Superadditive lemma~\cite{Fekete}, this superaddivity implies that $f_n(\alpha)$ is convergent in $n$ for each $\alpha$.} 

    Next, we want to apply Arzel\`a-Ascoli Theorem (\cref{lem:aa}). \cref{lem:pinch_bounded} gives us the uniform boundedness required to apply \cref{lem:aa} to $\lbrace f_n\rbrace_n$, which gives that there exists a uniformly convergent subsequence $\lbrace f_{a_n}\rbrace_n$. But, as we already have established that $\lbrace f_n\rbrace_n$ is also convergent, this implies that this convergence is uniform. Next, using the uniform bound on $\lbrace f''_n\rbrace_n$ from \cref{lem:pinch_bounded}, we can also apply \cref{lem:aa} to $\lbrace f_n'\rbrace_n$, which gives a uniformly convergent subsequence $\lbrace f_{b_n}'\rbrace_n$. As $\lbrace f_{b_n}\rbrace_n$ and $\lbrace f_{b_n}'\rbrace_n$ are both uniformly convergent, we can commute through the limit and the derviative. Using this together with the convergence of $\lbrace f_n\rbrace_n$, we can see that $\lbrace f_n'\rbrace_n$ must also be (uniformly) convergent, and thus that $\DL_\alpha$ is differentiable,
    \begin{align}
        \lim_{n}f_{b_n}'(\alpha)=\left(\lim_{n}f_{b_n}(\alpha)\right)'=\left(\lim_{n}f_{n}(\alpha)\right)'=\left(\DL_\alpha\right)'.
    \end{align}

    To extend all of these properties to the right-pinched relative entropy we can simply use the identity
    \begin{align}
    \DR_{\alpha}\reli{}=\frac{\alpha}{1-\alpha}\DL_{1-\alpha}\rel\sigma\rho.
    \end{align}
    We might suspect that the $1-\alpha$ denominator causes issues around $\alpha=1$, but as $\DR_\alpha$ reduces to the reverse sandwiched relative entropy for $\alpha>1/2$ it therefore inherits the above properties within that range.
\end{proof}

Next we prove a nice relationship between the two pinched relative entropies and the minimal relative entropy.
\begin{lemma}
    \label{lem:maxpinch}
    The maximum (in magnitude) of the pinched R\'{e}nyi relative entropies corresponds to the minimal relative entropy,
    \begin{align}
        \max\left\lbrace 
        \abs{\DL_\alpha\reli{}},
        \abs{\DR_\alpha\reli{}} 
        \right\rbrace=\abs{\Dm_\alpha\reli{}}.
    \end{align}
\end{lemma}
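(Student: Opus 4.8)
The plan is to reduce the claimed identity to two ingredients already established in the excerpt: the \emph{subminimality} of the pinched R\'enyi relative entropies (\cref{thm:pinched}), and the fact that on each side of $\alpha = 1/2$ one of the two pinched quantities coincides \emph{exactly} with $\Dm_\alpha$. The latter is not a new computation: it just amounts to matching the closed forms recorded just before \cref{lem:ht_largedev} against the piecewise definition of $\Dm_\alpha$ in \cref{subsec:inf_notions}. Concretely, for $\alpha\geq 1/2$ the closed form gives $\DL_\alpha\reli{}=\tfrac{1}{\alpha-1}\log\Tr\bigl((\sqrt\rho\,\sigma^{(1-\alpha)/\alpha}\sqrt\rho)^\alpha\bigr)$, which is precisely the first branch of $\Dm_\alpha\reli{}$, so $\DL_\alpha\reli{}=\Dm_\alpha\reli{}$; symmetrically, for $\alpha\leq 1/2$ the closed form gives $\DR_\alpha\reli{}=\tfrac{1}{\alpha-1}\log\Tr\bigl((\sqrt\sigma\,\rho^{\alpha/(1-\alpha)}\sqrt\sigma)^{1-\alpha}\bigr)$, the second branch, so $\DR_\alpha\reli{}=\Dm_\alpha\reli{}$.

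First I would settle the signs, which lets us pass freely between the subminimality statement and the statement with absolute values. Combining the two coincidences above with the non-positivity clause of \cref{thm:pinched} (namely $\DL_\alpha\reli{},\DR_\alpha\reli{}\geq 0$ for $\alpha\geq 0$ and $\leq 0$ for $\alpha\leq 0$) shows that $\DL_\alpha\reli{}$, $\DR_\alpha\reli{}$ and $\Dm_\alpha\reli{}$ all carry the same sign for every $\alpha\in\overline{\mathbb R}$. Hence the subminimality clause of \cref{thm:pinched} --- $\DL_\alpha\leq\Dm_\alpha$ and $\DR_\alpha\leq\Dm_\alpha$ for $\alpha\geq 0$, with both inequalities reversed for $\alpha\leq 0$ --- implies
\[
    \abs{\DL_\alpha\reli{}}\leq\abs{\Dm_\alpha\reli{}}\qquad\text{and}\qquad\abs{\DR_\alpha\reli{}}\leq\abs{\Dm_\alpha\reli{}}\qquad\text{for all }\alpha\in\overline{\mathbb R},
\]
which already gives the inequality $\max\bigl\{\abs{\DL_\alpha\reli{}},\abs{\DR_\alpha\reli{}}\bigr\}\leq\abs{\Dm_\alpha\reli{}}$.

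For the matching lower bound I would split on $\alpha$: if $\alpha\geq 1/2$ then $\abs{\DL_\alpha\reli{}}=\abs{\Dm_\alpha\reli{}}$ by the coincidence above, so the maximum is at least $\abs{\Dm_\alpha\reli{}}$; if $\alpha\leq 1/2$ the same holds with $\DR_\alpha$ in place of $\DL_\alpha$. Since $\overline{\mathbb R}=(-\infty,\tfrac12]\cup[\tfrac12,\infty)$ these two cases exhaust all $\alpha$, and together with the upper bound this yields the claimed equality. The only step needing care --- and what I expect to be the (mild) obstacle --- is the exceptional values $\alpha\in\{-\infty,0,1,+\infty\}$, where the closed forms are of the indeterminate type $0/0$ or must be read as limits: at $\alpha=1$ one uses $\DL_1\reli{}=D\reli{}=\Dm_1\reli{}$ together with $\DR_1\reli{}=\Dstar\reli{}\leq D\reli{}$; at $\alpha=\pm\infty$ one uses that $\DL_{+\infty}\reli{}$ and $\DR_{-\infty}\reli{}$ are the limits of the corresponding branches and so equal $\Dm_{\pm\infty}\reli{}$; and at $\alpha=0$ all three quantities vanish for full-rank states. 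In each case the bounds $\abs{\DL_\alpha\reli{}},\abs{\DR_\alpha\reli{}}\leq\abs{\Dm_\alpha\reli{}}$ persist (these are asserted in \cref{thm:pinched} on all of $\overline{\mathbb R}$) and one of the two is attained, so the argument goes through unchanged.
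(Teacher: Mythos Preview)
Your proposal is correct and follows essentially the same route as the paper's proof: both use the subminimality clause of \cref{thm:pinched} for the upper bound, and then observe that $\DL_\alpha=\Dm_\alpha$ for $\alpha\geq 1/2$ while $\DR_\alpha=\Dm_\alpha$ for $\alpha\leq 1/2$ to get the matching lower bound. The only difference is cosmetic --- the paper obtains these coincidences by citing \cite[Prop.~4.12 and \S4.3]{Tomamichel2016}, whereas you read them off directly by matching the closed forms already recorded in the text against the piecewise definition of $\Dm_\alpha$; your added care about signs and the exceptional values $\alpha\in\{-\infty,0,1,+\infty\}$ is perfectly fine but not strictly needed beyond what the paper's terser argument implicitly assumes.
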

\begin{proof}
    From the subminimality property of the pinched relative entropy (see \cref{thm:pinched}), we have
    \begin{align}
        \abs{\Dm_\alpha\reli{}}\geq \max\left\lbrace 
        \abs{\DL_\alpha\reli{}},
        \abs{\DR_\alpha\reli{}} 
        \right\rbrace.
    \end{align}
    Next, Ref.~\cite[Prop.~4.12]{Tomamichel2016} gives that $\DL_\alpha$ corresponds to the sandwiched entropy for $\alpha\geq 0$, which in turn corresponds to the minimal entropy for $\alpha>1/2$~\cite[Sec.~4.3]{Tomamichel2016}. By duality this means that $\DR_\alpha$ corresponds to the reverse sandwiched relative entropy for $\alpha\leq 1$, and also to the minimal for $\alpha\leq 1/2$. Thus, this inequality is satisfied for all $\alpha$ as required.
\end{proof}

\section{Two-sided error}
\label{app:two-side}

In \cref{\ratetheorems} we only considered transformations involving an error on the first state in the dichotomy. One reason this was done is because such transformations are the relevant transformations for the resource theoretic applications of concern (see \cref{subsec:thermo,sec:frame_ent}). Another is that, as we will see, the more general problem in which we allow errors on both states is no more rich. In this appendix we will give a summary of the asymptotic rate scalings for two non-zero-errors. In lieu of giving rigorous proofs of these rates, we will instead mention how things change from the proofs of \cref{\ratetheorems}.

One of the reason that two non-zero-errors do not give a much richer problem is that there exist errors for which the rate becomes infinite. To be clear, we do not mean that the rate diverges as $n\to\infty$ (\textit{a la} \cref{thm:rate_extreme}), but instead a situation where the rate is unbounded for a finite $n$. This occurs because, if the errors as sufficiently large, the Blackwell order breaks down in its entirety:
\begin{lemma}[Breakdown of Blackwell ordering]
    \label{lem:initial}
    If $\beta_{\epsilon_\rho}\reli{}\leq \epsilon_\sigma$, then $(\rho,\sigma)$ Blackwell dominates \emph{all} dichotomies, i.e.,\ 
    \begin{align}
        \beta_{\epsilon_\rho}\reli{1}\leq \epsilon_\sigma
        \qquad\iff\qquad
        (\rho,\sigma)\succeq_{(\epsilon_\rho,\epsilon_\sigma)} (\rho',\sigma')~\forall \rho',\sigma'.
    \end{align}
\end{lemma}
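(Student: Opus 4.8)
The plan is to prove both implications by constructing an explicit measure-and-prepare channel from an optimal hypothesis test, so the content of the lemma is essentially that ``one good test lets you reprogram the output at will.''

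\emph{Forward direction.} Assume $\beta_{\epsilon_\rho}\reli{}\leq \epsilon_\sigma$. First I would note that the semi-definite program \eqref{eq:beta1}--\eqref{eq:beta3} attains its optimum: the feasible set $\{\,0\leq Q\leq \iden,\ \Tr(\rho Q)\geq 1-\epsilon_\rho\,\}$ is compact and nonempty (it contains $Q=\iden$), and $Q\mapsto\Tr(\sigma Q)$ is continuous. Hence there is an operator $Q$ with $0\leq Q\leq\iden$, $\Tr(\rho Q)\geq 1-\epsilon_\rho$ and $\Tr(\sigma Q)=\beta_{\epsilon_\rho}\reli{}\leq\epsilon_\sigma$. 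Given \emph{any} target dichotomy $(\rho',\sigma')$, define
\begin{align}
    \mathcal E(\tau):=\Tr(\tau Q)\,\rho'+\Tr\bigl(\tau(\iden-Q)\bigr)\,\sigma',
\end{align}
which is CPTP, being a binary POVM $\{Q,\iden-Q\}$ followed by a conditional state preparation. Writing $a:=\Tr\bigl(\rho(\iden-Q)\bigr)=1-\Tr(\rho Q)\leq\epsilon_\rho$ and $b:=\Tr(\sigma Q)\leq\epsilon_\sigma$, we get $\mathcal E(\rho)=(1-a)\rho'+a\sigma'$ and $\mathcal E(\sigma)=b\rho'+(1-b)\sigma'$. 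Using the identity $T\bigl((1-t)\omega_1+t\omega_2,\omega_1\bigr)=t\,T(\omega_2,\omega_1)\leq t$ (valid since trace distance is at most $1$), this gives $T(\mathcal E(\rho),\rho')\leq a\leq\epsilon_\rho$ and $T(\mathcal E(\sigma),\sigma')\leq b\leq\epsilon_\sigma$, i.e.\ $(\rho,\sigma)\succeq_{(\epsilon_\rho,\epsilon_\sigma)}(\rho',\sigma')$. Since $(\rho',\sigma')$ was arbitrary, $(\rho,\sigma)$ dominates all dichotomies.

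\emph{Converse.} Assume $(\rho,\sigma)$ dominates every dichotomy. I would specialise to a perfectly distinguishable target, say $\rho'=\proj0$ and $\sigma'=\proj1$ for orthonormal $\ket0,\ket1$. Then there is a channel $\mathcal E$ with $T(\mathcal E(\rho),\proj0)\leq\epsilon_\rho$ and $T(\mathcal E(\sigma),\proj1)\leq\epsilon_\sigma$. Take the Heisenberg-picture test $Q:=\mathcal E^\dagger(\proj0)$, so $0\leq Q\leq\iden$ and $\Tr(\tau Q)=\braopket0{\mathcal E(\tau)}0$. The operator form of trace distance, $\abs{\Tr\bigl(P(\omega_1-\omega_2)\bigr)}\leq T(\omega_1,\omega_2)$ for $0\leq P\leq\iden$, applied with $P=\proj0$ yields $\Tr(\rho Q)=\braopket0{\mathcal E(\rho)}0\geq 1-\epsilon_\rho$ and $\Tr(\sigma Q)=\braopket0{\mathcal E(\sigma)}0\leq\epsilon_\sigma$. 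Thus $Q$ is feasible for the $\beta_{\epsilon_\rho}$ program with objective at most $\epsilon_\sigma$, whence $\beta_{\epsilon_\rho}\reli{}\leq\epsilon_\sigma$.

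\emph{Main obstacle.} There is no real difficulty: the only care needed is checking attainability of the SDP optimum (compactness of the feasible set), verifying that the measure-and-prepare map is a bona fide channel, and disposing of trivial edge cases of $\epsilon_\rho,\epsilon_\sigma$ (e.g.\ values $\geq 1$, where $Q=0$ or $Q=\iden$ settles everything). The conceptual point is simply that, once a test separates $\rho$ from $\sigma$ within the allowed errors, its two outcome branches can be post-composed with preparations of any $\rho'$ and $\sigma'$.
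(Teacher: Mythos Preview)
Your proof is correct and takes essentially the same approach as the paper: the forward direction uses the identical measure-and-prepare channel built from an optimal test, and the converse specialises to the perfectly distinguishable target $(\proj0,\proj1)$. The only cosmetic differences are that you argue SDP attainability explicitly and construct the Heisenberg-picture test $Q=\mathcal E^\dagger(\proj0)$ directly, whereas the paper phrases the converse as a one-line invocation of the data-processing inequality for $\beta_x$ together with $\beta_x\rel{\proj0}{\proj1}\equiv 0$.
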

\begin{proof}
    Using the definition of $\beta_x$, we have that there exists a test $Q$ such that 
    \begin{align}
        \Tr ((I-Q)\rho_1)\leq \epsilon_\rho 
        \qquad\text{and}\qquad
        \Tr (Q\sigma_1)\leq \epsilon_\sigma.
    \end{align}
    Consider a measure-and-prepare channel based on that very test, specifically
    \begin{align}
        \mathcal E(\tau):=\rho_2 \Tr (Q \tau) + \sigma_2 \Tr ((I-Q) \tau). 
    \end{align}
    Applying this channel, we can easily see it has the desired error properties for any output dichotomy,
    \begin{aligns}
        T\left( \mathcal E(\rho_1),\rho_2 \right)&=T(\rho_2,\sigma_2)\cdot \Tr((I-Q)\rho_1) \leq \epsilon_\rho,\\
        T\left( \mathcal E(\sigma_1),\sigma_2 \right)&=T(\rho_2,\sigma_2)\cdot \Tr (Q\sigma_1) \leq \epsilon_\sigma.
    \end{aligns}
    As for the reverse direction, this simply follows from using the data-processing inequality for $\beta_x$ and the output states $\rho'=\proj{0}$ and $\sigma'=\proj{1}$, as $\beta_x\rel{\ketbra{0}{0}}{\ketbra{1}{1}}\equiv 0$.
\end{proof}

So now let us move on to transformation rates. Similar to the one-sided error case, let $R_n^*(\epsilon_{n}^{(\rho)},\epsilon_{n}^{(\sigma)})$ denote the largest $R_n$ such that
\begin{align}
    \left(\rho_1^{\otimes n},\sigma_1^{\otimes n}\right)
    \succeq_{(\epsilon_{n}^{(\rho)},\epsilon_{n}^{(\sigma)})}
    \left(\rho_2^{\otimes R_nn},\sigma_2^{\otimes R_nn}\right),
\end{align}
where we note that $R_n^*(\epsilon_{n}):=R_n^*(\epsilon_{n},0)$.

\definecolor{color1}{RGB}{191,255,191}
\definecolor{color2}{RGB}{223,223,255}
\definecolor{color3}{RGB}{255,255,159}
\definecolor{color4}{RGB}{255,191,191}

\newcommand{\cca}[1]{\multicolumn{1}{>{\columncolor{color1}[\tabcolsep]}c}{#1}}
\newcommand{\ccb}[1]{\multicolumn{1}{>{\columncolor{color2}[\tabcolsep]}c}{#1}}
\newcommand{\ccc}[1]{\multicolumn{1}{>{\columncolor{color3}[\tabcolsep]}c}{#1}}
\newcommand{\ccd}[1]{\multicolumn{1}{>{\columncolor{color4}[\tabcolsep]}c}{#1}}

\begin{table}[ht]
\begin{tabular}{c|ccccccc}
\textbf{}& \textbf{Zero-error} & \textbf{Large$_<$} & \textbf{Moderate$_<$} & \textbf{Small} & \textbf{Moderate$_>$} & \textbf{Large$_>$} & \textbf{Extreme} \\ \hline
\rowcolor{color1}[\tabcolsep]\cellcolor{white}\textbf{Zero-error} & \cref{thm:rate_zero} & \cref{thm:rate_largedev_lo} & \cref{thm:rate_moddev} & \cref{thm:rate_smalldev} & \cref{thm:rate_moddev} & \cref{thm:rate_largedev_hi} & \cref{thm:rate_extreme} 
\\
\textbf{Large$_<$} & \cca{\cref{thm:rate_largedev_lo}} & \ccb{\cref{lem:twoside_largelo}} & \multicolumn{5}{>{\columncolor{color3}[\tabcolsep]}c}{\cref{lem:twoside_unchanged}} 
\\
\textbf{Moderate$_<$} & \cca{\cref{thm:rate_moddev}} & \ccc{} &
\multicolumn{5}{>{\columncolor{color4}[\tabcolsep]}c}{} 
\\
\textbf{Small} & \cca{\cref{thm:rate_smalldev}} & \ccc{} & \multicolumn{5}{>{\columncolor{color4}[\tabcolsep]}c}{} 
\\
\textbf{Moderate$_>$} & \cca{\cref{thm:rate_moddev}} & \ccc{\cref{lem:twoside_unchanged}} &
\multicolumn{5}{>{\columncolor{color4}[\tabcolsep]}c}{\cref{lem:breakdown}}
\\
\textbf{Large$_>$} & \cca{\cref{thm:rate_largedev_hi}} & \ccc{} & 
\multicolumn{5}{>{\columncolor{color4}[\tabcolsep]}c}{} 
\\
\textbf{Extreme} & \cca{\cref{thm:rate_extreme}} & \ccc{} &   \multicolumn{5}{>{\columncolor{color4}[\tabcolsep]}c}{} 
\end{tabular}
\caption{\textbf{Summary of the two-sided error results.} The green results, \cref{\ratetheorems}, are the one-sided error results presented in \cref{sec:results}. The red region, \cref{lem:breakdown}, denotes where the Blackwell order breaks down resulting in eventually infinite transformation rates. The yellow region, \cref{lem:twoside_unchanged}, denotes the regimes in which the one-sided rates hold until a critical error exponent is reached, beyond which the Blackwell order once again breaks down. Finally the blue region, \cref{lem:twoside_largelo}, denotes the sole regime in which there is a non-trivial change in the transformation rate from the one-sided error case.}
\label{tbl:twoside}
\end{table}

In \cref{\ratetheorems} we dealt with the cases where one error was exactly zero, splitting the results up by the scaling of the other error into 7 different regimes (small, moderate low/high, large low/high, extreme low/high).
Na\"ively, one might think we then need to consider 49 different regimes for the general two-sided error problem (see \cref{tbl:twoside}). However, \cref{lem:initial} will allow us to instantly rule out 25 of these regimes in which neither error is exponentially small:
\begin{lemma}[Rate breakdown]
    \label{lem:breakdown}
    Suppose that neither $\epsilon_n^{(\rho)}$ nor $\epsilon_n^{(\sigma)}$ is exponentially bounded, i.e.
    \begin{align}
        \lim_{n\to\infty} \frac 1n \log\epsilon_n^{(\rho)}
        =\lim_{n\to\infty} \frac 1n \log\epsilon_n^{(\rho)}
        =0.
    \end{align}
    Then, the rate is eventually infinite, $R_n^*(\epsilon_n^{(\rho)},\epsilon_n^{(\sigma)})\ev =+\infty$.
\end{lemma}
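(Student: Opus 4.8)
The plan is to reduce this to a single application of \cref{lem:initial} after controlling the relevant type-II hypothesis testing error. Recall that \cref{lem:initial} tells us that the dichotomy $(\rho_1^{\otimes n},\sigma_1^{\otimes n})$ Blackwell dominates \emph{every} dichotomy with errors $(\epsilon_n^{(\rho)},\epsilon_n^{(\sigma)})$ as soon as
\begin{align}
    \beta_{\epsilon_n^{(\rho)}}\rel{\rho_1^{\otimes n}}{\sigma_1^{\otimes n}}\leq \epsilon_n^{(\sigma)}.
\end{align}
In particular, if this holds for some $n$, then in fact $(\rho_1^{\otimes n},\sigma_1^{\otimes n})\succeq_{(\epsilon_n^{(\rho)},\epsilon_n^{(\sigma)})}(\rho_2^{\otimes m},\sigma_2^{\otimes m})$ for \emph{all} $m$, so the transformation rate is unbounded for that $n$; and if it holds eventually, then $R_n^*(\epsilon_n^{(\rho)},\epsilon_n^{(\sigma)})\ev=+\infty$ as claimed. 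So the whole proof comes down to showing that the inequality above eventually holds under the stated hypothesis.

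First I would set up the asymptotics of the left-hand side. Since $\epsilon_n^{(\rho)}$ is not exponentially bounded, $\frac 1n\log\epsilon_n^{(\rho)}\to 0$, so $\epsilon_n^{(\rho)}$ is (in the language of \cref{fig:sigmoid_ht}) a type-I error that approaches neither $0$ nor $1$ exponentially fast. One has to be slightly careful: $\epsilon_n^{(\rho)}$ might still tend to $0$ or to $1$ subexponentially, or oscillate. The clean way to handle this is to split by whether $\epsilon_n^{(\rho)}$ is bounded away from $1$ along the relevant subsequence. If $\limsup_n \epsilon_n^{(\rho)}<1$, pick any constant $c$ with $\limsup_n\epsilon_n^{(\rho)}<c<1$; by monotonicity of $\beta_x$ in $x$ we have $\beta_{\epsilon_n^{(\rho)}}\rel{\rho_1^{\otimes n}}{\sigma_1^{\otimes n}}\ltev\beta_{c}\rel{\rho_1^{\otimes n}}{\sigma_1^{\otimes n}}$, and by Stein's Lemma (\cref{lem:ht_stein}) the latter decays as $\exp(-nD\reli{1}+o(n))$, hence exponentially. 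If instead $\epsilon_n^{(\rho)}\to 1$ (necessarily subexponentially, i.e.\ moderately or more slowly), then $\epsilon_n^{(\rho)}$ falls into the high-error moderate regime (or is dominated by such a sequence), and \cref{lem:ht_moddev} — or more simply the fact that $1-\epsilon_n^{(\rho)}$ is subexponential, combined with \cref{lem:ht_largedev} evaluated in the $\mathcal R_R$ region — gives that $\beta_{\epsilon_n^{(\rho)}}\rel{\rho_1^{\otimes n}}{\sigma_1^{\otimes n}}$ decays \emph{at least} exponentially (in fact super-exponentially here). In all cases we conclude there is a constant $\kappa>0$ such that $\beta_{\epsilon_n^{(\rho)}}\rel{\rho_1^{\otimes n}}{\sigma_1^{\otimes n}}\ltev\exp(-\kappa n)$; one can take $\kappa$ slightly below $D\reli{1}$.

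On the other side, since $\epsilon_n^{(\sigma)}$ is also not exponentially bounded, $\frac 1n\log\epsilon_n^{(\sigma)}\to 0$, so for any fixed $\kappa>0$ we have $\epsilon_n^{(\sigma)}\gtev\exp(-\kappa n)$ — the subexponentially-vanishing (or non-vanishing) sequence $\epsilon_n^{(\sigma)}$ eventually dominates any genuinely exponentially small quantity. Combining the two displays, for $n$ large enough
\begin{align}
    \beta_{\epsilon_n^{(\rho)}}\rel{\rho_1^{\otimes n}}{\sigma_1^{\otimes n}}\ltev\exp(-\kappa n)\ltev\epsilon_n^{(\sigma)},
\end{align}
which is exactly the hypothesis of \cref{lem:initial} for the $n$-fold states. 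Hence $(\rho_1^{\otimes n},\sigma_1^{\otimes n})\succeq_{(\epsilon_n^{(\rho)},\epsilon_n^{(\sigma)})}(\rho_2^{\otimes m},\sigma_2^{\otimes m})$ for all $m$ and all sufficiently large $n$, so $R_n^*(\epsilon_n^{(\rho)},\epsilon_n^{(\sigma)})\ev=+\infty$. The main obstacle is the bookkeeping in the previous step: making sure the "not exponentially bounded" hypothesis on $\epsilon_n^{(\rho)}$ really does force $\beta_{\epsilon_n^{(\rho)}}$ to be exponentially small along \emph{every} subsequence — the subtlety being sequences that hug $1$ subexponentially, which is handled by monotonicity plus the high-error hypothesis-testing asymptotics rather than plain Stein's Lemma. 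Everything after that is immediate from \cref{lem:initial}.
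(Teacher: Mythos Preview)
Your reduction to \cref{lem:initial} is exactly right, and the target inequality $\beta_{\epsilon_n^{(\rho)}}\rel{\rho_1^{\otimes n}}{\sigma_1^{\otimes n}}\ltev\epsilon_n^{(\sigma)}$ is the correct thing to establish. However, your Case~1 argument contains a genuine error: $\beta_x$ is monotone \emph{decreasing} in $x$, so from $\limsup_n\epsilon_n^{(\rho)}<c$ (i.e.\ $\epsilon_n^{(\rho)}\ltev c$) you get $\beta_{\epsilon_n^{(\rho)}}\gtev\beta_c$, which is the wrong direction. To upper-bound $\beta_{\epsilon_n^{(\rho)}}$ you need a \emph{lower} bound on $\epsilon_n^{(\rho)}$, and your case split does not supply one. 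In particular, the problematic regime is not ``sequences that hug $1$'' (there $\beta_{\epsilon_n^{(\rho)}}$ is trivially tiny) but rather sequences with $\epsilon_n^{(\rho)}\to 0$ subexponentially, e.g.\ $\epsilon_n^{(\rho)}=e^{-\sqrt n}$, which your Case~1 purports to handle via Stein's lemma at a constant $c>\epsilon_n^{(\rho)}$ --- but that comparison goes the wrong way.

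The paper's proof sidesteps the case analysis entirely by exploiting the hypothesis directly as a lower bound: since $\tfrac1n\log\epsilon_n^{(\rho)}\to 0$, for \emph{any} fixed $\delta>0$ one has $\epsilon_n^{(\rho)}\gtev\exp(-\delta n)$, whence by monotonicity $\beta_{\epsilon_n^{(\rho)}}\ltev\beta_{\exp(-\delta n)}$. The right-hand side is then controlled not by Stein's lemma but by the large-deviation analysis (\cref{lem:ht_largedev}): since $\Gamma_0\reli{1}=-D\reli{1}<0$ and $\Gamma_\lambda$ is continuous, one can choose $\delta$ small enough that $\Gamma_{-\delta}\reli{1}<-\delta$, giving $\beta_{\exp(-\delta n)}\ltev\exp(-\delta n)\ltev\epsilon_n^{(\sigma)}$. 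The fix to your argument is thus to replace the constant-$c$ comparison by an exponentially-small-$\delta$ comparison and to invoke \cref{lem:ht_largedev} rather than Stein's lemma.
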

\begin{proof}
    The idea here is to show that if neither error is exponentially shrinking, then eventually we see a breakdown of the Blackwell ordering in the sense of \cref{lem:initial}. As neither error is exponententially decaying, then we can take any arbitrarily small constant $\delta>0$ and have that
    \begin{align}
        \epsilon_n^{(\rho)}\gtev \exp(-\delta n)
        \qquad\text{and}\qquad
        \epsilon_n^{(\sigma)}\gtev \exp(-\delta n).
    \end{align}
    From \cref{lem:ht_largedev} we have
    \begin{align}
        \gamma_{-\delta}\rel{\rho_1^{\otimes n}}{\sigma_1^{\otimes n}}=\Gamma_{-\delta}\reli{1}.
    \end{align}
    Given that $\Gamma_0\reli{}=-D\reli{}<0$, and $\Gamma_\lambda$ is continuous in $\lambda$, then for sufficiently small $\delta$ then we will also have $\Gamma_{-\delta}\reli{}<-\delta$. In terms of the type-II error probability,
    \begin{align}
        \beta_{\exp(-\delta n)}\rel{\rho_1^{\otimes n}}{\sigma_1^{\otimes n}}\ltev\exp(-\delta n).
    \end{align}
    Lastly we can use the monotonicity of $\beta_x$, which gives
    \begin{align}
        \beta_{\epsilon_n^{(\rho)}}\rel{\rho_1^{\otimes n}}{\sigma_1^{\otimes n}}
        \ev\leq
        \beta_{L^{-1}[-\delta n]}\rel{\rho_1^{\otimes n}}{\sigma_1^{\otimes n}}
        \ltev \exp(-\delta n)
        \ltev
        \epsilon_n^{(\sigma)},
    \end{align}
    and so for sufficiently large $n$ \cref{lem:initial} applies, and thus the rate become infinite, $R_n^*(\epsilon_n^{(\rho)},\epsilon_n^{(\sigma)})\ev =+\infty$.
\end{proof}

As such, the only regimes left are the cases where one error is exponentially small and the other is non-zero. For simplicity, we will assume that the second error is the exponentially small error for the rest of this appendix,
\begin{align}
    \epsilon_n^{(\sigma)}:=\exp(-n\lambda_\sigma),
\end{align}
and will discuss how this non-zero $\epsilon_n^{(\sigma)}$ modifies the results of \cref{\ratetheorems} for different regimes of $\epsilon_n^{(\rho)}$. As we shall see below, in the large deviation low-error regime we get a non-trivial change in the asymptotic rate (\cref{lem:twoside_largelo}), but in all other regimes we get that the one-sided error results hold unchanged up to a critical value of $\lambda_n^{(\rho)}$, beyond which we see a breakdown similar to \cref{lem:breakdown} resulting in an eventually infinite rate (\cref{lem:twoside_unchanged}). A classification of the 49 two-sided error regimes is given in \cref{tbl:twoside}.

\subsection{High errors}

We will start with the small and moderate deviation results. In these cases, we will see that as long as the exponent of the second error, $\lambda_\sigma$, is above a certain critical exponent then these regimes are left unchanged. But if it crosses, we also get a complete breakdown. 

\begin{lemma}[Unchanged two-sided rates]
    \label{lem:twoside_unchanged}
    If $\epsilon_n^{(\rho)}$ is in the small or moderate deviation regimes, \mbox{$e^{\omega(n)}\leq \epsilon_n^{(\rho)}\leq 1-e^{O(n)}$}, and $\lambda_\sigma>D\reli{1}$, then the small and moderate deviation results of \cref{thm:rate_smalldev,thm:rate_moddev} remain unchanged, and if $\lambda_\sigma<D\reli{1}$ then \mbox{$R_n^*(\epsilon_n^{(\rho)},\epsilon_n^{(\sigma)})\ev=+\infty$}.
    
    If $\epsilon_n^{(\rho)}$ is in the high-error large deviation regime, $\epsilon_n^{(\rho)}:=1-\exp(-n\lambda_n^{(\rho)})$, and $\lambda_\sigma>-\Gamma_{\lambda_{\rho}}\reli{1}$, then the high-error large deviation results of \cref{thm:rate_largedev_hi} remain unchanged, and if $\lambda_\sigma<-\Gamma_{\lambda_{\rho}}\reli{1}$ then $R_n^*(\epsilon_n^{(\rho)},\epsilon_n^{(\sigma)})\ev=+\infty$.
    
    If $\epsilon_n^{(\rho)}$ is in the extreme deviation regime, $\epsilon_n^{(\rho)}=1-\exp(\omega(n))$, then \cref{thm:rate_extreme} remains unchanged for any $\lambda_n^{(\sigma)}$.
\end{lemma}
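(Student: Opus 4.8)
The strategy is to reuse the proofs of \cref{\ratetheorems} essentially verbatim, the only change being the extra additive term $\epsilon_n^{(\sigma)}$ on the right-hand side of the hypothesis-testing conditions of \cref{lem:ht}, and to invoke the breakdown lemma \cref{lem:initial} in the regimes where this term dominates. The achievability (lower-bound) halves require no work at all: every achievability argument in \cref{subsec:rates} establishes an inequality of the shape $\betaL_x\rel{\rho_1^{\otimes n}}{\sigma_1^{\otimes n}}\ltev\beta_{x-\epsilon_n^{(\rho)}}\rel{\rho_2^{\otimes r_nn}}{\sigma_2^{\otimes r_nn}}$ for all $x$ in the relevant range, and since $\epsilon_n^{(\sigma)}\geq 0$ adding it to the right-hand side only makes the sufficient condition of \cref{lem:ht} easier to satisfy. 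Hence the one-sided achievable rates remain achievable; in particular, applied to the extreme-deviation proof of \cref{thm:rate_extreme}, which shows that every constant rate is eventually achievable, this already yields $\lim_n R_n^*(\epsilon_n^{(\rho)},\epsilon_n^{(\sigma)})=\infty$ for an arbitrary companion error $\epsilon_n^{(\sigma)}$, proving the last sentence of the lemma.

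For the optimality (upper-bound) halves, recall that each one-sided optimality argument isolates a ``bottleneck'' type-I error $x_n$ at which $\beta_{x_n}\rel{\rho_1^{\otimes n}}{\sigma_1^{\otimes n}}$ strictly dominates the target term $\beta_{x_n-\epsilon_n^{(\rho)}}\rel{\rho_2^{\otimes R_nn}}{\sigma_2^{\otimes R_nn}}$, with a slack built into the chosen rate ($\delta/\sqrt n$ in \cref{thm:rate_smalldev}, $\lambda'<\lambda$ in \cref{thm:rate_moddev}, a strict inequality in \cref{thm:rate_largedev_hi}). In the small and moderate regimes this bottleneck sits at a constant or subexponentially-varying type-I error, so \cref{lem:ht_stein,lem:ht_smalldev,lem:ht_moddev} give $-\tfrac1n\log\beta_{x_n}\rel{\rho_1^{\otimes n}}{\sigma_1^{\otimes n}}\to D\reli{1}$; in the high-error large-deviation regime the bottleneck is at $x_n=\tfrac12(1+\epsilon_n^{(\rho)})$, whose log-odds per copy tends to $\lambda_\rho$, so the (uniform) large-deviation analysis of \cref{lem:ht_largedev} gives $-\tfrac1n\log\beta_{x_n}\rel{\rho_1^{\otimes n}}{\sigma_1^{\otimes n}}\to-\Gamma_{\lambda_\rho}\reli{1}$. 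When $\lambda_\sigma$ exceeds the stated threshold -- $D\reli{1}$ in the small/moderate case, $-\Gamma_{\lambda_\rho}\reli{1}$ in the large-high-error case -- the term $\epsilon_n^{(\sigma)}=\exp(-n\lambda_\sigma)$ is exponentially smaller than $\beta_{x_n}\rel{\rho_1^{\otimes n}}{\sigma_1^{\otimes n}}$, so it is absorbed into the same slack: the strict inequality $\beta_{x_n}\rel{\rho_1^{\otimes n}}{\sigma_1^{\otimes n}}\gtev\beta_{x_n-\epsilon_n^{(\rho)}}\rel{\rho_2^{\otimes R_nn}}{\sigma_2^{\otimes R_nn}}+\epsilon_n^{(\sigma)}$ still holds, \cref{lem:ht} still forbids the transformation, and the full one-sided expressions of \cref{thm:rate_smalldev,thm:rate_moddev,thm:rate_largedev_hi} carry over unchanged.

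It remains to treat $\lambda_\sigma$ below threshold, where the claim is $R_n^*(\epsilon_n^{(\rho)},\epsilon_n^{(\sigma)})\ev=+\infty$. Here the same asymptotic estimates run the other way: $\beta_{\epsilon_n^{(\rho)}}\rel{\rho_1^{\otimes n}}{\sigma_1^{\otimes n}}$ decays with exponent $D\reli{1}$ (small/moderate) or $-\Gamma_{\lambda_\rho}\reli{1}$ (large-high-error), so picking $\delta>0$ with $\lambda_\sigma$ below the threshold minus $\delta$ gives $\beta_{\epsilon_n^{(\rho)}}\rel{\rho_1^{\otimes n}}{\sigma_1^{\otimes n}}\ltev\exp(-n\lambda_\sigma)=\epsilon_n^{(\sigma)}$, whereupon \cref{lem:initial} tells us that $(\rho_1^{\otimes n},\sigma_1^{\otimes n})$ Blackwell-dominates \emph{every} dichotomy, hence $(\rho_2^{\otimes m},\sigma_2^{\otimes m})$ for all $m$, so the rate is eventually unbounded exactly as in \cref{lem:breakdown}. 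The only point needing a little care -- and the step I expect to be the main obstacle -- is making precise, in the moderate$_>$ and large-high-error subcases, that the bottleneck type-I error (which tends to $1$) really has the claimed first-order, respectively $-\Gamma_{\lambda_\rho}\reli{1}$, exponent \emph{along the specific sequence} $\epsilon_n^{(\rho)}$ rather than merely for errors with exactly the nominal log-odds; this is precisely the content of the uniform hypothesis-testing lemmas of \cref{app:uni}, which are already invoked in the corresponding one-sided proofs, so no genuinely new estimate is required.
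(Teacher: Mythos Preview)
Your proposal is correct and follows essentially the same route as the paper's proof sketch: achievability is monotone in $\epsilon_n^{(\sigma)}$ so the one-sided lower bounds carry over for free (and the extreme case follows immediately); optimality survives above threshold because the extra $\epsilon_n^{(\sigma)}$ term is exponentially dominated by $\beta_{x_n}\rel{\rho_1^{\otimes n}}{\sigma_1^{\otimes n}}$ at the bottleneck; and below threshold the same exponent comparison triggers \cref{lem:initial}. Your identification of the bottleneck exponents $D\reli{1}$ and $-\Gamma_{\lambda_\rho}\reli{1}$, and your remark that the uniform hypothesis-testing lemmas of \cref{app:uni} are what make the ``along the specific sequence'' step rigorous, are exactly the points the paper relies on.
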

\begin{proof}[Proof sketch]
    We start with the small and moderate cases. When $\lambda_\sigma<D\reli{1}$ we can once again use \cref{lem:initial}. Specifically, the first-order contributions of \cref{lem:ht_smalldev,lem:ht_moddev} give that
    \begin{align}\label{eq:twoside_smallmod}
        \lim_{n\to\infty}-\frac 1n\log \beta_{\epsilon_n^{(\rho)}}\rel{\rho_1^{\otimes n}}{\sigma_1^{\otimes n}}=D\reli{1},
    \end{align}
    for any $\epsilon_n^{(\rho)}$ that is not exponentially approaching either 0 or 1. So, if $\lambda_\sigma<D\reli{1}$, then $\epsilon_n^{(\sigma)}$ is decaying with a smaller exponent and must dominate this expression, specifically
    \begin{align}
        \beta_{\epsilon_n^{(\rho)}}\rel{\rho_1^{\otimes n}}{\sigma_1^{\otimes n}}\ev < \epsilon_n^{(\sigma)}.
    \end{align}
    Thus, by \cref{lem:initial}, the Blackwell order breaks down and \mbox{$R_n^*(\epsilon_n^{(\rho)},\epsilon_n^{(\sigma)})\ev=+\infty$}.
    
    Next, we want to argue that for $\lambda_\sigma>D\reli{1}$ the results of  \cref{thm:rate_smalldev,thm:rate_moddev} remain unchanged. Clearly, allowing errors on the second state can only \emph{increase} the optimal transformation rate, and so to demonstrate this rate remains unchanged we need only show that the upper bound (optimality) remains unchanged. The optimality bound of \cref{thm:rate_smalldev,thm:rate_moddev} comes from applying \cref{lem:ht}, which bounds the rate $R_n$ by 
    \begin{align}
        \forall x\in\left(\epsilon_n^{(\rho)},1\right):~~\beta_{x}\rel{\rho_1^{\otimes n}}{\sigma_1^{\otimes n}}
        \leq \beta_{x-\epsilon_n^{(\rho)}}\rel{\rho_2^{\otimes R_n n}}{\sigma_2^{\otimes R_n n}}.
    \end{align}
    In the presence of two-sided errors this changes to
    \begin{align}
    \label{eq:twoside_smallmod2}
        \forall x\in\left(\epsilon_n^{(\rho)},1\right):~~\beta_{x}\rel{\rho_1^{\otimes n}}{\sigma_1^{\otimes n}}-\epsilon_n^{(\sigma)}
        \leq \beta_{x-\epsilon_n^{(\rho)}}\rel{\rho_2^{\otimes R_n n}}{\sigma_2^{\otimes R_n n}}.
    \end{align}
    Now, consider the two terms on the left hand side. By \cref{eq:twoside_smallmod}, we know that the first $\beta_x$ term is exponentially decaying with $n$, with an exponent of $D\reli{1}$, and $\epsilon_n^{(\sigma)}$ is decaying with an exponent of $\lambda_\sigma$. As $\lambda_\sigma>D\reli{1}$, we have that this error term is asymptotically dominated, specifically
    \begin{align}
        \lim_{n\to\infty}
        \frac{\beta_{x}\rel{\rho_1^{\otimes n}}{\sigma_1^{\otimes n}}-\epsilon_n^{(\sigma)}}{\beta_{x}\rel{\rho_1^{\otimes n}}{\sigma_1^{\otimes n}}}=1.
    \end{align}
    As such, the $\epsilon_n^{(\sigma)}$ term in \cref{eq:twoside_smallmod2} is asymptotically irrelevant, reducing this optimality bound to that given in \cref{thm:rate_smalldev,thm:rate_moddev}.
    
    Now we turn to the high-error large deviation regime. From \cref{lem:ht_largedev} we have
    \begin{align}
        \lim_{n\to\infty} -\frac 1n \log \beta_{\epsilon_n^{(\rho)}}\rel{\rho_1^{\otimes n}}{\sigma_1^{\otimes n}} = \Gamma_{\lambda_\rho}\reli{1}.
    \end{align}
    So, if $\lambda_\rho < -\Gamma_{\lambda_\rho}\reli{1}$, then
    \begin{align}
        \beta_{\epsilon_n^{(\rho)}}\rel{\rho_1^{\otimes n}}{\sigma_1^{\otimes n}}\ev < \epsilon_n^{(\sigma)},
    \end{align}
    and so by \cref{lem:initial} we can conclude $R_n^*(\epsilon_n^{(\rho)},\epsilon_n^{(\sigma)})\ev=+\infty$. If however $\lambda_\rho> -\Gamma_{\lambda_\rho}\reli{1}$, then the error term will be exponentially dominated by all of the relevant hypothesis testing quantities in the optimality proof, and therefore \cref{thm:rate_largedev_hi} will remain unchanged.
    
    Lastly \cref{thm:rate_extreme} trivially remains unchanged, as the rate in that regime is unbounded, and introducing error on the second state can only increase the rate further.
\end{proof}

\subsection{Low errors}

Finally, we are left with large deviation, low-error. This is the one regime where a non-trivial change in the asymptotic rate occurs. For $\epsilon_n^{(\sigma)}=0$ we got that the rate was given by $\overline r$/$\widecheck r$ optimised over a range of type-I log odds determined by the error on the first state. Similarly, we will see that the optimal rate is once again an optimisation of $\overline r$/$\widecheck r$, this time optimised over a range of type-I log odds determined by the first state error \emph{and} type-II log odds determined by the second state error. Before we can give the modified result, we first need to define $\widecheck\Gamma_\lambda\reli{}:=\min\left\lbrace \GammaL_\lambda\reli{},\GammaR_\lambda\reli{} \right\rbrace$, which we can evaluate using \cref{lem:ht_largedev,lem:maxpinch} to be given by
\begin{align}
    \widecheck\Gamma_\lambda(\rho\|\sigma)=&
    \begin{dcases}
    \sup_{t<0}\Dm_t\reli{}+\frac{t}{1-t}\lambda &
    \lambda<-D\rel{\sigma}{\rho},\\
            \inf_{0<t<1}-\Dm_t\reli{}-\frac{t}{1-t}\lambda & 
            -D(\sigma\|\rho)< \lambda< 0,\\
            \sup_{t>1} -\Dm_t\reli{}+\frac{t}{1-t}\lambda & 
            \lambda> 0,
    \end{dcases}.
\end{align}
Using this we can now give the full two-sided low-error large deviation result.

\begin{lemma}[Two-sided large deviation, low-error]
    \label{lem:twoside_largelo}
    For any error of the form $\epsilon_n^{(\rho)}=\exp(-\lambda_\rho n)$ with constant $\lambda_\rho>0$, if $[\rho_2,\sigma_2]=0$, then the optimal rate is lower bounded
    \begin{align}
        \liminf_{n\to\infty} R^*_n(\epsilon_n^{(\rho)},\epsilon_n^{(\sigma)}) \geq 
        \inf_{\substack{-\lambda_\rho<\mu<\lambda_\rho \\ -\lambda_\sigma<\widecheck \Gamma_{\mu}\reli{1}<\lambda_\sigma }} \widecheck r(\mu)
        .
    \end{align}
    Furthermore, if we consider general output dichotomies, $[\rho_2,\sigma_2]\neq 0$, then the optimal rate is upper bounded by
    \begin{align}
        \limsup_{n\to\infty} R^*_n(\epsilon_n^{(\rho)},\epsilon_n^{(\sigma)}) \leq \inf_{\substack{-\lambda_\rho<\mu<\lambda_\rho \\ -\lambda_\sigma<\Gamma_{\mu}\reli{1}<\lambda_\sigma }} \overline r(\mu).
    \end{align}
    In the above $\overline r$ and $\widecheck r$ are defined in \cref{subsubsec:rate_large}. Moreover, these expressions hold even if these domains are empty, i.e., if $\Gamma_{-\lambda_\rho}\reli{1}<-\lambda_\sigma$, then $R^*_n(\epsilon_n^{(\rho)},\epsilon_n^{(\sigma)})\ev=+\infty$.
\end{lemma}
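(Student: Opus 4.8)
The plan is to adapt the proof of \cref{thm:rate_largedev_lo} with the addition of the second-state error $\epsilon_n^{(\sigma)}=\exp(-\lambda_\sigma n)$, which in the log-odds language just shifts which type-II log odds are admissible. Recall from \cref{lem:ht} that $(\rho_1^{\otimes n},\sigma_1^{\otimes n})\succeq_{(\epsilon_n^{(\rho)},\epsilon_n^{(\sigma)})}(\rho_2^{\otimes R_n n},\sigma_2^{\otimes R_n n})$ is implied by $\betaL_x\reli{1}\leq \beta_{x-\epsilon_n^{(\rho)}}\reli{2}+\epsilon_n^{(\sigma)}$ for all $x\in(\epsilon_n^{(\rho)},1)$, and that it implies the non-pinched version. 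The only structural change from the one-sided case is this additive $\epsilon_n^{(\sigma)}$ on the right. Translating to log odds per copy: an inequality of the form $\gamma_{\mu n}\reli{1}\ev\lessgtr \gamma_{\nu n}\reli{2}$ is what we need, but now the bottleneck exponent $\mu$ on the first state ranges over $|\mu|\leq\lambda_\rho$ (as before) while the corresponding type-II exponent on the second state must be compatible with $\epsilon_n^{(\sigma)}$, i.e.\ the allowed type-II log odds $\nu$ satisfy $|\nu|\leq\lambda_\sigma$ roughly speaking.

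For optimality I would mirror the optimality half of \cref{thm:rate_largedev_lo}: fix a rate $R$ exceeding the claimed infimum, so there is some $\mu^*$ with $-\lambda_\rho<\mu^*<\lambda_\rho$, $-\lambda_\sigma<\Gamma_{\mu^*}\reli{1}<\lambda_\sigma$, and $R>\overline r(\mu^*)$, hence $\Gamma_{\mu^*}\reli{1}>R\,\Gamma_{\mu^*/R}\reli{2}$. Using \cref{lem:ht_largedev} (with its uniform version \cref{lem:ht_largedev_uni}) and putting things back in terms of $\beta$, one gets $\beta_{x_n}\reli{1}\gtev\beta_{x_n-\epsilon_n^{(\rho)}}\reli{2}$ for $x_n=L^{-1}[\mu^* n]$; crucially, because $\Gamma_{\mu^*}\reli{1}<\lambda_\sigma$, the left side $\beta_{x_n}\reli{1}$ is exponentially \emph{smaller} than $\epsilon_n^{(\sigma)}=\exp(-\lambda_\sigma n)$, so the necessary condition $\beta_{x_n}\reli{1}\leq\beta_{x_n-\epsilon_n^{(\rho)}}\reli{2}+\epsilon_n^{(\sigma)}$ is violated once $\beta_{x_n}\reli{1}$ dominates even the sum — wait, more carefully: we need to argue that \emph{no} $x$ works, so I would instead show $\beta_x\reli{1}>\beta_{x-\epsilon_n^{(\rho)}}\reli{2}+\epsilon_n^{(\sigma)}$ at $x=x_n$, which holds because both $\beta_{x_n-\epsilon_n^{(\rho)}}\reli{2}$ and $\epsilon_n^{(\sigma)}$ are exponentially dominated by $\beta_{x_n}\reli{1}$ (the first by $R>\overline r(\mu^*)$, the second by $\Gamma_{\mu^*}\reli{1}<\lambda_\sigma$). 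Then \cref{lem:ht} rules out rate $R$, and taking an interior approximation $\delta\to 0^+$ on both constraint intervals gives the upper bound.

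For achievability I would copy the achievability half of \cref{thm:rate_largedev_lo} using left/right pinching: fix $r$ below the claimed infimum so that $\GammaL_\mu\reli{1}<r\,\Gamma_{\mu/r}\reli{2}$ and $\widecheck\Gamma_\mu\reli{1}$ lies strictly inside $(-\lambda_\sigma,\lambda_\sigma)$ for all $\mu$ in a slightly enlarged closed interval $|\mu|\leq\lambda_\rho+\delta'$; by the uniform \cref{lem:ht_largedev_uni} this upgrades to $\betaL_{y_n}\reli{1}\ltev\beta_{y_n'}\reli{2}$ uniformly over the relevant probability window, and monotonicity of $\beta_x$, $\betaL_x$ lets me slide the arguments to produce $\betaL_x\reli{1}\ltev\beta_{x-\epsilon_n^{(\rho)}}\reli{2}$ for all $x\in(\epsilon_n^{(\rho)},1)$ --- in fact with room to spare, so the $+\epsilon_n^{(\sigma)}$ is harmless and can only help. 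Applying \cref{lem:ht} and combining left- and right-pinching via \cref{lem:maxpinch} (i.e.\ $\max\{\lefthat r,\righthat r\}=\widecheck r$, $\max\{\GammaL,\GammaR\}$ giving $\widecheck\Gamma$ up to sign) yields the lower bound. The degenerate case — if already $\Gamma_{-\lambda_\rho}\reli{1}<-\lambda_\sigma$, meaning the constraint region is empty — follows because then $\beta_{\epsilon_n^{(\rho)}}\reli{1}\ltev\exp(-\lambda_\sigma n)=\epsilon_n^{(\sigma)}$, so \cref{lem:initial} (breakdown of Blackwell ordering) applies directly and $R_n^*\ev=+\infty$.

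The main obstacle I anticipate is bookkeeping the uniformity correctly across \emph{two} simultaneous constraints: in the one-sided case the bottleneck $\mu$ ranges over a fixed compact interval determined solely by $\lambda_\rho$, but here the effective domain is $\{\mu:|\mu|\leq\lambda_\rho,\ |\Gamma_\mu\reli{1}|\leq\lambda_\sigma\}$, whose second defining inequality involves the (only continuous, piecewise-defined) function $\Gamma_\mu\reli{1}$. I would need to check this set is closed, that its interior approximation converges appropriately as the slack parameters vanish, and that \cref{lem:ht_largedev_uni} can be applied on the enlarged closed interval $|\mu|\leq\lambda_\rho+\delta'$ while still keeping $\widecheck\Gamma_\mu\reli{1}$ (resp.\ $\Gamma_\mu\reli{1}$) strictly inside $(-\lambda_\sigma,\lambda_\sigma)$ — which requires the strict inequalities in the statement and a short continuity/compactness argument. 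Everything else is a routine transcription of the one-sided proof with an extra exponentially-dominated term carried along.
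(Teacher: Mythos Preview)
Your optimality sketch is essentially sound and in fact slightly cleaner than the paper's own treatment: rather than doing a three-case analysis over all $\mu$, you pick a single witnessing $\mu^*$ in the constrained domain and show the necessary condition fails there. One small slip: the reason $\epsilon_n^{(\sigma)}$ is exponentially dominated by $\beta_{x_n}\reli{1}$ is that $\Gamma_{\mu^*}\reli{1}>-\lambda_\sigma$ (so $\beta_{x_n}\reli{1}\approx\exp(n\Gamma_{\mu^*}\reli{1})\gg\exp(-n\lambda_\sigma)$), not that $\Gamma_{\mu^*}\reli{1}<\lambda_\sigma$. Both inequalities hold for $\mu^*$ in the domain, so this doesn't break the argument.

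The achievability sketch has a genuine gap. You propose to establish the \emph{one-sided} inequality $\betaL_x\reli{1}\ltev\beta_{x-\epsilon_n^{(\rho)}}\reli{2}$ for all $x\in(\epsilon_n^{(\rho)},1)$ and then declare the $+\epsilon_n^{(\sigma)}$ ``harmless.'' But proving that one-sided inequality for all $x$ is exactly the one-sided achievability, which requires $r<\min_{|\mu|\leq\lambda_\rho}\widecheck r(\mu)$. The whole point of the lemma is that the two-sided infimum (over the smaller domain where additionally $|\widecheck\Gamma_\mu\reli{1}|<\lambda_\sigma$) can be \emph{strictly larger}; for $r$ between the two infima your plan fails, because there will be $\mu$ with $|\mu|<\lambda_\rho$ but $|\widecheck\Gamma_\mu\reli{1}|>\lambda_\sigma$ at which $r>\widecheck r(\mu)$, so you cannot get $\betaL_{x_n}\reli{1}\leq\beta_{x_n}\reli{2}$ there. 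Your anticipated obstacle---keeping $\widecheck\Gamma_\mu\reli{1}$ inside $(-\lambda_\sigma,\lambda_\sigma)$ on the whole enlarged $\mu$-interval---is not a bookkeeping issue but is simply false in the interesting regime.

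What the paper does (and what you are missing) is a three-case split on $\mu$ according to the value of $\Gamma_\mu\reli{1}$ relative to $\pm\lambda_\sigma$, using the $\epsilon_n^{(\sigma)}$ term \emph{structurally} rather than as slack: (i) if $\Gamma_\mu\reli{1}<-\lambda_\sigma$ then $\betaL_{x_n}\reli{1}$ itself is eventually smaller than $\epsilon_n^{(\sigma)}$, so the sufficient condition $\betaL\leq\beta+\epsilon_n^{(\sigma)}$ holds trivially; (ii) if $-\lambda_\sigma<\Gamma_\mu\reli{1}<\lambda_\sigma$ then $\epsilon_n^{(\sigma)}$ is negligible and you use $r<\widecheck r(\mu)$ as in the one-sided proof; (iii) if $\Gamma_\mu\reli{1}>\lambda_\sigma$ then $\betaL_{x_n}\reli{1}-\epsilon_n^{(\sigma)}$ scales as $1-\epsilon_n^{(\sigma)}$, reducing the constraint to $r\Gamma_{\mu/r}\reli{2}\geq\lambda_\sigma$, which is dominated by the constraint at the boundary $\Gamma_\mu\reli{1}=\lambda_\sigma$ already handled in case (ii). Only with this case analysis do you recover the full two-sided achievability bound.
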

\begin{proof}[Proof sketch]
    Here we are just going to provide a sketch of the proof, see the proof of \cref{thm:rate_largedev_lo} for a more rigorous treatment of this argument. We start with optimality. \cref{lem:ht} gives that, for any achievable rate $R$,
    \begin{align}
        \forall x\in(\epsilon_n^{(\rho)},1):~~\beta_{x}\rel{\rho_1^{\otimes n}}{\sigma_1^{\otimes n}}-\epsilon_n^{(\sigma)}\leq \beta_{x-\epsilon_n^{(\rho)}}\rel{\rho_2^{\otimes Rn}}{\sigma_2^{\otimes Rn}}.
    \end{align}
    Firstly, we reparamaterise $x\to x+\epsilon_n^{(\rho)}$, which gives
    \begin{align}
        \forall x\in(\epsilon_n^{(\rho)}/2,1-\epsilon_n^{(\rho)}/2):~~\beta_{x+\epsilon_n^{(\rho)}/2}\rel{\rho_1^{\otimes n}}{\sigma_1^{\otimes n}}-\epsilon_n^{(\sigma)}\leq \beta_{x-\epsilon_n^{(\rho)}/2}\rel{\rho_2^{\otimes Rn}}{\sigma_2^{\otimes Rn}}.
    \end{align}
    Next, we want to reparameterise again by the log odds per copy instead of a probability. Specifically, we will switch from $x$ to $\mu$, where $x=L^{-1}[\mu n]$. Doing so gives 
    \begin{align}
        \forall \mu\in(-\lambda_\rho,+\lambda_\rho):~~\beta_{L^{-1}[\mu n]+\epsilon_n^{(\rho)}/2}\rel{\rho_1^{\otimes n}}{\sigma_1^{\otimes n}}-\epsilon_n^{(\sigma)}\leq \beta_{L^{-1}[\mu n]-\epsilon_n^{(\rho)}/2}\rel{\rho_2^{\otimes Rn}}{\sigma_2^{\otimes Rn}}.
    \end{align}
    As $\abs \mu< \lambda_\rho$, we have that the $L^{-1}[\mu n]$ terms must dominate over the $\epsilon_n^{(\rho)}$ terms, specifically
    \begin{align}
        \lim_{n\to\infty} \frac 1n L\left[L^{-1}[\mu n]\pm\epsilon_n^{(\rho)}/2\right]=\mu,
    \end{align}
    and so this essentially reduces to 
    \begin{align}
        \forall \mu\in(-\lambda_\rho,+\lambda_\rho):~~\beta_{L^{-1}[\mu n]}\rel{\rho_1^{\otimes n}}{\sigma_1^{\otimes n}}-\epsilon_n^{(\sigma)}\leq \beta_{L^{-1}[\mu n]}\rel{\rho_2^{\otimes Rn}}{\sigma_2^{\otimes Rn}}.
    \end{align}
    Put in terms of log odds per copy, this is
    \begin{align}
        \label{eq:twoside_largelo1}
        \forall \mu\in(-\lambda_\rho,+\lambda_\rho):~~
        L^{-1}\left[\gamma_{\mu n}\rel{\rho_1^{\otimes n}}{\sigma_1^{\otimes n}}\right]-\epsilon_n^{(\sigma)}\leq L^{-1}\left[\gamma_{\mu n}\rel{\rho_2^{\otimes Rn}}{\sigma_2^{\otimes Rn}}\right].
    \end{align}
    Now, we can use \cref{lem:ht_largedev}, which gives
    \begin{aligns}
        \lim_{n\to\infty} \frac 1n
        \gamma_{\mu n}\rel{\rho_1^{\otimes n}}{\sigma_1^{\otimes n}}&=\Gamma_{\mu}\reli{1},\\
        \lim_{n\to\infty} \frac 1n
        \gamma_{\mu n}\rel{\rho_2^{\otimes Rn}}{\sigma_2^{\otimes Rn}}&=R\Gamma_{\mu/R}\reli{2}.
    \end{aligns}
    So in \cref{eq:twoside_largelo1} we have that log odds terms on both sides are exponentially scaling. In the absence of $\epsilon_n^{(\sigma)}$, we can directly compare this, giving the optimality presented in \cref{thm:rate_largedev_lo}.

    We can break the analysis of \cref{eq:twoside_largelo1} into three cases based on how $\Gamma_\mu\reli{1}$ compares to $\pm \lambda_\sigma$. If $\Gamma_\mu\reli{1}<-\lambda_\sigma$ then the LHS of \cref{eq:twoside_largelo1} is eventually negative, and thus trivially satisfied. If $-\lambda_\sigma\leq \Gamma_\mu\reli{1}\leq +\lambda_\sigma$ then \cref{eq:twoside_largelo1} reduces to
    \begin{align}
        \Gamma_\mu\reli{}\leq R\Gamma_{\mu/R}\reli{2},
    \end{align}
    as we saw in the absence of $\epsilon^{\sigma}_n$, which in turn gives the bound $R\leq \overline r(\mu)$. Lastly we have $\Gamma_\mu\reli{1}>+\lambda_\sigma$, in which case the LHS of \cref{eq:twoside_largelo1} scales as $1-\epsilon_n^{\sigma}$, so this reduces to
    \begin{align}
        \lambda_\sigma \leq \Gamma_{\mu/R}\reli{2},
    \end{align}
    which is strictly weaker than the constrain $R\leq \overline r(\mu)$ for the $\mu$ for which $\Gamma_\mu\reli{1}=+\lambda_\sigma$. The upshot is that we're left with an expression similar to \cref{thm:rate_largedev_lo}, with the rate being an optimisation of $\overline r$, this time with a constraint both on $\mu$ (coming from $\epsilon_n^{(\rho)}$) \emph{and} on $\Gamma_\mu\reli{1}$ (coming from $\epsilon_n^{(\sigma)}$). Specifically,
    \begin{align}
        R \leq \inf_{\substack{-\lambda_\rho<\mu<\lambda_\rho \\ -\lambda_\sigma<\Gamma_{\mu}\reli{1}<\lambda_\sigma }} \overline r(\mu).
    \end{align}
    The same sort of argumentation works for the achievability, where we find that any rate $r$ such that
    \begin{align}
        r \leq \inf_{\substack{-\lambda_\rho<\mu<\lambda_\rho \\ -\lambda_\sigma<\widecheck\Gamma_{\mu}\reli{1}<\lambda_\sigma }} \widecheck r(\mu)
    \end{align}
    is achievable.
    
    Finally, we note that the above arguments also hold if the domains of the infima are empty. Specifically, if
    \begin{align}
        \Gamma_{-\lambda_\rho}\reli{1}<-\lambda_\sigma,
    \end{align}
    then \cref{lem:initial} gives us that the rate breaks down, and so for sufficiently large $n$ the optimal rate becomes infinite.
\end{proof}


\section[Thermal operations]{\texorpdfstring{Proof of \cref{thm:thermo}}{Thermal operations}}
\label{app:thermal}

In order to prove \cref{thm:thermo}, we start with the following lemma.

\begin{lemma}
    \label{lem:thermal_op}
    Consider the initial and target states, $\rho_1$ and $\rho_2$, together with the corresponding thermal states, $\gamma_1$ and $\gamma_2$, such that $[\rho_2,\gamma_2]=0$. Then, the condition
    \begin{align}               
        \label{eq:thermal_cond}
        \forall x\in(\epsilon,1):\quad \betaL_x (\rho_1\|\gamma_1)\leq \beta_{x-\epsilon}(\rho_2\|\gamma_2)
    \end{align}
    implies that there exists a thermal operation mapping $\rho_1$ into a state $\epsilon$-close to $\rho_2$ in trace distance,
    \begin{equation}
        \rho_1 \xrightarrow[\mathrm{TO}]{\epsilon} \rho_2.
    \end{equation}
    Note that in general however the right-pinched variant
    \begin{align}               
        \forall x\in(\epsilon,1):\quad \betaR_x (\rho_1\|\gamma_1)\leq \beta_{x-\epsilon}(\rho_2\|\gamma_2)
    \end{align}
    does not necessarily similarly yield a TO-achievable Blackwell Order.
\end{lemma}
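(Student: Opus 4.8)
The plan is to reduce the left-pinched hypothesis-testing condition to a purely semiclassical (energy-incoherent) thermodynamic transformation, exploiting the fact that pinching onto the energy eigenspaces is itself a thermal operation. First I would unfold the definition \eqref{eq:pinched_beta}, so that \eqref{eq:thermal_cond} becomes
\begin{align}
    \beta_x\rel{\pinch{\rho_1}{\gamma_1}}{\gamma_1}\leq \beta_{x-\epsilon}\rel{\rho_2}{\gamma_2}\qquad\forall x\in(\epsilon,1).
\end{align}
Since $\gamma_1\propto e^{-\beta H_1}$, the pinching $\mathcal P_{\gamma_1}$ is precisely the dephasing onto energy eigenspaces of the initial Hamiltonian, which is a thermal operation (as discussed below and in \cite{horodecki2013fundamental,Lostaglio2019}); in particular $\rho_1\xrightarrow[\mathrm{TO}]{0}\pinch{\rho_1}{\gamma_1}$, and the dichotomy $(\pinch{\rho_1}{\gamma_1},\gamma_1)$ is commuting with $\pinch{\rho_1}{\gamma_1}$ energy-incoherent.

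Next I would invoke the commuting-case equivalence of Ref.~\cite[Thm.~2]{renes2016relative} (exactly as in the proof of \cref{lem:ht}): as both $(\pinch{\rho_1}{\gamma_1},\gamma_1)$ and $(\rho_2,\gamma_2)$ commute, the displayed inequalities are equivalent to the approximate Blackwell ordering $(\pinch{\rho_1}{\gamma_1},\gamma_1)\succeq_{(\epsilon,0)}(\rho_2,\gamma_2)$, i.e.\ there is a channel $\E$ with $\E(\gamma_1)=\gamma_2$ and $T\big(\E(\pinch{\rho_1}{\gamma_1}),\rho_2\big)\leq\epsilon$. The key trick to make $\E$ compatible with the semiclassical characterisation of thermal operations is to twirl its output by the energy dephasing $\mathcal P_{\gamma_2}$ of the final system and set $\overline\E:=\mathcal P_{\gamma_2}\circ\E$. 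Because $\gamma_2$ and, by $[\rho_2,\gamma_2]=0$, also $\rho_2$ are fixed by $\mathcal P_{\gamma_2}$, and trace distance is contractive under channels, one gets $\overline\E(\gamma_1)=\gamma_2$, the state $\tilde\rho_2:=\overline\E(\pinch{\rho_1}{\gamma_1})$ is energy-incoherent, and $T(\tilde\rho_2,\rho_2)\leq\epsilon$. Hence $(\pinch{\rho_1}{\gamma_1},\gamma_1)\succeq(\tilde\rho_2,\gamma_2)$ \emph{exactly}, with all four states block-diagonal in energy.

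At this point the semiclassical equivalence of Refs.~\cite{horodecki2013fundamental,Lostaglio2019} applies: for energy-incoherent states an exact Blackwell ordering of $(\,\cdot\,,\gamma_1)$ into $(\,\cdot\,,\gamma_2)$ is equivalent to the existence of an exact thermal operation, so there is a thermal operation $\mathcal F$ with $\mathcal F(\pinch{\rho_1}{\gamma_1})=\tilde\rho_2$. Composing, $\mathcal F\circ\mathcal P_{\gamma_1}$ is again a thermal operation (the intermediate Hamiltonian being consistently $H_1$), it sends $\rho_1\mapsto\tilde\rho_2$, and $T(\tilde\rho_2,\rho_2)\leq\epsilon$; therefore $\rho_1\xrightarrow[\mathrm{TO}]{\epsilon}\rho_2$, as claimed.

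For the final remark, the right-pinched quantity is $\betaR_x\rel{\rho_1}{\gamma_1}=\beta_x\rel{\rho_1}{\pinch{\gamma_1}{\rho_1}}$, which pinches $\gamma_1$ onto the eigenspaces of $\rho_1$; generically these are unrelated to the energy eigenspaces and $\pinch{\gamma_1}{\rho_1}\neq\gamma_1$. The reduction above then fails at two places: $\mathcal P_{\rho_1}$ is not a thermal operation, and $(\rho_1,\pinch{\gamma_1}{\rho_1})$ is not of the form (state, Gibbs state), so neither the energy-pinching step nor the semiclassical equivalence is available. To substantiate ``does not necessarily'' I would exhibit an explicit pair $(\rho_1,\gamma_1)\to(\rho_2,\gamma_2)$ with $\rho_1$ energy-coherent for which the right-pinched inequalities hold yet no thermal operation achieves the transformation, leveraging the established gap between Gibbs-preserving maps and thermal operations on coherent inputs~\cite{faist2015gibbs}. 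Exhibiting and verifying such an example is the one genuinely non-routine part of the argument, the other steps being standard.
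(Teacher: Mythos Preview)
Your proposal is correct and follows essentially the same route as the paper: expand $\betaL_x$ as $\beta_x\rel{\pinch{\rho_1}{\gamma_1}}{\gamma_1}$, observe that the energy pinching $\mathcal P_{\gamma_1}$ is itself a thermal operation, use the commuting-case characterisation to produce an $\epsilon$-approximate transformation from $\pinch{\rho_1}{\gamma_1}$ to $\rho_2$, and compose.

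The only difference is in packaging. The paper invokes directly (as recorded in \cref{subsec:thermo}) that for energy-incoherent states the hypothesis-testing ordering is equivalent to the existence of an $\epsilon$-approximate \emph{thermal} operation, so it gets the TO $\mathcal E$ in one step. You instead first extract a Gibbs-preserving channel from \cite[Thm.~2]{renes2016relative}, post-compose with $\mathcal P_{\gamma_2}$ to force the intermediate target $\tilde\rho_2$ to be block-diagonal, and only then invoke the semiclassical TO/Gibbs-preserving equivalence of \cite{horodecki2013fundamental,Lostaglio2019}. Your extra twirling step is unnecessary once one cites the combined statement, but it is harmless and arguably makes the logic more transparent. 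For the right-pinched remark, the paper contents itself with noting that $\mathcal P_{\rho_1}$ is not a thermal operation (i.e., the \emph{argument} fails), without exhibiting an explicit counterexample; your plan to construct one is a stricter reading of the claim than what the paper itself supplies.
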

\begin{proof}
    First, note that a pinching map with respect to the eigenspaces of the thermal state $\gamma_1$ is a thermal operation, and so $\pinch{\rho_1}{\gamma_1}$ can be obtained from $\rho_1$. Since $[\pinch{\rho_1}{\gamma_1},\gamma_1]=0$ and $[\rho_2,\gamma_2]=0$ by assumption, we are dealing with initial and target states commuting with the respective thermal states. For such states, however, it is known from Ref.~\cite{renes2016relative} that the condition
     \begin{align}               
        \forall x\in(\epsilon,1):\quad \beta_x (\pinch{\rho_1}{\gamma_1}\|\gamma_1)\leq \beta_{x-\epsilon}(\rho_2\|\gamma_2)
\end{align}
is equivalent to the existence of a thermal operation $\E$ mapping $\pinch{\rho_1}{\gamma_1}$ into a state $\epsilon$-close to $\rho_2$. However, given the definition of $\betaL_x$ from \cref{eq:pinched_beta}, the above is equivalent to \cref{eq:thermal_cond}. Thus, assuming \cref{eq:thermal_cond} holds, such $\E$ exists and a composition of thermal operations $\E\circ\mathcal{P}_{\gamma_1}$, which is itself a thermal operation, maps $\rho_1$ into a state $\epsilon$-close to $\rho_2$. While the right-pinched condition similarly yields a Blackwell order, the right-pinching operation $\mathcal P_{\rho_1}(\cdot)$ is not a thermal operation (unless $\rho_1$ and $\gamma_1$ commute).
\end{proof}

Now we need to recall the general strategy used to prove \cref{\ratetheoremsthermo} in \cref{subsec:rates}. In these cases the achievability exclusively used the left-pinched sufficient condition of \cref{lem:ht}, showing that this condition gives a rate with the same asymptotic expansion as the optimality bound given by the necessary condition of \cref{lem:ht}. Thus, using \cref{lem:thermal_op}, we conclude that the optimal rates from \cref{\ratetheoremsthermo} can be achieved by thermal operations.

In the achievability proofs of \cref{thm:rate_largedev_lo,thm:rate_zero} we needed to leverage both left- and right-pinching, and thus these results are not necessarily TO-achievable. In both proofs, however, we started by proving separate achievability results using left- and right-pinching separately, and constructed the final bound by combining the two. If we eschew the right-pinch-based bound and stick to left-pinch-based bound then these proofs do yield weaker, but TO-achievable, rates. For the low-error large deviation case of \cref{thm:rate_largedev_lo} the TO-achievable rate is
\begin{align}
\liminf_{n\to\infty}R_n^*\bigl(\exp(-\lambda n)\bigr)\geq \min_{-\lambda \leq \mu\leq \lambda}\lefthat r(\mu),
\end{align}
where $\lefthat r(\mu)$ is defined in \cref{subsubsec:rate_large}. Similarly for the zero-error case of \cref{thm:rate_zero} the TO-achievable rate is
\begin{align}
\liminf_{n\to\infty}R_n^*(0)\geq \inf_{\alpha\in\mathbb R}\frac{\DL_\alpha\reli 1}{D_\alpha\reli{2}}.
\end{align}

Furthermore, when dealing with energy-incoherent states, there is no need for pinching, and so one can stick only to thermal operations. Moreover, for commuting input \emph{and} output states, the lower and upper bound for the optimal rate in \cref{thm:rate_largedev_lo,thm:rate_zero} coincide (as they only differ by choice of R\'enyi divergence), and thus these theorems yield optimal transformation rates in their respective error regimes.


\section[Work-assisted rate]{\texorpdfstring{Proof of \cref{thm:small_work}}{Work-assisted rate}}

\label{app:battery}

In this appendix we present how one can modify the reasoning used to prove \cref{thm:rate_smalldev} to prove \cref{thm:small_work}. Our aim is thus to find $R_n^*$, which is the largest rate $R_n$ for which the following transformation can be performed by thermal operations:
\begin{align}
    \label{eq:battery_assisted}
    \rho^{\ot n}_1 \ot \proj{0}_W \xrightarrow[\mathrm{TO}]{\epsilon} \rho_2^{\ot nR_n} \ot \proj{1}_W.
\end{align}
Let us recall that here $W$ denotes the ancillary battery system with energy levels $\ket{0}_W$ and $\ket{1}_W$ separated by energy gap $w$, so that the thermal state of the battery is given by
\begin{equation}
    \gamma_W = \lambda \ketbra{0}{0}_W + (1-\lambda) \ketbra{1}{1}_W \quad \mathrm{with}\quad \lambda=\frac{1}{1+e^{-\beta w}}.
\end{equation}

From \cref{lem:ht} we know that the necessary condition for that is given by
\begin{align}
\label{eq:battery_nec}
   \forall x\in(\epsilon,1):\quad {\beta}_x(\rho_1^{\ot n} \ot \proj{0}_W\|\gamma_1^{\ot n} \ot \gamma_W) \leq  \beta_{x-\epsilon}(\rho_2^{\ot n R_n} \ot \proj{1}_W\|\gamma_2^{\ot n R_n} \ot \gamma_W),
\end{align}
whereas from \cref{lem:ht,lem:thermal_op} we know that the sufficient condition is given by
\begin{align}
\label{eq:battery_suff}
   \forall x\in(\epsilon,1):\quad\betaL_x(\rho_1^{\ot n} \ot \proj{0}_W\|\gamma_1^{\ot n} \ot \gamma_W) \leq  \beta_{x-\epsilon}(\rho_2^{\ot n R_n} \ot \proj{1}_W\|\gamma_2^{\ot n R_n} \ot \gamma_W).
\end{align}
We will simplify these conditions by using the fact that
\begin{align}
\betaL_x(\rho_1^{\ot n} \ot \proj{0}_W\|\gamma_1^{\ot n} \ot \gamma_W) = {\beta}_x(\mathcal{P}_{\gamma_1^{\ot n}}(\rho_1^{\ot n}) \ot \proj{0}_W\|\gamma_1^{\ot n} \ot \gamma_W)    
\end{align}
and employing the following lemma.
\begin{lemma}\label{lem:battery}
    Consider three quantum states $\rho$, $\sigma$ and $\gamma$, where $\gamma = \sum_i \gamma_i \ketbra{i}{i}$. Then for all $x \in [0, 1]$ we have
    \begin{align}
        \label{eq:lemma_battery}
        \beta_x\rel{\rho\otimes \ketbra{i}{i} \,}{\sigma\otimes \gamma} &= \gamma_i  \beta_x\reli{}.
    \end{align}
\end{lemma}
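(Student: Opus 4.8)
\textbf{Proof plan for \cref{lem:battery}.}
The statement is a direct computation with the SDP defining $\beta_x$, and the plan is to show that any feasible test for the tensored problem can be "collapsed" onto the relevant block and vice versa. First I would recall the SDP from \eqref{eq:beta1}--\eqref{eq:beta3}: $\beta_x\rel{\rho\otimes\ketbra ii}{\sigma\otimes\gamma}$ is the minimum of $\Tr\bigl((\sigma\otimes\gamma)Q\bigr)$ over operators $0\leq Q\leq\iden$ subject to $\Tr\bigl((\rho\otimes\ketbra ii)Q\bigr)\geq 1-x$. The key observation is that both the objective and the constraint only "see" the operator $Q$ through its action involving the $i$-th block of the second system: writing $Q_i:=(\iden\otimes\bra i)Q(\iden\otimes\ket i)$, we have $\Tr\bigl((\rho\otimes\ketbra ii)Q\bigr)=\Tr(\rho Q_i)$ and $\Tr\bigl((\sigma\otimes\gamma)Q\bigr)=\sum_j\gamma_j\Tr(\sigma Q_j)\geq \gamma_i\Tr(\sigma Q_i)$, where I use that each $Q_j\geq 0$ (as a compression of $Q\geq0$) and each $\Tr(\sigma Q_j)\geq0$.

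Given this, the lower bound $\beta_x\rel{\rho\otimes\ketbra ii}{\sigma\otimes\gamma}\geq\gamma_i\beta_x\reli{}$ follows because $Q_i$ satisfies $0\leq Q_i\leq\iden$ and $\Tr(\rho Q_i)\geq 1-x$, hence is feasible for the single-system problem defining $\beta_x\reli{}$, so $\Tr(\sigma Q_i)\geq\beta_x\reli{}$ and therefore the objective is at least $\gamma_i\Tr(\sigma Q_i)\geq\gamma_i\beta_x\reli{}$. For the matching upper bound, I would take an optimal test $Q^\star$ for $\beta_x\reli{}$ and construct the product test $Q:=Q^\star\otimes\ketbra ii$, which is clearly feasible for the tensored problem (it satisfies $0\leq Q\leq\iden$ and $\Tr\bigl((\rho\otimes\ketbra ii)Q\bigr)=\Tr(\rho Q^\star)\geq1-x$) and achieves objective value $\Tr\bigl((\sigma\otimes\gamma)(Q^\star\otimes\ketbra ii)\bigr)=\gamma_i\Tr(\sigma Q^\star)=\gamma_i\beta_x\reli{}$.

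Combining the two inequalities gives \eqref{eq:lemma_battery}. I do not anticipate any real obstacle here; the only mild subtlety is the step $\sum_j\gamma_j\Tr(\sigma Q_j)\geq\gamma_i\Tr(\sigma Q_i)$, which requires noting that we may restrict attention to tests block-diagonal in the eigenbasis of $\gamma$ on the second factor (off-diagonal blocks of $Q$ contribute nothing to either the $\ketbra ii$-weighted constraint or, after symmetrisation by the pinching $\mathcal P_\gamma$ on the second system, to the objective), together with the positivity of each diagonal term. One could alternatively bypass this by invoking the classical reduction for the commuting dichotomy $(\rho\otimes\ketbra ii,\sigma\otimes\gamma)$ restricted to the supporting blocks, but the direct SDP argument above is cleanest.
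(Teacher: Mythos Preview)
Your proposal is correct and follows essentially the same approach as the paper: both reduce the tensored SDP to its diagonal blocks in the $\gamma$-eigenbasis, observe that only the $i$-th block enters the constraint while the other blocks contribute nonnegatively to the objective, and achieve the bound with the product test $Q^\star\otimes\ketbra ii$. The paper phrases this as a single chain of equalities by directly minimising over the diagonal blocks $Q_{kk}$ and setting the unconstrained ones to zero, whereas you split it into a lower bound (drop the $j\neq i$ terms) and a matching upper bound (explicit construction), but the content is identical.
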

 \begin{proof}
     Expanding out the left hand side of \cref{eq:lemma_battery} using the definition from \cref{eq:beta1,eq:beta2,eq:beta3} and decomposing the test as $Q := \sum_{i,j} Q_{ij} \ot \ketbra{i}{j}$ yields
     \begin{aligns}
          \beta_x\rel{\rho\otimes \ketbra{i}{i} \,}{\sigma\otimes \gamma} &= \min_{Q} \left\{ \Tr[(\sigma \ot \gamma) Q] \,\middle|\, \Tr[(\rho \ot \ketbra{i}{i})Q] \geq 1-x \quad \text{and} \quad 0 \leq Q \leq 1 \right\}\\
          &= \min_{Q_{kk}} \left\{ \sum_{k} \gamma_k \Tr[\sigma Q_{kk}] \,\middle|\, \Tr[\rho Q_{ii}]\geq 1-x \quad \text{and} \quad 0 \leq Q_{kk} \leq 1 \quad \text{for all} \,\, k \right\}\\
          &=  \gamma_i \min_{Q_{ii}} \left\{ \Tr[\sigma Q_{ii}] \,\middle|\, \Tr[\rho Q_{ii}]\geq 1-x \quad \text{and} \quad 0 \leq Q_{ii} \leq 1 \right\} \\
          &= \gamma_i \beta_x\rel{\rho}{\sigma},
     \end{aligns}
     which proves the claim.
 \end{proof}

We can then rewrite \cref{eq:battery_nec,eq:battery_suff} as
\begin{aligns}
     &\forall x\in(\epsilon,1):\quad \lambda {\beta}_x(\rho_1^{\ot n} \|\gamma_1^{\ot n}) \leq (1-\lambda) \beta_{x-\epsilon}(\rho_2^{\ot nR_n^*}\|\gamma_2^{\ot nR_n^*}),\\
     &\forall x\in(\epsilon,1):\quad \lambda \betaL_x(\rho_1^{\ot n} \|\gamma_1^{\ot n}) \leq (1-\lambda) \beta_{x-\epsilon}(\rho_2^{\ot nR_n^*}\|\gamma_2^{\ot nR_n^*}).
\end{aligns}
 Taking the minus $\log$ of both sides and dividing by $n$, we thus get that the necessary condition and the sufficient condition for the transformation in \cref{eq:battery_assisted} are given by
\begin{aligns}
     &\forall x\in(\epsilon,1):\quad -\frac{1}{n} \log\left( {\beta}_x(\rho_1^{\ot n} \|\gamma_1^{\ot n})\right)-\frac{\beta w}{n} \geq R_n\left( -\frac{1}{nR_n} \log\left( \beta_{x-\epsilon}(\rho_2^{\ot nR_n}\|\gamma_2^{\ot nR_n})\right)\right),\\
    &\forall x\in(\epsilon,1):\quad -\frac{1}{n} \log\left( \betaL_x(\rho_1^{\ot n} \|\gamma_1^{\ot n})\right)-\frac{\beta w}{n} \geq R_n\left( -\frac{1}{nR_n} \log\left( \beta_{x-\epsilon}(\rho_2^{\ot nR_n}\|\gamma_2^{\ot nR_n})\right)\right).
\end{aligns}

Crucially now, as \cref{lem:ht_smalldev} tells us that the second-order asymptotic expansions of $-\frac{1}{n}\log\beta_x$ and $-\frac{1}{n}\log\betaL_x$ are the same, in the small deviation regime the above necessary and sufficient conditions coincide and are given by
\begin{equation}
\label{eq:battery_nec_suff}
    \forall x\in(\epsilon,1):\quad D(\rho_1\|\gamma_1)+\sqrt{\frac{V(\rho_1\|\gamma_1)}{n}}\Phi^{-1}(x)-\frac{\beta w}{n} \gtrsim R_n D(\rho_2\|\gamma_2)+\sqrt{\frac{R_nV(\rho_2\|\gamma_2)}{n}}\Phi^{-1}(x-\epsilon),
\end{equation}
where $\gtrsim$ denotes inequality up to terms $o(1/\sqrt{n})$. Introducing
\begin{equation}
    \xi':=\frac{V(\rho_1\|\gamma_1)}{R_nV(\rho_2\|\gamma_2)},
\end{equation}
using the definition and properties of the sesquinormal distribution, one can rearrange \cref{eq:battery_nec_suff} to arrive at the following equivalent condition: 
\begin{equation}
    \frac{\beta w}{n} \lesssim D(\rho_1\|\gamma_1)-R_n D(\rho_2\|\gamma_2)+\sqrt{\frac{V(\rho_1\|\gamma_1)}{n}} S^{-1}_{1/\xi'}(\epsilon).
\end{equation}

Clearly, if $\rho_2=\gamma_2$, then the above is satisfied for any rate $R_n$ as long as
\begin{equation}
        \frac{\beta w}{n} \lesssim D(\rho_1\|\gamma_1)+\sqrt{\frac{V(\rho_1\|\gamma_1)}{n}}\Phi^{-1}(\epsilon),
\end{equation}
which proves the second part of \cref{thm:small_work}. If $\rho_2\neq \gamma_2$, then we can expand $w$ and rearrange \cref{eq:battery_nec_suff} to obtain
\begin{equation}
    R_n \lesssim \frac{D(\rho_1\|\gamma_1)-\beta w_1}{D(\rho_2\|\gamma_2)} +\frac{\sqrt{V(\rho_1\|\gamma_1)}S_{1/\xi'}^{-1}(\epsilon)-\beta w_2}{\sqrt{n}D(\rho_2\|\gamma_2)}.
\end{equation}
Now, we note in the expression for $\xi'$ we only need to account for the constant term of $R_n$, as any higher order terms will result in corrections of the order $o(1/\sqrt{n})$. Thus, a transformation from \cref{eq:battery_assisted} exists for every rate $R_n$ satisfying the above inequality with
\begin{align}
    \xi'=\frac{V(\rho_1\|\gamma_1)}{D(\rho_1\|\gamma_1)-\beta w_1} \bigg/ \frac{V(\rho_2\|\gamma_2)}{D(\rho_2\|\gamma_2)},
\end{align}
which proves the first part of \cref{thm:small_work}.


\section[Entanglement transformation rates]{\texorpdfstring{Proof sketch of \cref{thm:entanglement}}{Entanglement transformation rates}}
\label{app:entanglement}

The proof of \cref{thm:entanglement} largely follows the proofs covered in \cref{\ratetheorems}, and so instead of reproducing all the gory details we will instead point out some key differences, and then give the resulting rate expressions. Consider a transformation transformation \mbox{$\ket{\psi_1}^{\ot n}\xrightarrow[\mathrm{LOCC}]{\epsilon} \ket{\psi_2}^{\ot Rn}$} for bipartite states $\ket{\psi_1}$ and $\ket{\psi_2}$ with local dimensions $d_1$ and $d_2$, and with Schmidt spectra $\v{p}_1$ and $\v{p}_2$.  Recalling \cref{eq:entanglement_condition}, such a transformation is possible if and only if
\begin{align}
    d^{Rn}_2{\beta_x\rel{\v{p_2}^{\otimes Rn}}{\v{f_2}^{\otimes Rn}}}  \leq   
      d^n_1{\beta_{x-\epsilon}\rel{\v{p_1}^{\otimes n}}{\v{f_1}^{\otimes n}}} \quad \forall x\in(\epsilon,1),
\end{align}
where $\v{f}_i$ denotes a uniform distributions of dimension $d_i$ and $\beta_x(\v{p}\|\v{q})$ should be understood as $\beta_x(\rho\|\sigma)$ with $\rho$ and $\sigma$ being diagonal states with the diagonals given by $\v{p}$ and $\v{q}$, respectively. Applying the techniques of \cref{subsec:rates} to convert hypothesis testing asymptotics into transformation rate asymptotics, we can extract from this second-order expressions for transformation rates in the entanglement setting.

Importantly, this condition has three major differences that will influence the resulting rate expressions. Firstly, the order of the expression is backwards to that seen in the thermodynamic setting, so the resulting rates will be reciprocated. Secondly, all hypothesis testing is relative to uniform states, meaning that all of our rates will involve information theoretic quantities relative to the uniform states. All of these can be expressed in terms of their non-relative analogues, e.g.,
\begin{align}
    D_\alpha\rel{\v{p}_i}{\v{f}_i}=\frac{\alpha}{\alpha-1}\log d_i-H_\alpha(\v{p}_i).
\end{align}
And then, thirdly, we have the lingering dimensional factors, which happen to all cancel out in such a way to yield rate expressions broadly similar to those seen in \cref{\ratetheorems}.

Taking these modifications into account, if one was to follow our techniques from \cref{subsec:rates} \textit{mutatis mutandis}, the entanglement transformation rates, for $\lambda>0$ and $a,\epsilon\in(0,1)$, scale as:
\begin{aligns}
    &\text{Zero-error}: & R_n^*(0)&=\min_{0\leq \alpha\leq \infty }\frac{H_\alpha(p)}{H_\alpha(q)}+o(1),\\
    &\text{Large deviation (lo)}: & R_n^*(\exp(-\lambda n))&=\min_{-\lambda\leq \mu\leq \lambda} r(\mu),\\
    &\text{Moderate deviation (lo)}: & R_n^*(\exp(-\lambda n^a))&=\frac{H(p)-\abs{1-\xi^{-1/2}}\sqrt{2V(p)n^{a-1}}\cdot S^{-1}_{1/\xi}(\epsilon)}{H(q)}+o\left(\sqrt{n^{a-1}}\right),\\
    &\text{Small deviation}: & 
    R_n^*(\epsilon)&=\frac{H(p)+\sqrt{V(p)/n}\cdot S^{-1}_{1/\xi}(\epsilon)}{H(q)}+o(1/\sqrt n),\\
    &\text{Moderate deviation (hi)}: & R_n^*(1-\exp(-\lambda n^a))&=\frac{H(p)+\left[1+\xi^{-1/2}\right]\sqrt{2V(p)n^{a-1}}\cdot S^{-1}_{1/\xi}(\epsilon)}{H(q)}+o\left(\sqrt{n^{a-1}}\right),\\
    &\text{Large deviation (hi)}: & R_n^*(1-\exp(-\lambda n))&=\inf_{\substack{t_1>1\\0<t_2<1}}\frac{H_{t_1}(p)-\left(\frac{t_1}{1-t_1}+\frac{t_2}{1-t_2}\right)\lambda}{H_{t_2}(q)}+o(1),
\end{aligns}
where
\begin{align}
    \xi=\frac{V(p)}{H(p)} \bigg/ \frac{V(q)}{H(q)}
    \qquad\text{and}\qquad
    r(\mu)
    =\begin{dcases}
        1 & \mu\leq -D\rel{f}{p},\\
        \sup_{0<t_2<1}\inf_{0<t_1<1}\frac{H_{t_1}(p)+\left(\frac{t_2}{1-t_2}-\frac{t_1}{1-t_1}\right)\mu}{H_{t_2}(q)} & -D\rel{f}{p}\leq \mu\leq 0,\\
        \inf_{t_2>1}\sup_{t_1>1}\frac{H_{t_1}(p)+\left(\frac{t_1}{1-t_1}-\frac{t_2}{1-t_2}\right)\mu}{H_{t_2}(q)} & \mu\geq 0.
    \end{dcases}
\end{align}

We note that the small deviation and moderate deviation rates are consistent in form with the existing infidelity results of Ref.~\cite{kumagai2016second} and Ref.~\cite{chubb2017moderate}, respectively, and the zero-error rate is a restatement of Ref.~\cite{Jensen_2019}.

\section{Numerical examples of strong and weak resonance}
\label{app:res}

In this appendix we will give a numerical example of a dichotomy transformation that exhibits both weak and strong resonance in the sense discussed in \cref{subsubsec:strong_res}. Following Ref.~\cite{korzekwa2019avoiding}, we can construct examples with resonance by considering two different input states and varying the relative ratio of their numbers. That is, instead of just considering the rates $R_n$ and errors $\epsilon_n$ such that 
\begin{align}
    (\rho_1^{\otimes n},\sigma^{\otimes n})\succeq_{(\epsilon_n,0)}(\rho_2^{\otimes R_nn},\sigma^{\otimes R_nn}),
\end{align}
we can instead consider
\begin{align}
    (\rho_1^{\otimes \lambda n}\otimes \rho_1'^{\otimes (1-\lambda) n},\sigma^{\otimes n})\succeq_{(\epsilon_n,0)}(\rho_2^{\otimes R_nn},\sigma^{\otimes R_nn}),
\end{align}
for some $\lambda\in(0,1)$. Consider the states
\begin{aligns}
\rho_1 &= \mathrm{Diag}(0.4309,0.4300,0.1391),\\
\rho_1' &= \mathrm{Diag}(0.5499,0.2300,0.2201),\\
\rho_2 &= \mathrm{Diag}(0.5121,0.3300,0.1579),\\
\sigma &= \mathrm{Diag}(0.3333,0.3333,0.3333).
\end{aligns}
These states exhibit \emph{weak} resonance, as shown in the left panel of \cref{fig:res_ex}. Alternatively, if we consider the reverse process of attempting to make a mixture of two possible output states,
\begin{align}
    (\rho_2^{\otimes n},\sigma^{\otimes n})
    \succeq_{(\epsilon_n,0)}
    (\rho_1^{\otimes \lambda R_nn}\otimes \rho_1'^{\otimes (1-\lambda) R_nn},\sigma^{\otimes R_nn})
    ,
\end{align}
then this in fact exhibits \emph{strong} resonance, as shown in the right panel of \cref{fig:res_ex}. As we can see, the weak resonance condition determines the behaviour of rates for high errors. But when the strong resonance is present, it dominates over this and in fact determines the behaviour of rates at all error levels.

\begin{figure}
\centering
\includegraphics[width=\linewidth]{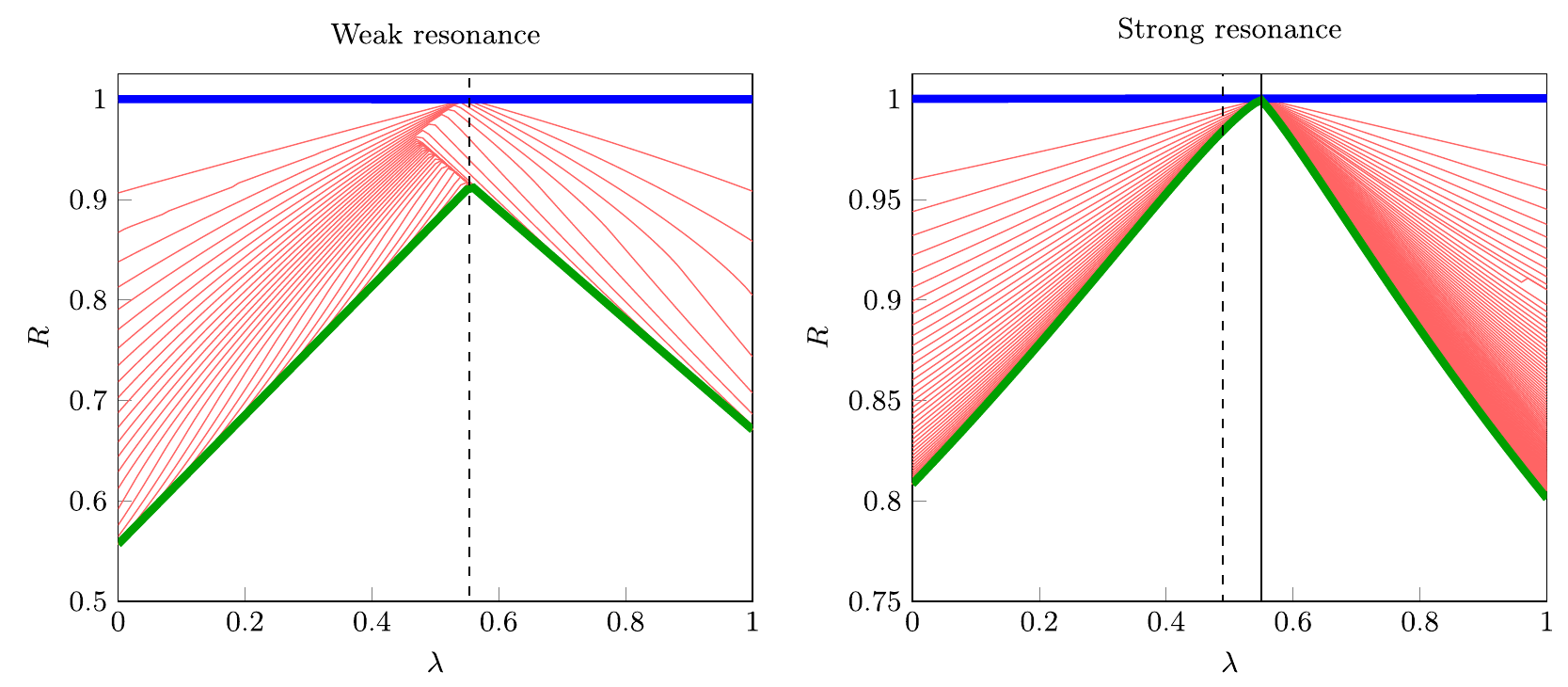}
\caption{\textbf{Examples of strong and weak resonance.} The upper blue lines correspond to the first-order rate, \emph{a la} \cref{thm:rate_first}. The lower green lines correspond to the zero-error transformation rates, \emph{a la} \cref{thm:rate_zero}. The internal red lines correspond to the optimal rates at an error level of $\exp(-\mu n)$, with each line correspondind to a different value of $\mu$. In the weak case $\mu\in \lbrace 0,0.05,\ldots,2\rbrace$ and in the strong case $\mu\in \lbrace 0,0.01,\ldots,1\rbrace$. Lastly, the vertical dashed black lines correspond to the mixture at which the weak resonance condition is satisfied, and the vertical solid black line to where the strong resonance condition is met.}
\label{fig:res_ex}
\end{figure}


\section{Asymptotic consistency}
\label{app:consistency}

In this appendix we will show rather satisfying `asymptotic consistencies' among our results, namely the hypothesis testing results of \cref{lem:ht_smalldev,lem:ht_moddev,lem:ht_largedev,lem:ht_extremedev}, the transformation rate results of \cref{\ratetheorems}, and the resonance phenomena considered in \cref{subsubsec:strong_res}. 

In \cref{sec:derive} we considered deriving the asymptotic behaviour of both hypothesis testing and transformation rates for dichotomies in several different error regimes (\cref{fig:sigmoid_ht,fig:sigmoid_rate}). Formally, these results must be separately proven in each of these distinct regimes. Eschewing rigour for the moment\footnote{May the {\scshape Lord} forgive us\ldots}, we might ask what happens if we take results from each error regime and na\"ively limit it into a neighbouring regime. By \emph{asymptotic consistency} we mean that this blasphemous and heretical procedure manages to reproduce the results of the rigorous treatments given in \cref{sec:derive}.

\subsection{Small and moderate deviation}

The small deviation error regime refers to errors $\epsilon\in(0,1)$ which are constant in $n$, and the moderate deviation regime concerns errors $\epsilon_n$ which are sub-exponentially approaching either 0 (low-error) or 1 (high-error). Here, we will consider starting with the small deviation results (\cref{lem:ht_smalldev} and \cref{thm:rate_smalldev}), and then applying expansions of this result around $\epsilon=0,1$, showing that this gives an entirely non-rigorous reproduction of the moderate deviation results (\cref{lem:ht_moddev} and \cref{thm:rate_moddev}).

As noted in Ref.~\cite{chubb2017moderate}, the inverse cdf of the standard Gaussian can be expanded for small positive $\epsilon$ as
\begin{align}
    \Phi^{-1}(\epsilon)\approx -\sqrt{\ln 1/\epsilon^2}
    \qquad\text{and}\qquad
    \Phi^{-1}(1-\epsilon)\approx +\sqrt{\ln1/\epsilon^2}.
\end{align}
Next, consider the small deviation expansion of the type-II hypothesis testing error given in \cref{lem:ht_smalldev}:
\begin{align}
    -\frac 1n\beta_\epsilon\rel{\rho^{\otimes }}{\sigma^{\otimes n}}
    \approx
    D\reli{} + \sqrt{\frac{V\reli{}}{n}}\cdot\Phi^{-1}(\epsilon).
\end{align}
If we simply substitute into these moderate error sequences $\epsilon_n:=\exp(-\lambda n^a)$ or $1-\epsilon_n$ for $\lambda>0,a\in(0,1)$, and use the above expansions, then we recover the moderate deviation expansion given in \cref{lem:ht_moddev}:
\begin{aligns}
    -\frac 1n\beta_{\epsilon_n}\rel{\rho^{\otimes }}{\sigma^{\otimes n}}
    &\approx
    D\reli{} - \sqrt{2V\reli{} \,\lambda n^{a-1}},\\
    -\frac 1n\beta_{1-\epsilon_n}\rel{\rho^{\otimes }}{\sigma^{\otimes n}}
    &\approx
    D\reli{} + \sqrt{2V\reli{} \,\lambda n^{a-1}}.
\end{aligns}

As for the dichotomy transformation rates, we need to consider expansions not just of the standard Gaussian, but also of the sesquinormal distribution considered in \cref{subsec:sesq}. In \cref{lem:sesqui} we saw that the sesquinormal distribution can be expressed in terms of the standard Gaussian distribution. Using this, we can expand the sesquinormal inverse cdf for small positive $\epsilon$ as
\begin{align}
    S^{-1}_{1/\xi}(\epsilon)\approx -\abs{1-\xi^{-1/2}}\sqrt{\ln1/\epsilon^2}
    \qquad\text{and}\qquad
    S^{-1}_{1/\xi}(1-\epsilon)\approx +\left[1+\xi^{-1/2}\right]\sqrt{\ln1/\epsilon^2}.
\end{align}
Similar to the case of hypothesis testing, if we take the small deviation dichotomy transformation rate (\cref{thm:rate_smalldev})
\begin{align}
    R_n^*(\epsilon)\approx \frac{D\reli{1} +\sqrt{V\reli{1}/n}\cdot S_{1/\xi}^{-1}(\epsilon) }{D\reli{2}},
\end{align}
and substitute moderate error rates, we reproduce the moderate deviation results (\cref{thm:rate_moddev}):
\begin{aligns}
    R_n^*\bigl(\exp(-\lambda n^a)\bigr)&\approx \frac{D\reli{1} -\abs{1-\xi^{-1/2}}\cdot \sqrt{2\lambda V\reli{1} n^{a-1}} }{D\reli{2}},\\
    R_n^*\bigl(1-\exp(-\lambda n^a)\bigr)&\approx \frac{D\reli{1} +\left[1+\xi^{-1/2}\right]\cdot \sqrt{2\lambda V\reli{1} n^{a-1}} }{D\reli{2}}.
\end{aligns}

\subsection{Large and moderate deviation}

The moderate deviation regime serves as a barrier between the small and large deviation regimes. As such, an alternate way of recovering the moderate deviation results is to consider the limit of large deviations. Specifically, errors which are exponentially approaching $0$ or $1$, but then consider the limit where we treat that exponent as arbitrarily small.

In the case of hypothesis testing, the large deviation results (\cref{lem:ht_largedev}) are
\begin{align}
    \frac 1n \gamma_{\lambda n} \rel{\rho^{\otimes n}}{\sigma^{\otimes n}} \to 
    \begin{dcases}
        \sup_{t<0}\Dm_t\reli{}+\frac{t}{1-t}\lambda & 
        \lambda\leq-D\rel{\sigma}{\rho},\\
        \inf_{0<t<1}\!\!\!-\Dp_t\reli{}-\frac{t}{1-t}\lambda & 
        -D(\sigma\|\rho)\leq \lambda\leq 0,\\
        \sup_{t>1} -\Dm_t\reli{}+\frac{t}{1-t}\lambda & 
        \lambda\geq 0.
    \end{dcases}
\end{align}
Substituting moderate errors into the large deviation result gives the expressions
\begin{aligns}
    \frac 1n \gamma_{-\lambda n^a} \rel{\rho^{\otimes n}}{\sigma^{\otimes n}}
    &\approx
    \inf_{0<t<1}-\Dp_t\reli{}+\frac{t}{1-t}\lambda n^{a-1},\\
    \frac 1n \gamma_{+\lambda n^a} \rel{\rho^{\otimes n}}{\sigma^{\otimes n}}
    &\approx
    \sup_{t>1}-\Dm_t\reli{}+\frac{t}{1-t}\lambda n^{a-1},
\end{aligns}
where $\lambda>0$ and $a\in(0,1)$. In both cases the optimisations approach $t\approx 1$ in this moderate regime, so we can expand the R\'enyi entropies using
\begin{align}
    \Dm_t\reli{}\approx \Dp_t\reli{}\approx D\reli{} + \frac{t-1}{2}V(\rho\|\sigma),
\end{align}
which gives
\begin{aligns}
    \frac 1n \gamma_{-\lambda n^a} \rel{\rho^{\otimes n}}{\sigma^{\otimes n}}
    &\approx
    \inf_{t<1}-D\reli{}+\frac{1-t}{2}V\reli{}+\frac{t}{1-t}\lambda n^{a-1},\\
    \frac 1n \gamma_{+\lambda n^a} \rel{\rho^{\otimes n}}{\sigma^{\otimes n}}
    &\approx
    \sup_{t>1}-D\reli{}+\frac{1-t}{2}V\reli{}+\frac{t}{1-t}\lambda n^{a-1}.
\end{aligns}
These optimisations can now be explicitly evaluated. To leading order, they give
\begin{align}
    \frac 1n \gamma_{\pm\lambda n^a} \rel{\rho^{\otimes n}}{\sigma^{\otimes n}}
    &\approx
    -D\reli{} \mp \sqrt{2V\reli{} \lambda n^{a-1}},
\end{align}
which is \cref{lem:ht_moddev}.

Next, we turn to the dichotomy transformation rates. We start with the high-error large deviation result \cref{thm:rate_largedev_hi},
\begin{align}
    R_n^*\bigl(1-\exp(-\lambda n)\bigr)
    &\approx \inf_{0<t_2<1} \inf_{t_1>1} \frac{\Dp_{t_1}\reli{1}+\left(\frac{t_1}{t_1-1}+\frac{t_2}{1-t_2} \right)\lambda}{D_{t_2}\reli{2}}
\end{align}
for $\lambda>0$. Substituting moderate errors, this becomes
\begin{align}
    R_n^*\bigl(1-\exp(-\lambda n^a)\bigr)
    &\approx \inf_{0<t_2<1} \inf_{t_1>1} \frac{\Dp_{t_1}\reli{1}+\left(\frac{t_1}{t_1-1}+\frac{t_2}{1-t_2} \right)\lambda n^{a-1}}{D_{t_2}\reli{2}}.
\end{align}
As with hypothesis testing, the optimisations will both approach $t_1,t_2\approx 1$, so we can expand the R\'enyi entropies around $t_1,t_2=1$,
\begin{align}
    R_n^*\bigl(1-\exp(-\lambda n^a)\bigr)
    &\approx \inf_{t_2<1} \inf_{t_1>1} \frac{D\reli{1}+\frac{t_1-1}{2}V\reli{1}+\left(\frac{t_1}{t_1-1}+\frac{t_2}{1-t_2} \right)\lambda n^{a-1}}{D\reli{2}+\frac{t_2-1}{2}V\reli{2}}\\
    &\approx \inf_{t_2<1} \inf_{t_1>1} \frac{
        D\reli{1}
        +\left[\frac{t_1-1}{2}V\reli{1}+\frac{\lambda n^{a-1}}{t_1-1}\right]
        +\left[-\frac{t_2-1}{2}\frac{D\reli{1}}{D\reli{2}}V\reli{2}+\frac{\lambda n^{a-1}}{1-t_2}\right]
    }{D\reli{2}}\\
    &\approx   \frac{
        D\reli{1}
        +\inf_{t_1>1}\left[\frac{t_1-1}{2}V\reli{1}+\frac{\lambda n^{a-1}}{t_1-1}\right]
        +\inf_{t_2<1}\left[\frac{1-t_2}{2}\frac{D\reli{1}}{D\reli{2}}V\reli{2}+\frac{\lambda n^{a-1}}{1-t_2}\right]
    }{D\reli{2}}\\
    &\approx   \frac{
        D\reli{1}
        +\sqrt{ 2V\reli{1}\cdot \lambda n^{a-1} }
        +\sqrt{ 2V\reli{2}\frac{D\reli{1}}{D\reli{2}}\cdot \lambda n^{a-1}}
    }{D\reli{2}}\\
    &\approx   \frac{
        D\reli{1}
        +[1+\xi^{-1/2}]\sqrt{ 2V\reli{1}\cdot \lambda n^{a-1} }
    }{D\reli{2}},
\end{align}
which is \cref{thm:rate_moddev}. For the low-error case, we can use the same arguments for $\overline r_2$/$\widecheck r_2$ and $r_3$. Specifically, for small negative $\mu$ we have
\begin{aligns}
    \overline r_2(-\mu)\approx \widecheck r_2(-\mu)&\approx
    \frac{
        D\reli{1}
        -[1-\xi^{-1/2}]\sqrt{ -2V\reli{1}\cdot \mu }
    }{D\reli{2}},\\
    r_3(\mu)&\approx\frac{
        D\reli{1}
        +[1-\xi^{-1/2}]\sqrt{ 2V\reli{1}\cdot \mu }
    }{D\reli{2}},
\end{aligns}
and thus
\begin{aligns}
    R_n^*\bigl(\exp(-\lambda n^a)\bigr)
    &\approx \min_{-\lambda n^{a-1}\leq \mu\leq \lambda n^{a-1}} 
    \begin{dcases}
        r_2(\mu) & \mu<0,\\
        r_3(\mu) & \mu>0,
    \end{dcases}\\
    &\approx \frac{
        D\reli{1}
        +\min\left\lbrace \xi^{-1/2}-1,1-\xi^{-1/2} \right\rbrace\sqrt{ 2V\reli{1}\cdot \lambda n^{a-1} }
    }{D\reli{2}}\\
    &\approx \frac{
        D\reli{1}
        -\abs{1-\xi^{-1/2}}\sqrt{ 2V\reli{1}\cdot \lambda n^{a-1} }
    }{D\reli{2}},
\end{aligns}
once again rederiving \cref{thm:rate_moddev}.

\subsection{Large and extreme deviation}

The other regime neighbouring large deviations is extreme deviations. Here, instead of taking the limit of an arbitrarily small error exponent, we will instead take the limit of an arbitrarily large error exponent, as a crude model of superexponential error. 

We start with hypothesis testing. For $\lambda >0$, \cref{lem:ht_largedev} gives that
\begin{align}
    \frac 1n \gamma_{\lambda n}\rel{\rho^{\otimes n}}{\sigma^{\otimes n}} \approx \sup_{t>1} -\Dm_t\reli{} + \frac{t}{1-t}\lambda.
\end{align}
As $\Dm_t\reli{}$ is monotonically increasing in $t$ and bounded, as we take $\lambda\to +\infty$ the optimising $t$ must also keep increasing. If we take $t\to \infty$, then this gives 
\begin{align}
    \frac 1n \gamma_{\lambda n}\rel{\rho^{\otimes n}}{\sigma^{\otimes n}} \approx -\Dm_{+\infty}\reli{} -\lambda.
\end{align}
Applying the same argument for $-\lambda$ gives
\begin{align}
    \frac 1n \gamma_{-\lambda n}\rel{\rho^{\otimes n}}{\sigma^{\otimes n}} \approx -\Dm_{-\infty}\reli{} +\lambda.
\end{align}
Both of these are precisely the extreme deviation results given in \cref{lem:ht_extremedev}.

Next we turn to the zero-error transformation rate of dichotomies. Recall that \cref{thm:rate_largedev_lo} gives that 
\begin{align}
    R_n^*\bigl(\exp(-\lambda n)\bigr) 
    \geq \min_{-\lambda\leq \mu\leq \lambda} \begin{dcases}
        r_1(\mu) & \mu<-D\rel{\sigma_1}{\rho_1},\\
        \widecheck r_2(\mu) & -D\rel{\sigma_1}{\rho_1}<\mu<0,\\
        r_3(\mu) & \mu>0,
    \end{dcases}
\end{align}
where
\begin{aligns}
    r_1(\mu):=&\sup_{t_2<0}\inf_{t_1<0}\frac{-\DL_{t_1}\reli{1}+\left(\frac{t_1}{t_1-1}-\frac{t_2}{t_2-1}\right)\mu}{-D_{t_2}\reli{2}},\\
    \widecheck r_2(\mu):=&\inf_{0<t_2<1}\sup_{0<t_1<1}\frac{\DL_{t_1}\reli{1}+\left(\frac{t_1}{1-t_1}-\frac{t_2}{1-t_2}\right)\mu}{D_{t_2}\reli{2}},\\
    r_3(\mu):=&\sup_{t_2>1}\inf_{t_1>1}\frac{\DL_{t_1}\reli{1}+\left(\frac{t_1}{t_1-1}-\frac{t_2}{t_2-1}\right)\mu}{D_{t_2}\reli{2}}.
\end{aligns}
If we take the limit of $\lambda\to\infty$, then the optimisation of $\mu$ becomes unconstrained, and $\mu$ can be seen as a Lagrange multiplier in the above optimisations. Ignoring issues of order-of-limits, this means that optimisations of $\mu$ can be converted into constrained optimisations, with the constraint being that $t_1=t_2$, specifically,
\begin{aligns}
    \inf_{\mu<-D\rel{\sigma_1}{\rho_1}} r_1(\mu)=&\inf_{t<0}\frac{\DL_{t}\reli{1}}{D_{t}\reli{2}},\\
    \inf_{-D\rel{\sigma_1}{\rho_1}<\mu<0}\widecheck r_2(\mu)=&\inf_{0<t<1}\frac{\DL_{t}\reli{1}}{D_{t}\reli{2}},\\
    \inf_{\mu>0}r_3(\mu)=&\inf_{t>1}\frac{\DL_{t}\reli{1}}{D_{t}\reli{2}}.
\end{aligns}
This means that 
\begin{align}
R_n^*(0)\gtrsim \inf_{t\in\mathbb R} \frac{\DL\reli{1}}{D\reli{2}}.
\end{align}
Following the discussion in \cref{thm:rate_zero} about pinching, this would extend to 
\begin{align}
    R_n^*(0)\gtrsim \ctc \max\left\lbrace
    \inf_{t\in\mathbb R} \frac{\DL\reli{1}}{D\reli{2}},
    \inf_{t\in\mathbb R} \frac{\DR\reli{1}}{D\reli{2}}
    \right\rbrace.
\end{align}

\subsection{Strong and weak resonance}

In \cref{subsubsec:strong_res} we discussed a strong resonance phenomenon which arises in the large/extreme deviation regimes, and complements the (weak) resonance discussed in Ref.~\cite{korzekwa2019avoiding}. We will now explain how the weak resonance condition can be seen as an edge case of the strong condition. The strong resonance condition is
\begin{align}
\operatorname*{arg\,min}_{\alpha\in\overline{\mathbb R}}\frac{\Dm_\alpha \reli{1}}{D_\alpha\reli{2}}&=1,
\end{align}
or, in other words,
\begin{align}
\min_{\alpha\in\overline{\mathbb R}}\frac{\Dm_\alpha \reli{1}}{D_\alpha\reli{2}}&=\frac{D\reli{1}}{D\reli 2}.
\end{align}
Weak resonance is a phenomenon that appears in the small and moderate deviation regimes. As we have shown before, these regimes can be seen as corresponding to values of $\alpha$ close to $1$. So, if we consider only such $\alpha$ values, and expand around $\alpha=1$, then this condition becomes
\begin{aligns}
\frac{D\reli{1}}{D\reli 2}
&=\min_{\alpha\in\overline{\mathbb R}}\frac{\Dm_\alpha \reli{1}}{D_\alpha\reli{2}}\\
&\approx \min_{\alpha}\frac{D\reli{1}+\frac{\alpha-1}{2}V\reli 1 + O((\alpha-1)^2)}{D\reli 2 +\frac{\alpha-1}{2}V\reli 2 + O((\alpha-1)^2)}\\
&\approx \frac{D\reli 1}{D\reli 2}\left[1+\frac{\alpha-1}{2}\left( \frac{V\reli{1}}{D\reli{1}}-\frac{V\reli{2}}{D\reli{2}} \right)+ O((\alpha-1)^2)\right],
\end{aligns}
which clearly then reduces to the weak resonance condition
\begin{align}
\frac{V\reli{1}}{D\reli{1}}&=\frac{V\reli{2}}{D\reli{2}}.
\end{align}


\section{Uniform hypothesis testing convergence}
\label{app:uni}

An important feature of \cref{lem:ht} is that it requires an ordering of the type-II errors simultaneously \emph{for all} values of $x$. If one were to na\"ively apply the hypothesis testing results in \cref{subsec:ht}, however, these would only provide pointwise convergence, instead of the uniform convergence such a statement would require. In this section we show that the hypothesis testing results of \cref{subsec:ht} can all be extended to uniform results as required essentially for free. This comes from the fact that the quantities being considered are monotonic, in such a way that rules out the pathologies necessary for non-uniform convergence. Specifically, we will use the following lemma:
\begin{lemma}[Prop~2.1 of \cite{resnick2007heavy}]
    \label{lem:monotone}
    Convergence of monotone functions on a compact set to a continuous function is uniform. In other words, if a sequence of functions $\lbrace f_n\rbrace_{n}$ from $[a,b]$ to $\mathbb R$ are all monotone, and pointwise converge to a continuous function $f$, then that convergence is in fact uniform.
\end{lemma}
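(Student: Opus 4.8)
\textbf{Proof proposal for \cref{lem:monotone} (Proposition 2.1 of \cite{resnick2007heavy}).}

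The plan is to prove the statement by a standard $\epsilon$-net argument that leverages monotonicity to control the function between net points. Suppose $\lbrace f_n\rbrace_n$ is a sequence of monotone functions on the compact interval $[a,b]$ converging pointwise to a continuous $f$; without loss of generality assume each $f_n$ (and hence $f$) is non-decreasing, since the non-increasing case follows by replacing $f_n$ with $-f_n$. The first step would be to observe that $f$, being continuous on a compact set, is uniformly continuous, so for any $\epsilon>0$ there exists a $\delta>0$ with $\abs{f(x)-f(y)}<\epsilon$ whenever $\abs{x-y}<\delta$. Choose a finite partition $a=x_0<x_1<\dots<x_K=b$ with each $x_{k+1}-x_k<\delta$.

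The second step is to use pointwise convergence at the finitely many partition points: there exists an $N$ such that for all $n\geq N$ and all $k\in\lbrace 0,\dots,K\rbrace$ we have $\abs{f_n(x_k)-f(x_k)}<\epsilon$. The third and key step is to bound $\abs{f_n(x)-f(x)}$ for an arbitrary $x\in[a,b]$: pick $k$ with $x\in[x_k,x_{k+1}]$, and use the monotonicity of $f_n$ together with the monotonicity of $f$ to sandwich
\begin{align}
    f_n(x_k)-f(x_{k+1}) \leq f_n(x)-f(x) \leq f_n(x_{k+1})-f(x_k).
\end{align}
Then one estimates the two outer bounds: $f_n(x_{k+1})-f(x_k) \leq \abs{f_n(x_{k+1})-f(x_{k+1})} + \abs{f(x_{k+1})-f(x_k)} < 2\epsilon$, and similarly $f_n(x_k)-f(x_{k+1}) \geq -\abs{f_n(x_k)-f(x_k)} - \abs{f(x_k)-f(x_{k+1})} > -2\epsilon$. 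Hence $\abs{f_n(x)-f(x)} < 2\epsilon$ for all $x\in[a,b]$ and all $n\geq N$, which is exactly uniform convergence (up to rescaling $\epsilon$).

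The main obstacle here is essentially bookkeeping rather than conceptual: one must be careful that the sandwiching inequality genuinely uses monotonicity of \emph{both} $f_n$ and the limit $f$ (it is the interplay of the two that prevents the error from spiking between net points, which is the pathology that would break uniformity for general pointwise-convergent sequences). A secondary subtlety is handling the endpoints and the non-increasing case, but both are dispatched by the reductions mentioned above. Since this is a cited result, the write-up can be kept brief, essentially reproducing the argument above and noting that it is the standard Dini-type / Pólya-type argument for monotone convergence.
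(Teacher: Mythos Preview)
Your argument is correct and is precisely the standard Pólya-type proof of this result. Note, however, that the paper does not actually supply its own proof of this lemma: it is stated as a citation to Proposition~2.1 of \cite{resnick2007heavy} and used as a black box in the uniformity arguments of \cref{app:uni}, so there is no ``paper's proof'' to compare against---your write-up would simply serve as a self-contained verification of the cited fact.
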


\begin{lemma}[Uniform small deviation analysis of hypothesis testing]
    \label{lem:ht_smalldev_uni}
    For any $\delta>0$ there exists a finite $N(\rho,\sigma,\delta)$ such that both inequalities
    \begin{aligns}
        \abs{
        -\log \beta_{\epsilon}\rel{\rho^{\otimes n}}{\sigma^{\otimes n}}-nD\rel{\rho}{\sigma}-\sqrt{nV\rel{\rho}{\sigma}}\Phi^{-1}(\epsilon)
        } & \leq \delta \sqrt n,\\
        \abs{
        -\log\betaL_{\epsilon}\rel{\rho^{\otimes n}}{\sigma^{\otimes n}}-nD\rel{\rho}{\sigma}-\sqrt{nV\rel{\rho}{\sigma}}\Phi^{-1}(\epsilon)
        } & \leq \delta \sqrt n,
    \end{aligns}
    hold for all $n\geq N$ and $\epsilon\in[\delta,1-\delta]$.
\end{lemma}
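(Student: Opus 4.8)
The plan is to promote the pointwise second-order expansion of \cref{lem:ht_smalldev} to a uniform one by exploiting monotonicity in $\epsilon$ together with \cref{lem:monotone}. First I would fix $\delta>0$ and define, for each $n$, the functions
\begin{align}
    g_n(\epsilon)&:=\frac{-\log\beta_\epsilon\rel{\rho^{\otimes n}}{\sigma^{\otimes n}}-nD\reli{}}{\sqrt{nV\reli{}}},\\
    \lefthat g_n(\epsilon)&:=\frac{-\log\betaL_\epsilon\rel{\rho^{\otimes n}}{\sigma^{\otimes n}}-nD\reli{}}{\sqrt{nV\reli{}}},
\end{align}
on the compact interval $[\delta,1-\delta]$ (assuming $V\reli{}>0$; the degenerate case $V\reli{}=0$ is handled separately, see below). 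Since $\beta_\epsilon$ and $\betaL_\epsilon$ are monotone non-increasing in $\epsilon$, both $g_n$ and $\lefthat g_n$ are monotone non-decreasing. By \cref{lem:ht_smalldev}, both $g_n(\epsilon)\to\Phi^{-1}(\epsilon)$ and $\lefthat g_n(\epsilon)\to\Phi^{-1}(\epsilon)$ pointwise on $(\delta,1-\delta)$, and $\Phi^{-1}$ is continuous there. Hence \cref{lem:monotone} applies directly and gives uniform convergence on any compact subinterval.

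The one wrinkle is that \cref{lem:ht_smalldev} only supplies pointwise convergence on the \emph{open} interval $(0,1)$, whereas I want uniformity up to the endpoints $\epsilon=\delta$ and $\epsilon=1-\delta$. I would handle this by running the monotonicity argument on the slightly larger interval $[\delta/2,1-\delta/2]$: by \cref{lem:ht_smalldev} the limit function $\Phi^{-1}$ is finite and continuous on all of $[\delta/2,1-\delta/2]$, the functions $g_n,\lefthat g_n$ are monotone there, and the pointwise limit holds at every interior point and—by the same monotone-limit argument applied at the two boundary points, squeezing $g_n(\delta/2)$ and $g_n(1-\delta/2)$ between values at nearby interior points—extends to the closed interval. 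Then \cref{lem:monotone} yields $\sup_{\epsilon\in[\delta/2,1-\delta/2]}\abs{g_n(\epsilon)-\Phi^{-1}(\epsilon)}\to 0$, and a fortiori the same holds on $[\delta,1-\delta]$. Multiplying through by $\sqrt{nV\reli{}}$ converts the statement $\abs{g_n-\Phi^{-1}}\le \delta/\sqrt{V\reli{}}$ for large $n$ into exactly the two displayed inequalities with slack $\delta\sqrt n$, so there exists a single $N(\rho,\sigma,\delta)$ (depending also on $V\reli{}$, which is a function of $\rho,\sigma$) beyond which both hold for all $\epsilon\in[\delta,1-\delta]$ simultaneously.

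For the degenerate case $V\reli{}=0$ (which forces $\rho$ and $\sigma$ to be ``aligned'' enough that the variance vanishes), the $\Theta(1/\sqrt n)$ second-order term is absent and one instead has $-\log\beta_\epsilon\rel{\rho^{\otimes n}}{\sigma^{\otimes n}}=nD\reli{}+o(\sqrt n)$ uniformly; this can again be obtained from the monotonicity of $\beta_\epsilon$ together with the pointwise statement, or noted as a trivial special case, so it does not cause real difficulty.

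I expect the main (really the only) obstacle to be the endpoint issue just described: making sure that uniform convergence is obtained on a closed interval $[\delta,1-\delta]$ rather than merely on compact subsets of the open interval where \cref{lem:ht_smalldev} is stated, and confirming that the pinched variant $\betaL$ inherits exactly the same monotonicity and pointwise-limit properties (which it does, since $\betaL_x\rel{\cdot}{\cdot}=\beta_x\rel{\pinch{\cdot}{\cdot}}{\cdot}$ is manifestly monotone in $x$ and has the claimed expansion by \cref{lem:ht_smalldev}). Everything else is a routine application of \cref{lem:monotone}.
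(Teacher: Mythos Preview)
Your proposal is correct and matches the paper's proof: normalize, use monotonicity of $\beta_\epsilon$ and $\betaL_\epsilon$ in $\epsilon$, invoke \cref{lem:ht_smalldev} for pointwise convergence to the continuous limit, and apply \cref{lem:monotone} on the compact interval $[\delta,1-\delta]$. Two small simplifications: the endpoint worry is spurious since $[\delta,1-\delta]\subset(0,1)$ already lies where \cref{lem:ht_smalldev} gives pointwise convergence, and normalizing by $\sqrt n$ alone (with limit function $\sqrt{V\reli{}}\,\Phi^{-1}$) rather than by $\sqrt{nV\reli{}}$ makes the case $V\reli{}=0$ automatic.
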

\begin{proof}
    Start by defining
    \begin{aligns}
        f_n(x):=\frac{-\log\beta_{x}\rel{\rho^{\otimes n}}{\sigma^{\otimes n}}-nD\rel{\rho}{\sigma}}{\sqrt n}
        \qquad\text{and}\qquad
        \lefthat f_n(x):=\frac{-\log\betaL_{x}\rel{\rho^{\otimes n}}{\sigma^{\otimes n}}-nD\rel{\rho}{\sigma}}{\sqrt n},
    \end{aligns}
    and $f(x):=\sqrt{V\reli{}}\Phi^{-1}(x)$. \cref{lem:ht_smalldev} is equivalent to the statement that $f_n\to f$ and $\lefthat f_n\to f$ pointwise on $(0,1)$. However, because $\beta_x(\cdot\|\cdot)$ and $\betaL_x(\cdot\|\cdot)$ are monotone decreasing functions of $x$, we have that each $f_n$ and $\lefthat f_n$ is monotone increasing. Thus, if we constrain $x$ to some compact subset of $(0,1)$, say $x\in [\delta,1-\delta]$, then the uniformity of $f_n\to f$ and $\lefthat f_n\to f$ follows from \cref{lem:monotone}. This in turn implies that there exists an $\epsilon$-independent constant $N(\rho,\sigma,\delta)$ for which $\abs{f_n(\epsilon)-f(\epsilon)}\leq \delta$ and $|\lefthat f_n(\epsilon)-f(\epsilon)|\leq \delta$ hold for all $\epsilon\in[\delta,1-\delta]$ and $n\geq N$. Expanding this out gives the required inequalities.
\end{proof}

\begin{lemma}[Uniform large deviation analysis of hypothesis testing]
    \label{lem:ht_largedev_uni}
    For any constant $\delta>0$ there exists an $N(\rho,\sigma,\delta)$ such that the non-pinched/pinched log odds error per copy are bounded as
    \begin{aligns}
        \label{eq:ht_largedev_uni}
        \abs{\frac 1n {\gamma_{\lambda n}\rel{\rho^{\otimes n}}{\sigma^{\otimes n}}}-\Gamma_\lambda(\rho\|\sigma)}&\leq \delta,\\
        \abs{\frac 1n {\gammaL_{\lambda n}\rel{\rho^{\otimes n}}{\sigma^{\otimes n}}}-\GammaL_\lambda(\rho\|\sigma)}&\leq \delta,\\
        \abs{\frac 1n {\gammaR_{\lambda n}\rel{\rho^{\otimes n}}{\sigma^{\otimes n}}}-\GammaR_\lambda(\rho\|\sigma)}&\leq \delta,
    \end{aligns}
    for all $-1/\delta\leq \lambda\leq 1/\delta$ and $n\geq N$.
\end{lemma}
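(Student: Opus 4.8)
The strategy is exactly the same as in \cref{lem:ht_smalldev_uni}: convert each pointwise convergence statement of \cref{lem:ht_largedev} into a uniform one by exploiting monotonicity and \cref{lem:monotone}. First I would fix a compact interval $[-1/\delta,1/\delta]$ of the log-odds parameter $\lambda$, and observe that the functions
\begin{align}
    g_n(\lambda):=\frac 1n\gamma_{\lambda n}\rel{\rho^{\otimes n}}{\sigma^{\otimes n}},\quad
    \lefthat g_n(\lambda):=\frac 1n\gammaL_{\lambda n}\rel{\rho^{\otimes n}}{\sigma^{\otimes n}},\quad
    \righthat g_n(\lambda):=\frac 1n\gammaR_{\lambda n}\rel{\rho^{\otimes n}}{\sigma^{\otimes n}}
\end{align}
are each monotone non-decreasing in $\lambda$. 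This follows since $\gamma_x\reli{}$ is, by construction, the optimal type-II log-odds subject to the type-I log-odds being at most $x$: relaxing the constraint ($x$ larger) can only increase the achievable type-II error, hence the log-odds, so $x\mapsto\gamma_x\reli{}$ is non-decreasing, and substituting $x=\lambda n$ and dividing by $n>0$ preserves this. The pinched variants inherit monotonicity because $\gammaL_x(\rho\|\sigma)=\gamma_x\rel{\pinch\rho\sigma}{\sigma}$ and $\gammaR_x(\rho\|\sigma)=\gamma_x\rel\rho{\pinch\sigma\rho}$ are just the non-pinched quantity applied to (correlated) states, which does not affect monotonicity in $x$.

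Next I would invoke \cref{lem:ht_largedev}, which gives that $g_n\to\Gamma_\lambda$, $\lefthat g_n\to\GammaL_\lambda$, $\righthat g_n\to\GammaR_\lambda$ pointwise for every $\lambda\in\mathbb R$. The key remaining input is continuity of the limit functions $\lambda\mapsto\Gamma_\lambda(\rho\|\sigma)$, $\GammaL_\lambda(\rho\|\sigma)$, $\GammaR_\lambda(\rho\|\sigma)$ on $[-1/\delta,1/\delta]$. Within each of the three open parameter regions $\mathcal R_L$, $\mathcal R_M$, $\mathcal R_R$ from the proof of \cref{lem:ht_largedev}, the limit is an infimum or supremum over $t$ of functions that are jointly continuous in $(\lambda,t)$, with the optimal $t$ staying in a compact interval, so continuity there is routine; at the two boundary points $\lambda=-D\rel\sigma\rho$ (resp.\ $\lambda=-\Dstar\rel\sigma\rho$) and $\lambda=0$, the edge-case values recorded in \cref{lem:ht_largedev} (e.g.\ $\Gamma_0\reli{}=-D\reli{}$, $\Gamma_{-D\rel\sigma\rho}\reli{}=0$) match the one-sided limits from the adjacent regions---this is precisely what the monotone, bounded structure of $\Dm_t$, $\Dp_t$, $\DL_t$, $\DR_t$ in $t$ (established in \cref{app:pinch}, in particular the differentiability in \cref{thm:pinched}) forces---so the glued function is continuous. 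For the pinched cases I would additionally note that differentiability of $\DL_\alpha$ and $\DR_\alpha$ from \cref{thm:pinched} is exactly the hypothesis under which Refs.~\cite{HiaiMosonyiOgawa2007,MosonyiOgawa14} give the pointwise limits, and it also guarantees continuity of $\GammaL_\lambda$, $\GammaR_\lambda$.

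With monotonicity of $\{g_n\}$, $\{\lefthat g_n\}$, $\{\righthat g_n\}$ in hand and continuity of the three limit functions on the compact set $[-1/\delta,1/\delta]$, \cref{lem:monotone} upgrades each pointwise convergence to uniform convergence on that interval. Uniform convergence means exactly that for the given $\delta>0$ there is a threshold $N(\rho,\sigma,\delta)$---depending only on $\rho$, $\sigma$, $\delta$ and not on $\lambda$---beyond which all three inequalities in \cref{eq:ht_largedev_uni} hold for every $\lambda$ in the interval, which is the claim. The only genuinely delicate point is the continuity of the limit functions at the two region boundaries; everything else is a mechanical application of the monotone-convergence-is-uniform principle already used for the small deviation case. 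I would therefore spend most of the writeup carefully checking that the piecewise expressions for $\Gamma_\lambda$, $\GammaL_\lambda$, $\GammaR_\lambda$ agree at their junction points, citing the edge cases of \cref{lem:ht_largedev} and the boundedness/monotonicity of the R\'enyi divergences from \cref{app:pinch}.
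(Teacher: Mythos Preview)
Your approach is correct and essentially identical to the paper's: define $g_n(\lambda)=\frac{1}{n}\gamma_{\lambda n}\rel{\rho^{\otimes n}}{\sigma^{\otimes n}}$ (and pinched variants), invoke \cref{lem:ht_largedev} for pointwise convergence, and use \cref{lem:monotone} on the compact interval $[-1/\delta,1/\delta]$ to upgrade to uniform convergence. Your explicit check of continuity of the limit functions at the region boundaries is more careful than the paper, which simply asserts \cref{lem:monotone} applies.

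One small slip: your monotonicity argument has the direction reversed. Relaxing the type-I constraint (larger $x$) enlarges the feasible set of tests $Q$, so the minimum $L[\Tr(\sigma Q)]$ can only \emph{decrease}; hence $x\mapsto\gamma_x\reli{}$ is non-\emph{increasing} (consistent with $\Gamma_{\pm\infty}=\mp\infty$ in \cref{lem:ht_largedev}). This does not affect the proof, since \cref{lem:monotone} only needs monotonicity in either direction.
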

\begin{proof}
    This proof follows similarly to \cref{lem:ht_smalldev_uni}. Here we define
    \begin{aligns}
        f_n(x):=&\frac{\gamma_{nx}\rel{\rho^{\otimes n}}{\sigma^{\otimes n}}}n,\\
        \lefthat f_n(x):=&\frac{\gammaL_{nx}\rel{\rho^{\otimes n}}{\sigma^{\otimes n}}}n,\\
        \righthat f_n(x):=&\frac{\gammaR_{nx}\rel{\rho^{\otimes n}}{\sigma^{\otimes n}}}n,
    \end{aligns}
    as well as $f(x):=\Gamma_x\reli{}$, $\lefthat f(x):=\GammaL_x\reli{}$, and $\righthat f(x):=\GammaR_x\reli{}$. \cref{lem:ht_largedev} gives that $f_n\to f$, $\lefthat f_n\to \lefthat f$, and $\righthat f_n\to \righthat f$ pointwise on $\mathbb R$, and \cref{lem:monotone} allows us to make this uniform on $[-1/\delta,1/\delta]$. This uniform convergence in turn implies the existence of a finite $N(\rho,\sigma,\delta)$ such that $\abs{f_n(x)-f(x)}\leq \delta$, $\abs{\lefthat f_n(x)-\lefthat f(x)}\leq \delta$, and $\abs{\righthat f_n(x)-\righthat f(x)}\leq \delta$ for any $n\geq N$ and $x\in[-1/\delta,1/\delta]$. Expanding this gives the required inequalities.
\end{proof}

\begin{lemma}[Uniform moderate deviation analysis of hypothesis testing]
    \label{lem:ht_moddev_uni}
    For any constant $\delta>0$ and $a\in(0,1)$ there exists an $N(\rho,\sigma,\delta,a)$ such that the non-pinched/pinched log odds error per copy are bounded as
    \begin{aligns}
        \label{eq:ht_moddev_uni}
        \abs{\frac1n {\gamma_{\lambda n^a}\rel{\rho^{\otimes n}}{\sigma^{\otimes n}}}+D\reli{}+\mathrm{sgn}(\lambda)\cdot\sqrt{2V\reli{}\abs\lambda n^{a-1}}
        }\leq \delta\sqrt{n^{a-1}},\\
        \abs{\frac 1n{\gammaL_{\lambda n^a}\rel{\rho^{\otimes n}}{\sigma^{\otimes n}}}+D\reli{}+\mathrm{sgn}(\lambda)\cdot \sqrt{2V\reli{}\abs\lambda n^{a-1}}
        }\leq \delta\sqrt{n^{a-1}},
    \end{aligns}
    for all $-1/\delta\leq \lambda\leq 1/\delta$ and $n\geq N$.
\end{lemma}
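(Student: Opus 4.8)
\textbf{Proof plan for \cref{lem:ht_moddev_uni}.} The plan is to mirror exactly the argument used for \cref{lem:ht_smalldev_uni} and \cref{lem:ht_largedev_uni}, namely to extract uniformity for free from monotonicity via \cref{lem:monotone}. The only extra subtlety is that the natural reparameterisation here is not by a constant but by $n^a$, so I must rescale appropriately. First I would fix $a\in(0,1)$ and define, for each $n$, the functions
\begin{align}
    f_n(x):=\frac{\gamma_{x n^a}\rel{\rho^{\otimes n}}{\sigma^{\otimes n}}+nD\reli{}}{\sqrt{n^{a+1}}},
    \qquad
    \lefthat f_n(x):=\frac{\gammaL_{x n^a}\rel{\rho^{\otimes n}}{\sigma^{\otimes n}}+nD\reli{}}{\sqrt{n^{a+1}}},
\end{align}
together with the candidate limit $f(x):=-\mathrm{sgn}(x)\sqrt{2V\reli{}\abs x}$. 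The moderate deviation result, \cref{lem:ht_moddev}, is precisely the statement that $f_n\to f$ and $\lefthat f_n\to f$ pointwise on $\mathbb R\setminus\{0\}$; at $x=0$ both sides vanish, extending the pointwise convergence to all of $\mathbb R$ by continuity of $f$.

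The second step is to observe the monotonicity that makes \cref{lem:monotone} applicable. Since $x\mapsto\gamma_x\rel{\cdot}{\cdot}$ is monotone non-decreasing (it is the image under the monotone logit map $L$ of the monotone non-increasing function $x\mapsto\beta_x\rel{\cdot}{\cdot}$ composed with $L^{-1}$), and since the affine rescaling by $n^a$ and the additive/multiplicative constants preserve monotonicity, each $f_n$ and each $\lefthat f_n$ is monotone non-decreasing in $x$. The limit $f$ is continuous on $\mathbb R$. Restricting to the compact interval $x\in[-1/\delta,1/\delta]$, \cref{lem:monotone} upgrades the pointwise convergence to uniform convergence on that interval. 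Hence there exists a finite $N(\rho,\sigma,\delta,a)$, independent of $\lambda$, such that $\abs{f_n(\lambda)-f(\lambda)}\leq\delta$ and $|\lefthat f_n(\lambda)-f(\lambda)|\leq\delta$ for all $n\geq N$ and all $\lambda\in[-1/\delta,1/\delta]$. Multiplying through by $\sqrt{n^{a+1}}$, dividing by $n$, and expanding out the definition of $f_n$, $\lefthat f_n$ gives exactly the two claimed inequalities
\begin{align}
    \abs{\tfrac1n\gamma_{\lambda n^a}\rel{\rho^{\otimes n}}{\sigma^{\otimes n}}+D\reli{}+\mathrm{sgn}(\lambda)\sqrt{2V\reli{}\abs\lambda n^{a-1}}}\leq\delta\sqrt{n^{a-1}},
\end{align}
and likewise for the left-pinched quantity, since $\sqrt{n^{a+1}}/n=\sqrt{n^{a-1}}$.

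The one place requiring a little care—and the main (minor) obstacle—is the behaviour near $\lambda=0$: the pointwise convergence of \cref{lem:ht_moddev} is stated for sequences $\lambda n^a$ with $\lambda>0$ fixed (equivalently $\lambda<0$ fixed), not at $\lambda=0$ itself. I would handle this by noting that at $\lambda=0$ the type-I log odds is exactly $0$, so the type-I error is $1/2$, and \cref{lem:ht_smalldev} (or Stein's Lemma, \cref{lem:ht_stein}) gives $\tfrac1n\gamma_0\rel{\rho^{\otimes n}}{\sigma^{\otimes n}}\to -D\reli{}$, which matches $f(0)=0$ after rescaling; so the extended $f_n$ does converge pointwise to the continuous $f$ on all of $[-1/\delta,1/\delta]$, and \cref{lem:monotone} applies without a gap. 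This is the same pattern already used silently in \cref{lem:ht_largedev_uni}, so no new ideas are needed beyond bookkeeping the $n^a$ scaling.
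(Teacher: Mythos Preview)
Your proposal is correct and follows essentially the same route as the paper: define the rescaled functions $f_n,\lefthat f_n$ (your normalisation by $\sqrt{n^{a+1}}$ is algebraically identical to the paper's $\sqrt{n^{a-1}}$ after multiplying numerator and denominator by $n$), invoke \cref{lem:ht_moddev} for pointwise convergence to the continuous limit $f(x)=-\mathrm{sgn}(x)\sqrt{2V\reli{}\abs{x}}$, and apply \cref{lem:monotone} on $[-1/\delta,1/\delta]$. One cosmetic slip: $x\mapsto\gamma_x$ is monotone \emph{non-increasing} (larger type-I budget lowers the optimal type-II log odds), not non-decreasing as you wrote---but since \cref{lem:monotone} only requires monotonicity of some direction this does not affect the argument, and your extra care at $\lambda=0$ is a nice touch the paper glosses over.
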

\begin{proof}
    For this we define
    \begin{align}
        f_n(x):=\frac{\frac1n {\gamma_{x n^a}\rel{\rho^{\otimes n}}{\sigma^{\otimes n}}}+D\reli{}}{\sqrt{n^{a-1}}}
        \qquad\text{and}\qquad
        \lefthat f_n(x):=\frac{\frac1n {\gammaL_{x n^a}\rel{\rho^{\otimes n}}{\sigma^{\otimes n}}}+D\reli{}}{\sqrt{n^{a-1}}},
    \end{align}
    and $f(x):=-\mathrm{sgn}(x)\cdot\sqrt{2V\reli{} \abs{x}}$. \cref{lem:ht_moddev} is equivalent to the statement that $f_n,\lefthat f_n\to f$ pointwise on~$\mathbb R$. As $\gamma_x$ and $\gammaL_x$ are monotone increasing functions, we can apply \cref{lem:monotone} to upgrade this convergence to uniform, which gives that there exists a $N(\rho,\sigma,\delta,a)$ such that $\abs{f_n(x)-f(x)}\leq \delta$ and $\abs{\lefthat f_n(x)-f(x)}\leq \delta$ for any $n\geq N$ and $x\in[-1/\delta,1/\delta]$. Expanding these out gives the desired bounds.
\end{proof}

\end{document}